\newcommand\vldbavailabilityurl{URL_TO_YOUR_ARTIFACTS}
\newcommand\vldbpagestyle{plain} 
\algnewcommand\algorithmicinput{\textbf{Input:}}
\algnewcommand\Input{\item[\algorithmicinput]}
\algnewcommand\algorithmicout{\textbf{Output:}}
\algnewcommand\Out{\item[\algorithmicout]}
\newcommand{\todo}[1]{\textcolor{red}{\hl{[#1]}}}
\newtheorem{property}{Property}
\newcommand{\nop}[1]{}
\newcommand{\Dom}{\ensuremath{\operatorname{\mathsf{Dom}}}}
\newcommand{\cov}{\ensuremath{\operatorname{\mathsf{cov}}}}
\newcommand{\dotprec}{\ensuremath{\mathbin{\dot\prec}}}
\newcommand{\dotsucc}{\ensuremath{\mathbin{\dot\succ}}}
\newcommand{\dotsim}{\ensuremath{\mathbin{\dot\sim}}}
\newcommand{\sign}{\ensuremath{\operatorname{\mathsf{sign}}}}
\newcommand{\Arrow}[1]{\ensuremath{\parbox{#1}{\tikz{\draw[->](0,0)--(#1,0);}}}}
\newcommand{\substitute}[2]{\ensuremath{\langle #1\,\Arrow{.3em}\,#2 \rangle}}
\newcommand{\ubound}{\ensuremath{\operatorname{\mathsf{up}}}}
\newcommand{\lbound}{\ensuremath{\operatorname{\mathsf{low}}}}
\definecolor{lightblue}{RGB}{227,242,253}
\definecolor{darkblue}{RGB}{33,150,243}
\definecolor{lightpurple}{RGB}{243,229,245}
\definecolor{darkpurple}{RGB}{156,39,176}
\definecolor{lightorange}{RGB}{255,243,224}
\definecolor{darkorange}{RGB}{255,152,0}
\definecolor{lightgray}{RGB}{248,249,250}
\definecolor{statblue}{RGB}{52,152,219}
\definecolor{statgreen}{RGB}{46,204,113}
\definecolor{statred}{RGB}{231,76,60}
\definecolor{lightgreen}{RGB}{200,255,200}
\definecolor{lightyellow}{RGB}{255,255,200}
\definecolor{lightcyan}{RGB}{200,255,255}
\definecolor{lightpink}{RGB}{255,200,255}
\begin{document}
\begin{sloppy}

\title{Finding Non-Redundant Simpson's Paradox from Multidimensional Data}

\author{Yi Yang}
\affiliation{
\institution{Duke University}
\city{Durham}
\state{NC}
\country{USA}
}
\email{owen.yang@duke.edu}

\author{Jian Pei}
\affiliation{
\institution{Duke University}
\city{Durham}
\state{NC}
\country{USA}
}
\email{j.pei@duke.edu}

\author{Jun Yang}
\affiliation{
\institution{Duke University}
\city{Durham}
\state{NC}
\country{USA}
}
\email{junyang@cs.duke.edu}

\author{Jichun Xie}
\affiliation{
\institution{Duke University}
\city{Durham}
\state{NC}
\country{USA}
}
\email{jichun.xie@duke.edu}

\begin{abstract}
Simpson's paradox, a long-standing statistical phenomenon, describes the reversal of an observed association when data are disaggregated into sub-populations. It has critical implications across statistics, epidemiology, economics, and causal inference. Existing methods for detecting Simpson's paradox overlook a key issue: many paradoxes are \emph{redundant}, 
arising from equivalent selections of data subsets, identical partitioning of sub-populations, and correlated outcome variables, which obscure essential patterns and inflate computational cost.
In this paper, we present the first framework for discovering \emph{non-redundant} Simpson's paradoxes. We formalize three types of redundancy -- sibling child, separator, and statistic equivalence -- and show that redundancy forms an equivalence relation.
Leveraging this insight, we propose a concise representation framework for systematically organizing redundant paradoxes and design efficient algorithms that integrate depth-first materialization of the base table with redundancy-aware paradox discovery.
Experiments on real-world datasets and synthetic benchmarks show that redundant paradoxes are widespread, on some real datasets constituting over 40\% of all paradoxes, while our algorithms scale to millions of records, reduce run time by up to 60\%, and discover paradoxes that are structurally robust under data perturbation. These results demonstrate that Simpson's paradoxes can be efficiently identified, concisely summarized, and meaningfully interpreted in large multidimensional datasets.
\end{abstract}

\maketitle

\pagestyle{\vldbpagestyle}

\ifdefempty{\vldbavailabilityurl}{}{%
\let\thefootnote\relax\footnote{\textbf{Artifact Availability:} The source code, data, and/or other artifacts have been made available at \url{https://github.com/Owen-Yang-18/non-redundant-simpson-paradox}.}%
}

\section{Introduction}
\label{sec:introduction}

Simpson's paradox~\cite{https://doi.org/10.1111/j.2517-6161.1951.tb00088.x, pearson1899genetic} is a classic and widely studied phenomenon in statistics, probability, and data science~\cite{pearl2014comment, samuels1993simpson, wagner1982simpson, spirtes2000causation, freitas2007integrating, alipourfard2018can, xu2018detecting, sharma2022detecting, wang2023learning, zuyderhoff2025simpson, jiang2025fedcfa, salimi2018hypdb, krishnan2016activeclean, portela2019search, lenz1997summarizability, bonchi2018probabilistic}. 
This paradox refers to the reversal of an observed association between two variables when data are disaggregated into sub-populations. 
It has been recognized for more than a century and continues to play a central role in fields such as epidemiology, economics, machine learning, and causal inference~\cite{tu2008simpson, alipourfard2018can, ma2015simpson, pearl2014comment, salazar2021automated, lin2021detecting, guo2017you, deng2024outlier, youngmann2024summarized, tang2013mining, krishnan2016activeclean}, where decisions depend critically on understanding relationships in multidimensional data. 

\begin{table}[t]
  \centering
  \caption{Data table $T(A,B,C,Y_1)$ containing 7 records.}
    \begin{tabular}{|c|c|c|c|c|}
    \hline
     & $A$         & $B$         &  $C$        & $Y_1$ \bigstrut\\
    \hline
    $t_1$  & $a_1$      & $b_1$      & $c_1$      & 0 \bigstrut[t]\\
    $t_2$ & $a_1$      & $b_1$      & $c_1$      & 0 \\
    $t_3$ &$a_1$      & $b_2$      & $c_1$      & 0 \\
    $t_4$ &$a_2$      & $b_1$      & $c_2$      & 1 \\
    $t_5$ &$a_2$      & $b_1$      & $c_2$      & 1 \\
    $t_6$ &$a_2$      & $b_2$      & $c_2$      & 1 \\
    $t_7$ &$a_2$      & $b_2$      & $c_2$      & 1 \bigstrut[b]\\
    \hline
    \end{tabular}%
  \label{tab:ex1}%
\end{table}%

\begin{example}[Simpson's paradox]
\label{ex:simpson}
Consider the dataset in \Cref{tab:ex1}. Overall, the probability of $Y_1=1$ is lower for records with $B=b_2$ than for those with $B=b_1$: 
\[
P(Y_1=1 \mid B=b_2) = \tfrac{2}{4} = 0.50 
\, < \, 
P(Y_1=1 \mid B=b_1) = \tfrac{2}{3} \approx 0.67.
\]
However, when the data are partitioned by attribute $A$, the trend reverses. For $A=a_1$, both $B=b_1$ and $B=b_2$ yield $P(Y_1=1)=0$. For $A=a_2$, both $B=b_1$ and $B=b_2$ yield $P(Y_1=1)=1$. In each subgroup defined by $A$, the conditional probabilities satisfy 
\[
P(Y_1=1 \mid A=a_i, B=b_2) \geq P(Y_1=1 \mid A=a_i, B=b_1), \, i \in \{1,2\}.
\]
Thus, although the aggregated data suggest $B=b_1$ is more favorable, conditioning on $A$ eliminates the apparent disadvantage of $B=b_2$. This reversal of association between $B$ and $Y_1$ after conditioning on $A$ is an instance of Simpson's paradox.  
\qed
\end{example}

Simpson's paradox has been observed in diverse real-world domains, including medicine and social science~\cite{blyth1972simpson, kievit2013simpson, cates2002simpson}. 
In a well-known study of treatment effectiveness for kidney stones~\cite{charig1986comparison}, the overall recovery rate appears higher for one treatment, but when patients are stratified by stone size, the trend reverses in both subgroups. 
A similar paradox was documented in graduate admissions at the University of California, Berkeley~\cite{bickel1975sex}, where aggregate data suggested gender bias, yet department-level data showed the opposite pattern. 
These counterintuitive reversals---the essence of Simpson's paradox---demonstrate how aggregated data can obscure underlying relationships and highlight the importance of identifying such paradoxes for reliable analysis and decision-making.

Despite its importance, an overlooked issue in the literature is that instances of Simpson's paradox can be highly \emph{redundant}. 
In high-dimensional data, many partitions share identical sets of records in the base table, or different choices of separator or label attributes may yield equivalent partitions. 
As a result, multiple paradoxes can describe the same underlying phenomenon. 
For example, in \Cref{tab:ex1}, there is a one-to-one correspondence between attributes $A$ and $C$: every Simpson's paradox involving $A$ can also be expressed as one involving $C$. 
Although such paradoxes differ syntactically, they arise from the same overlapping population structure. 
Treating them as distinct not only inflates the number of reported paradoxes but also obscures the essential insights that analysts aim to extract.

One might question whether redundant paradoxes occur merely in theory or isolated cases. However, our empirical analysis using real-world datasets in different domains, reported in Section~\ref{sec:rq1}, shows that redundant Simpson’s paradoxes account for 20.3–47.8\% of all observed paradoxes.

Identifying \emph{non-redundant} Simpson's paradoxes poses several technical challenges. 
First, the search space of potential paradoxes grows exponentially with the number of attributes, making brute-force enumeration computation-heavy. 
Second, redundancies can arise in multiple ways, as analyzed in \Cref{sec:equivalence-types}, and distinguishing among them requires careful formalization. 
Third, even after redundancies are recognized, a principled method is needed to group redundant paradoxes and produce concise, non-overlapping representations without information loss. 


To address these challenges, our key idea is to exploit the mathematical structure underlying how data subsets (populations) relate to one another. We discover that redundant paradoxes exhibit patterns that allow us to group them into well-defined equivalence classes. We propose a concise representation for these equivalence classes that eliminates redundancy while ensuring completeness (i.e., discovery of all Simpson's paradoxes). This approach enables us to compactly capture large numbers of redundant paradoxes.


We make four main contributions in this paper. First, we formally define three sources of redundancy -- sibling child, separator, and statistic equivalence -- and show that redundancy is an equivalence relation. Second, we propose a concise representation framework that groups redundant paradoxes into compact, systematic summaries. Third, we develop efficient algorithms that combine depth-first materialization of the input base table and redundancy-aware discovery of Simpson's paradoxes. Last, through experiments on both real-world and synthetic datasets, we demonstrate that redundant paradoxes are common in practice, our methods scale efficiently, and the identified paradoxes (and redundancies) are structurally robust.

The rest of the paper is organized as follows. 
\Cref{sec:prelim} introduces preliminaries and definitions of Simpson's paradox. 
\Cref{sec:coverage} formalizes redundancy and presents our concise representation framework. 
\Cref{sec:finding} describes algorithms for discovering non-redundant paradoxes. 
\Cref{sec:experiments} reports experimental results on real and synthetic datasets. 
\Cref{sec:related} reviews related work, and \Cref{sec:conclusion} concludes the paper.



\section{Preliminaries}
\label{sec:prelim}

We introduce the foundational concepts used throughout the paper. We then present the formal definition of Simpson's Paradox in this context, along with illustrative examples and related variants. 

\subsection{Basic Notations}
\label{sec:prelim-notation}

Consider a base table $T$ containing $n$ \emph{categorical attributes} $\{X_1, \ldots, X_n\}$ and $m$ \emph{label attributes} $\{Y_1, \ldots, Y_m\}$,
where the domain $\Dom(X_i)$ of each categorical attribute $X_i$ $(1 \leq i \leq n)$ is finite,
and each label attribute $Y_i$ $(1 \leq i \leq m)$ is binary. Our results can be generalized to cases where label attributes are categorical with more than two classes. For simplicity, we focus on binary labels in this paper.
Each record $t \in T$ is an $(n+m)$-dimensional tuple $(t.X_1, \ldots, t.X_n, t.Y_1, \ldots, t.Y_m)$. 

A \textbf{population} $s$ is an $n$-dimensional tuple $(s[1], \ldots, s[n])$ such that $s[i] \in \Dom(X_i) \cup \{\ast\}$, where $\ast$ is a wildcard character equivalent to \texttt{ALL} in data cube terminology~\cite{492099, harinarayan1996implementing, vitter1998data, zhao1997array}. 
Populations serve as selection criteria to define subsets of records from the base table.
In paritcular, the \textbf{coverage} of a population $s$ with respect to $T$, denoted by 
$
\cov_T(s) = \{t \in T \mid t.X_i = s[i] \ \vee \ s[i] = *,\, 1 \leq i \leq n\}
$, 
is the set of records in $T$ matching $s$.
We omit the subscript $T$ in $\cov_T(\cdot)$ when the context is clear.

Given populations $s$ and $s'$, $s$ is a \textbf{parent} of $s'$ (and $s'$ a \textbf{child} of $s$), denoted by $s \dotsucc s'$, if (1) there exists an attribute $X_j$ $(1\leq j \leq n)$ such that $s[j] = \ast$ and $s'[j] \neq \ast$, and (2) for all attributes $X_i$ $(1 \leq i \neq j \leq n)$, $s[i] = s'[i]$. 
Clearly, if $s \dotsucc s'$, then $\cov(s) \supseteq \cov(s')$. 
We call $X_j$ the \textbf{differential attribute} between $s$ and $s'$ and $s'[j]$ the \textbf{differential value}. 
We write the child as $s' = s\substitute{X_j}{s'[j]}$, where $\substitute{X_j}{\cdot}$ denotes the substitution of the $j$-th component.
Generally, a population may have multiple parents and children. 

A population $s$ is an \textbf{ancestor} of population $s'$ (and $s'$ a \textbf{descendant} of $s$), denoted by $s \succ s'$, if for all attributes $X_i$ $(1 \leq i \leq n)$ either $s[i]=\ast$ or $s[i]=s'[i]$, and for at least one attribute $s[i] \neq s'[i]$. 
In this case, $\cov(s) \supseteq \cov(s')$. 
If $s$ is a parent of $s'$, then $s$ is also an ancestor of $s'$, but not vice versa. 
We write $s \succeq s'$ if $s \succ s'$ or $s=s'$. 

Two populations $s_1$ and $s_2$ are \textbf{siblings} if both are children of a common parent $s$, under the differential attribute $X_j$, with different differential values. 
In this case, $s_1[i]=s_2[i]$ for all attributes $X_i$ $(1 \leq i \leq n, i\neq j)$, and $s_1[j] \neq s_2[j]$ and neither equals $\ast$.
Moreover, $\cov(s_1) \cap \cov(s_2) = \emptyset$.

We are interested in how often each label attribute $Y_i$ ($1 \leq i \leq m$) takes the value $1$ within a given population $s$ where $\cov_T(s)\neq \emptyset$. We define the \textbf{frequency statistics} of a non-empty population $s$ (w.r.t. $T$) under label attribute $Y_i$ as the conditional probability $P(Y_i=1 | s)$, or simply denoted by $P(Y_i | s)$, given by
$
\frac{|\cov_T(s) \cap \{ t \in T \mid t.Y_i = 1 \}|}{|\cov_T(s)|}.
$
%
\begin{example}[Notations]
\label{ex:2.1}
Consider the table $T(A, B, C, Y_1)$ in \Cref{tab:ex1}, which contains three categorical attributes $A$, $B$, and $C$ and a label attribute $Y_1$. 
Population $(a_1, b_1, *)$ covers the set of records with $A=a_1$ and $B=b_1$, i.e., $\cov(a_1,b_1,*) = \{t_1, t_2\}$. 
Population $(a_1, b_2, *)$ is a sibling of $(a_1, b_1, *)$, sharing the common parent $(a_1, *, *)$ under the differential attribute $B$. 
We can write $(a_1,b_1,*) = (a_1,*,*)\substitute{B}{b_1}$ and $(a_1,b_2,*) = (a_1,*,*)\substitute{B}{b_2}$. 
It is easy to verify that $P(Y_1 | (*, b_1, *))=\tfrac{2}{4}=0.50$. \qed
\end{example}
The set of all populations in a given table forms a lattice under the parent-child relation $\dotsucc$. 
Let $\mathcal{P}$ be the set of all populations in a given table. 
Consider a subset $\mathcal{E}\subseteq \mathcal{P}$. 
Two populations $s, s' \in \mathcal{E}$ are \emph{directly connected}, denoted $s\dotsim s'$, if either $s \dotprec s'$ or $s \dotsucc s'$. 
They are \emph{connected}, denoted $s \sim s'$, if either $s \dotsim s'$ or there exists a sequence $s_1, \ldots, s_k \in \mathcal{E}$ $(k>2)$ such that $s=s_1$, $s'=s_k$, and $s_j \dotsim s_{j+1}$ $(1 \leq j < k)$. 
$\mathcal{E}$ is \textbf{convex} if all pairs of populations in $\mathcal{E}$ are connected and, whenever a pair $s, s' \in \mathcal{E}$ satisfies $s \succ s'$, then all intermediate populations $s''$ with $s \succ s'' \succ s'$ also belong to $\mathcal{E}$.
\Cref{fig:lattice} shows the population lattice of \Cref{tab:ex1}.

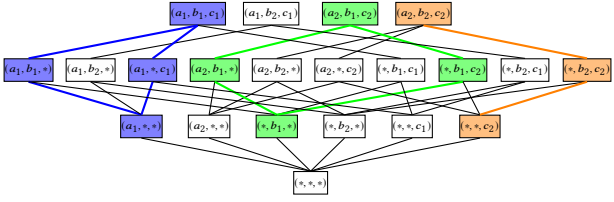
\begin{figure}[t]
    \centering
    \begin{tikzpicture}[
        every node/.style={font=\tiny, minimum size=0.4cm, inner sep=0pt},
        scale=0.75,
        transform shape
    ]
    \node[draw, fill=lightgray!20] (root) at (6,0) {$(\ast,\ast,\ast)$};
    
    \node[draw, fill=blue!50] (a1) at (3,1) {$(a_1,\ast,\ast)$};
    \node[draw, fill=lightgray!20] (a2) at (4.2,1) {$(a_2,\ast,\ast)$};
    \node[draw, fill=green!50] (b1) at (5.4,1) {$(\ast,b_1,\ast)$};
    \node[draw, fill=lightgray!20] (b2) at (6.6,1) {$(\ast,b_2,\ast)$};
    \node[draw, fill=lightgray!20] (c1) at (7.8,1) {$(\ast,\ast,c_1)$};
    \node[draw, fill=orange!50] (c2) at (9,1) {$(\ast,\ast,c_2)$};
    
    \node[draw, fill=blue!50] (a1b1) at (1,2) {$(a_1,b_1,\ast)$};
    \node[draw, fill=lightgray!20] (a1b2) at (2.1,2) {$(a_1,b_2,\ast)$};
    \node[draw, fill=blue!50] (a1c1) at (3.2,2) {$(a_1,\ast,c_1)$};
    \node[draw, fill=green!50] (a2b1) at (4.3,2) {$(a_2,b_1,\ast)$};
    \node[draw, fill=lightgray!20] (a2b2) at (5.4,2) {$(a_2,b_2,\ast)$};
    \node[draw, fill=lightgray!20] (a2c2) at (6.5,2) {$(a_2,\ast,c_2)$};
    \node[draw, fill=lightgray!20] (b1c1) at (7.6,2) {$(\ast,b_1,c_1)$};
    \node[draw, fill=green!50] (b1c2) at (8.7,2) {$(\ast,b_1,c_2)$};
    \node[draw, fill=lightgray!20] (b2c1) at (9.8,2) {$(\ast,b_2,c_1)$};
    \node[draw, fill=orange!50] (b2c2) at (10.9,2) {$(\ast,b_2,c_2)$};
    
    \node[draw, fill=blue!50] (a1b1c1) at (4,3) {$(a_1,b_1,c_1)$};
    \node[draw, fill=lightgray!20] (a1b2c1) at (5.3,3) {$(a_1,b_2,c_1)$};
    \node[draw, fill=green!50] (a2b1c2) at (6.7,3) {$(a_2,b_1,c_2)$};
    \node[draw, fill=orange!50] (a2b2c2) at (8,3) {$(a_2,b_2,c_2)$};
    
    \draw (root.north) -- (a1.south);
    \draw (root.north) -- (a2.south);
    \draw (root.north) -- (b1.south);
    \draw (root.north) -- (b2.south);
    \draw (root.north) -- (c1.south);
    \draw (root.north) -- (c2.south);
    
    \draw[blue, thick] (a1.north) -- (a1b1.south);
    \draw (a1.north) -- (a1b2.south);
    \draw[blue, thick] (a1.north) -- (a1c1.south);
    
    \draw (a2.north) -- (a2b1.south);
    \draw (a2.north) -- (a2b2.south);
    \draw (a2.north) -- (a2c2.south);
    
    \draw (b1.north) -- (a1b1.south);
    \draw[green, thick] (b1.north) -- (a2b1.south);
    \draw (b1.north) -- (b1c1.south);
    \draw[green, thick] (b1.north) -- (b1c2.south);
    
    \draw (b2.north) -- (a1b2.south);
    \draw (b2.north) -- (a2b2.south);
    \draw (b2.north) -- (b2c1.south);
    \draw (b2.north) -- (b2c2.south);
    
    \draw (c1.north) -- (a1c1.south);
    \draw (c1.north) -- (b1c1.south);
    \draw (c1.north) -- (b2c1.south);
    
    \draw (c2.north) -- (a2c2.south);
    \draw (c2.north) -- (b1c2.south);
    \draw[orange, thick] (c2.north) -- (b2c2.south);
    
    \draw[blue, thick] (a1b1.north) -- (a1b1c1.south);
    \draw (a1b2.north) -- (a1b2c1.south);
    \draw[blue, thick] (a1c1.north) -- (a1b1c1.south);
    \draw[green, thick] (a2b1.north) -- (a2b1c2.south);
    \draw (a2b2.north) -- (a2b2c2.south);
    \draw (a2c2.north) -- (a2b2c2.south);
    \draw (b1c1.north) -- (a1b1c1.south);
    \draw[green, thick] (b1c2.north) -- (a2b1c2.south);
    \draw (b2c1.north) -- (a1b2c1.south);
    \draw[orange, thick] (b2c2.north) -- (a2b2c2.south);
    \end{tikzpicture}
    \caption{Hasse diagram of the lattice formed by all populations in \Cref{tab:ex1} with respect to the parent-child relation $\dotsucc$. A parent is placed lower than its child. The \textcolor{blue}{blue} and \textcolor{green}{green} subsets are convex, while the \textcolor{orange}{orange} subset is non-convex.}
    \label{fig:lattice}
\end{figure}


\subsection{Simpson's Paradox}
\label{sec:sp}

Simpson's Paradox~\cite{https://doi.org/10.1111/j.2517-6161.1951.tb00088.x, pearson1899genetic} describes the counterintuitive phenomenon where the type of relationship (e.g., positive, negative, or independent) between two variables reverses when the population is partitioned into sub-populations. 
In this subsection, we formalize Simpson's Paradox in the multidimensional data, provide an illustrative example, and briefly review its well-known variants.

\begin{definition}
\label{def:simpson}
Consider a population $s$ and two sibling child populations $s_1 = s\substitute{X_j}{u_1}$ and $s_2 = s\substitute{X_j}{u_2}$ with differential attribute $X_j$ ($1 \le j \le n)$, where $u_1, u_2 \in \Dom(X_j)$ ($u_1 \neq u_2$) are the respective \emph{differential values}.
Let $X \in \{X_1, \ldots, X_n\} \setminus \{ X_j \}$ be a \textbf{separator attribute}
and $Y \in \{Y_1, \ldots, Y_m\}$ be a label attribute.
The tuple $(s_1, s_2, X, Y)$ is called an \textbf{association configuration (AC)}.
An AC is a \textbf{Simpson's Paradox} if the following holds:
\begin{enumerate}
    \item $P(Y | s_1) \geq P(Y | s_2)$;
    \item For every separator attribute value $v \in \Dom(X)$ with $\cov(s_1\substitute{X}{v}) \neq \emptyset$ and $\cov(s_2\substitute{X}{v}) \neq \emptyset$:
    \[
    P(Y | s_1\substitute{X}{v}) \leq P(Y | s_2\substitute{X}{v});
    \]
    \item Either the inequality in (1) is strict or all inequalities in (2) are strict.\qed
\end{enumerate}
\end{definition}
\nop{
\begin{definition}
Let $s$ be a population and $(u_1, u_2)$ $(u_1, u_2 \in \Dom(X_{i_0}),\ 1 \leq i_0 \leq n,\ u_1 \neq u_2)$ be a pair of \emph{differential values} for two sibling children populations $s_1 = s[X_{i_0} = u_1]$ and $s_2 = s[X_{i_0} = u_2]$. 
Let $X_{i_1}$ be a \textbf{separator attribute} $(1 \leq i_1 \leq n,\ i_1 \neq i_0)$ such that $s_1[i_1] = s_2[i_1] = *$, and let $Y_{i_2}$ be a label attribute. 
The tuple $(s_1, s_2, X_{i_1}, Y_{i_2})$ is called an \textbf{association configuration (AC)}. 

An AC is a \textbf{Simpson's Paradox} if the following hold:
\begin{enumerate}
    \item $P(Y_{i_2} \mid s_1) \geq P(Y_{i_2} \mid s_2)$;
    \item For every $v \in \Dom(X_{i_1})$ with $\cov(s_1[X_{i_1} = v]) \neq \varnothing$ and $\cov(s_2[X_{i_1} = v]) \neq \varnothing$, 
    \[
    P(Y_{i_2} \mid s_1[X_{i_1}=v]) \leq P(Y_{i_2} \mid s_2[X_{i_1}=v]);
    \]
    \item Either the inequality in (1) or all inequalities in (2) are strict. \qed
\end{enumerate}
\end{definition}
}

The directions of the inequalities in (1) and (2) may be reversed simultaneously. 
In addition, partitioning can be generalized to a set of multiple separator attributes $\mathbf{X}$:
for each value combination $\mathbf{v} \in \prod_{X_j \in \mathbf{X}} \Dom(X_j)$,
we consider sub-populations $s_1\substitute{\mathbf{X}}{\mathbf{v}}$ and $s_2\substitute{\mathbf{X}}{\mathbf{v}}$. 

\Cref{ex:simpson} shows an example of Simpson's paradox, where $((*, b_1, *), (*, b_2, *), A, Y_1)$ is an associate configuration.
For clarity, the remainder of this paper assumes a single separator attribute, though our results extend directly to the multi-attribute case.

Over the past century, several variants of Simpson's Paradox have been studied. 
The most widely used is the \textbf{Association Reversal (AR)}~\cite{samuels1993simpson}, as formalized in \Cref{def:simpson}. 
A special case, \textbf{Yule's Association Paradox (YAP)}~\cite{yule1903notes}, occurs when there is no association in the sub-populations, yet an association emerges in the aggregate. 
\Cref{ex:simpson} is an example of YAP. 
Another form, the \textbf{Amalgamation Paradox (AMP)}~\cite{good1987amalgamation}, arises when the strength of association in the aggregate is greater (or smaller) than in each sub-population. 
A variant of AMP, the \textbf{Averaged Association Reversal (AAR)}~\cite{alipourfard2018can, wang2023learning}, occurs when the aggregate association differs from the average association across sub-populations. 
Both AMP and AAR are special cases of AR. 
For an in-depth review of Simpson's Paradox, we refer the reader to the survey by~\citet{sprenger2021simpson}.
Our framework naturally extends to these variants. 

\nop{
\hline

\section{Preliminaries}
\label{sec:prelim}

\subsection{Basic Notations}
\label{sec:prelim-notation}

Consider a data table $T$ containing $n$ \emph{categorical attributes} $\{X_1, \ldots, X_n\}$ and $m$ binary \emph{label attributes} $\{Y_1, \ldots, Y_m\}$\footnote{Our results can be easily generalized to situations where the label attributes are categorical, that is, each label attribute may allow more than $2$ classes.  Limited by space and for the sake of simplicity in presentation, we do not discuss the general case in this paper.}, where the domain $\Dom(X_i)$ of each categorical attribute $X_i$ $(1 \leq i \leq n)$ is finite.  Each record $t \in T$ is an $(n+m)$-dimensional tuple $(t.X_1, \ldots, t.X_n, t.Y_1, \ldots, t.Y_m)$. 

A \textbf{population} $s$ is an $n$-dimensional tuple $(s[1], \ldots, s[n])$ such that $s[i] \in \Dom(X_i) \cup \{*\}$, where $*$ is a special wildcard value equivalent to the value \texttt{ALL} in data cube~\cite{492099}, and does not appear in the domain of any attributes. The \textbf{coverage} of a population $s$, denoted by $\cov(s)=\{t \in T \mid t.X_i = s[i] \vee s[i] = *,\, 1 \leq i \leq n\}$, is the set of records in table $T$ matching $s$. 

Given two populations $s$ and $s'$, $s$ is a \textbf{parent} of $s'$ and $s'$ a \textbf{child} of $s$, denoted by $\dotsucc$, if (1) $\exists\, i_0 \in [1, n]$ such that $s[i_0] = *$ and $s'[i_0] \neq *$; and (2) for every $i \in [1, n] \setminus \{i_0\}$, $s[i] = s'[i]$. 
Obviously, if $s$ is a parent of $s'$, $\cov(s) \supseteq \cov(s')$. We call the attribute $X_{i_0}$ the \textbf{differential attribute} between $s$ and $s'$ and the value $s'[i_0]$ the \textbf{differential value}.  We can write the child population as $s'=s[X_{i_0}=s'[i_0]]$. Clearly, in general, one population may have multiple parents and may also have multiple children. 

A population $s$ is an \textbf{ancestor} of population $s'$ and $s'$ a \textbf{descendant} of $s$, denoted by $s \succ s'$, if on all attributes $X_i$ $(1 \leq i \leq n$, either $s[i]=*$ or $s[i]=s'[i]$, and $s[i]\neq s'[i]$ holds on at least one attribute.
Apparently, in such a case, $\cov(s) \supseteq \cov(s')$. If $s$ is a parent of $s'$, $s$ is also an ancestor of $s'$, but not the other way.  We write $s \succeq s'$ if $s \succ s'$ or $s=s'$.

Furthermore, populations $s_1$ and $s_2$ are \textbf{sibling} if both $s_1$ and $s_2$ are children of a population $s$ and share the same differential attribute $X_{i_0}$. 
In such a case, it is easy to see that $s_1[i]=s_2[i]$ for $i \in [1, n]\setminus \{i_0\}$, neither $s_1[i_0]$ nor $s_2[i_0]$ is $*$, and $s_1[i_0] \neq s_2[i_0]$. Moreover, $\cov(s_1) \cap \cov(s_2) = \varnothing$.

We are interested in 
the frequency of a label variable $(Y_j = 1)$, $1 \leq j \leq m$, from a population $s$. 
We write the \textbf{frequency} of a population $s$ under a label attribute $Y_j$ as the \emph{conditional probability} $P(Y_j = 1 \mid s)$ or simply $P(Y_j \mid s)$. 

\begin{example}[Notations]
\label{ex:2.1}
Consider the data table $T(A, B, C, Y_1)$ in \Cref{tab:ex1}, which contains three categorical attributes $A$, $B$, and $C$ and a label attribute $Y_1$. 
Population $(a_1, b_1, *)$ covers the set of records with value $a_1$ on attribute $A$ and $b_1$ at attribute $B$.
$\cov(a_1,b_1,*) = \{t_1, t_2\}$.
Population $(a_1, b_2, *)$ is a sibling of $(a_1, b_1, *)$ sharing the common parent $(a_1, *, *)$ and the same differential attribute $B$. 
We can write $(a_1,b_1,*) = (a_1,*,*)[B = b_1]$ and $(a_1,b_2,*) = (a_1,*,*)[B = b_2]$.

It is easy to verify that $P(Y_1\mid (*, b_1, *))=\frac 2 4 = 0.50$.\qed

\begin{table}[t]
  \centering
  \caption{Data table $T(A,B,C,Y_1)$ containing 7 records.}
  \vspace{-5pt}
    \begin{tabular}{|c|c|c|c|c|}
    \hline
     & $A$         & $B$         &  $C$        & $Y_1$ \bigstrut\\
    \hline
    $t_1$  & $a_1$      & $b_1$      & $c_1$      & 0 \bigstrut[t]\\
    $t_2$ & $a_1$      & $b_1$      & $c_1$      & 0 \\
    $t_3$ &$a_1$      & $b_2$      & $c_1$      & 0 \\
    $t_4$ &$a_2$      & $b_1$      & $c_2$      & 1 \\
    $t_5$ &$a_2$      & $b_1$      & $c_2$      & 1 \\
    $t_6$ &$a_2$      & $b_2$      & $c_2$      & 1 \\
    $t_7$ &$a_2$      & $b_2$      & $c_2$      & 1 \bigstrut[b]\\
    \hline
    \end{tabular}%
  \label{tab:ex1}%
  \vspace{-10pt}
\end{table}%
\end{example}

The set of all populations in a given table forms a lattice under the parent-child relation $\dotsucc$. 
Let $\mathcal{P}$ be the set of all populations in a given table. Consider a subset $\mathcal{E}\subseteq \mathcal{P}$ of populations. For any populations $s$ and $s' \in \mathcal{E}$, $s$ and $s'$ are \emph{directly connected}, denoted by $s\dotsim s'$, if either $s \dotprec s'$ or $s \dotsucc s'$. $s$ and $s'$ are \emph{connected}, denoted by $s \sim s'$, if either $s \dotsim s'$ or there exist a series of populations $s_1, \ldots, s_k \in \mathcal{E}$ $(k>2)$ such that $s=s_1$, $s'=s_k$, and $s_j \dotsim s_{j+1}$ $(1 \leq j < k)$. $\mathcal{E}$ is said to be \textbf{convex} if every two populations in $\mathcal{E}$ are connected and, whenever $\mathcal{E}$ contains a pair of populations $s$ and $s'$ such that $s \succ s'$, then every intermediate population $s''$ such that $s \succ s'' \succ s'$ also belongs to $\mathcal{E}$.

As an example, \Cref{fig:lattice} shows the population lattice of the table in \Cref{tab:ex1} and some examples of convex and non-convex subsets in the lattice.

\begin{figure*}
    \centering
    \includegraphics[width=0.9\textwidth]{figures/Cube Lattice.png}
    \caption{The Hasse diagram of the lattice formed by all populations in \Cref{tab:ex1} with respect to the parent-child relation $\dotsucc$.  Each edge connects a parent population and a child population. The figure is plotted in a bottom-up way, that is, a parent population is placed lower than a child population in the figure. Following the conventions in~\cite{beyer1999bottom, harinarayan1996implementing}, we assume a least element $\mathtt{false}$. The blue higlighted subset shows a convex subset of the lattice. It is also a sub-lattice. 
    The orange subset is non-convex. 
    }
    \label{fig:lattice}
\end{figure*}

\subsection{Simpson's Paradox}
\label{sec:sp}
In our context, Simpson's Paradox~\cite{https://doi.org/10.1111/j.2517-6161.1951.tb00088.x, pearson1899genetic} refers to the phenomenon where the type of relation (\emph{e.g.,} positive, negative or independent) between two variables changes when the populations are partitioned into sub-populations.

\begin{definition}
\label{def:simpson}
Given a population $s$ and a pair of differential values $(u_1, u_2)$ $(u_1, u_2 \in \Dom(X_{i_0})$, $1\leq i_0\leq n$, and $u_1 \neq u_2)$ for two sibling children populations $s_1 = s[X_{i_0} = u_1]$ and $s_2 = s[X_{i_0} = u_2]$, a \textbf{separator attribute} $X_{i_1}$ ($1 \leq i_1 \leq n$ and $i_0\neq i_1$) such that $s_1[i_1] = s_2[i_1] = *$, and a label attribute $Y_{i_2}$, we call the tuple $(s_1, s_2, X_{i_1}, Y_{i_2})$ an \textbf{association configuration} (\textbf{AC) or simply \textbf{configuration} in short}. 
An AAC is an instance of \textbf{Simpson's Paradox}\footnote{For the sake of simplicity, in the rest of the proposal, we often call an instance of the Simpson's paradox simply a Simpson's paradox.} if the following conditions are satisfied:
\begin{enumerate}
    \item $P(Y_{i_2} \mid s_1) \geqslant P(Y_{i_2} \mid s_2)$;
    \item For every value $v \in \Dom(X_{i_1})$ such that $\cov(s_1[X_{i_1} = v]) \neq \varnothing$ and $\cov(s_2[X_{i_1} = v]) \neq \varnothing$, $P(Y_{i_2} \mid s_1[X_{i_1} = v]) \leq P(Y_{i_2} \mid s_2[X_{i_1} = v])$;
    \item Either the inequliaty in (1) or all inequalities in (2) are strict. \qed
\end{enumerate}
\end{definition}

Please note that, the directions of the inequalities in conditions (1) and (2) in the definition can be altered at the same time. 
Moreover, in some scenarios, one may expand the partitioning beyond a single separator attribute to encompass a set of attributes $\mathcal{S}$, where for each set of values $\mathcal{V} \in \prod_{X_j \in \mathcal{S}}\Dom(X_{j})$, we consider the sub-populations $s_1[\mathcal{S} = \mathcal{V}]$ and $s_2[\mathcal{S} = \mathcal{V}]$.
For the sake of simplicity, the rest of the paper assumes partitioning based on one single separator attribute, but our results can be straightforwardly extended to the general case of multiple attributes jointly serving as the separator.

\begin{example}[Simpson's Paradox]
\label{ex:simpson}
In \Cref{tab:ex1}, we want to analyze the association between the pair of differential values $(b_2, b_1)$ in population $(*,*,*)$ using attribute $A$ as the partitioning attribution, that is, we consider association configuration $((*, b_1, *), (*,b_2,*), A, Y_1)$. We observe that $P(Y_1 \mid (*, b_2, *)) = 0.50 < P(Y_1 \mid (*, b_1, *)) = 0.67$. Consider the sub-populations partitioned by $A$. We observe $P(Y_1 \mid (a_1, b_1, *)) = P(Y_1 \mid (a_1, b_2, *)) = 0$ and $P(Y_1 \mid (a_2, b_1, *)) = P(Y_1 \mid (a_2, b_2, *)) = 1$. According to \Cref{def:simpson}, $((*,b_1,*),(*,b_2,*),A,Y_1)$ is a Simpson's paradox.
\qed
\end{example}


Over the past century, numerous variants of Simpson's paradox have been proposed. 
The standard variety, known as the Association Reversal (AR)~\cite{samuels1993simpson}, presented in \Cref{def:simpson}, is the most frequently used in scientific analyses and is the default version in our experiments. 
Yule's Association Paradox (YAP)~\cite{yule1903notes}, a special instance of AR, occurs when there is no association between a pair of variables in the sub-populations while an association is observed in the overall population.
\Cref{ex:simpson} is an instance of YAP.
Additionally, there is a more generic scenario known as the Amalgamation Paradox (AMP)~\cite{good1987amalgamation}, where the degree of association in the overall population is greater (or smaller) than each degree of association in the sub-populations. 
A variant of AMP~\cite{alipourfard2018can}, coined as Averaged Association Reversal (AAR) by~\citet{wang2023learning}, occurs when the association in the overall population is different from the association averaged across the sub-populations. Both AMP and AAR are special cases of AR.
\citet{sprenger2021simpson} provide a comprehensive and insightful review on Simpson's paradox. 
Our method can be easily extended to handle all such variants of Simpson's paradox.
}

\section{Redundancy Among Instances of Simpson's Paradox}
\label{sec:coverage}

In practice, multiple populations in a table may have identical coverage, leading to different association configurations that capture essentially the same paradoxical behavior. 
For example, in \Cref{tab:ex1}, $\cov(a_1, *, *) = \cov(a_1, *, c_1) = \{t_1, t_2, t_3\}$. 
When a table has many attributes but relatively sparse records, such overlaps are common~\cite{lakshmanan2002quotient, beyer1999bottom, han2001efficient, kenneth1997fast, chen2003computation}. 
This incidental identicality can generate multiple Simpson's paradoxes that are redundant. 
In this section, we formalize this insight by defining \textbf{redundancy} through three types of equivalences that give rise to it, and then unifying them into a single definition.

\subsection{Three Types of Redundancies}
\label{sec:equivalence-types}

Redundancy may arise from three distinct sources. 
We first describe each case with formal statements and examples.

\subsubsection{Sibling Child Equivalence}
When sibling populations have identical coverage, their corresponding paradoxes are redundant.

\begin{lemma}[Sibling child equivalence]
\label{prop:sibling-eq}
Consider two association configurations $p = (s_1, s_2, X, Y)$ and $p' = (s'_1, s'_2, X, Y)$
where $\cov(s_1) = \cov(s'_1)$ and $\cov(s_2) = \cov(s'_2)$.
If $p$ is a Simpson's paradox, then $p'$ is also a Simpson's paradox. \qed
\end{lemma}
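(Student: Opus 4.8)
The plan is to exploit the fact that every quantity appearing in \Cref{def:simpson} is a frequency statistic $P(Y\mid\cdot)$, and that by construction such a statistic depends on its population argument only through that population's coverage. Concretely, since $P(Y\mid s)=|\cov(s)\cap\{t\in T: t.Y=1\}|/|\cov(s)|$, the hypotheses $\cov(s_1)=\cov(s'_1)$ and $\cov(s_2)=\cov(s'_2)$ immediately yield $P(Y\mid s_1)=P(Y\mid s'_1)$ and $P(Y\mid s_2)=P(Y\mid s'_2)$. This already transfers condition~(1) verbatim from $p$ to $p'$.

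The crux is condition~(2), which refers not to $s_1,s_2$ themselves but to their conditioned sub-populations $s_1\substitute{X}{v}$ and $s_2\substitute{X}{v}$, so I must first show that equal coverage of the parents forces equal coverage of all sub-populations. To this end I would establish a small structural fact: for any population $s$ with $s[X]=\ast$ and any $v\in\Dom(X)$,
\[
\cov(s\substitute{X}{v})=\cov(s)\cap\{t\in T: t.X=v\}.
\]
This holds because a record matches $s\substitute{X}{v}$ exactly when it matches $s$ on every other component (the only constraints $s$ imposes, since $s[X]=\ast$) and additionally has $t.X=v$. Because the separator $X$ is a wildcard in $s_1,s_2$ and, since $p'$ is an AC with the same separator, in $s'_1,s'_2$ as well, the identity applies to all four populations, and equal parent coverage then gives $\cov(s_1\substitute{X}{v})=\cov(s'_1\substitute{X}{v})$ and $\cov(s_2\substitute{X}{v})=\cov(s'_2\substitute{X}{v})$ for every $v$.

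Two consequences follow. First, the set of relevant separator values---those $v$ for which both $\cov(s_1\substitute{X}{v})$ and $\cov(s_2\substitute{X}{v})$ are non-empty---is identical for $p$ and $p'$, so the two configurations quantify over exactly the same $v$ in condition~(2). Second, for each such $v$ the coverage equalities yield $P(Y\mid s_1\substitute{X}{v})=P(Y\mid s'_1\substitute{X}{v})$ and likewise for $s_2$, so every inequality of condition~(2) holds for $p'$ iff it holds for $p$. Condition~(3) is preserved for the same reason: equality of the underlying probabilities preserves equalities and strict inequalities alike, so whichever inequality is strict for $p$ is strict for $p'$. Assembling these, $p'$ satisfies all three conditions and is therefore a Simpson's paradox.

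I expect the only genuine obstacle to be the structural identity for $\cov(s\substitute{X}{v})$, together with making explicit the assumption --- implicit in the AC definition --- that the separator $X$ is a wildcard in the sibling populations of both $p$ and $p'$. Once that identity is pinned down, the transfer of conditions~(1)--(3) is mechanical, since the entire definition of Simpson's paradox is a function of coverages alone.
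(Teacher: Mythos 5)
Your proof is correct and follows essentially the same route as the paper's: both reduce all three conditions of \Cref{def:simpson} to frequency statistics, which depend only on coverage, and transfer them to $p'$ via the equality of sub-population coverages. If anything, you are more thorough than the paper's proof, which asserts $\cov(s_j\substitute{X}{v})=\cov(s'_j\substitute{X}{v})$ without justification, whereas you derive it from the identity $\cov(s\substitute{X}{v})=\cov(s)\cap\{t\in T: t.X=v\}$ (valid because the separator is a wildcard in all four populations), and you also explicitly verify strictness condition (3) and the coincidence of the set of relevant separator values.
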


\begin{example}[Sibling child equivalence]
\label{ex:sibling}
We extend \Cref{tab:ex1} to \Cref{tab:ex2} by adding attribute $D$ and label attribute $Y_2$. 
Similar to \Cref{ex:simpson}, $((*, b_1, *, *), (*, b_2, *, *), A, Y_1)$ is a Simpson's paradox. 
It can be verified that $((*, *, *, d_1), (*, *, *, d_2), A, Y_1)$ is also a Simpson's paradox due to sibling child equivalence. \qed
\end{example}

\begin{table}[t]
  \centering
  \caption{Data table $T(A,B,C,D,Y_1,Y_2)$ containing 7 records.}
    \begin{tabular}{|c|c|c|c|c|c|c|}
    \hline
          & $A$ & $B$ & $C$ & $D$ & $Y_1$ & $Y_2$ \bigstrut\\
    \hline
    $t_1$ & $a_1$ & $b_1$ & $c_1$ & $d_1$ & 0 & 0 \bigstrut[t]\\
    $t_2$ & $a_1$ & $b_1$ & $c_1$ & $d_1$ & 0 & 0 \\
    $t_3$ & $a_1$ & $b_2$ & $c_1$ & $d_2$ & 0 & 0 \\
    $t_4$ & $a_2$ & $b_1$ & $c_2$ & $d_1$ & 1 & 1 \\
    $t_5$ & $a_2$ & $b_1$ & $c_2$ & $d_1$ & 1 & 1 \\
    $t_6$ & $a_2$ & $b_2$ & $c_2$ & $d_2$ & 1 & 1 \\
    $t_7$ & $a_2$ & $b_2$ & $c_2$ & $d_2$ & 1 & 1 \bigstrut[b]\\
    \hline
    \end{tabular}%
  \label{tab:ex2}%
\end{table}%

\subsubsection{Separator Equivalence}
When two separator attributes induce partitions that are aligned via a one-to-one mapping, the resulting paradoxes are redundant.

\begin{lemma}[Separator equivalence]
\label{prop:division-equivalence}
Consider two association configurations $p = (s_1, s_2, X, Y)$ and $p' = (s_1, s_2, X', Y)$,
where $X \neq X'$ and there exists a one-to-one mapping $f : \Dom(X) \mapsto \Dom(X')$ such that
for every $v \in \Dom(X)$ and $s \in \{s_1, s_2\}$, $\cov(s\substitute{X}{v}) = \cov(s\substitute{X'}{f(v)})$.
If $p$ is a Simpson's paradox, then $p'$ is also a Simpson's paradox. \qed
%
\end{lemma}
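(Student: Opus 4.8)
The plan is to reduce the whole statement to a single observation: the frequency statistic $P(Y\mid s)$ is a function of $\cov(s)$ alone, since $P(Y\mid s)=|\cov(s)\cap\{t: t.Y=1\}|\,/\,|\cov(s)|$. Hence any two populations with identical coverage yield identical statistics (this is the same principle that drives \Cref{prop:sibling-eq}). With this in hand I would verify the three conditions of \Cref{def:simpson} for $p'$ one by one, transporting each from $p$ through the bijection $f$, and keeping careful track of strictness so that condition (3) comes along for free.

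First, condition (1), namely $P(Y\mid s_1)\ge P(Y\mid s_2)$, involves only $s_1$, $s_2$, and $Y$, none of which differ between $p$ and $p'$; it therefore holds for $p'$ precisely because it holds for $p$, and its strict/non-strict status is literally identical. Next, for condition (2) I would fix $v\in\Dom(X)$ and set $v'=f(v)$. The hypothesis $\cov(s\substitute{X}{v})=\cov(s\substitute{X'}{v'})$ for $s\in\{s_1,s_2\}$ gives two things at once: the non-emptiness premises coincide ($\cov(s_i\substitute{X}{v})\neq\emptyset$ iff $\cov(s_i\substitute{X'}{v'})\neq\emptyset$), and, by the coverage-to-statistic principle, $P(Y\mid s_i\substitute{X}{v})=P(Y\mid s_i\substitute{X'}{v'})$ for $i\in\{1,2\}$. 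Thus the required inequality $P(Y\mid s_1\substitute{X'}{v'})\le P(Y\mid s_2\substitute{X'}{v'})$ is an exact copy of the one already verified for $p$, with the same strictness. Since every individual inequality (and its strictness) transports identically, condition (3)---a disjunction asserting strictness of (1) or of all of (2)---is inherited by $p'$ from $p$.

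The step I expect to be the crux is showing that condition (2) for $p'$ never needs to be checked at a value $v'\in\Dom(X')$ lying outside the image $f(\Dom(X))$---in other words, that $f$ already accounts for every $X'$-value contributing a non-empty cell. I would argue this by comparing two disjoint decompositions: because $X'$ is a valid separator of $p'$ (so $s_i[X']=\ast$), $\cov(s_i)=\bigsqcup_{w\in\Dom(X')}\cov(s_i\substitute{X'}{w})$, while the hypothesis together with the injectivity of $f$ yields $\cov(s_i)=\bigsqcup_{v\in\Dom(X)}\cov(s_i\substitute{X'}{f(v)})$. Equating these forces $\cov(s_i\substitute{X'}{w})=\emptyset$ for every $w\notin f(\Dom(X))$, so such $w$ fail the non-emptiness premise and are vacuously admissible. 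If the paper intends ``one-to-one mapping'' to mean a bijection, this counting argument is unnecessary and the proof collapses to the transport above; I would confirm that reading first, since it determines whether this paragraph is needed. Assembling the three verified conditions shows that $p'$ is a Simpson's paradox.
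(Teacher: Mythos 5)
Your proposal is correct, and its core is the same transport argument the paper uses: since $P(Y\mid s)$ depends only on $\cov(s)$, the hypothesis $\cov(s_i\substitute{X}{v})=\cov(s_i\substitute{X'}{f(v)})$ turns every sub-population inequality for $p$ into the corresponding one for $p'$, while condition (1) is untouched because $s_1$, $s_2$, and $Y$ are shared. Where you go beyond the paper is precisely at what you called the crux: the paper's proof only establishes the inequalities at values $f(v)\in f(\Dom(X))$ and never addresses whether some $w\in\Dom(X')\setminus f(\Dom(X))$ could have non-empty cells $\cov(s_i\substitute{X'}{w})$, which Definition~\ref{def:simpson}(2) would also require checking; your disjoint-decomposition argument (comparing the partition of $\cov(s_i)$ by $X'$ with its partition through the injective image of $f$) shows such cells must be empty, closing that hole when ``one-to-one'' is read as merely injective. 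You are also more careful about condition (3): the paper's proof silently works with strict inequalities, whereas you observe that each inequality transports with its strictness intact, so the disjunction in (3) is genuinely inherited. Both refinements are sound and make your write-up strictly more complete than the paper's.
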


\begin{example}[Separator equivalence]
\label{ex:division}
In \Cref{tab:ex2}, $((*, b_1, *, *), \\ (*, b_2, *, *), A, Y_1)$ is a Simpson's paradox. 
It can be verified that $((*, b_1, *, *), (*, b_2, *, *), C, Y_1)$ is also a Simpson's paradox due to separator equivalence. \qed
\end{example}

\subsubsection{Statistic Equivalence}
When label attributes are dependent, 
their paradoxes may be redundant. 
We identify three sufficient conditions for such equivalence.

\begin{lemma}[Statistic equivalence]
\label{prop:statistics-equivalence}
Consider two association configurations $p = (s_1, s_2, X, Y)$ and $p' = (s_1, s_2, X, Y')$ such that $Y \neq Y'$.
If $p$ is a Simpson's paradox and if any of the following (sufficient, and progressively less restrictive) conditions hold, then $p'$ is also a Simpson's paradox:
\begin{enumerate}
    \item For every $t \in \cov(s_1) \cup \cov(s_2)$, $t.Y = t.Y'$;
    \item For every $s \in \{s_1, s_2\}$, $P(Y | s) = P(Y'|s)$ and for every $v \in \Dom(X)$, $P(Y | s\substitute{X}{v}) = P(Y' | s\substitute{X}{v})$;    
    \item $\sign(P(Y | s_1) - P(Y | s_2)) = \sign(P(Y' | s_1) - P(Y' | s_2))$, and for every $v \in \Dom(X)$, 
    $\sign(P(Y | s_1\substitute{X}{v}) - P(Y | s_2\substitute{X}{v})) = \sign(P(Y' | s_1\substitute{X}{v}) - P(Y' | s_2\substitute{X}{v}))$.\qed
\end{enumerate}
%
\end{lemma}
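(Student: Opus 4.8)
The plan is to exploit the fact that the three defining conditions of a Simpson's paradox in \Cref{def:simpson} reference the label attribute $Y$ only through the frequency statistics evaluated on the populations $s_1$, $s_2$, $s_1\substitute{X}{v}$, and $s_2\substitute{X}{v}$ (for $v \in \Dom(X)$), and in fact only through the \emph{signs} of the differences $P(Y\mid s_1)-P(Y\mid s_2)$ and $P(Y\mid s_1\substitute{X}{v})-P(Y\mid s_2\substitute{X}{v})$. Concretely, condition (1) of \Cref{def:simpson} asserts the first sign is nonnegative, condition (2) asserts each of the disaggregated signs (over admissible $v$) is nonpositive, and condition (3) asserts strictness of either the first or of all the disaggregated inequalities. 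Because these populations' coverages depend only on the categorical attributes and not on the label, the set of \emph{admissible} separator values $v$ (those with $\cov(s_1\substitute{X}{v})\neq\emptyset$ and $\cov(s_2\substitute{X}{v})\neq\emptyset$) is identical whether the statistics are read under $Y$ or under $Y'$. I would therefore reduce the whole lemma to transporting these signs from $Y$ to $Y'$, and, matching the ``progressively less restrictive'' phrasing, verify the chain $(1)\Rightarrow(2)\Rightarrow(3)$ so that a single core argument under hypothesis (3) covers all three cases.

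The core step is to show that condition (3) already suffices. Assuming (3), for every admissible $v$ the sign of the aggregate difference and of each disaggregated difference agree between $Y$ and $Y'$. Hence $P(Y'\mid s_1)\ge P(Y'\mid s_2)$ (matching \Cref{def:simpson}(1)), $P(Y'\mid s_1\substitute{X}{v})\le P(Y'\mid s_2\substitute{X}{v})$ for every admissible $v$ (matching (2)), and the strict version of whichever inequality is strict for $Y$ is also strict for $Y'$ (matching (3)). Thus $p'=(s_1,s_2,X,Y')$ satisfies all three defining conditions and is a Simpson's paradox.

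It then remains to verify the implications $(1)\Rightarrow(2)\Rightarrow(3)$, after which the result follows for all three hypotheses. For $(2)\Rightarrow(3)$ the arguments of $\sign(\cdot)$ on the two sides of each equation are literally equal, so the signs trivially coincide. For $(1)\Rightarrow(2)$ I would use that $X$ is a wildcard in both $s_1$ and $s_2$, so $s_1\substitute{X}{v}$ and $s_2\substitute{X}{v}$ are children of $s_1$ and $s_2$, giving $\cov(s_1\substitute{X}{v})\subseteq\cov(s_1)$ and $\cov(s_2\substitute{X}{v})\subseteq\cov(s_2)$. Consequently every population $s$ appearing in (2) satisfies $\cov(s)\subseteq\cov(s_1)\cup\cov(s_2)$, on which $t.Y=t.Y'$ by hypothesis (1); since the frequency statistic counts exactly the records with label value $1$ over a fixed coverage, $P(Y\mid s)=P(Y'\mid s)$ for each such $s$, which is precisely condition (2).

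The argument is mostly bookkeeping; the one point that needs care — and the only genuine obstacle — is the handling of admissibility and well-definedness of the frequency statistics. I would make explicit that $P(Y\mid\cdot)$ is defined only on non-empty coverage, that admissibility is a property of coverage alone (hence label-independent), and that hypothesis (1) constrains record labels only on $\cov(s_1)\cup\cov(s_2)$, which via the child-containment above is exactly the union of coverages of all populations referenced by the paradox conditions. Establishing this containment, and the consequent equality of admissible-$v$ sets for $Y$ and $Y'$, is what makes the three implications clean and lets the core step apply uniformly rather than forcing separate domain caveats for each label.
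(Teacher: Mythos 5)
Your proof is correct, but it takes a genuinely different route from the paper's. The paper argues by case analysis: it handles hypotheses (1) and (2) together (implicitly using that record-level label agreement on $\cov(s_1)\cup\cov(s_2)$ forces the frequency equalities of (2)), and then gives a separate sign-transfer argument for hypothesis (3). You instead prove the implication chain $(1)\Rightarrow(2)\Rightarrow(3)$ and run a single core argument under (3) only. Your decomposition buys three things: it avoids duplicating the verification of \Cref{def:simpson} across cases; it simultaneously justifies the lemma's ``progressively less restrictive'' phrasing, which the paper asserts but never proves; and it handles the strictness requirement (\Cref{def:simpson}, condition (3)) cleanly, since sign equality transports strict inequalities as well as weak ones. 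This last point is where the paper's own proof is actually loose: it writes $P(Y\mid s_1) > P(Y\mid s_2)$ and $\sign(P(Y\mid s_1\substitute{X}{v}) - P(Y\mid s_2\substitute{X}{v})) = -1$ as though the paradox were always ``strict aggregate, weak disaggregated,'' whereas \Cref{def:simpson} only guarantees $\geq$ in the aggregate with strictness in either the aggregate or all sub-populations; your sign-matching argument covers all of these configurations uniformly. Your explicit treatment of admissibility (that the set of values $v$ with non-empty coverages is determined by the categorical attributes alone, hence identical for $Y$ and $Y'$) is also a point the paper leaves tacit. The one thing to note is that your chain $(1)\Rightarrow(2)$ needs exactly the containment $\cov(s_j\substitute{X}{v})\subseteq\cov(s_j)$ that you state, so keep that step explicit in the write-up; with it in place, the argument is complete.
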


\begin{example}[Statistic equivalence]
\label{ex:statistics}
We extend \Cref{tab:ex2} to \Cref{tab:ex3}. 
Consider $p = ((*, b_1, *, *), (*, b_2, *, *), A, Y_1)$. 
We observe $P(Y_1 | (*, b_1, *, *)) = 0.33 < P(Y_1 | (*, b_2, *, *)) = 0.40$. 
Partitioning by $A$ yields $P(Y_1 | (a_1, b_1, *, *)) = P(Y_1 | (a_1, b_2, *, *)) = 0$ and $P(Y_1 | (a_2, b_1, *, *)) = P(Y_1 | (a_2, b_2, *, *)) = 0.50$, confirming that $p$ is a Simpson's paradox. It follows that:
\begin{itemize}
    \item $((*, b_1, *, *), (*, b_2, *, *), A, Y_2)$ is statistic equivalent to $p$ (Case 1), since $t.Y_1 = t.Y_2$ for every record $t$.
    \item $((*, b_1, *, *), (*, b_2, *, *), A, Y_3)$ is statistic equivalent to $p$ (Case 2), as the frequency statistics of $Y_3$ match those of $Y_1$ across all relevant populations, even though $Y_1 \neq Y_3$ for some records.
    \item $((*, b_1, *, *), (*, b_2, *, *), A, Y_4)$ is statistic equivalent to $p$ (Case 3), since the signs of the frequency statistics differences coincide for $Y_1$ and $Y_4$ at both the aggregate and sub-populations levels. 
    \item Note that $Y_4$ could be constructed such that the sign of frequency statistics differences at the aggregate level matches that of $Y_1$, while for one sub-population the difference is negative for $Y_1$ but zero for $Y_4$. Despite sign mismatch, Simpson's Paradox is still preserved with $Y_4$, showing that Case 3 is sufficient but not necessary.\qed
\end{itemize}
\end{example}

\begin{table}[t]
  \centering
  \caption{Data table $T(A,B,C,D,Y_1,Y_2,Y_3,Y_4)$ with 11 records.}
    \begin{tabular}{|c|c|c|c|c|c|c|c|c|}
    \hline
          & $A$ & $B$ & $C$ & $D$ & $Y_1$ & $Y_2$ & $Y_3$ & $Y_4$ \bigstrut\\
    \hline
    $t_1$ & $a_1$ & $b_1$ & $c_1$ & $d_1$ & 0 & 0 & 0 & 0 \bigstrut[t]\\
    $t_2$ & $a_1$ & $b_1$ & $c_1$ & $d_1$ & 0 & 0 & 0 & 0 \\
    $t_3$ & $a_1$ & $b_1$ & $c_1$ & $d_2$ & 0 & 0 & 0 & 0 \\
    $t_4$ & $a_1$ & $b_2$ & $c_1$ & $d_1$ & 0 & 0 & 0 & 0 \\
    $t_5$ & $a_1$ & $b_2$ & $c_1$ & $d_2$ & 0 & 0 & 0 & 0 \\
    $t_6$ & $a_2$ & $b_1$ & $c_2$ & $d_1$ & 0 & 0 & 1 & 1 \\
    $t_7$ & $a_2$ & $b_1$ & $c_2$ & $d_1$ & 1 & 1 & 0 & 1 \\
    $t_8$ & $a_2$ & $b_1$ & $c_2$ & $d_2$ & 1 & 1 & 1 & 1 \\
    $t_9$ & $a_2$ & $b_2$ & $c_2$ & $d_1$ & 0 & 0 & 1 & 1 \\
    $t_{10}$ & $a_2$ & $b_2$ & $c_2$ & $d_2$ & 1 & 1 & 0 & 1 \\
    $t_{11}$ & $a_2$ & $b_2$ & $c_2$ & $d_2$ & 1 & 1 & 1 & 1 \bigstrut[b]\\
    \hline
    \end{tabular}%
  \label{tab:ex3}%
\end{table}%

\subsection{Equivalence Classes of Simpson's Paradoxes}
\label{sec:redundancy}

Redundant Simpson's paradoxes may arise from more than one of the equivalence types, sometimes combining sibling child, separator, and statistic equivalence within the same instances.

\begin{example}[Redundancy]
In \Cref{tab:ex2}, $((*, b_1, *, *), \\ (*, b_2, *, *), A, Y_1)$ is a Simpson's paradox. 
It can be verified that $((*, *, *, d_1), (*, *, *, d_2), C, Y_2)$ is also a Simpson's paradox, and in fact redundant due to sibling child, separator, and statistic equivalence simultaneously. 
\qed
\end{example}

Motivated by the above observation, we integrate the three equivalences into a unified notion.

\begin{definition}[Redundancy]
\label{def:coverage}
Two distinct paradoxes are:
\begin{itemize}
    \item \textbf{sibling child equivalent} if they satisfy \Cref{prop:sibling-eq};
    \item \textbf{separator equivalent} if they satisfy \Cref{prop:division-equivalence};
    \item \textbf{statistic equivalent} if they satisfy at least one of the conditions in \Cref{prop:statistics-equivalence}.
\end{itemize}
They are \textbf{redundant} if any of the above holds.\qed
\end{definition}

\begin{theorem}[Equivalence]
\label{thm:equiv}
Redundancy of Simpson's paradoxes is an equivalence relation. \qed
\end{theorem}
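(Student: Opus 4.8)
The plan is to verify the three defining properties of an equivalence relation — reflexivity, symmetry, and transitivity — for the redundancy relation of \Cref{def:coverage}, treating it (as the ``simultaneously'' in the preceding Redundancy example suggests) as the relation generated by the three base equivalences. Reflexivity I would dispatch by convention: every paradox is declared redundant with itself, consistent with each base equivalence degenerating to the identity (take $s_i' = s_i$, $X' = X$ and $f = \mathrm{id}$, or $Y' = Y$). Symmetry I would check one base relation at a time: sibling child equivalence is symmetric because $\cov(s_i) = \cov(s_i')$ is itself symmetric; separator equivalence is symmetric because if $f:\Dom(X)\to\Dom(X')$ witnesses it, then $f^{-1}$ witnesses the reverse; and statistic equivalence is symmetric because each of the three conditions in \Cref{prop:statistics-equivalence} (record-wise equality, probability equality, sign equality) is symmetric in $Y$ and $Y'$. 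A union of symmetric relations is symmetric, so symmetry of redundancy follows.

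The real work is transitivity, and here I would first isolate two structural facts. The first is \textbf{coverage propagation}: for any population $s$, separator $X$, and value $v$, $\cov(s\substitute{X}{v}) = \cov(s)\cap\{t : t.X = v\}$; hence $\cov(s_1) = \cov(s_1')$ forces $\cov(s_1\substitute{X}{v}) = \cov(s_1'\substitute{X}{v})$ for every $v$, and likewise for $s_2$. This pushes differential-coverage equality down to all separator-induced sub-populations, so the three kinds of variation act on essentially independent coordinates of an AC. The second is the \textbf{statistic hierarchy} already noted in \Cref{prop:statistics-equivalence}: Condition~(1) implies Condition~(2) implies Condition~(3), since record-wise equality of $Y$ and $Y'$ on $\cov(s_1)\cup\cov(s_2)$ forces equality of every frequency statistic involved, which in turn forces equality of the signed differences. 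Consequently any statistic equivalence can be re-expressed through the weakest Condition~(3), whose sign-equality form is visibly transitive.

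With these in hand I would organize transitivity around a single \textbf{master correspondence}: declare $p = (s_1, s_2, X, Y)$ and $p' = (s_1', s_2', X', Y')$ corresponding when $\cov(s_1) = \cov(s_1')$, $\cov(s_2) = \cov(s_2')$, there is a bijection $f:\Dom(X)\to\Dom(X')$ with $\cov(s_i\substitute{X}{v}) = \cov(s_i'\substitute{X'}{f(v)})$ for all $v$ and $i \in \{1,2\}$, and the signed frequency differences of $Y$ and $Y'$ agree at the aggregate level and at every matched sub-population. Using coverage propagation and the statistic hierarchy, I would show that each of \Cref{prop:sibling-eq}, \Cref{prop:division-equivalence}, and \Cref{prop:statistics-equivalence} is a special case of this master correspondence (taking $f = \mathrm{id}$ in the first and third, and $s_i' = s_i$ in the second and third). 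The master correspondence is manifestly an equivalence relation: reflexive via $f=\mathrm{id}$, symmetric via $f^{-1}$, and transitive because set equalities chain, bijections compose ($g\circ f$ aligns sub-coverages across three ACs), and sign equalities chain. Since every base equivalence lands inside the master correspondence, which is itself an equivalence relation, any chain of base equivalences — including mixed chains that vary the differential pair, the separator, and the label at once — remains within it; as redundancy is the relation generated by these base equivalences, it is therefore transitive.

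The step I expect to be the main obstacle is exactly these \textbf{mixed-type compositions}. A single application of any one base equivalence varies only one coordinate of the AC (the differential pair, the separator, or the label), so the naive union of the three relations is not transitive on its own: composing, say, a sibling child step with a separator step produces a pair of paradoxes differing in two coordinates. The master-correspondence detour is what absorbs this — coverage propagation guarantees that a differential-coverage match automatically aligns the separator partitions, so the sibling and separator variations commute and can be carried simultaneously, and the statistic hierarchy guarantees the label variation composes independently through Condition~(3). The remaining care is bookkeeping: verifying that the three sources of matching interact consistently (e.g., that the bijection $f$ supplied by a separator step is compatible with the coverage equalities supplied by a sibling step), together with confirming, via the three lemmas, that the construction never leaves the set of genuine Simpson's paradoxes, so that ``redundancy of Simpson's paradoxes'' is well defined.
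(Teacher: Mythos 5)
Your core machinery coincides with the paper's actual proof: the appendix proof of \Cref{thm:equiv} verifies exactly your ``master correspondence'' --- coverage equalities of the sibling pairs, statistic equalities, and a one-to-one mapping aligning separator-induced sub-populations --- showing reflexivity via identity witnesses, symmetry via inverting the mapping, and transitivity by chaining the set and statistic equalities and composing the bijections $g \circ f$. Your coverage-propagation fact and the Condition~(1)~$\Rightarrow$~(2)~$\Rightarrow$~(3) hierarchy are sound supplements that the paper does not spell out, and you correctly identify the central obstacle (mixed-type compositions).

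The gap is in your final inference. From ``every chain of base equivalences stays inside the master correspondence'' you conclude that redundancy is transitive; that implication is backwards, since containment in a transitive relation never confers transitivity on the smaller relation. Concretely, if redundancy is read literally as the disjunction in \Cref{def:coverage} --- at least one of \Cref{prop:sibling-eq}, \Cref{prop:division-equivalence}, \Cref{prop:statistics-equivalence} holds --- then it is genuinely \emph{not} transitive: take $p_1=(s_1,s_2,X,Y)$ sibling-child equivalent to $p_2=(s_1',s_2',X,Y)$ with $s_i'\neq s_i$, and $p_2$ separator equivalent to $p_3=(s_1',s_2',X',Y)$; then $p_1$ and $p_3$ differ in both the sibling pair and the separator, so none of the three lemmas applies to them (sibling child equivalence requires the same $X$, separator and statistic equivalence require the same sibling populations). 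If instead redundancy means the generated equivalence closure, transitivity holds by fiat and your containment argument adds nothing to the theorem. What the paper actually does --- consistently with \Cref{prop:signature}, which equates redundancy with equality of signatures --- is to \emph{identify} redundancy with the master correspondence itself and verify the three axioms for that conjunctive relation directly; this identification is the step missing from your write-up. To salvage your ``generated relation'' reading you would additionally need the reverse containment (every master-related pair is joined by a chain of single-coordinate base steps), which forces you to check that intermediate tuples such as $(s_1',s_2',X,Y)$ are valid ACs at all --- e.g., $s_1'[X]=s_2'[X]=\ast$ can fail for members of a coverage group --- and neither you nor the paper carries that out.
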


Because redundancy is an equivalence relation, the set of all Simpson's paradoxes can be partitioned into equivalence classes. 
Each class corresponds to a group of mutually redundant paradoxes. 
In some cases, an equivalence class of Simpson's paradox may contain only a single instance when no redundancy is observed.

\begin{example}[Equivalence class]
\label{ex:group-redundant}
Consider \Cref{tab:ex2}. 
From \Cref{ex:simpson}, $p_1 = ((*, b_1, *, *), (*, b_2, *, *), A, Y_1)$ is a paradox. 
We also obtain the following paradoxes redundant to $p_1$:
\begin{itemize}
    \item $p_2 = ((*, *, *, d_1), (*, *, *, d_2), A, Y_1)$ (sibling child equiv.);
    \item $p_3 = ((*, b_1, *, *), (*, b_2, *, *), C, Y_1)$ (separator equiv.);
    \item $p_4 = ((*, b_1, *, *), (*, b_2, *, *), A, Y_2)$ (statistic equiv.);
    \item $p_5 = ((*, *, *, d_1), (*, *, *, d_2), C, Y_1)$ (sibling and separator);
    \item $p_6 = ((*, *, *, d_1), (*, *, *, d_2), A, Y_2)$ (sibling and statistic);
    \item $p_7 = ((*, b_1, *, *), (*, b_2, *, *), C, Y_2)$ (separator and stat.);
    \item $p_8 = ((*, *, *, d_1), (*, *, *, d_2), C, Y_2)$ (all equivalences).
\end{itemize}
The set $\{p_1, p_2,\ldots, p_8\}$ forms an equivalence class. \qed
\end{example}

\subsection{Representing Equivalence Classes Concisely}
\label{sec:concise-representation}
In real datasets with many attributes, the number of redundant paradoxes can be large. 
We therefore propose a concise representation for an equivalence class of redundant Simpson's paradoxes
(or \textbf{redundant paradox group} for short).
To begin, we make the following observation.

\begin{lemma}[Product Space]
\label{prop:product-space}
Each redundant paradox group can be characterized by the product of:
$\mathcal{E}_1 \times \mathcal{E}_2 \times \mathbf{X} \times \mathbf{Y}$, where $\mathbf{X}$ is a set of separator attributes, $\mathbf{Y}$ is a set of label attributes, and $\mathcal{E}_1, \mathcal{E}_2$ are sets of sibling populations, each containing populations with identical coverage.
Any choice of $(s_1, s_2, X, Y) \in \mathcal{E}_1 \times \mathcal{E}_2 \times \mathbf{X} \times \mathbf{Y}$ where $s_1,s_2$ are siblings is a Simpson's paradox in the redundant paradox group.
\qed
\end{lemma}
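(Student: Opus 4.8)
The plan is to fix a representative paradox $p=(s_1,s_2,X,Y)$ of the group and to recover the whole group as a Cartesian product by isolating four coordinate-wise invariants, then proving two inclusions. First I would set $C_1=\cov(s_1)$ and $C_2=\cov(s_2)$ and define $\mathcal{E}_1=\{s:\cov(s)=C_1\}$ and $\mathcal{E}_2=\{s:\cov(s)=C_2\}$; let $\mathbf{X}$ be the set of attributes on which $s_1,s_2$ are wildcard and that are separator-equivalent to $X$ in the sense of \Cref{prop:division-equivalence}; and let $\mathbf{Y}$ be the set of labels statistic-equivalent to $Y$ in the sense of \Cref{prop:statistics-equivalence}. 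Writing $\mathcal{G}$ for the set of \emph{valid} configurations $(s_1',s_2',X',Y')\in\mathcal{E}_1\times\mathcal{E}_2\times\mathbf{X}\times\mathbf{Y}$ in which $s_1',s_2'$ are siblings (hence wildcard on $X'$), the goal becomes the identity $\mathcal{G}=[p]$, where $[p]$ is the equivalence class of $p$ under redundancy (\Cref{thm:equiv}).

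The crucial preliminary step is to show that the defining conditions of the three equivalences factor through coverage classes and induced partitions rather than through the syntactic form of the populations. The key identity is that whenever $s[X]=\ast$ one has $\cov(s\substitute{X}{v})=\cov(s)\cap\{t:t.X=v\}$, so that $\cov(s_1')=C_1$ forces $\cov(s_1'\substitute{X}{v})=\cov(s_1\substitute{X}{v})$ for every value $v$, and similarly for $s_2'$. From this I would deduce that the one-to-one alignment required in \Cref{prop:division-equivalence} and all frequency statistics and sign comparisons in \Cref{prop:statistics-equivalence} are unchanged when $s_1,s_2$ are replaced by same-coverage siblings. This makes $\mathbf{X}$ and $\mathbf{Y}$ well defined independently of the representatives chosen from $\mathcal{E}_1,\mathcal{E}_2$; moreover, because separator-equivalent attributes induce the same partition of $C_1$ and of $C_2$, the set $\mathbf{Y}$ does not depend on which attribute of $\mathbf{X}$ is used to evaluate the statistic conditions.

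For the inclusion $\mathcal{G}\subseteq[p]$, which at the same time establishes that every element of the product is a Simpson's paradox, I would take any $(s_1',s_2',X',Y')\in\mathcal{G}$ and connect it to $p$ through the chain
\[
(s_1,s_2,X,Y)\ \to\ (s_1,s_2,X',Y)\ \to\ (s_1',s_2',X',Y)\ \to\ (s_1',s_2',X',Y'),
\]
whose three arrows are instances of \Cref{prop:division-equivalence}, \Cref{prop:sibling-eq}, and \Cref{prop:statistics-equivalence}, respectively. The ordering matters: changing the separator first, while the original $s_1,s_2$ still carry the wildcards required on every attribute of $\mathbf{X}$, then swapping the populations, then swapping the label, guarantees that every intermediate tuple is a valid configuration. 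The preconditions of the sibling-child and statistic steps hold by the invariance established above, so each intermediate is again a Simpson's paradox, and transitivity from \Cref{thm:equiv} places all of them in $[p]$. For the reverse inclusion $[p]\subseteq\mathcal{G}$, I would argue that each single equivalence preserves membership in $\mathcal{G}$: a sibling-child step fixes $X,Y$ and preserves coverage, hence stays in $\mathcal{E}_1\times\mathcal{E}_2$; a separator step fixes the populations and $Y$ and replaces $X$ by a separator-equivalent attribute, hence stays in $\mathbf{X}$; a statistic step fixes the populations and $X$ and replaces $Y$ by a statistic-equivalent label, hence stays in $\mathbf{Y}$. Since every member of $[p]$ is obtained from $p$ by such equivalences, it lies in $\mathcal{G}$.

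The main obstacle I anticipate is precisely the invariance argument of the second paragraph: I must verify carefully that the separator- and statistic-equivalence conditions genuinely depend only on the coverage classes $C_1,C_2$ and on the partitions they induce, so that $\mathbf{X}$ and $\mathbf{Y}$ are well defined on the whole class and the class factors as a clean product rather than an entangled subset. Once the coverage-invariance identity $\cov(s\substitute{X}{v})=\cov(s)\cap\{t:t.X=v\}$ is in hand, the remaining checks---transferring the bijection $f$ of \Cref{prop:division-equivalence} and each of the three conditions of \Cref{prop:statistics-equivalence} across same-coverage siblings and across separator-equivalent attributes---are routine but must be carried out case by case, together with the bookkeeping that keeps every intermediate configuration in the chain a valid AC.
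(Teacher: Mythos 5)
Your proposal is correct and follows essentially the same route as the paper's proof: define $\mathcal{E}_1,\mathcal{E}_2$ as coverage classes and $\mathbf{X},\mathbf{Y}$ as the separator- and statistic-equivalent attributes/labels, then prove both inclusions by chaining single applications of Lemmas~\ref{prop:sibling-eq}, \ref{prop:division-equivalence}, and~\ref{prop:statistics-equivalence} and invoking transitivity (Theorem~\ref{thm:equiv}); the paper merely orders its chain sibling--separator--statistic rather than separator--sibling--statistic, which is immaterial. If anything, your explicit invariance identity $\cov(s\substitute{X}{v})=\cov(s)\cap\{t: t.X=v\}$ spells out the transfer of equivalence conditions across same-coverage populations that the paper uses only implicitly, which is a welcome tightening.
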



\begin{example}
\label{ex:group-redundant-non-concise}
The redundant paradox group from \Cref{ex:group-redundant} is characterized by the product space of the following:
\begin{itemize}
    \item $\mathcal{E}_1 = \{(*, b_1, *, *), (*, *, *, d_1), (*, b_1, *, d_1)\}$,
    \item $\mathcal{E}_2 = \{(*, b_2, *, *), (*, *, *, d_2), (*, b_2, *, d_2)\}$,
    \item $\mathbf{X} = \{A, C\}$,
    \item $\mathbf{Y} = \{Y_1, Y_2\}$.
\end{itemize}
This product space encompasses multiple paradoxes. 
For instance, $((*, b_1, *, *), (*, b_2, *, *), A, Y_1)$ and $((*, *, *, d_1), (*, *, *, d_2), C, Y_2)$ are included. 
However, $((*, b_1, *, d_1), (*, b_2, *, d_2), A, Y_1)$ is not, since these two populations are not valid siblings. \qed
\end{example}

Next, we show that $\mathcal{E}_1$ and $\mathcal{E}_2$ can be represented more concisely.
We partition the set of all populations $\mathcal{P}$ based on coverage, denoted $\mathcal{P}/\equiv_{\cov}$, where each \textbf{coverage group} $\mathcal{E} \in \mathcal{P}/\equiv_{\cov}$ contains populations with identical coverage.
We show that each such coverage group exhibits a structural property in the population lattice: they form convex subsets. 
This means that if two populations belong to the same coverage group,
then all populations that lie between them in the lattice hierarchy must also belong to that group.
Given convexity of any $\mathcal{E} \in \mathcal{P}/\equiv_{\cov}$,
we call a population $s \in \mathcal{E}$ an \textbf{upper bound} of $\mathcal{E}$ if there is no $s' \in \mathcal{E}$ with $s \prec s'$ (i.e., least descendant);
similarly, we call $s$ a \textbf{lower bound} of $\mathcal{E}$ if there is no $s' \in \mathcal{E}$ with $s' \prec s$ (i.e., greatest ancestor).
We denote the set of upper bounds of $\mathcal{E}$ by $\ubound(\mathcal{E})$ and the set of lower bounds of $\mathcal{E}$ by $\lbound(\mathcal{E})$.
Convexity ensures that we can represent coverage groups $\mathcal{E}_1$ and $\mathcal{E}_2$ using only these bounds,
thereby avoiding enumerating all members, which can be prohibitively large in high-dimensional datasets.
Furthermore, we show that each coverage group has a unique upper bound
(though it can have multiple lower bounds).

\begin{property}[Convexity of coverage groups]
\label{prop:convex-property} 
Let $\mathcal{P}$ be the set of all populations.
For each coverage group $\mathcal{E} \in \mathcal{P}/\equiv_{\cov}$, $\mathcal{E}$ is a convex subset of coverage-identical populations.
Furthermore, $| \ubound(\mathcal{E}) | = 1$ and the least descendant is the unique upper bound.\qed
\end{property}


\begin{property}[Reconstruction from bounds]
\label{prop:convex-reconstruction}
Let $\mathcal{E} \subseteq \mathcal{P}$ be a convex subset of populations. 
Then $s \in \mathcal{E}$ if and only if there exist $s_l \in \lbound(\mathcal{E})$ and $\{s_u\} = \ubound(\mathcal{E})$ such that $s_l \preceq s \preceq s_u$. \qed
\end{property}

Using these properties and \Cref{prop:product-space}, we can concisely represent each redundant paradox group.
We remark that while Property~\ref{prop:convex-property} establishes that upper bounds uniquely identify (convex) coverage groups, reconstruction requires lower bounds. Given only the upper bound $s_u$, enumerating all $s \in \mathcal{E}$ requires verifying equality of coverage for each ancestor $s \prec s_u$. With lower bounds $\lbound(\mathcal{E})$, reconstruction is straightforward and efficient: following Property~\ref{prop:convex-reconstruction}, enumerate all $s$ satisfying $s_l \preceq s \preceq s_u$ for every $s_l \in \lbound(\mathcal{E})$. This reconstruction is essential because our concise representation must generate all Simpson's paradoxes in the group by enumerating every valid sibling pair across populations in $\mathcal{E}_1$ and $\mathcal{E}_2$ (as shown in Examples~\ref{ex:group-redundant},~\ref{ex:group-redundant-non-concise}).

\begin{definition}[Concise representation]
A redundant paradox group characterized by $\mathcal{E}_1 \times \mathcal{E}_2 \times \mathbf{X} \times \mathbf{Y}$ can be represented as:
\[
(\ubound(\mathcal{E}_1), \lbound(\mathcal{E}_1), \ubound(\mathcal{E}_2), \lbound(\mathcal{E}_2), \mathbf{X}, \mathbf{Y}),
\]
where $\ubound(\mathcal{E}_1)$ and $\ubound(\mathcal{E}_2)$ are the (unique) upper bounds of $\mathcal{E}_1$ and $\mathcal{E}_2$,
and $\lbound(\mathcal{E}_1)$ and $\lbound(\mathcal{E}_2)$ are their lower bounds.\qed
\end{definition}
By \Cref{prop:convex-reconstruction}, this representation is precise: all populations in $\mathcal{E}_1$ and $\mathcal{E}_2$ (and thus all redundant Simpson's paradoxes in the group) can be reconstructed from the bounds.

\begin{example}[Concise representation]
\label{ex:concise-representation}
From \Cref{ex:group-redundant}, the redundant paradox group can be represented as:
\begin{itemize}
    \item $\ubound(\mathcal{E}_1) = \{(*, a_1, *, b_1)\}$,
    \item $\lbound(\mathcal{E}_1) = \{(*, a_1, *, *), (*, *, *, b_1)\}$,
    \item $\ubound(\mathcal{E}_2) = \{(*, a_2, *, b_2)\}$, 
    \item $\lbound(\mathcal{E}_2) = \{(*, a_2, *, *), (*, *, *, b_2)\}$,
    \item $\mathbf{X} = \{A, C\}$,
    \item $\mathbf{Y} = \{Y_1, Y_2\}$.
\end{itemize}
All eight redundant Simpson's paradoxes from \Cref{ex:group-redundant} are captured.
For instance, $((*, b_1, *, *), (*, b_2, *, *), A, Y_1)$ is valid because $(*, b_1, *, *) \in \mathcal{E}_1$ and $(*, b_2, *, *) \in \mathcal{E}_2$ are siblings. 
By contrast, $((*, b_1, *, d_1), (*, b_2, *, d_2), A, Y_1)$ is not valid, since these two populations are not siblings. \qed
\end{example}

\nop{
\hline

\section{Redundancy Among Instances of Simpson's Paradox}
\label{sec:coverage}

Given a table, two populations may incidentally have the same coverage.  For example, in \Cref{tab:ex1}, $cov(a_1, *, *)=cov(a_1, *, c_1)=\{t_1, t_2, t_3\}$. Indeed, when a table has many attributes and a limited number of records, likely many populations share the same coverage~\cite{lakshmanan2002quotient}.
The incidental equivalence in coverage may lead to two Simpson's paradoxes containing the same information and thus are redundant to each other. We capture this insight by a novel notion of \textbf{coverage redundancy}. In this section, we first present three types of redundancies due to coverage equivalence.  Then, we develop the notion of coverage redundancy.

\subsection{Three Types of Coverage Equivalence and Redundancies}
\label{sec:equivalence-types}

Essentially, coverage redundancy may come from three sources. 
Firstly, the redundancy may come from equivalence in the coverages of the sibling populations, which we call \textbf{sibling equivalence}.
According to \Cref{def:simpson}, we show the following.  

\begin{lemma}[Sibling equivalence]\label{prop:sibling-eq}
Consider two configurations $p_1 \neq p_2$, where $p_k=(s_{1_k}, s_{2_k}, X_{i_1}, Y_{i_2})$ $(k=1, 2)$, such that $cov(s_{1_1})=cov(s_{1_2})$ and $cov(s_{2_1})=cov(s_{2_2})$.
If $p_1$ is a Simpson's paradox then $p_2$ is also a Simpson's paradox.
\nop{
\proof \todo{The proof seems incomplete.}
Since $p_1$ is a Simpson's paradox, $P(Y_{i_2} \mid s_{1_1}) > P(Y_{i_2} \mid s_{2_1})$.
Due to $cov(s_{j_1})=cov(s_{j_2})$ $(j = 1,2)$, $P(Y_{i_2} \mid s_{j_1}) = P(Y_{i_2} \mid s_{j_2})$.
Therefore, $P(Y_{i_2} \mid s_{1_2}) > P(Y_{i_2} \mid s_{2_2})$.
\textcolor{blue}{
Furthermore, for every $v \in Dom(X_{i_1})$, we have that $cov(s_{j_1}[X_{i_1} = v]) = cov(s_{j_2}[X_{i_1} = v])$ $(j=1,2)$, implying $P(Y_{i_2} \mid s_{j_1}[X_{i_1} = v]) = P(Y_{i_2} \mid s_{j_2}[X_{i_1} = v])$ $(j=1,2)$.
Hence, for every $v \in Dom(X_{i_1})$, $P(Y_{i_2} \mid s_{1_2}[X_{i_1} = v]) \leqslant P(Y_{i_2} \mid s_{2_2}[X_{i_1} = v])$.
It follows, from \Cref{def:simpson}, that $p_2$ is a Simpson's paradox.
}
}
\qed
\end{lemma}

\begin{table}[t]
  \centering
  \caption{Data table $T(A,B,C,D,Y_1,Y_2)$ containing 7 records.}
    \begin{tabular}{|c|c|c|c|c|c|c|}
    \hline
          &$A$   & $B$   &  $C$  & $D$   & $Y_1$ & $Y_2$ \bigstrut\\
    \hline
    $t_1$ &$a_1$ & $b_1$ & $c_1$ & $d_1$ & 0 & 0 \bigstrut[t]\\
    $t_2$ &$a_1$ & $b_1$ & $c_1$ & $d_1$ & 0 & 0 \\
    $t_3$ &$a_1$ & $b_2$ & $c_1$ & $d_2$ & 0 & 0 \\
    $t_4$ &$a_2$ & $b_1$ & $c_2$ & $d_1$ & 1 & 1 \\
    $t_5$ &$a_2$ & $b_1$ & $c_2$ & $d_1$ & 1 & 1 \\
    $t_6$ &$a_2$ & $b_2$ & $c_2$ & $d_2$ & 1 & 1\\
    $t_7$ &$a_2$ & $b_2$ & $c_2$ & $d_2$ & 1 & 1 \bigstrut[b]\\
    \hline
    \end{tabular}%
  \label{tab:ex2}%
\end{table}%

\begin{example}[Sibling equivalence]
\label{ex:sibling}
We extend table $T$ in \Cref{tab:ex1} to that in \Cref{tab:ex2} by adding attribute $D$ and label attribute $Y_2$. Label attribute $Y_2$ will be used in some examples later. Similar to what is shown in \Cref{ex:simpson}, we can show that $((*,b_1,*,*),(*,b_2,*,*),A, Y_1)$ is a Simpson's Paradox in $T$ in \Cref{tab:ex2}.  It can be easily verified that $((*,*,*,d_1),(*,*,*,d_2), A, Y_1)$ is also a Simpson's paradox due to sibling equivalence.
\qed
\end{example}

Secondly, when two separator attributes $X_{i_1}$ and $X_{i_1'}$ produce correlated sub-population partitions, if for every sub-population $s_1[X_{i_1} = v]$ $(v \in Dom(X_{i_1}))$ partitioned according to $X_{i_1}$, there exists a sub-population $s_1[X_{i_1'} = v']$ $(v' \in Dom(X_{i_1'}))$ partitioned using $X_{i_1'}$ such that $cov(s_1[X_{i_1} = v]) = cov(s_1[X_{i_1'} = v'])$, then it also results in redundant Simpson's paradoxes.
We refer to this as \textbf{division equivalence}.

\begin{lemma}[Division equivalence]
\label{prop:division-equivalence}
Suppose $p = (s_1,s_2, X_{i_1}, Y_{i_2})$ is a Simpson's paradox.
Let $X_{i_1'}$ be a separator attribute other than $X_{i_1}$.
Then $p' = (s_1,s_2, X_{i_1'},Y_{i_2})$ is a Simpson's paradox if there exists a one-to-one mapping $f$ between $Dom(X_{i_1})$ and $Dom(X_{i_1'})$ such that for every value $v \in Dom(X_{i_1})$, $cov(s_j[X_{i_1}=v]) = cov(s_j[X_{i_1'}=f(v)])$ $(j = 1, 2)$.
\nop{
\proof \todo{The proof is incomplete since only one condition in the definition is shown.}
Since $p$ is a Simpson's paradox, $P(Y_{i_2} \mid s_1) > P(Y_{i_2} \mid s_2)$.
\textcolor{blue}{
The populations $s_1$ and $s_2$ remain the same in $p'$, so this inequality holds for $p'$ as well.
}
\textcolor{blue}{In addition}, for every value $v \in Dom(X_{i_1})$, $P(Y_{i_2} \mid s_1[X_{i_1} = v]) \leqslant P(Y_{i_2} \mid s_2[X_{i_1} = v])$.
Due to the one-to-one mapping $f$, for every value $v \in Dom(X_{i_1})$, $P(Y_{i_2} \mid s_j[X_{i_1} = v]) = P(Y_{i_2} \mid s_j[X_{i_1'} = f(v)])$ $(j = 1, 2)$. 
Thus, $P(Y_{i_2} \mid s_1[X_{i_1'} = f(v)]) \leqslant P(Y_{i_2} \mid s_2[X_{i_1'} = f(v)])$, $\forall v \in Dom(X_{i_1})$. 
It follows\textcolor{blue}{, from \Cref{def:simpson},} that $p'$ is a Simpson's paradox.
}
\qed
\end{lemma}

\begin{example}[Division equivalence]
\label{ex:division}
In table $T$ in \Cref{tab:ex2}, $((*,b_1,*,*),(*,b_2,*,*),A, Y_1)$ is a Simpson's Paradox. It can be easily verified that $((*,b_1,*,*),(*,b_2,*,*),C, Y_1)$ is also a Simpson's paradox due to division equivalence.
\qed
\end{example}


Lastly, redundancy may arise when two label attributes $Y_{i_2}$ and $Y_{i'_2}$ are statistically correlated in the sub-populations in the scope, which we refer to as \textbf{statistic equivalence}. We identify three cases of such correlation in the following.

\begin{lemma}[Statistic equivalence]\label{prop:statistics-equivalence}
Let $p = (s_1, s_2, X_{i_1}, Y_{i_2})$ be a Simpson's paradox and $Y_{i'_2}$ be another label attribute other than $Y_{i_2}$. If any of the following conditions holds:
\begin{enumerate}
    \item For every record $t \in cov(s_1) \cup cov(s_2)$, $t.Y_{i_2} = t.Y_{i'_2}$;
    \item
        $P(Y_{i_2}|s_j) = P(Y_{i'_2}|s_j)$ for $j = 1, 2$ and, 
        for every value $v \in Dom(X_{i_1})$ and $j = 1, 2$, $P(Y_{i_2}|s_j[X_{i_1} = v]) = P(Y_{i'_2}|s_j[X_{i_1} = v])$; or
    \item
        $\sign(P(Y_{i_2}|s_1) - P(Y_{i_2}|s_2)) = \sign(P(Y_{i'_2}|s_1) - P(Y_{i'_2}|s_2))$ and
        for every value $v \in Dom(X_{i_1})$, $\sign(P(Y_{i_2}|s_1[X_{i_1} = v]) - P(Y_{i_2}|s_2[X_{i_1} = v])) = \sign(P(Y_{i'_2}|s_1[X_{i_1} = v]) - P(Y_{i'_2}|s_2[X_{i_1} = v]))$,
\end{enumerate}
then $p' = (s_1, s_2, X_{i_1}, Y_{i'_2})$ is also a Simpson's paradox.
\nop{
\proof
Since $p$ is a Simpson's paradox, $P(Y_{i_2} | s_1) > P(Y_{i_2} | s_2)$, and for every value $v \in Dom(X_{i_1})$, $P(Y_{i_2} | s_1[X_{i_1} = v]) \leqslant P(Y_{i_2} | s_2[X_{i_1} = v])$. We want to show that that $p'$ is also a Simpson's paradox under each case.

Cases (1) and (2): In both cases, we have $P(Y_{i_2}|s_j) = P(Y_{i_2'}|s_j)$ for $j = 1, 2$. This gives that $P(Y_{i_2'} | s_1) > P(Y_{i_2'} | s_2)$. Furthermore, for every $v \in Dom(X_{i_1})$, we have $P(Y_{i_2} | s_j[X_{i_1} = v]) = P(Y_{i_2'} | s_j[X_{i_1} = v])$ for $j = 1, 2$. This gives that $P(Y_{i_2'} | s_1[X_{i_1} = v]) \leqslant P(Y_{i_2'} | s_2[X_{i_1} = v])$. It follows, from \Cref{def:simpson}, that $p'$ is a Simpson's paradox.

Case (3): Since $P(Y_{i_2}|s_1) > P(Y_{i_2}|s_2)$, we have $P(Y_{i_2}|s_1) - P(Y_{i_2}|s_2) > 0$, thus $\sign(P(Y_{i_2}|s_1) - P(Y_{i_2}|s_2)) = +1$. By the given condition, $\sign(P(Y_{i'_2}|s_1) - P(Y_{i'_2}|s_2)) = +1$, which implies $P(Y_{i'_2}|s_1) > P(Y_{i'_2}|s_2)$. In addition, for every $v \in Dom(X_{i_1})$, since $P(Y_{i_2}|s_1[X_{i_1} = v]) \leqslant P(Y_{i_2}|s_2[X_{i_1} = v])$, we have $\sign(P(Y_{i_2}|s_1[X_{i_1} = v]) - P(Y_{i_2}|s_2[X_{i_1} = v])) = -1$. By the given condition, $\sign(P(Y_{i'_2}|s_1[X_{i_1} = v]) - P(Y_{i'_2}|s_2[X_{i_1} = v])) = -1$, which implies $P(Y_{i'_2}|s_1[X_{i_1} = v]) \leqslant P(Y_{i'_2}|s_2[X_{i_1} = v])$. It follows, from \Cref{def:simpson}, that $p'$ is a Simpson's paradox.

In all three cases, $p'$ is a Simpson's paradox. 
}
\qed
\end{lemma}

\begin{table}[t]
  \centering
  \caption{Data table $T(A,B,C,D,Y_1,Y_2,Y_3,Y_4)$ with 11 records.}
    \begin{tabular}{|c|c|c|c|c|c|c|c|c|}
    \hline
          &$A$   & $B$   &  $C$  & $D$   & $Y_1$ & $Y_2$ & $Y_3$ & $Y_4$ \bigstrut\\
    \hline
    $t_1$ &$a_1$ & $b_1$ & $c_1$ & $d_1$ & 0 & 0 & 0 & 0 \bigstrut[t]\\
    $t_2$ &$a_1$ & $b_1$ & $c_1$ & $d_1$ & 0 & 0 & 0 & 0 \\
    $t_3$ &$a_1$ & $b_1$ & $c_1$ & $d_2$ & 0 & 0 & 0 & 0 \\
    $t_4$ &$a_1$ & $b_2$ & $c_1$ & $d_1$ & 0 & 0 & 0 & 0 \\
    $t_5$ &$a_1$ & $b_2$ & $c_1$ & $d_2$ & 0 & 0 & 0 & 0 \\
    $t_6$ &$a_2$ & $b_1$ & $c_2$ & $d_1$ & 0 & 0 & 1 & 1 \\
    $t_7$ &$a_2$ & $b_1$ & $c_2$ & $d_1$ & 1 & 1 & 0 & 1 \\
    $t_8$ &$a_2$ & $b_1$ & $c_2$ & $d_2$ & 1 & 1 & 1 & 1 \\
    $t_9$ &$a_2$ & $b_2$ & $c_2$ & $d_1$ & 0 & 0 & 1 & 1 \\
    $t_{10}$ &$a_2$ & $b_2$ & $c_2$ & $d_2$ & 1 & 1 & 0 & 1 \\
    $t_{11}$ &$a_2$ & $b_2$ & $c_2$ & $d_2$ & 1 & 1 & 1 & 1 \bigstrut[b]\\
    \hline
    \end{tabular}%
  \label{tab:ex3}%
\end{table}%

\begin{example}[Statistic equivalence]
We extend the data in \Cref{tab:ex2} to that in \Cref{tab:ex3}. Consider the Simpson's paradox $p = ((\ast,b_1,\ast,\ast),(\ast,b_2,\ast,\ast),A,Y_1)$. We observe that $P(Y_1|(\ast,b_1,\ast,\ast)) = 0.33 < P(Y_1|(\ast,b_2,\ast,\ast)) = 0.40$. Consider the sub-populations partitioned by $A$. We observe $P(Y_1|(a_1,b_1,\ast,\ast)) = 0 \leqslant P(Y_1|(a_1,b_2,\ast,\ast)) = 0$ and $P(Y_1|(a_2,b_1,\ast,\ast)) = 0.50 \leqslant P(Y_1|(a_2,b_2,\ast,\ast)) = 0.50$. According to \Cref{def:simpson}, $p$ is a Simpson's paradox.

It can be easily verified that $((\ast,b_1,\ast,\ast),(\ast,b_2,\ast,\ast),A,Y_2)$ is statistic equivalent (Case 1) to $p$, as $t.Y_1 = t.Y_2$ for every record $t \in T$.

It can be easily verified that $((\ast,b_1,\ast,\ast),(\ast,b_2,\ast,\ast),A,Y_3)$ is statistic equivalent (Case 2) to $p$, as $P(Y_3|(\ast,b_j,\ast,\ast)) = P(Y_1|(\ast,b_j,\ast,\ast))$, $P(Y_3|(a_1,b_j,\ast,\ast)) = P(Y_1|(a_1,b_j,\ast,\ast))$, and $P(Y_3|(a_2,b_j,\ast,\ast)) = P(Y_1|(a_2,b_j,\ast,\ast))$ for $j = 1, 2$. Observe that $t.Y_1 \neq t.Y_3$ for some record $t \in T$.

It can be easily verified that $((\ast,b_1,\ast,\ast),(\ast,b_2,\ast,\ast),A,Y_4)$ is statistic equivalent (Case 3) to $p$, as $\sign(P(Y_1|(\ast,b_1,\ast,\ast)) - P(Y_1|(\ast,b_2,\ast,\ast))) = \sign(-0.07) = \sign(P(Y_4|(\ast,b_1,\ast,\ast)) - P(Y_4|(\ast,b_2,\ast,\ast))) = \sign(-0.1)$, and $\sign(P(Y_1|(a_j,b_1,\ast,\ast)) - P(Y_1|(a_j,b_2,\ast,\ast))) = \sign(0) = \sign(P(Y_4|(a_j,b_1,\ast,\ast)) - P(Y_4|(a_j,b_2,\ast,\ast)))$ for $j = 1, 2$. \qed
\end{example}

\nop{
Lastly, if two label attributes $Y_{i_2}$ and $Y_{i_2'}$ are statistically correlated in the sub-populations in the scope, that is, for every sub-population $s$ involved in a Simpson's paradoxes, $P(Y_{i_2}|s)=P(Y_{i_2'}|s)$, then it also leads to redundant Simpson's paradoxes. 
We call this \textbf{statistic equivalence}.

\begin{proposition}\label{prop:statistics-equivalence}
Suppose $p=(s_1,s_2, X_{i_1}, Y_{i_2})$ is a Simpson's paradox and $Y_{i_2'}$ is another label attribute other than $Y_{i_2}$ such that
\begin{enumerate}
    \item $P(Y_{i_2}|s_j) = P(Y_{i_2'}|s_j)$ $(j=1,2)$; and
    \item for every value $v \in Dom(X_{i_1})$, $P(Y_{i_2}|s_j[X_{i_1}=v])=P(Y_{i_2'}|s_j[X_{i_1}=v])$ $(j=1,2)$,
\end{enumerate}
then  
$p'=(s_1,s_2, X_{i_1}, Y_{i_2'})$ is also a Simpson's paradox.
\proof \todo{Complete the proof by showing both conditions in the definition are satisfied.}
Since $p$ is a Simpson's paradox, $P(Y_{i_2} \mid s_1) > P(Y_{i_2} \mid s_2)$. If condition (1) holds, 
\textcolor{blue}{
then $P(Y_{i_2'} \mid s_1) > P(Y_{i_2'} \mid s_2)$ holds for $p'$.
}
Since, for every value $v \in Dom(X_{i_1})$, $P(Y_{i_2} \mid s_1[X_{i_1} = v]) \leqslant P(Y_{i_2} \mid s_2[X_{i_1} = v])$ holds for $p$, if condition (2) also holds, then $P(Y_{i_2'} \mid s_1[X_{i_1} = v]) \leqslant P(Y_{i_2'} \mid s_2[X_{i_1} = v])$. 
\textcolor{blue}{
Thus, by \Cref{def:simpson}, $p'$ is a Simpson's paradox.
}
\qed
\end{proposition}

\begin{example}
\label{ex:statistics}
In table $T$ in \Cref{tab:ex2}. $((*,b_1,*,*),(*,b_2,*,*),A, Y_1)$ is a Simpson's Paradox.
It can be easily verified that $((*,b_1,*,*),(*,b_2,*,*), A, Y_2)$ is also a Simpson's paradox due to statistic equivalence.
\qed
\end{example}
}

\subsection{(Coverage) Redundancy}
\label{sec:redundancy}

Interestingly, redundant Simpson's paradoxes may be due to more than one equivalence discussed before.  

\begin{example}[Coverage redundancy]
In table $T$ in \Cref{tab:ex2}, $((*,b_1,*,*),(*,b_2,*,*),A, Y_1)$ is a Simpson's Paradox, it can be easily verified that $((*,*,*,d_1),(*,*,*,d_2),C, Y_2)$ is also a Simpson's paradox due to sibling equivalence, division equivalence, and statistic equivalence. 
\qed
\end{example}
ga
Motivating by this insight and putting all together, we define the (coverage) redundancy as follows.

\begin{definition}
\label{def:coverage}
Two distinct Simpson's paradoxes $p_k=(s_{1_k},s_{2_k}, X_{i_{1_k}}, Y_{i_{2_k}})$ $(k=1,2)$ are \textbf{sibling equivalent} if $cov(s_{j_1})=cov(s_{j_2})$ $(j=1,2)$. They are \textbf{division equivalent} if there exists a one-to-one mapping $f$ between $Dom(X_{i_{1_1}})$ and $Dom(X_{i_{1_2}})$ such that for every value $v \in Dom(X_{i_{1_1}})$ and $(j=1,2)$, 
    \begin{enumerate}
        \item $cov(s_{j_1}[X_{i_{1_1}}=v])=cov(s_{j_2}[X_{i_{1_2}}=f(v)])$;
        \item $P(Y_{i_{2_1}}|s_{j_1}[X_{i_{1_1}}=v])=P(Y_{i_{2_2}}|s_{j_2}[X_{i_{1_2}}=f(v)])$. 
    \end{enumerate}  
They are \textbf{statistic equivalent} if $P(Y_{i_{2_1}}|s_{j_1})=P(Y_{i_{2_2}}|s_{j_2})$ $(j=1, 2)$. 

They are \textbf{(coverage) redundant} if they are sibling equivalent, division equivalent, or statistitcs equivalent. \qed
\end{definition}

\nop{
For two Simpson's Paradoxes $p=(s_i, X_{i_0}\{u_1, u_2\}, X_{i_1}, Y_{i_2})$ and $p'=(s', X_{i_0'}\{u_1', u_2'\}, X_{i_1'}, Y_{i_2'})$ $(p \neq p')$, let $s_1 = s[X_{i_0} = u_1]$, $s_2 = s[X_{i_0} = u_2]$, $s_1' = s'[X_{i_0'} = u_1']$, and $s_2' = s'[X_{i_0'} = u_2']$. $p$ and $p'$ are \textbf{coverage redundant} if
\begin{enumerate}
    \item $cov(s_1) = cov(s_1')$, $cov(s_2) = cov(s_2')$;
    \item $P(Y_{i_2} \mid s_1) = P(Y_{i_2'} \mid s_1')$, $P(Y_{i_2} \mid s_2) = P(Y_{i_2'} \mid s_2')$;
    \item[(3*)] $\{cov(s_1[X_{i_1} = v]) \mid v \in Dom(X_{i_1}), cov(s_1[X_{i_1} = v]) \neq \varnothing \} \\ = \{cov(s_1'[X_{i_1'} = v']) \mid v' \in Dom(X_{i_1'}), cov(s_1'[X_{i_1'} = v']) \neq \varnothing\}$, \\
    $\{cov(s_2[X_{i_1} = v]) \mid v \in Dom(X_{i_1}) \} \\ = \{cov(s_2'[X_{i_1'} = v']) \mid v' \in Dom(X_{i_1'})\}$;
\item For every $v \in Dom(X_{i_1})$ and $v' \in Dom(X_{i_1'})$ such that
$cov(s_1[X_{i_1} = v]) \neq \varnothing$, 
$cov(s_1'[X_{i_1'} = v']) \neq \varnothing$,
$cov(s_2[X_{i_1} = v]) \neq \varnothing$, and 
$cov(s_2'[X_{i_1'} = v']) \neq \varnothing$,
\begin{enumerate}
    \item $cov(s_1[X_{i_1} = v]) = cov(s_1'[X_{i_1'} = v'])$,
    $cov(s_2[X_{i_1} = v]) = cov(s_2'[X_{i_1'} = v'])$;
    \item $P(Y_{i_2} \mid s_1[X_{i_1} = v]) = P(Y_{i_2'} \mid s_1'[X_{i_1'} = v'])$, 
    $P(Y_{i_2} \mid s_2[X_{i_1} = v]) = P(Y_{i_2'} \mid s_2'[X_{i_1'} = v'])$. \qed
\end{enumerate}
\end{enumerate}
}

\nop{
\begin{example}
\label{ex:statistics}
We extend table $T$ in \Cref{tab:ex1} to that in \Cref{tab:ex2}.
\begin{table}[t]
  \centering
  \caption{Data table $T(A,B,C,D,Y_1,Y_2)$ containing 9 records.}
    \begin{tabular}{|c|c|c|c|c|c|c|}
    \hline
          &$A$   & $B$   &  $C$  & $D$   & $Y_1$ & $Y_2$ \bigstrut\\
    \hline
    $t_1$ &$a_1$ & $b_1$ & $c_1$ & $d_1$ & 0 & 0 \bigstrut[t]\\
    $t_2$ &$a_1$ & $b_1$ & $c_1$ & $d_1$ & 0 & 0 \\
    $t_3$ &$a_1$ & $b_2$ & $c_1$ & $d_1$ & 0 & 0 \\
    $t_4$ &$a_2$ & $b_1$ & $c_2$ & $d_1$ & 1 & 1 \\
    $t_5$ &$a_2$ & $b_1$ & $c_2$ & $d_1$ & 1 & 1 \\
    $t_6$ &$a_2$ & $b_2$ & $c_2$ & $d_1$ & 1 & 1\\
    $t_7$ &$a_2$ & $b_2$ & $c_2$ & $d_1$ & 1 & 1 \\
    $t_8$ &$a_1$ & $b_3$ & $c_2$ & $d_2$ & 0 & 1 \\
    $t_9$ &$a_2$ & $b_3$ & $c_1$ & $d_2$ & 1 & 0 \bigstrut[b]\\
    \hline
    \end{tabular}%
  \label{tab:ex2}%
\end{table}%
%
%
%
As shown in \Cref{ex:2.3}, $((*,*,*,*),B, \{b_1,b_2\},A, Y_1)$ is a Simpson's Paradox.
It can be easily verified that $((*,*,*,d_1),B, \{b_1,b_2\},C, Y_2)$ is also a Simpson's paradox.
We show that they are coverge redundant by verifying every condition from \Cref{def:coverage} holds.

Firstly, the sibling children populations have equal coverages. We observe that $cov(*,b_1,*,*) = cov(*,b_1,*,d_1)$ and $cov(*,b_2,*,*) = cov(*,b_2,*,d_1)$.
Hence, coverage equivalence is satisfied.

Secondly, the sibling children populations have equal aggregate statistics, that is, $P(Y_1 \mid (*,b_1,*,*)) = P(Y_2 \mid (*,b_1,*,d_1))$ and $P(Y_1 \mid (*,b_2,*,*)) = P(Y_2 \mid (*,b_2,*,d_1))$.
Thus condition~2, the aggregate statistic equivlance, is satisfied.

Thirdly, for each sub-population divided under the separator attributes $A$ and $C$, the sub-populations have equal coverages, that is, $cov(a_1,b_1,*,*) = cov(*,b_1,c_1,*)$, $cov(a_2,b_1,*,*) = cov(*,b_1, c_2,*)$, $cov(a_1,b_2,*,*) = cov(*,b_2, c_1,*)$, $cov(a_2,b_2,*,*) = cov(*,b_2, c_2,*)$.
In this situation, there exists a one-to-one mapping $f$ between $Dom(A)$ and $Dom(C)$.
In particular, $f(a_1) = c_1$, $f(a_2) = c_2$, and vice versa, $f(c_1) = a_1$, $f(c_2) = a_2$.
Hence, condition~3(a) is satisfied. Moreover, we can observe that the sub-populations have equal aggregate statistics, that is, $P(Y_1 \mid (a_1, b_1, *, *)) = P(Y_2 \mid (*,b_1,c_1,*))$, $P(Y_1 \mid (a_2, b_1, *, *)) = P(Y_2 \mid (*,b_1,c_2,*))$, $P(Y_1 \mid (a_1, b_2, *, *)) = P(Y_2 \mid (*,b_2,c_1,*))$, and $P(Y_1 \mid (a_2, b_2, *, *)) = P(Y_2 \mid (*,b_2,c_2,*))$.
Therefore, condition~3(b) is also satisfied. Thus, the division equivalence is satisfied.
\qed

\end{example}
}

\nop{
Essentially, coverage redundancy emerges from four underlying factors or any combination thereof.
Firstly, if only $s \neq s'$, we say there exists a closure for $s$ and $s'$.
A population $s$ is not closed if there exists a descendant $s'$ of $s$ such that $s$ and $s'$ have the same coverage.
In the context of coverage redundancy, if $s \succ s'$, $s'$ is said the closure of $(s, s')$; if otherwise, the closure of $(s, s')$ is the least common descendant of $s$ and $s'$.
Secondly, if only $X_{i_0} \neq X_{i_0'}$, we say $(u_1, u_2)$ and $(u_1', u_2')$ are strictly correlated within $cov(s) \cup cov(s')$. 
That is, of all the records covered by $s$ and $s'$, $u_1$ and $u_1'$ always appear together, $u_2$ and $u_2'$ always appear together.
Thirdly, if only $X_{i_1} \neq X_{i_1'}$, we say $X_{i_1}$ and $X_{i_1'}$ are strictly correlated within $cov(s_1) \cup cov(s_2) \cup cov(s_1') \cup cov(s_2')$ respectively.
This means that there exists a one-to-one correspondence between values in $Dom(X_{i_1})$ and $Dom(X_{i_1'})$, such that of all records within $cov(s_1) \cup cov(s_2) \cup cov(s_1') \cup cov(s_2')$, the corresponding pair of values always appear together in a record.
Lastly, if only $Y_{i_2} \neq Y_{i_2'}$, we say $Y_{i_2}$ and $Y_{i_2'}$ have identical distributions.
}

What are the relations among the redundant Simpson's paradoxes? We can easily show that the coverage redundancy of Simpson's paradoxes is indeed an equivalence relation.

\begin{theorem}[Equivalence]
The coverage redundancy of Simpson's paradoxes is an equivalence relation.
\nop{
\nop{, that is, 
\begin{enumerate}
    \item (reflexivity) If $p$ is a Simpson's paradox, then $p$ and $p$ are coverage redundant;
    \item (symmetricity) If Simpson's paradoxes $p_1$ and $p_2$ are coverage redundant, then $p_2$ and $p_1$ are also coverage redendant; and 
    \item (transitivity) If $p_1, p_2, p_3$ are Simpson's paradoxes such that $p_1$ and $p_2$ are coverage redundant, $p_2$ and $p_3$ are coverage redundant, then $p_1$ and $p_3$ are also coverage redundant. \qed
\end{enumerate}
}
\proof
(Reflexivity) If $p$ is a Simpson's paradox, then $p$ and $p$ are trivially coverage redundant.

(Reflexivity) Given any Simpson's paradox $p_1  = (s_1, X_{i_{0_1}}, \{u_{1_1},u_{2_1}\}, X_{i_{1_1}}, Y_{i_{2_1}})$, let $s_{1_j} = s_1[X_{i_{0_1}} = u_{j_1}]$ $(j=1,2)$ be the sibling children populations of $s_1$ on the differential values $u_{1_1}$ and $u_{2_1}$. It is trivial that
\begin{enumerate}
\item  $cov(s_{1_j}) = cov(s_{1_j})$ $(j=1,2)$; 
\item $P(Y_{i_{2_1}} \mid s_{1_j}) = P(Y_{i_{2_1}} \mid s_{1_j})$ $(j=1,2)$; and
\item for every value $v \in Dom(X_{i_{1_1}})$, 
\begin{enumerate}
\item  $cov(s_{1_j}[X_{i_{1_1}} = v]) = cov(s_{1_j}[X_{i_{1_1}} = v])$ $(j=1,2)$; and 
\item $P(Y_{i_{2_1}} \mid s_{1_j}[X_{i_{1_1}} = v]) = P(Y_{i_{2_1}} \mid s_{1_j}[X_{i_{1_1}} = v])$ $(j=1,2)$.
\end{enumerate}
\end{enumerate}
Hence, coverage redundancy is reflexive.

(Symmetricity) Suppose Simpson's paradoxes $p_1$ and $p_2$ are coverage redundant.
It is also straightforward that, for $(j=1,2)$,
\begin{enumerate}
    \item $cov(s_{j_1}) = cov(s_{j_2})$ $\Leftrightarrow$ $cov(s_{j_2}) = cov(s_{j_1})$;
    \item $P(Y_{i_{2_1}} \mid s_{j_1}) = P(Y_{i_{2_2}} \mid s_{j_2})$ $\Leftrightarrow$ $P(Y_{i_{2_2}} \mid s_{j_2}) = P(Y_{i_{2_1}} \mid s_{j_1})$;
    \item suppose a one-to-one mapping $f$ between $Dom(X_{i_{1_1}})$ and $Dom(X_{i_{1_2}})$ such that for every value $v \in Dom(X_{i_{1_1}})$,
    \begin{enumerate}
        \item $cov(s_{j_1}[X_{i_{1_1}} = v]) = cov(s_{j_2}[X_{i_{1_2}} = f(v)])$ $\Leftrightarrow$ $cov(s_{j_2}[X_{i_{1_2}} = f(v)]) = cov(s_{j_1}[X_{i_{1_1}} = v])$; 
        \item $P(Y_{i_{2_1}} \mid s_{j_1}[X_{i_{1_1}} = v]) = P(Y_{i_{2_2}} \mid s_{j_2}[X_{i_{1_2}} = f(v)])$ $\Leftrightarrow$ $P(Y_{i_{2_2}} \mid s_{j_2}[X_{i_{1_2}} = f(v)]) = P(Y_{i_{2_1}} \mid s_{j_1}[X_{i_{1_1}} = v])$.
    \end{enumerate}
\end{enumerate}
Hence, coverage redundancy is symmetric.

(Transitivity) Suppose $p_1, p_2, p_3$ are Simpson's paradoxes such that $p_1$ and $p_2$ are coverage redundant, $p_2$ and $p_3$ are coverage redundant.
It is, again, straightforward that, for $(j=1,2)$,:
\begin{enumerate}
    \item if $cov(s_{j_1}) = cov(s_{j_2})$ and $cov(s_{j_2}) = cov(s_{j_2})$, then $cov(s_{j_1}) = cov(s_{j_3})$;
    \item if $P(Y_{i_{2_1}} \mid s_{j_1}) = P(Y_{i_{2_2}} \mid s_{j_2})$ and $P(Y_{i_{2_2}} \mid s_{j_2}) = P(Y_{i_{2_3}} \mid s_{j_3})$, then $P(Y_{i_{2_1}} \mid s_{j_1}) = P(Y_{i_{2_3}} \mid s_{j_3})$;
    \item suppose one-to-one mappings, $f$ between $Dom(X_{i_{1_1}})$ and $Dom(X_{i_{1_2}})$, $g$ between $Dom(X_{i_{1_2}})$ and $Dom(X_{i_{1_3}})$, such that for every value $v \in Dom(X_{i_{1_1}})$, 
    \begin{enumerate}
        \item if $cov(s_{j_1}[X_{i_{1_1}} = v]) = cov(s_{j_2}[X_{i_{1_2}} = f(v)])$ and $cov(s_{j_2}[X_{i_{1_2}} = f(v)]) = cov(s_{j_3}[X_{i_{1_3}} = g(f(v))])$, then $cov(s_{j_1}[X_{i_{1_1}} = v]) = cov(s_{j_3}[X_{i_{1_3}} = g(f(v))])$ note that $g \circ f$ is also a one-to-one mapping; 
        \item if $P(Y_{i_{2_1}} \mid s_{j_1}[X_{i_{1_1}} = v]) = P(Y_{i_{2_2}} \mid s_{j_2}[X_{i_{1_2}} = f(v)])$ and $P(Y_{i_{2_2}} \mid s_{j_2}[X_{i_{1_2}} = f(v)]) = P(Y_{i_{2_3}} \mid s_{j_3}[X_{i_{1_3}} = g(f(v))])$, then $P(Y_{i_{2_1}} \mid s_{j_1}[X_{i_{1_1}} = v]) = P(Y_{i_{2_3}} \mid s_{j_3}[X_{i_{1_3}} = g(f(v))])$.
\end{enumerate}
\end{enumerate}
Hence, coverage redundancy is transitive.
}
\qed
\end{theorem}

\nop{
\begin{proposition}
\label{prop:coverage}
Pairwise coverage redundancy is symmetric, transitive, and trivially reflexive. 
\todo{What does ``symmetric'' mean?}
\end{proposition}
}

Due to the redundancy of Simpson's paradoxes is an equivalence relation, we can use the quotient of all Simpson's paradoxes against the redundancy relation
to efficiently represent and extract the set of non-redundant Simpson's paradoxes. Essentially, we can collect all Simpson's paradoxes that are redundant to each other into a group, and present to users distinct groups of redundant Simpson's paradoxes. 


\begin{example}
\label{ex:group-redundant}
Consider \Cref{tab:ex2}. In \Cref{ex:simpson} we show that $p_1 = ((*,b_1,*,*),(*,b_2,*,*), A, Y_1)$ is a Simpson's paradox. We also show that $p_2 = ((*,*,*,d_1),(*,*,*,d_2), A, Y_1)$ is sibling equivalent to $p_1$ (\cref{prop:sibling-eq}),
$p_3 = ((*,b_1,*,*),(*,b_2,*,*), C, Y_1)$ division equivalent to $p_1$ (\Cref{prop:division-equivalence}),
$p_4 = ((*,b_1,*,*),(*,b_2,*,*), A, Y_2)$ statistic equivalent to $p_1$ (\Cref{prop:statistics-equivalence}), 
$p_5 = ((*,*,*,d_1),(*,*,*,d_2),C,Y_1)$ sibling and division equivalent to $p_1$,
$p_6 = ((*,*,*,d_1),(*,*,*,d_2),A,Y_2)$ sibling and statistic equivalent to $p_1$, 
$p_7 = ((*,b_1,*,*),(*,b_2,*,*),C,Y_2)$ division and statistic equivalent to $p_1$, and
$p_8 = ((*,*,*,d_1),(*,*,*,d_2),C,Y_2)$ sibling, division, and statistic equivalent to $p_1$. 
The set $\{p_1,p_2,\ldots,p_8\}$ includes a group of (coverage) redundant Simpson's paradoxes.
\qed
\end{example}

Real datasets often contain many attributes and thus the number of redundant paradoxes can be substantial. Hence, a concise representation is needed to eliminate this redundancy while preserving the essential information.
At a high level, we can concisely represent a group of (coverage) redundant Simpson's paradoxes $\mathcal{SP}$ as a tuple, following the format of a configuration, as $$(\mathcal{E}_{s_1},\mathcal{E}_{s_2},\mathbf{X}_{i_1},\mathbf{Y}_{i_2}),$$ where $\mathcal{E}_{s_1} \in \mathcal{P}/\equiv_{cov}$ and $\mathcal{E}_{s_2} \in \mathcal{P}/\equiv_{cov}$ are groups of sibling populations with identical coverage, 
$\mathbf{X}_{i_1}$ is the set of distinct separator attributes, 
and $\mathbf{Y}_{i_2}$ is the set of distinct label attributes. 
In essence, for any $s_1 \in \mathcal{E}_{s_1}$ and $s_2 \in \mathcal{E}_{s_2}$ where $s_1$ and $s_2$ are siblings, $X_{i_1} \in \mathbf{{X}}_{i_1}$, and $Y_{i_2} \in \mathbf{Y}_{i_2}$, $(s_1,s_2,X_{i_1},Y_{i_2})$ is a Simpson's paradox in $\mathcal{SP}$.

\begin{example}
\label{ex:group-redundant-non-concise}
From \Cref{ex:group-redundant}, the group of (coverage) redundant Simpson's paradoxes can be concisely represented as the tuple $(\mathcal{E}_{s_1},\mathcal{E}_{s_2},\mathbf{X}_{i_1},\mathbf{Y}_{i_2})$, where 
\begin{itemize}
    \item $\mathcal{E}_{s_1}=\{(*,b_1,*,*),(*,*,*,d_1),(*,b_1,*,d_1)\}$,
    \item $\mathcal{E}_{s_2}=\{(*,b_2,*,*),(*,*,*,d_2),(*,b_2,*,d_2)\}$,
    \item $\mathbf{X}_{i_1}=\{A, C\}$, and
    \item $\mathbf{Y}_{i_2}=\{Y_1, Y_2\}$.
\end{itemize}
This representation captures multiple Simpson's paradoxes. For instance, $((*, b_1, *, *), (*, b_2, *, *), A, Y_1)$ is a Simpson's paradox in this group, as $(*, b_1, *, *) \in \mathcal{E}_{s_1}$ and $(*, b_2, *, *) \in \mathcal{E}_{s_2}$ are valid siblings, sharing parent $(*, *, *, *)$. Similarly, $((*, *, *, d_1), (*, *, *, d_2), C, Y_2)$ is a Simpson's paradox in this group.

However, $((*, b_1, *, d_1), (*, b_2, *, d_2), A, Y_1)$ is not a Simpson's paradox even though $(*, b_1, *, d_1) \in \mathcal{E}_{s_1}$ and $(*, b_2, *, d_2) \in \mathcal{E}_{s_2}$, because $(*, b_1, *, d_1)$ and $(*, b_2, *, d_2)$ are not valid siblings.
\qed
\end{example}

The representation above still requires explicitly enumerating all populations in $\mathcal{E}_{s_1}, \mathcal{E}_{s_2}$, which is inefficient for larger datasets with many categorical attributes. For example, consider a dataset with 20 categorical attributes where the domains of 10 attributes (say $X_{11}$ through $X_{20}$) each contain only a single value. In this scenario, for any population $s$, replacing any $s[i] = *$ with the actual value for attributes $X_{11}$ through $X_{20}$ yields a different population with identical coverage. This creates up to $2^{10} = 1024$ distinct populations that all cover exactly the same records. Therefore, we need to further develop a compact representation for these sets of populations with identical coverage. Two key properties help us achieve the compact representation.
%

\subsubsection{Convexity}

We show that populations with identical coverage possess a special structural property -- they form a convex subset in the population lattice.
This property ensures that if two populations with the same coverage are in our set, then all intermediate populations that lie between them in the lattice hierarchy also have identical coverage.

\begin{property}
\label{prop:convex-property} 
Let $\mathcal{P}$ be the set of all populations. 
\begin{enumerate}
    \item
For each equivalence class $\mathcal{E} \in \mathcal{P}/\equiv_{cov}$, $\mathcal{E}$ is a convex subset of populations and
has a unique upper bound (i.e., least descendant); and
\nop{
\proof

For part (a), we want to prove that 
(1) for any pair of populations $s$ and $s'$ such that $s \succ s'$, every intermediate populations $s''$ where $s \succ s'' \succ s'$ is also in $\mathcal{E}$, and 
(2) populations in $\mathcal{E}$ are connected.

First, regarding property (1), let $s, s' \in \mathcal{E}$ where $s \succ s'$, and let $s''$ be any population such that $s \succ s'' \succ s'$. 
By definition of ancestor-descendant relation, $cov(s) \supseteq cov(s'') \supseteq cov(s')$. 
Since $s \equiv_{cov} s'$ (\emph{i.e.,} $cov(s_1) = cov(s_2)$), it follows that $cov(s'') = cov(s) = cov(s')$, $s'' \equiv_{cov} s$, and $s'' \equiv_{cov} s'$. 
Therefore, $s' \in \mathcal{E}$.

Second, regarding property (2), let $s_1, s_2 \in \mathcal{E}$ where $s_1 \neq s_2$, there are two possibilities:
\begin{enumerate}
    \item $s_1 \succ s_2$ (or $s_2 \succ s_1$ in symmetry). From property (1), since every intermediate population $s''$ such that $s_1 \succ s'' \succ s_2$ is in $\mathcal{E}$, $s_1$ and $s_2$ are connected ($s_1 \sim s_2$).
    \item $s_1 \nsucc s_2$ (or $s_2 \nsucc s_1$ in symmetry). Then there exists a population $s'' \in \mathcal{E}$ such that $s''$ is a common descendant (or ancestor) of $s_1$ and $s_2$, that is, $s_1 \succ s''$ and $s_2 \succ s''$ (or $s'' \succ s_1$ and $s'' \succ s_2$). From property (1), we have that $s_1 \sim s''$ and $s'' \sim s_2$. Therefore, $s_1 \sim s_2$.
\end{enumerate}

For part (b), let $s_d$ be the descendant of all populations in $\mathcal{E}$.
Specifically, for each categorical attribute $1 \leqslant i \leqslant n$, we have that:

\[
s_d[i] = \begin{cases}
v & \text{if there exists } s \in \mathcal{E} \text{ s.t. } s[i] = v \neq *, \quad v \in Dom(X_i)\\
* & \text{otherwise.}
\end{cases}
\]

In other word, $s_d$ is an upper bound of $\mathcal{E}$.
Suppose there exists another upper bound $s'_d$ of $\mathcal{E}$ where $s'_d \neq s_d$. 
Then there must be an attribute $i$ where $s'_d[i] \neq s_d[i]$. This means either:
\begin{enumerate}
    \item $s'_d[i] = *$ but $s_d[i] = v$ where $v \in Dom(X_i)$; or
    \item $s'_d[i] = v'$ but $s_d[i] = v$ where $v' \neq v$ and $v,\, v' \in Dom(X_i)$.
\end{enumerate}

In case (1), $s'_d \succ s_d$. Hence, $s'_d$ is not an upper bound of $\mathcal{E}$. 
In case (2), $cov(s'_d) \neq cov(s_d)$. Hence, $s'_d \notin \mathcal{E}$.
Therefore, $s_d$ is unique.

\qed
\end{property}



}
\item For each $\mathcal{E} \subseteq \mathcal{P}$ that is a maximal set of populations where all populations have identical coverage, that is, for any $s, s' \in \mathcal{E}$, $cov(s) = cov(s')$, and there is no population $s'' \notin \mathcal{E}$ such that $cov(s'') = cov(s)$ for some $s \in \mathcal{E}$), 
$\mathcal{E}$ is a convex subset of populations and 
has a unique upper bound (i.e., least descendant). \qed
\end{enumerate}
\end{property}
\nop{
\begin{proof}
For part (a), we want to prove that (1) for any pair of populations $s$ and $s'$ such that $s \prec s'$, every intermediate population $s''$ where $s \prec s'' \prec s'$ is also in $\mathcal{E}$, and (2) populations in $\mathcal{E}$ are connected.

First, regarding property (1), let $s, s' \in \mathcal{E}$ where $s \prec s'$, and let $s''$ be any population such that $s \prec s'' \prec s'$. By definition of ancestor-descendant relation, $cov(s) \supseteq cov(s'') \supseteq cov(s')$. Since $cov(s) = cov(s')$ (as they are in $\mathcal{E}$), it follows that $cov(s'') = cov(s) = cov(s')$. Since $\mathcal{E}$ is maximal, $s'' \in \mathcal{E}$.

Second, regarding property (2), let $s_1, s_2 \in \mathcal{E}$ where $s_1 \neq s_2$, there are two possibilities:
\begin{enumerate}
    \item $s_1 \prec s_2$ (or $s_2 \prec s_1$ in symmetry). From property (1), since every intermediate population $s''$ such that $s_1 \prec s'' \prec s_2$ is in $\mathcal{E}$, $s_1$ and $s_2$ are connected ($s_1 \sim s_2$).
    \item $s_1 \npreceq s_2$ (or $s_2 \npreceq s_1$ in symmetry). Then we can construct a population $s''$ such that $s''$ is a common descendant of $s_1$ and $s_2$. Specifically, for each categorical attribute $1 \leq i \leq d$, we set:
    \begin{equation*}
    s''[i] = \begin{cases}
    v & \text{if either } s_1[i] = v \neq \ast \text{ or } s_2[i] = v \neq \ast \\
    \ast & \text{otherwise}
    \end{cases}
    \end{equation*}
    
    Since $cov(s_1) = cov(s_2)$, we can verify that $cov(s'') = cov(s_1) = cov(s_2)$. Thus $s'' \in \mathcal{E}$ by maximality of $\mathcal{E}$, and $s_1 \sim s''$ and $s'' \sim s_2$. Therefore, $s_1 \sim s_2$.
\end{enumerate}

For part (b), let $s_u$ be the descendant of all populations in $\mathcal{E}$. Specifically, for each categorical attribute $1 \leq i \leq d$, we have:
\begin{equation*}
s_u[i] = \begin{cases}
v & \text{if there exists } s \in \mathcal{E} \text{ s.t. } s[i] = v \neq \ast, v \in Dom(A_i) \\
\ast & \text{otherwise}
\end{cases}
\end{equation*}

In other words, $s_u$ is an upper bound of $\mathcal{E}$. Suppose there exists another upper bound $s'_u$ of $\mathcal{E}$ where $s'_u \neq s_u$. Then there must be an attribute $i$ where $s'_u[i] \neq s_u[i]$. This means either:
\begin{enumerate}
    \item $s'_u[i] = \ast$ but $s_u[i] = v$ where $v \in Dom(A_i)$; or
    \item $s'_u[i] = v'$ but $s_u[i] = v$ where $v' \neq v$ and $v, v' \in Dom(A_i)$.
\end{enumerate}
In case (1), $s'_u \prec s_u$. Hence, $s'_u$ is not an upper bound of $\mathcal{E}$. In case (2), $cov(s'_u) \neq cov(s_u)$. Hence, $s'_u \notin \mathcal{E}$. Therefore, $s_u$ is unique.
\end{proof}
}

\subsubsection{Representation Using Bounds} We show that convex subsets of populations can be efficiently represented using just their upper and lower bounds (see \cref{prop:convex-reconstruction} below).
This allows us to represent potentially large sets of populations using a much smaller number of ``boundary'' populations.

\begin{definition}\label{def:upper-lower-bounds}
Let $\mathcal{E} \subseteq \mathcal{P}$ be a convex subset of populations. A population $s \in \mathcal{E}$ is an upper bound of $\mathcal{E}$ if there does not exist any population $s' \in \mathcal{E}$ such that $s \prec s'$. $s$ is a lower bound of $\mathcal{E}$ if there does not exist any population $s' \in \mathcal{E}$ such that $s' \prec s$.

Denote by $up(\mathcal{E})$ and $low(\mathcal{E})$ the sets of upper and lower bounds of $\mathcal{E}$, respectively.
\qed
\end{definition}

A fundamental property of convex subsets is that they can be reconstructed from their upper and lower bounds.

\begin{property}\label{prop:convex-reconstruction}
Let $\mathcal{E} \subseteq \mathcal{P}$ be a convex subset of populations. For any population $s \in \mathcal{P}$, $s \in \mathcal{E}$ if and only if there exist $s_l \in low(\mathcal{E})$ and $s_u \in up(\mathcal{E})$ such that $s_l \preceq s \preceq s_u$.
\qed
\end{property}
\nop{
\begin{proof}
($\Rightarrow$) Given $s \in \mathcal{E}$, then either $s \in low(\mathcal{E})$, $s \in up(\mathcal{E})$, or $s \notin low(\mathcal{E})$ and $s \notin up(\mathcal{E})$. 

If $s \in low(\mathcal{E})$, we can set $s_l = s$. Since $\mathcal{E}$ is convex and connected, there must exist an upper bound $s_u \in up(\mathcal{E})$ such that $s \preceq s_u$.

If $s \in up(\mathcal{E})$, we can set $s_u = s$. Similarly, there must exist a lower bound $s_l \in low(\mathcal{E})$ such that $s_l \preceq s$.

If $s$ is neither a lower nor upper bound, then by the convexity of $\mathcal{E}$, there must exist $s_l \in low(\mathcal{E})$ such that $s_l \prec s$ and $s_u \in up(\mathcal{E})$ such that $s \prec s_u$. Therefore, we have $s_l \prec s \prec s_u$.

($\Leftarrow$) Suppose there exist $s \in \mathcal{P}$, $s_l \in low(\mathcal{E})$, and $s_u \in up(\mathcal{E})$ such that $s_l \preceq s \preceq s_u$. By convexity of $\mathcal{E}$, it follows that $s \in \mathcal{E}$.
\end{proof}
}

With these properties established, we can now refine our concise representation of redundant Simpson's paradoxes. 

\begin{definition}[Concise representation]
Given a group of redundant paradoxes $\mathcal{SP}$, we represent it as:
\begin{equation*}
((up({\mathcal{E}}_{s1}), low({\mathcal{E}}_{s1})), (up({\mathcal{E}}_{s2}), low({\mathcal{E}}_{s2})), X_{i1}, Y_{i2}),
\end{equation*}
where
    (1) ${\mathcal{E}}_{s1}$ and ${\mathcal{E}}_{s2}$ are maximal sets of populations with identical coverage;
    (2) $up({\mathcal{E}}_{s1})$ contains the single unique upper bound of ${\mathcal{E}}_{s1}$ (by \Cref{prop:convex-property});
    (3) $low({\mathcal{E}}_{s1})$ is the set of lower bounds of ${\mathcal{E}}_{s1}$; and
    (4) $X_{i1}$ and $Y_{i2}$ are sets of separator and label attributes, respectively.
\qed
\end{definition}
This representation is both concise and complete--using \Cref{prop:convex-reconstruction}, we can reconstruct all populations in ${\mathcal{E}}_{s1}$ and ${\mathcal{E}}_{s2}$, and thus all redundant Simpson's paradoxes in the group.

\begin{example}\label{ex:concise-representation}
From \Cref{ex:group-redundant-non-concise},
the group of redundant Simpson's paradoxes can be concisely represented as:

$up({\mathcal{E}}_{s1}) = \{({\ast}, a_1, {\ast}, b_1)\}$

$low({\mathcal{E}}_{s1}) = \{({\ast}, a_1, {\ast}, {\ast}), ({\ast}, {\ast}, {\ast}, b_1)\}$

$up({\mathcal{E}}_{s2}) = \{({\ast}, a_2, {\ast}, b_2)\}$

$low({\mathcal{E}}_{s2}) = \{({\ast}, a_2, {\ast}, {\ast}), ({\ast}, {\ast}, {\ast}, b_2)\}$

$X_{i1} = \{A, B\}$

$Y_{i2} = \{Y_1, Y_2\}$

This representation captures all eight redundant Simpson's paradoxes from \Cref{ex:group-redundant} in a compact form. Note that $(({\ast}, a_1, {\ast}, b_1), ({\ast}, a_2, {\ast}, b_2), A, Y_1)$ is not a valid Simpson's paradox since $({\ast}, a_1, {\ast}, b_1)$ and $({\ast}, a_2, {\ast}, b_2)$ are not valid siblings.
\qed
\end{example}

\nop{
\hline
\todo{How is the rest of this section related? It seems the following contains some redundant results.  Please check.}
\begin{definition}
Let $\mathcal{E} \subseteq \mathcal{P}$ be a convex subset of populations. A population $s \in \mathcal{E}$ is an \textbf{upper bound} (\emph{i.e.,} least descendant) of $\mathcal{E}$ if there does not exist any population $s' \in \mathcal{E}$ such that $s \succ s'$. $s$ is a \textbf{lower bound} (\emph{i.e.,} greatest ancestor) of $\mathcal{E}$ if there does not exist any population $s' \in \mathcal{E}$ such that $s' \succ s$.

We denote the set of upper bounds of $\mathcal{E}$ as $up(\mathcal{E})$ and the set of lower bounds as $low(\mathcal{E})$. \qed
\end{definition}

\begin{corollary} \label{cor:convex}
For any convex subset of populations $\mathcal{E}$ and any $s_l \in \text{low}(\mathcal{E})$ and $s_u \in \text{up}(\mathcal{E})$, $s_l \succeq s_u$.
\proof \todo{Add a proof.} 
\textcolor{blue}{
We prove by contradiction.
Suppose for some $s_1 \in low(\mathcal{E})$ and $s_u \in up(\mathcal{E})$ where $s_l \neq s_u$,
$s_l$ is not an ancestor of $s_u$ (\emph{i.e.,} $s_l \nsucc s_u$). 
Then there are two possibilities:
\begin{enumerate}
    \item $s_u$ is an ancestor (or parent) of $s_l$ (\emph{i.e.,} $s_u \succ s_l$ or $s_u \dot{\succ} s_l$). However, this contradicts the definition of $s_u$ being an upper bound (least descendant) of $\mathcal{E}$.
    \item $s_l$ and $s_u$ are not in an ancestor-descendant (or parent-child) relationship (\emph{e.g.,} sibling). By convexity of $\mathcal{E}$, $s_l$ and $s_u$ must be connected (\emph{i.e.,} $s_l \sim s_u$). This means there must exist a population $s' \in \mathcal{E}$ such that either:
    \begin{itemize}
        \item $s'$ is a common ancestor of $s_l$ and $s_u$ (\emph{i.e.,} $s' \succ s_l$ and $s' \succ s_u$). This contradicts $s_l$ being a lower bound; or
        \item $s'$ is a common descendant of $s_l$ and $s_u$ (\emph{i.e.,} $s_l \succ s'$ and $s_u \succ s'$). This contradicts $s_u$ being an upper bound. 
    \end{itemize}
\end{enumerate}
Therefore, our assumption does not hold, and we have $s_l \succeq s_u$.
}
\qed
\end{corollary} 

\hline

\begin{proposition} \label{prop:convex-concise}
Let $\mathcal{E} \subseteq \mathcal{P}$ be a convex subset of populations, $up(\mathcal{E})$ be the set of upper bounds of $\mathcal{E}$, and $low(\mathcal{E})$ be the set of lower bounds of $\mathcal{E}$. Then, for any population $s \in \mathcal{P}$, $s \in \mathcal{E}$ if and only if ($\Leftrightarrow$) there exist $s_l \in low(\mathcal{E})$ and $s_u \in up(\mathcal{E})$ such that $s_l \succeq s \succeq s_u$.

\proof

($\Rightarrow$) Given $s \in \mathcal{E}$, then either $s \in low(\mathcal{E})$, $s \in up(\mathcal{E})$, or $s \notin low(\mathcal{E})$ and $s \notin up(\mathcal{E})$.
In the first case, we can set $s_l = s$ and, by \Cref{cor:convex}, there exists $s_u \in up(\mathcal{E})$ such that $s_l \succeq s_u$, thus $s_l = s \succeq s_u$. 
Similarly, in the second case, we can set $s_u = s$ and, by \Cref{cor:convex}, there exists $s_l \in low(\mathcal{E})$ such that $s_l \succeq s_u$, thus $s_l \succeq s = s_u$. 
In the third case, $s$ is neither a lower nor upper bound.
Since $\mathcal{E}$ is convex, there must exist $s_l \in low(\mathcal{E})$ such that $s_l \succ s$ and $s_u \in up(\mathcal{E})$ such that $s \succ s_u$.
Therefore, we have $s_l \succ s \succ s_u$.

($\Leftarrow$) Suppose there exist $s \in \mathcal{P}$, $s_l \in low(\mathcal{E})$, and $s_u \in up(\mathcal{E})$ such that $s_l \succeq s \succeq s_u$.
By convexity of $\mathcal{E}$, it follows that $s \in \mathcal{E}$.
\qed
\end{proposition}

\Cref{prop:convex-concise} suggests that a convex subset $\mathcal{E} \subseteq \mathcal{P}$ of populations can be concisely represented by its upper and lower bounds.
In other words, given $up(\mathcal{E})$ and $low(\mathcal{E})$ of $\mathcal{E}$, one can reconstruct $\mathcal{E}$ by enumerating the intermediate populations $s''$, such that $s_l \succeq s'' \succeq s_u$, for every pair of upper and lower bounds $(s_u, s_l) \in up(\mathcal{E}) \times low(\mathcal{E})$.

Given the properties of convex subsets of populations, we are ready to discuss how to concisely represent a group of (coverage) redundant Simpson's paradoxes. 
Recall that, in simple terms, Simpson's paradoxes are (coverage) redundant if their (sibling) populations and divided sub-populations cover the same set of records. 
Hence, to concisely represent a group of (coverage) redundant Simpson's paradoxes, it is crucial to first devise a concise representation for a group of populations with equal coverage.

\begin{definition} \label{def:equiv-cover}
Let $\equiv_{cov}$ be a binary relation on $\mathcal{P}$ such that for any two populations $s_1, s_2 \in \mathcal{P}$, $
s_1 \equiv_{cov} s_2 \Leftrightarrow cov(s_1) = cov(s_2)$.
\end{definition}

\begin{proposition}  \label{prop:equiv-cov}
$\equiv_{cov}$ is an equivalence relation.

\proof
(Reflexivity) For any population $s \in \mathcal{P}$, $cov(s) = cov(s)$. Therefore, $s \equiv_{cov} s$. $\equiv_{cov}$ is trivially reflexive.
   
(Symmetricity) For any populations $s_1, s_2 \in \mathcal{P}$ where $s_1 \neq s_2$, if $s_1 \equiv_{cov} s_2$, then $cov(s_1) = cov(s_2)$. By the symmetry of equality, this implies $cov(s_2) = cov(s_1)$. Therefore, $s_2 \equiv_{cov} s_1$.
$\equiv_{cov}$ is symmetric.
   
(Transitivity) For any populations $s_1, s_2, s_3 \in \mathcal{P}$ where $s_1 \neq s_2 \neq s_3$, if $s_1 \equiv_{cov} s_2$ and $s_2 \equiv_{cov} s_3$, then $cov(s_1) = cov(s_2)$ and $cov(s_2) = cov(s_3)$. By the transitivity of equality, this implies $cov(s_1) = cov(s_3)$. Therefore, $s_1 \equiv_{cov} s_3$.
$\equiv_{cov}$ is transitive.
\qed
\end{proposition}

With \Cref{prop:equiv-cov}, we can use the quotient of the set of populations $\mathcal{P}$ against the equivalence relation, $\equiv_{cov}$, to identify groups of populations with identical coverage.
We define this more formally below.

\begin{definition} \label{def:quotient-set}
Given the equivalence relation $\equiv_{cov}$ on $\mathcal{P}$, the quotient set $\mathcal{P}/\equiv_{cov}$ is defined as the set of all equivalence classes induced by $\equiv_{cov}$ where each equivalence class contains populations with identical coverage.
\end{definition}

\begin{example} \label{ex:quotient-set}
Consider \Cref{tab:ex2}.
The populations $s_1 = (a_1, *,*,*)$, $s_2 = (*,*,c_1,*)$, and $s_3 = (a_1,*,c_1,*)$ cover exactly the same set of records $\{t_1,t_2,t_3\}$.
That is, $s_1 \equiv_{cov} s_2$, $s_2 \equiv_{cov} s_3$, and $s_1 \equiv_{cov} s_3$.
$\{s_1,s_2,s_3\}$ is an equivalence class of populations \emph{w.r.t.} \Cref{tab:ex2}.
\end{example}

\nop{
\begin{proposition}
For each equivalence class $\mathcal{E} \in \mathcal{P}/\equiv_{cov}$, $\mathcal{E}$ is a convex subset of populations.

\proof
Let $\mathcal{E}$ be any equivalence class in $\mathcal{P}/\equiv_{cov}$. We prove that (1) populations in $\mathcal{E}$ are connected, and (2) for any pair where one is an ancestor of the other, every intermediate population is also in $\mathcal{E}$.

First, let $s_1, s_2 \in \mathcal{E}$. Since $cov(s_1) = cov(s_2)$, let $s_d = s_1 \sqcap s_2$ be their common descendant whose non-$*$ attributes are the union of non-$*$ attributes in $s_1$ and $s_2$. We can construct a sequence $s_1, s'_1, s'_2, ..., s'_k, s_d$ where each $s'_i$ is obtained by adding one non-$*$ attribute from $s_2$ to $s'_{i-1}$ (starting with $s_1$) until reaching $s_d$. Similarly, we can construct a sequence $s_2, s''_1, s''_2, ..., s''_m, s_d$ by adding non-$*$ attributes from $s_1$. By ancestor-descendant relation and $cov(s_1) = cov(s_2)$, we have $cov(s'_i) = cov(s''_j) = cov(s_1) = cov(s_2)$ for all $i,j$. Therefore, all populations in these sequences belong to $\mathcal{E}$, proving that $s_1$ and $s_2$ are connected through populations in $\mathcal{E}$.

Second, let $s_1, s_2 \in \mathcal{E}$ where $s_1 \succ s_2$, and let $s'$ be any population such that $s_1 \succ s' \succ s_2$. By definition of ancestor-descendant relation, $cov(s_1) \supseteq cov(s') \supseteq cov(s_2)$. Since $s_1 \equiv_{cov} s_2$, we have $cov(s_1) = cov(s_2)$, which implies $cov(s') = cov(s_1) = cov(s_2)$. Therefore, $s' \equiv_{cov} s_1$ and thus $s' \in \mathcal{E}$.

By (1) and (2), $\mathcal{E}$ is convex.
\qed
\end{proposition}
}

We argue that a group of populations with equal coverage forms a convex subset of the population set $\mathcal{P}$, which we prove in the following.

\nop{
\begin{definition}
Let $\mathcal{P}$ be the set of all populations in a given table. 
We define the set of sibling populations $\mathcal{P}_{sib}$ as:
$$\mathcal{P}_{sib} = \{(s_1, s_2) \in \mathcal{P} \times \mathcal{P} \mid s_1 \text{ and } s_2 \text{ are siblings}\}.$$
We extend the notion $\equiv_{cov}$ to $\mathcal{P}_{sib}$ s.t. for any $(s_1, s_2), (s'_1, s'_2) \in \mathcal{P}_{sib}$:
$$(s_1, s_2) \equiv_{cov} (s'_1, s'_2) \Leftrightarrow cov(s_1) \cup cov(s_2) = cov(s'_1) \cup cov(s'_2).$$
\end{definition}

Similarly, following \Cref{def:quotient-set}, the quotient set $\mathcal{P}_{sib}/\equiv_{cov}$ is the set of all equivalence classes, induced by $\equiv_{cov}$, each contains pairs of sibling populations with identical unioned coverage.
}

Assembling the pieces together, 

Furthermore, from \Cref{prop:convex-concise}, a convex subset of population can be concisely represented by its upper and lower bounds. 
To this end, it is sufficient to leverage $up(\mathcal{E}_{s_1})$ and $low(\mathcal{E}_{s_1})$ to represent $\mathcal{E}_{s_1}$; and $up(\mathcal{E}_{s_2})$ and $low(\mathcal{E}_{s_2})$ to represent $\mathcal{E}_{s_2}$.

\begin{example} \label{ex:concise-group}
From \Cref{ex:group-redundant}, the group of (coverage) redundant Simpson's paradoxes can be concisely represented as the tuple $((up(\mathcal{E}_{s_1}),low(\mathcal{E}_{s_1})),(up(\mathcal{E}_{s_2}),low(\mathcal{E}_{s_2})),\mathbf{X}_{i_1},\mathbf{Y}_{i_2})$, where in particular:
\begin{enumerate}
    \item[$\mathcal{E}_{s_1}:$] $\{(*,b_1,*,*),(*,*,*,d_1),(*,b_1,*,d_1)\}$ where $up(\mathcal{E}_{s_1}) = \{(*,b_1,*,d_1)\}$ and $low(\mathcal{E}_{s_1}) = \{(*,b_1,*,*),(*,*,*,d_1)\}$
    \item[$\mathcal{E}_{s_2}:$] $\{(*,b_2,*,*),(*,*,*,d_2),(*,b_2,*,d_2)\}$ where $up(\mathcal{E}_{s_2}) = \{(*,b_2,*,d_2)\}$ and $low(\mathcal{E}_{s_2}) = \{(*,b_2,*,*),(*,*,*,d_2)\}$
    \item[$\mathbf{X}_{i_1}:$] $\{A, C\}$
    \item[$\mathbf{Y}_{i_2}:$] $\{Y_1, Y_2\}$.
\end{enumerate}
Observe that $((*,b_1,*,d_1),(*,b_2,*,d_2),A,Y_1)$ is not a valid Simpson's paradox since $(*,b_1,*,d_1),(*,b_2,*,d_2)$ are not a valid siblings.
\end{example}

\nop{
\begin{proposition}
Given a set $\mathcal{SP}$ of coverage redundant Simpson's paradoxes represented by $(\mathcal{E}_{s_1}, \mathcal{E}_{s_2}, \mathbf{X}_{i_0}, \mathbf{Y}_{i_2})$, the cardinality of $\mathcal{SP}$ is $|\mathcal{E}_{s_1}| \cdot |\mathbf{X}_{i_0}| \cdot |\mathbf{Y}_{i_2}|$.

\proof
Let $s_1 \in \mathcal{E}_{s_1}$ and $s_2 \in \mathcal{E}_{s_2}$ such that $s_1$ and $s_2$ are sibling. 
Let $X_{i_0} \in \mathbf{X}_{i_0}$ and $Y_{i_2} \in \mathbf{Y}_{i_2}$. 
By \Cref{def:coverage}, $p = (s_1, s_2, X_{i_1}, Y_{i_2})$ is a Simpson's paradox in $\mathcal{SP}$.
Since each Simpson's paradox in $\mathcal{SP}$ is represented by combining exactly one pair of sibling populations from $\mathcal{E}_{s_1}$ and $\mathcal{E}_{s_2}$, one separator attribute from $\mathbf{X}_{i_1}$, and one label attribute from $\mathbf{Y}_{i_2}$, the total number of unique combinations is the product $|\mathcal{E}_{s_1}| \cdot |\mathbf{X}_{i_1}| \cdot |\mathbf{Y}_{i_2}|$, or equivalently, $|\mathcal{E}_{s_2}| \cdot |\mathbf{X}_{i_1}| \cdot |\mathbf{Y}_{i_2}|$.
\end{proposition}
}

\nop{
\begin{definition}
\label{def:quotient}
The quotient set $SP/\sim$ of $SP$ with respect to the coverage redundancy relation partitions $SP$ into equivalence classes, where each class consists of mutually coverage redundant Simpson's paradoxes. Each equivalence class $E \in SP/\sim$ can be represented as a tuple $(\bm{s}, \bm{X}_{i_0}, \bm{U}, \bm{X}_{i_1}, \bm{Y}_{i_2})$ where:
\begin{enumerate}
    \item $\bm{s}$ is a set of coverage equivalent populations;
    \item $\bm{X}_{i_0}$ is a set of differential attributes;
    \item $\bm{U}$ is a set of differential value pairs;
    \item $\bm{X}_{i_1}$ is a set of separator attributes;
    \item $\bm{Y}_{i_2}$ is a set of label attributes;
\end{enumerate}
such that any combination $(s, X_{i_0}, \{u_1,u_2\}, X_{i_1}, Y_{i_2})$ with $s \in \bm{s}$, $X_{i_0} \in \bm{X}_{i_0}$, $\{u_1,u_2\} \in \bm{U}$, $X_{i_1} \in \bm{X}_{i_1}$, and $Y_{i_2} \in \bm{Y}_{i_2}$ forms an AAC that is a Simpson's paradox, and all Simpson's paradoxes formed by such combinations are (coverage) redundant with each other according to \Cref{def:coverage}.
\end{definition}

\begin{example}
Consider data table $T$ in \Cref{tab:ex2}. An equivalence class $E \in SP/\sim$ can be represented as:
$E = (
  \{(*,*,*,d_1), (*,*,*,*)\}, 
  \{B\}, 
  \{(b_1,b_2)\},
  \{A,C\},
  \{Y_1,Y_2\})$.
This representation captures all coverage redundant Simpson's paradoxes in this equivalence class, including $((*,*,*,d_1), B, \{b_1,b_2\}, A, Y_1)$, $((*,*,*,*), B, \{b_1,b_2\}, C, Y_1)$, and $((*,*,*,d_1), B, \{b_1,b_2\}, A, Y_2)$. All Simpson's paradoxes formed by combinations from $E$'s component sets are (coverage) redundant with each other by \Cref{def:coverage}. \qed
\end{example}

\begin{proposition}
\label{prop:quotient-size}
Given an equivalence class $E = (\bm{s}, \bm{X}_{i_0}, \bm{U}, \bm{X}_{i_1}, \bm{Y}_{i_2})$, the number of distinct Simpson's paradoxes captured by $E$ is $|\bm{s}| \times |\bm{X}_{i_0}| \times |\bm{X}_{i_1}| \times |\bm{Y}_{i_2}|$.

\proof
Since each differential attribute $X_{i_0}$ in $\bm{X}_{i_0}$ is associated with exactly one pair of differential values in $\bm{U}$, we can consider $\bm{X}_{i_0}$ alone to represent both the choice of differential attribute and its corresponding differential value pair. 
Moreover, each component of an AAC can be independently selected from its corresponding set in $E$: a common parent population from $\bm{s}$, a differential attribute (with its associated differential value pair) from $\bm{X}_{i_0}$, a separator attribute from $\bm{X}_{i_1}$, and a label attribute from $\bm{Y}_{i_2}$. By construction, all such combinations form (coverage) redundant Simpson's paradoxes.
Therefore, the total number of distinct Simpson's paradoxes in $E$ is the product $|\bm{s}| \times |\bm{X}_{i_0}| \times |\bm{X}_{i_1}| \times |\bm{Y}_{i_2}|$. \qed
\end{proposition}
}


\nop{
This means that, from Simpson's paradoxes in a particular quotient set $\mathcal{P}/\approx_{cov}$, we can extract the followings:
\begin{enumerate}
    \item[$\mathcal{P}_s$:] per coverage equivalence, for every pair of distinct silbing children populations, extract the shared parent population.
    \item[$\mathcal{U}$:] per coverage equivalence, for every pair of distinct silbing children populations, extract the pair of differential values.
    \item[$\mathcal{Q}$:] per division equivalence, extract the set of distinct and one-to-one mappable separator attributes.
    \item[$\mathcal{Y}$:] per aggregate statistic equivalence, extract the set of distinct and identically distributed label attributes.
\end{enumerate}

\begin{proposition}
\label{prop:convex}
For a given quotient set $\mathcal{P}/\approx_{cov}$ and the sets $\mathcal{P}_s$, $\mathcal{U}$, $\mathcal{Q}$, and $\mathcal{Y}$ extracted from Simpson's paradoxes in $\mathcal{P}/\approx_{cov}$, for any $s \in \mathcal{P}_s$, $U \in \mathcal{U}$, $Q \in \mathcal{Q}$, and $Y \in \mathcal{Y}$, $(s,U,Q,Y)$ is always a Simpson's paradox included in $\mathcal{P}/\approx_{cov}$.
In other words, $\mathcal{P}/\approx_{cov} = \mathcal{P}_s \times \mathcal{U} \times \mathcal{Q} \times \mathcal{Y}$. 
Furthermore, $\mathcal{P}_s$ is convex and $(\mathcal{P}_s, \succ)$ is a finite lattice. 
\end{proposition}

With \Cref{prop:convex}, we can represent a found quotient set $\mathcal{P}/\approx_{cov}$ by a tuple $(\mathcal{P}_s, \mathcal{U}, \mathcal{Q},\mathcal{Y})$.
To further improve its conciseness, we leverage the lattice structure $(\mathcal{P}_s, \succ)$ to concisely represent $\mathcal{P}_s$ by reformatting $(\mathcal{P}_s, \succ)$ into a pair $(s_{\text{lub}},s_{\text{glb}})$ where $s_{\text{lub}}$ and $s_{\text{glb}}$ are its unique least uppper bound (lub, \emph{i.e.,} least common descendant) ancestor and greatest lower bound (glb, \emph{i.e.,} greatest common ancestor). 
With this, one can easily reconstruct $(\mathcal{P}_s, \succ)$ (and naturally $\mathcal{P}_s$) by populating every intermediate shared parent population $s_{\text{int}}$ such that $s_{\text{glb}} \succ s_{\text{int}} \succ s_{\text{lub}}$.
Therefore, the concise representation for a quotient set of Simpson's paradoxes under the coverage redundancy relation $\mathcal{P}/\approx_{cov}$ would write as the tuple $((s_{\text{lub}},s_{\text{glb}}), \mathcal{U}, \mathcal{Q}, \mathcal{Y})$.
Such a representation is capable of summarizing $|\mathcal{P}_s| \cdot |\mathcal{U}| \cdot |\mathcal{Q}| \cdot |\mathcal{Y}| = |\mathcal{P}/\approx_{cov}|$ number of distinct Simpson's paradoxes that, in essence, convey the same information.

\begin{example}
\label{ex:3.5}
According to observations made in \Cref{ex:3.2}, one can verify that the tuple $(\mathcal{P}_s, \mathcal{U}, \mathcal{Q}, \mathcal{Y})$ where $\mathcal{P}_s = \{(*,*,*,*),\\(*,*, *,d_1)\}$, $\mathcal{U} = \{B\{b_1,b_2\}\}$, $\mathcal{Q} = \{A,C\}$, and $\mathcal{Y} = \{Y_1,Y_2\}$ represents a coverage redundant quotient set $\mathcal{P}/\approx_{cov}$ based on data records provided in \Cref{tab:ex2}. 
We further observe that $\mathcal{P}_s$ is convex and $(\mathcal{P}_s, \succ)$ is a finite lattice since $\{(*,*,*,d_1), (*,*,*,*)\}$ is a direct parent/child pair.
That is, $s_{\text{lub}} = (*,*,*,d_1)$ and $s_{\text{glb}} = (*,*,*,*)$.
\end{example}
}

\nop{
\subsection{Relating to Casual Analysis}

In statistical literatures and causality analyses (\emph{e.g.,} \cite{pearl2014comment}), Simpsons's Paradox is described by a causal diagram that represents statistical and causal relations among variables.
The left part of \Cref{fig:causal} shows an example causal diagram illustrating an instance of Simpson's Paradox $(s, X_{i_0}\{u_1,u_2\}, X_{i_1}=P, Y_{i_2}=Y)$.
In particular, under a set of records covered by the population $s$, the diagram is able to model the causal dependency, based on statistical evidence, of the effect of $(u_1, u_2)$ on an outcome $Y$ conditioned upon a variable $P$.
In general, a diagram of such can simulate data-generating processes that operate sequentially along its arrows.  
This implies that redundancy in Simpson's Paradoxes can be considered as redundancy among different data-generating processes.
For instance, given two coverage redundant Simpson's Paradoxes, $(s, X_{i_0}\{u_1,u_2\}, X_{i_1}=P, Y_{i_2}=Y)$ and $(s', X_{i_0'}\{u_1',u_2'\}, X_{i_1'}=P', Y_{i_2'}=Y')$, they can be modelled by the same causal diagram due to coverage equivalence of the populations $s$ and $s'$, strict correlation of the effects $(u_1,u_2)$ and $(u_1',u_2')$, strict correlation of the conditioning variables $P$ and $P'$, and distributional identicaility of the outcomes $Y$ and $Y'$.
We reflect this exmaple in the right part of \Cref{fig:causal}.

\begin{figure}[t]
    \centering
    \includegraphics[width=0.45\textwidth]{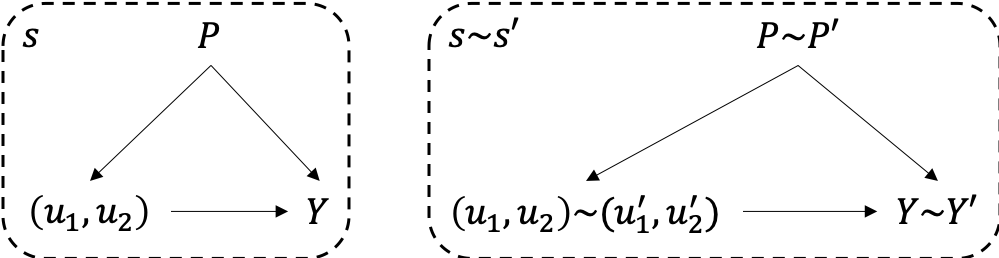}
    \caption{
    (Left) A causal diagram (illustrating the Simpson's Paradox $(s, X_{i_0}\{u_1,u_2\}, X_{i_1}=P, Y_{i_2}=Y)$) that models the causal dependencies, based on observed data covered by the population $s$, of an outcome $Y$ with respect to an effect $(u_1,u_2)$ conditioned upon $P$.
    (Right) A causal diagram illustrating coverage redundant Simpson's Paradoxes $(s, X_{i_0}\{u_1,u_2\}, X_{i_1}=P, Y_{i_2}=Y)$ and $(s', X_{i_0'}\{u_1',u_2'\}, X_{i_1'}=P', Y_{i_2'}=Y')$. We use the shorthand notation ``$\sim$'' to reflect the redundant relations.
    }
    \label{fig:causal}
\end{figure}
}
}

}

\section{Finding Non-Redundant Simpson's Paradoxes}
\label{sec:finding}

In this section, we first establish the \#P-hardness of finding non-redundant Simpson's paradoxes. Then we present our algorithmic framework for fast identification all non-redundant Simpson's paradoxes (i.e., redundant paradox groups) in a given table. 

\subsection{Complexity}

\begin{theorem}[\#P-Hardness]
\label{thm:sharp-p-hardness}
Finding all redundant paradox groups in a multidimensional table is \#P-hard.
\begin{proof}[Proof sketch]
We reduce from \#SAT~\cite{valiant1979complexity}. Given a Boolean formula $\phi$, we construct a table where each satisfying assignment of $\phi$ corresponds to a distinct group of redundant Simpson's paradoxes. Specifically, each satisfying assignment maps to a subset of records that exhibit paradoxes under Definition~\ref{def:simpson} and redundancies under Lemmas~\ref{prop:sibling-eq},~\ref{prop:division-equivalence}, and~\ref{prop:statistics-equivalence}. The reduction preserves the number of satisfying assignments of $\phi$ as the number of redundant paradox groups in the constructed table, making the problem \#P-hard.
\end{proof}
\end{theorem}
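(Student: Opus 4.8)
The plan is to prove \#P-hardness by a counting-preserving (parsimonious) reduction from \#SAT, the canonical \#P-complete problem. Given a Boolean formula $\phi$ over variables $x_1,\dots,x_k$, I would construct in time polynomial in $|\phi|$ a table $T_\phi$ whose redundant paradox groups are in bijection with the satisfying assignments of $\phi$. The number of attributes and records of $T_\phi$ will be polynomial in $|\phi|$, so that any polynomial-time algorithm that counts the redundant paradox groups of a table (equivalently, enumerates the equivalence classes of \Cref{def:coverage}) would count the satisfying assignments of $\phi$ in polynomial time, establishing hardness.

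For the construction, I would first introduce one categorical attribute $X_i$ with domain $\{0,1\}$ per Boolean variable, so that each full assignment $\alpha\in\{0,1\}^k$ corresponds to the population $s_\alpha$ obtained by fixing every $X_i$ to $\alpha_i$. I would then attach a fixed \emph{paradox gadget}: a dedicated differential attribute $X_j$ with two values $u_1,u_2$ supplying the sibling split, a separator attribute $X$, and a binary label $Y$, together with a carefully chosen multiset of records. The records are designed so that, restricted to the slice selected by $s_\alpha$, the association configuration $(s_\alpha\substitute{X_j}{u_1},\, s_\alpha\substitute{X_j}{u_2},\, X,\, Y)$ satisfies the three conditions of \Cref{def:simpson} --- an aggregate trend together with the reversed trend in every separator subgroup, with strictness somewhere --- if and only if $\alpha$ satisfies $\phi$. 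Concretely, I would translate each clause of $\phi$ into a constraint on the record counts in the $s_\alpha$ slice, padding with neutral records that pin the baseline frequencies, so that the reversal of the inequalities is realized precisely when all clauses are met.

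To turn the per-configuration correspondence into a statement about counting equivalence \emph{classes}, I would make the gadget \emph{rigid}. The differential, separator, and label attributes are chosen to be unique in the sense that no other attribute has coverage-correlated values and no other label has matching frequency statistics, so by \Cref{prop:sibling-eq,prop:division-equivalence,prop:statistics-equivalence} the gadget paradoxes cannot be redundant to anything outside their slice; hence, via the product-space characterization of \Cref{prop:product-space}, each satisfying assignment contributes exactly one redundant paradox group. Moreover, distinct assignments yield non-redundant groups: by giving each $\alpha$ at least one private record I ensure $\cov(s_\alpha)\neq\cov(s_{\alpha'})$ for $\alpha\neq\alpha'$, so the sibling coverage groups $\mathcal{E}_1,\mathcal{E}_2$ differ and the groups are inequivalent under \Cref{def:coverage}. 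Together these give a bijection between satisfying assignments and redundant paradox groups, so the two counts coincide (and even a count equal to $\#\mathrm{SAT}(\phi)$ plus an efficiently computable additive term would suffice for hardness).

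The hard part will be the gadget design in the middle step: the Simpson's paradox condition is a conjunction of one aggregate inequality and one reversed inequality per separator value plus a strictness requirement, and it must encode clause satisfaction \emph{exactly}, with no spurious paradoxes leaking in. In particular I must prevent partial assignments (populations with some $X_i=\ast$) and unrelated populations from accidentally realizing the reversal, and I must guarantee that the redundancy relation neither silently merges groups across different satisfying assignments nor splits a single assignment's paradoxes into several groups. Once the record counts are pinned down so that the inequalities encode clause satisfaction robustly and the rigidity and distinctness invariants are verified, the counting equivalence --- and thus \#P-hardness --- follows directly from \Cref{def:coverage} and \Cref{prop:product-space}.
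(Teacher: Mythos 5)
Your high-level strategy coincides with the paper's: a parsimonious reduction from \#SAT in which satisfying assignments correspond bijectively to redundant paradox groups, with distinctness of groups enforced through distinct coverage. But the proposal has a genuine gap, and it is not only that the ``hard part'' (the gadget) is deferred --- the specific mechanism you sketch for it cannot work. You encode an assignment $\alpha$ as the population $s_\alpha$ that fixes \emph{every} variable attribute $X_i$ to $\alpha_i$, and you plan to encode clause satisfaction as ``a constraint on the record counts in the $s_\alpha$ slice,'' giving each $\alpha$ ``at least one private record.'' Since records carry concrete (non-wildcard) values on every attribute, each record lies in \emph{exactly one} full-assignment slice, so these slices partition $T$. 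A polynomial-size table therefore leaves all but polynomially many slices empty, while $\phi$ may have exponentially many satisfying assignments; an empty population (or one whose sibling or separator sub-populations are empty) has no frequency statistics and hence supports no Simpson's paradox under Definition~\ref{def:simpson}. The same pigeonhole objection defeats the ``private record per assignment'' device for keeping the coverage groups of different assignments distinct. So the bijection you need cannot be realized by per-slice record counts, which is the entire substance of the middle step you left open.

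The paper's construction is organized to avoid committing per-assignment records: it creates one shared pair of clause records per clause (plus a constant number of padding records), and the intended mechanism is that an assignment population covers the clause-sat record $r_j^{\mathrm{sat}}$ precisely when the encoded assignment satisfies $C_j$ --- satisfaction is read off from \emph{which shared records are covered}, not from counts inside a private slice. A second, smaller divergence: the paper builds redundancy explicitly into the gadget --- each variable attribute is triplicated ($A_i$, $B_i$, $C_i$) to force sibling-child equivalence and the label is duplicated ($Y_1$, $Y_2$) to force statistic equivalence --- so each satisfying assignment yields a genuinely multi-element redundant group; you instead aim for rigid singleton groups. Rigidity is legitimate in principle (the paper explicitly allows singleton equivalence classes), but until you exhibit a concrete gadget that circumvents the counting obstruction above, the reduction does not go through.
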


This hardness result shows that identifying non-redundant Simpson’s paradoxes is computationally challenging. Nevertheless, given their importance and practical relevance, in the remainder of this section we present techniques to accelerate the computation.

\subsection{General Algorithmic Framework}

Our approach builds directly on the concise representation developed in \Cref{sec:coverage}, ensuring that redundant instances are grouped into equivalence classes and represented concisely. Our framework proceeds in two main steps: 

\begin{enumerate}
    \item \textbf{Materialization.} We enumerate all non-empty populations, compute their coverage and frequency statistics, and organize populations into (convex) coverage groups.
    \item \textbf{Paradox discovery.} Using the materialized coverage and frequency statistics, we detect all instances of Simpson's paradox and simultaneously construct concise representations (\Cref{sec:redundancy}) for redundant paradox groups.
\end{enumerate}

\subsection{Materialization}
\label{sec:materialization}
\begin{algorithm}[t]
\caption{Brute-force Materialization}
\label{alg:brute-force-materialization}
\begin{algorithmic}[1]
\Input Data table $T = (\{X_i\}_{i=1}^n,\{Y_j\}_{j=1}^m)$
\Out Materialized populations
\For{each record $t \in T$}
\For{each ancestor $s$ of $t$}
\State Update $\cov(s)$ and $P(Y|s)$ for each $Y \in \{Y_1,\ldots,Y_m\}$.
\EndFor
\EndFor
\end{algorithmic}
\end{algorithm}
The first step of our algorithm materializes all non-empty populations, computing their coverage and frequency statistics.
A brute-force approach, illustrated in Algorithm~\ref{alg:brute-force-materialization},
is to iterate through each record $t$ in the base table $T$ and updates the coverage and frequency statistics for all ancestor populations of $t$. For each record, this requires enumerating all $2^n$ ancestors $s$ where $s[i] \in \{t.X_i, \ast\}$ for each attribute $X_i$. This method suffers from two inefficiencies. First, it performs repetitive computation by separately processing all $2^n$ ancestors for every record, leading to $|T| \times 2^n$ population updates even when many have identical coverage. Second, it fails to organize materialized populations into convex coverage groups, thereby missing opportunities to avoid repetitive materialization of intermediate populations within each coverage group. 

To address these limitations, we propose a depth-first search (DFS) approach adapted from~\cite{lakshmanan2002quotient, beyer1999bottom, han2001efficient}.
Effectively, the output will be coverage groups (i.e., convex subsets) of populations with identical coverage,
along with frequency statistics for each.
\Cref{alg:materialization} summarizes the procedure.

\begin{algorithm}[t]
\caption{DFS-based Materialization}
\label{alg:materialization}
\begin{algorithmic}[1]
\Input Data table $T = (\{X_i\}_{i=1}^n, \{Y_j\}_{j=1}^m)$, coverage threshold $\theta$
\Out Coverage-based partitioning $\mathcal{P}/\equiv_{\cov}$ of populations, where each group $\mathcal{E}$ is represented by $(\ubound(\mathcal{E}), \lbound(\mathcal{E}))$; frequency \textsc{stats} for each $\mathcal{E} \in \mathcal{P}/\equiv_{\cov}$, indexed by $\ubound(\mathcal{E})$.

\State Initialize the set $G$ of candidate coverage groups and \textsc{stats};
\State \textsc{DFS}$((\ast, \ast, \ldots, \ast), T, 0)$; \Comment{updates $G$ and \textsc{stats}}
\For{each unique upper bound $u$ such that $(u, \_) \in G$}
    \State $L \gets \{ s \mid (u, s) \in G \land (\nexists s': (u,s') \in G \land s \prec s' )\}$;
    \State Add $(u, L)$ as a coverage group in $\mathcal{P}/\equiv_{\cov}$;
\EndFor
\State \Return $\mathcal{P}/\equiv_{\cov}$, \textsc{stats};

\Function{DFS}{$s, T', k$}: \Comment{updates $G$ and \textsc{stats}}
\State $d \gets s$;
\For{each attribute $X_i\,(1\leq i \leq n)$ with $s[i]=\ast$}
    \If{$\exists\, v \in \Dom(X_i): \cov(s)=\cov(s\substitute{X_i}{v})$}
        \State $d[i] \gets v$; 
    \EndIf
\EndFor
\State Add $(d,s)$ to $G$; $\textsc{stats}(d) \gets \{P(Y_j | d)\}_{j=1}^m$;
\For{each attribute $X_h\,(k < h \leq n)$ with $d[h]=\ast$}
    \For{each $v \in \Dom(X_h)$}
        \State $T'_v \gets \{ t \mid t \in T' \land t.X_h = v \}$;
        \If{$|T'_v| \geq \theta \cdot |T|$}
            \State \textsc{DFS}$(d\substitute{X_h}{v}, T'_v, h)$;
        \EndIf
    \EndFor
\EndFor
\EndFunction
\end{algorithmic}
\end{algorithm}

\subsubsection{DFS-based Population Materialization}
\label{sec:dfs}

The algorithm builds the population lattice (see \Cref{fig:lattice} as an example) in a bottom-up manner. 
It starts from the root population $s_{\text{root}} = (\ast, \ast,\ldots,\ast)$ covering the entire dataset $T$, and progressively materializes populations that cover fewer records, thereby moving upward in the lattice.

At each recursive step, the DFS aims to identify a (candidate) convex coverage group populations. We begin with some lower-bound population $s$ and attempt to find its corresponding upper bound $s'$. 
The upper bound $s'$, intitially the same as $s$, is constructed by scanning the records in $\cov(s)$: for each attribute $X_i$ where $s[i] = \ast$, if all records share the value $v \in \Dom(X_i)$, we set $s'[i] = v$; otherwise, $s'[i] = \ast$. This constructions ensures $\cov(s') = \cov(s)$.

\begin{example}
\label{ex:upper-bound-jumping}
Consider \Cref{tab:ex2} and population $s = (a_1,\ast,\ast,\ast)$ with $\cov(s) = \{t_1,t_2,t_3\}$.
Scanning these records, we find that all share the value $c_1$ for attribute $C$. 
For $B$ and $D$, the records do not share a common value. 
Thus, the upper bound $s'$ is $(a_1,\ast,c_1,\ast)$. \qed
\end{example}

By \Cref{prop:convex-reconstruction}, a convex coverage group can be reconstructed from its upper and lower bounds. 
Thus, any population $s''$ between $s$ and $s'$ (i.e., $s \succeq s'' \succeq s'$) must have the same coverage. 
These intermediate populations do not need explicit materialization; their coverage and statistics can be inferred, greatly improving efficiency.

After identifying $s'$, the pair $(s, s')$ serves as a candidate coverage group,
and we recursively explore each child $\hat{s}$ of $s'$, continuing DFS with $\hat{s}$ as the lower bound of the next candidate coverage group.

\begin{example}
\label{ex:dfs-continuation}
Continuing from \Cref{ex:upper-bound-jumping}, the pair $(s=(a_1,\ast,\ast,\ast), s'=(a_1,\ast,c_1,\ast))$ defines a convex group.
We then explore children of $(a_1,\ast,c_1,\ast)$. 
For instance, $(a_1,b_1,c_1,\ast)$ is a child since $\cov((a_1,b_1,c_1,\ast)) \subset \cov((a_1,\ast,c_1,\ast))$ and is non-empty. DFS proceeds with $(a_1,b_1,c_1,\ast)$ as the next lower bound. \qed
\end{example}

The recursion stops when (1) a population $s$ covers fewer than a threshold $\theta \cdot |T|$ records (see \Cref{sec:population-pruning}, population pruning), or (2) the DFS reaches the top of the lattice.

\subsubsection{Coverage Group Merging}
\label{sec:group-merging}

DFS may discover the same coverage group via different lower bounds. 
Therefore, we merge candidate coverage groups that share the same upper bound. 
Each merged group has one upper bound and potentially multiple lower bounds.
We then refine the lower bounds by removing invalid ones, i.e., those that are descendants of others in the same group.

\begin{example}
\label{ex:group-merging}
In \Cref{tab:ex2}, populations $(a_1,\ast,\ast,\ast)$, $(\ast,\ast,c_1,\ast)$, and $(a_1,\ast,c_1,\ast)$ all have coverage $\{t_1,t_2\}$. 
DFS may discover this convex coverage group via two paths: 
(1) $(a_1,\ast,\ast,\ast) \to (a_1,\ast,c_1,\ast)$, or 
(2) $(\ast,\ast,c_1,\ast) \to (a_1,\ast,c_1,\ast)$. 
This yields two candidate coverage groups. 
After merging, the consolidated coverage group has upper bound $\{(a_1,\ast,c_1,\ast)\}$ and lower bounds $\{(a_1,\ast,\ast,\ast),(\ast,\ast,c_1,\ast)\}$, both valid since neither is an ancestor of the other. \qed
\end{example}

\subsubsection{Population Pruning}
\label{sec:population-pruning}

An important practical insight is that populations with very small coverage often have low analytical significance and are unlikely to be of interest to users. To address this, we introduce a simple pruning threshold $0 \leq \theta \leq 1$. Any population with coverage less than $\theta \cdot |T|$ is neither materialized nor considered in Simpson's paradoxes. This practical constraint also reduces computational cost: in the DFS-based materialization algorithm, if a population $s$ covers fewer than $\theta \cdot |T|$ records, we skip materializing both $s$ and all its descendants.

\begin{example}
\label{ex:population-pruning}
In \Cref{tab:ex2}, suppose $\theta=0.4$. 
Population $s = (\ast,\ast,c_1,\ast)$ covers $\{t_1,t_2,t_3\}$.
Since $3/7 \approx 43\% > 40\%$, $s$ is not pruned.
If $\theta=0.6$, then $s$ and all its descendants, such as $(a_1,\ast,c_1,\ast)$ and $(a_1,b_1,c_1,d_1)$, would be pruned. \qed
\end{example}

\label{sec:materialization-correctness}


\begin{theorem}[Completeness]
\label{thm:completeness}
\Cref{alg:materialization} materializes all non-empty populations that satisfy the coverage threshold. Furthermore, after group merging, \Cref{alg:materialization} yields maximal convex coverage groups of coverage-identical populations; that is, no population outside a group shares the same coverage as any population within it. \qed
\end{theorem}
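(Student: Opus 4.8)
The plan is to prove the two assertions separately, building on the structural facts of \Cref{sec:concise-representation}. I would first record a lemma about the ``jumping'' step: for any lower-bound argument $s$ passed to \textsc{DFS}, the constructed $d$ is exactly the unique upper bound $\ubound(\mathcal{E})$ of the coverage group $\mathcal{E}$ containing $s$. Indeed, $d$ fixes attribute $X_i$ precisely when $X_i$ is uniquely determined on $\cov(s)$ (all records agree on it), and by \Cref{prop:convex-property} these uniquely-determined attributes are exactly the non-wildcard positions of the least descendant; hence $\cov(d)=\cov(s)$ and $d$ is the unique most-specific member of $\mathcal{E}$. Since a population counts as materialized once it can be reconstructed from its group's recorded bounds, \Cref{prop:convex-reconstruction} then reduces completeness to a single reachability claim: \textsc{DFS} issues at least one recursive call whose lower-bound argument lies in $\mathcal{E}$, for every coverage group $\mathcal{E}$ with $|\cov(\mathcal{E})|\ge\theta|T|$.

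I would establish this reachability by strong induction on decreasing coverage size. The base case is the group of coverage $T$, entered by the initial call on the root $(\ast,\dots,\ast)$. For the inductive step, fix a qualifying group $\mathcal{E}$ with $\cov(\mathcal{E})\subsetneq T$ and a lower bound $s_l\in\lbound(\mathcal{E})$; let $X_h$ be its highest-index non-wildcard attribute and set $s_l'=s_l\substitute{X_h}{\ast}$. Because $s_l$ is a greatest ancestor, deleting $X_h$ strictly enlarges coverage, so $s_l'$ lies in a group $\mathcal{E}'$ with $\cov(\mathcal{E}')\supsetneq\cov(\mathcal{E})$, reached by the hypothesis with its upper bound $d'=\ubound(\mathcal{E}')$ computed in some call. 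I would then check that $d'[h]=\ast$ (fixing $X_h$ to $s_l[h]$ strictly shrinks $\cov(s_l')$, so $X_h$ is not uniquely determined there) and that $\cov(d'\substitute{X_h}{s_l[h]})=\cov(s_l)=\cov(\mathcal{E})\ge\theta|T|$, so the threshold test passes and a recursive call is issued with a lower bound in $\mathcal{E}$.

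The main obstacle is the ordering guard $h>k$ in the child loop of \Cref{alg:materialization}, which (as in BUC-style cube computation~\cite{beyer1999bottom}) forces each call to branch only on attributes above its parameter. The jumping step fixes uniquely-determined attributes irrespective of index, so a computed $d'$ may already pin attributes of index larger than the one we must branch on, and the incoming parameter may exceed $h$. To handle this I would strengthen the induction hypothesis so that every qualifying group is reached through a canonical call in which attributes are fixed in strictly increasing index order and the parameter equals the largest fixed index; choosing $X_h$ as the \emph{highest}-index non-wildcard of $s_l$ then guarantees $h>k$ along this path. Attributes fixed out of order by jumping are coverage-forced and thus never need branching: any value other than the forced one yields empty coverage, and the forced value reproduces the same group, already captured by convex reconstruction.

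Finally, for maximality I would separate \emph{correctness} from \emph{coverage-completeness}. Correctness---every reconstructed member of an output group shares one coverage---follows from the jumping lemma and \Cref{prop:convex-reconstruction}, since each recorded pair $(d,s)$ satisfies $\cov(d)=\cov(s)$ and every intermediate inherits this coverage by convexity. Coverage-completeness follows from the completeness part: any population sharing a group's coverage has the same coverage size, hence is itself qualifying, so it is reached and lies between a recorded lower bound and the unique upper bound, placing it in the group. Because \Cref{prop:convex-property} yields a unique upper bound per coverage class, the merge-by-upper-bound step collects all candidate pairs of one class into a single output group while separating distinct classes; pruning dominated lower bounds retains the greatest ancestors and, by convexity, leaves the reconstructed set unchanged.
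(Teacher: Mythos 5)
Your reachability core is sound, and it is in fact more careful than the paper's own proof: the paper argues by contradiction along a root-to-$s^*$ path and simply asserts that the DFS ``continues'' at each step, ignoring both the ordering guard $h>k$ and the fact that recursion descends from the computed upper bound $d$ rather than from the population being generalized. Your jumping lemma, the canonical increasing-index path with the strengthened induction hypothesis, and the separate treatment of coverage-forced attributes repair exactly the steps the paper glosses over. (Note also that the paper's appendix proof covers only the first sentence of the theorem; it never addresses the merging/maximality clause you attempt.)

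The genuine gap is the bridge you invoke twice---once when reducing completeness to reachability under your reconstruction-based notion of ``materialized,'' and again in the coverage-completeness half of maximality: namely, that once a group $\mathcal{E}$ is reached, every population with that coverage ``lies between a recorded lower bound and the unique upper bound.'' This does not follow. The recorded lower-bound candidates are not $\lbound(\mathcal{E})$; they are the DFS entry arguments, i.e., children $d'\substitute{X_h}{v}$ of previously computed upper bounds, so they inherit every coverage-forced attribute of the parent group and can be proper descendants of \emph{all} true lower bounds of $\mathcal{E}$. Indeed, your own inductive step produces only the argument $d'\substitute{X_h}{s_l[h]}$, which need not equal $s_l$. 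Concretely, let $T$ consist of the three records $(a_1,b_1,c_1)$, $(a_1,b_2,c_1)$, $(a_2,b_1,c_1)$ over attributes $A,B,C$, with $\theta=1/3$. All records share $C=c_1$, so the root call computes $d=(\ast,\ast,c_1)$, and every recursive argument thereafter has $C=c_1$ fixed. The population $(a_1,\ast,\ast)$ has the same coverage $\{t_1,t_2\}$ as $(a_1,\ast,c_1)$, yet the only pair ever recorded for that group is $((a_1,\ast,c_1),(a_1,\ast,c_1))$: the true lower bound $(a_1,\ast,\ast)$ is never a DFS argument, is not sandwiched by any recorded pair, and after merging lies outside the reconstructed group. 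So reachability of $\mathcal{E}$ does not deliver reconstructibility of its members, and under your reading the maximality clause itself fails on this instance.

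To close the argument you must either weaken ``materialized'' to ``the group's unique upper bound and its frequency statistics are recorded'' (which is all your reachability induction---and the paper's proof---actually establishes, and which suffices for how \Cref{alg:simpson} consumes the output), or prove that every true lower bound of every qualifying group occurs as a DFS argument. The latter is false for \Cref{alg:materialization} as written; it would hold, for instance, if the recursion branched on children of the argument $s$ rather than on children of $d$, but that is a change to the algorithm, not a missing step one can supply in the proof.
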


\subsection{Finding Redundant Paradox Groups}
\label{sec:discover}

The discovery of Simpson's paradoxes can be viewed as a two-step process:  
(1) systematically enumerating all possible ACs, and  
(2) evaluating each AC against the definition of Simpson's paradox (\Cref{def:simpson}).  
\Cref{alg:brute-force} illustrates a brute-force method: for each non-empty population $s$, we enumerate all sibling child pairs $(s_1,s_2)$, and then combine each pair with every valid separator attribute and label attribute to form candidate ACs. Each AC is then checked against \Cref{def:simpson}.

\begin{algorithm}[t]
\caption{Brute-force finding of Simpson's paradoxes}
\label{alg:brute-force}
\begin{algorithmic}[1]
\Input Materialized populations $S$ from $T = (\{X_i\}_{i=1}^n, \{Y_j\}_{j=1}^m)$
\Out All instances of Simpson's paradox
\For{each population $s \in S$}
  \For{each $X_i$ with $s[i]=\ast$}
    \For{each pair $v_1,v_2 \in \Dom(X_i)$ where $u_1 \neq u_2$}
      \State $s_1 \gets s\substitute{X_i}{v_1};\quad s_2 \gets s\substitute{X_i}{v_2}$;
      \For{each $i' \neq i$ with $s[i']=\ast$}
        \For{each label attribute $Y$}
          \State Evaluate $(s_1,s_2,X_{i'}, Y)$ using \Cref{def:simpson};
        \EndFor
      \EndFor
    \EndFor
  \EndFor
\EndFor
\end{algorithmic}
\end{algorithm}

This exhaustive search has two major drawbacks.  
First, it does not organize discovered paradoxes into redundant paradox groups. 
Second, it wastes computation by (i) repeatedly iterating over populations with identical coverage, and (ii) evaluating ACs that are redundant to already discovered paradoxes.  

We therefore design optimizations to avoid repeated computation and concisely represent redundant paradox groups.

\subsubsection{Iteration over Coverage Groups}
\label{sec:upper-bound}

Instead of iterating over every non-empty population in \Cref{alg:brute-force}, we exploit the convex coverage groups discovered from \Cref{alg:materialization}.
From each coverage group, it suffices to consider only one representative population -- specifically, its unique upper bound (\Cref{prop:convex-property}) -- since all populations in the coverage group have identical coverage.
This helps avoid repeated computation over such populations.

\subsubsection{Constructing Redundant Paradox Groups}
\label{sec:extending}

Even after restricting to iteration over coverage groups, many ACs remain redundant.
This occurs due to three types of equivalence. First, sibling child equivalence: coverage groups contain multiple populations beyond their upper bounds, and these populations can form valid sibling pairs, generating sibling child equivalent Simpson's paradoxes. For example, if coverage groups $\mathcal{E}_1$ and $\mathcal{E}_2$ have upper bounds $(\ast,b_1,\ast,d_1)$ and $(\ast,b_2,\ast,d_1)$, the AC using the upper bounds is just one instance -- other populations like $(\ast,b_1,\ast,\ast) \in \mathcal{E}_1$ and $(\ast,b_2,\ast,\ast) \in \mathcal{E}_2$ are also siblings, creating sibling child equivalent ACs.
Second, separator and statistic equivalence: different separator attributes may induce identical sub-population partitions, and different label attributes may be perfectly correlated. For example, consider sibling populations $(\ast,b_1,\ast,\ast)$ and $(\ast,b_2,\ast, \ast)$ from \Cref{tab:ex2}. Observe that attributes $A$ and $C$ partition the data identically, and label values for $Y_1$ and $Y_2$ are perfectly correlated, then ACs $((\ast,b_1,\ast,\ast),(\ast,b_2,\ast,\ast),A,Y_1)$ and $((\ast,b_1,\ast,\ast),(\ast,b_2,\ast,\ast),C,Y_2)$ would be redundant.

We propose two strategies that exploit the redundancy conditions from \Cref{sec:redundancy}
to avoid repeated evaluations when constructing redundant paradox groups.
Importantly, we leverage the concise representation of such groups (\Cref{sec:concise-representation}),
and ensure that we maintain representational conciseness when constructing/extending these groups.

\paragraph{Construction by sibling child equivalence}  
Once a Simpson's paradox $p=(s_1,s_2,X,Y)$ is identified, all paradoxes sibling child equivalent to $p$ can be inferred without evaluation.
Thanks to \Cref{sec:concise-representation}, we can also encode the entire set of such paradoxes concisely.
This strategy is formalized below and implemented by \Cref{alg:sibling-equivalence-pruning}.

\begin{proposition}
\label{prop:coverage-equivalence-pruning}
Let $p=(s_1,s_2,X,Y)$ be a Simpson's paradox,
where $s_1$ and $s_2$ belong to coverage groups $\mathcal{E}_1$ and $\mathcal{E}_2$ in $\mathcal{P}/\equiv_{\cov}$, respectively.
Then for any $(s_1',s_2') \in \mathcal{E}_1 \times \mathcal{E}_2$ such that $s_1'$ and $s_2'$ are siblings, the AC $p'=(s_1',s_2',X,Y)$ is also a Simpson's paradox and redundant with respect to $p$.\qed
\end{proposition}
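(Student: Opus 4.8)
The plan is to reduce the statement to the sibling child equivalence lemma (\Cref{prop:sibling-eq}) together with the definition of redundancy (\Cref{def:coverage}); essentially everything is bookkeeping about coverage identities, with no statistical estimation required. The argument breaks into three short movements: extract the two coverage equalities forced by membership in a common coverage group, check that $p'$ is a genuine AC so that \Cref{prop:sibling-eq} is applicable, and then read the redundancy directly off \Cref{def:coverage}.

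First I would invoke the defining property of the partition $\mathcal{P}/\equiv_{\cov}$: each coverage group consists exactly of populations sharing one common coverage. Hence $s_1, s_1' \in \mathcal{E}_1$ yields $\cov(s_1)=\cov(s_1')$, and $s_2, s_2' \in \mathcal{E}_2$ yields $\cov(s_2)=\cov(s_2')$. These are precisely the two coverage hypotheses demanded by \Cref{prop:sibling-eq}, and they are obtained with no computation.

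Next I would confirm that $p'=(s_1',s_2',X,Y)$ is a well-formed association configuration, which is the only step that calls for care. By hypothesis $s_1'$ and $s_2'$ are siblings, so they descend from a common parent under some differential attribute; what must be verified is that the separator $X$ remains admissible for this new pair, i.e.\ that $X$ is distinct from their differential attribute and plays the role of separator exactly as in $p$. This is the standing assumption that $p'$ is an AC, and it is consistent with the coverage identities, since $\cov(s_1')=\cov(s_1)$ and $\cov(s_2')=\cov(s_2)$ mean that the records $X$ separates are unchanged from $p$ to $p'$. With $p'$ a valid AC carrying the same $X$ and $Y$ as $p$, \Cref{prop:sibling-eq} applies verbatim and certifies that $p'$ is a Simpson's paradox.

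Finally, redundancy is immediate: the paradoxes $p$ and $p'$ satisfy the coverage equalities of \Cref{prop:sibling-eq}, so by the sibling child clause of \Cref{def:coverage} they are sibling child equivalent and therefore redundant; in the degenerate case $(s_1',s_2')=(s_1,s_2)$ the claim is just the reflexive case already subsumed by \Cref{thm:equiv}. The one place I expect friction is the well-formedness check above---guaranteeing that passing to another sibling pair inside the same coverage groups cannot promote $X$ to the differential attribute or otherwise invalidate it as a separator---but this is controlled entirely by the coverage identities rather than by any new estimate, so it should resolve cleanly.
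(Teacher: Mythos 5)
Your proposal is correct and follows essentially the same route as the paper's own proof: membership in the coverage groups yields $\cov(s_1)=\cov(s_1')$ and $\cov(s_2)=\cov(s_2')$, Lemma~\ref{prop:sibling-eq} then certifies that $p'$ is a Simpson's paradox, and since $p$ and $p'$ share the same separator and label attributes, Definition~\ref{def:coverage} gives sibling child equivalence and hence redundancy. Your additional scrutiny of whether $p'$ is a well-formed AC (i.e., that $X$ is not promoted to the differential attribute of the new sibling pair) is a point the paper silently treats as part of the hypothesis that $p'$ is an AC, so it does not change the argument.
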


\begin{example}[Construction by sibling child equivalence]
\label{ex:coverage-equivalence-pruning}
In \Cref{tab:ex2}, consider the Simpson's paradox $p=((\ast,b_1,\ast,\ast),(\ast,b_2,\ast,\ast),A,Y_1)$
from \Cref{ex:sibling}.
With materialization by \Cref{alg:materialization}, populations $(\ast,b_1,\ast,\ast)$ and $(\ast,b_2,\ast,\ast)$ belong to coverage groups $\{(*, b_1, *, *), (*, *, *, d_1), (*, b_1, *, d_1)\}$ and $\{(*, b_2, *, *), (*, *, *, d_2), \\ (*, b_2, *, d_2)\}$, respectively.  
Valid sibling pairs across these groups yield additional paradoxes, such as $((\ast,*,*,d_1),(\ast,*,*,d_2),A,Y_1)$, which are redundant to $p$.
These can be directly included in the same redundant paradox group without further evaluation.
Furthermore, instead of enumerating these paradoxes, the group can be concisely represented by: 
\begin{align*}
    \ubound(\mathcal{E}_1)=\{(\ast,b_1,\ast,d_1)\}, &\quad \lbound(\mathcal{E}_1)=\{(\ast,b_1,\ast,\ast),(\ast,\ast,\ast,d_1)\},\\
    \ubound(\mathcal{E}_2)=\{(\ast,b_2,\ast,d_2)\}, &\quad \lbound(\mathcal{E}_2)=\{(\ast,b_2,\ast,\ast),(\ast,\ast,\ast,d_2)\},\\
    \mathbf{X}=\{A\}, &\quad \mathbf{Y}=\{Y_1\} \qquad \qquad \qquad \qquad \qquad \qed
\end{align*}
\end{example}

\begin{algorithm}[t]
\caption{Construction by sibling child equivalence}
\label{alg:sibling-equivalence-pruning}
\begin{algorithmic}[1]
\Function{ConstructBySib}{$p, \mathcal{P}/\equiv_{\cov}$}
  \State \textbf{Input:} Simpson's paradox $p=(s_1,s_2,X,Y)$; coverage-based partitioning $\mathcal{P}/\equiv_{\cov}$ of populations
  \State \textbf{Output:} Concise representation of a set of paradoxes sibling-child-equivalent to $p$
  \State Let $\mathcal{E}_1, \mathcal{E}_2 \in \mathcal{P}/\equiv_{\cov}$ be groups containing $s_1$ and $s_2$, resp.;
  \State \Return $(\ubound(\mathcal{E}_1), \lbound(\mathcal{E}_1), \ubound(\mathcal{E}_2), \lbound(\mathcal{E}_2), \{X\}, \{Y\})$;
\EndFunction
\end{algorithmic}
\end{algorithm}

\paragraph{Extension by separator and statistic equivalence}  
Many paradoxes differ only by separator or label attributes but are still redundant.  
Once we know a paradox $p'$ is separator- or statistic-equivalent to some $p$ in a group $\mathbf{P}$ of sibling-child-equivalent paradoxes,
we can apply the separator and label attributes of $p'$ to all members of $\mathbf{P}$ to obtain more redundant paradoxes,
without evaluation.
Again, thanks to \Cref{sec:concise-representation},
such an extension can be efficiently carried out by the concise represention of $\mathbf{P}$,
without enumerating members of $\mathbf{P}$.
Algorithm~\ref{alg:aggr-division-equivalence} implements this strategy.

\begin{proposition}
\label{prop:pruning-2}
Let $\mathbf{P}$ be a set of sibling-child-equivalent Simpson's paradoxes with separator $X$ and label $Y$.
Suppose $(s_1',s_2',X',Y')$, where $X'\neq X$ or $Y'\neq Y$, is a Simpson's paradox redundant with respect to some paradox in $\mathbf{P}$.
Then for every $p = (s_1,s_2,X,Y) \in \mathbf{P}$, the AC $(s_1,s_2,X',Y')$ is also a redundant Simpson's paradox with respect to $p$.\qed
\end{proposition}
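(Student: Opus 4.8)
The plan is to route the argument through the product-space characterization of redundant paradox groups (Lemma~\ref{prop:product-space}) together with the fact that redundancy is an equivalence relation (Theorem~\ref{thm:equiv}). The underlying idea is that once $\mathbf{P}$ and the newly found paradox are all shown to sit inside one equivalence class, and once that class is known to be a clean Cartesian product of two coverage groups with a set of separators and a set of labels, the conclusion reduces to reading off memberships from the product.

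First I would fix a witness. Let $p_0 = (s_1^0, s_2^0, X, Y) \in \mathbf{P}$ be the paradox to which $(s_1', s_2', X', Y')$ is redundant. Every member of $\mathbf{P}$ is sibling-child-equivalent to $p_0$ (hence redundant to it by \Cref{def:coverage}), and $(s_1', s_2', X', Y')$ is redundant to $p_0$ by hypothesis; so by transitivity (Theorem~\ref{thm:equiv}) the paradox $p_0$, all of $\mathbf{P}$, and $(s_1', s_2', X', Y')$ lie in a single redundant paradox group $\mathcal{G}$.

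Next I would invoke Lemma~\ref{prop:product-space} to write $\mathcal{G}$ as the sibling-restricted product $\mathcal{E}_1 \times \mathcal{E}_2 \times \mathbf{X} \times \mathbf{Y}$, where $\mathcal{E}_1, \mathcal{E}_2$ are the two coverage groups. Membership $p_0 \in \mathcal{G}$ forces $X \in \mathbf{X}$ and $Y \in \mathbf{Y}$; membership $(s_1', s_2', X', Y') \in \mathcal{G}$ forces $s_1' \in \mathcal{E}_1$, $s_2' \in \mathcal{E}_2$, $X' \in \mathbf{X}$, and $Y' \in \mathbf{Y}$. Now take an arbitrary $p = (s_1, s_2, X, Y) \in \mathbf{P} \subseteq \mathcal{G}$: its populations satisfy $s_1 \in \mathcal{E}_1$ and $s_2 \in \mathcal{E}_2$, and they are siblings. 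Therefore $(s_1, s_2, X', Y')$ is again a sibling element of $\mathcal{E}_1 \times \mathcal{E}_2 \times \mathbf{X} \times \mathbf{Y}$, so by Lemma~\ref{prop:product-space} it is a Simpson's paradox belonging to $\mathcal{G}$, hence redundant with respect to $p$. Since $p$ was arbitrary in $\mathbf{P}$, this is exactly the claim.

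I expect the main obstacle to be the bookkeeping that guarantees $(s_1, s_2, X', Y')$ is a \emph{valid} and \emph{correctly oriented} AC for every sibling pair in $\mathbf{P}$: namely that $X'$ is never the differential attribute of $(s_1, s_2)$ (so it remains a legal separator), and that the direction of the aggregate inequality $P(Y'|s_1)\ge P(Y'|s_2)$ is preserved rather than swapped when the populations move from $s_1', s_2'$ to $s_1, s_2$. These are precisely the issues that Lemma~\ref{prop:product-space} absorbs, since it asserts that \emph{every} sibling element of $\mathcal{E}_1 \times \mathcal{E}_2 \times \mathbf{X} \times \mathbf{Y}$ is itself a paradox in the group, thereby building in both separator-validity and a consistent orientation (with $\mathcal{E}_1$ fixed as the higher-frequency side). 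Should one prefer a self-contained argument that avoids Lemma~\ref{prop:product-space}, the crux would instead be to show directly that the redundancy relation preserves the coverages $\cov(s_1^0)$ and $\cov(s_2^0)$ along any chain of equivalences---so that $s_1', s_2'$ necessarily fall into the same coverage groups $\mathcal{E}_1, \mathcal{E}_2$---and then to transfer the separator/statistic equivalence witnessed between $(s_1', s_2', X, Y)$ and $(s_1', s_2', X', Y')$ to each sibling pair $(s_1, s_2)$ by applying Lemmas~\ref{prop:division-equivalence} and~\ref{prop:statistics-equivalence}, using that the sub-population coverages $\cov(s\substitute{X}{v})$ are constant across a coverage group.
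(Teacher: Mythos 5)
Your proof is correct, but it takes a genuinely different route from the paper's. The paper proves this proposition by direct verification: it fixes an arbitrary $p=(s_1,s_2,X,Y)\in\mathbf{P}$, unpacks the redundancy between $p$ and $p'=(s_1',s_2',X',Y')$ into its concrete conditions (coverage equalities, frequency-statistic equalities, and a one-to-one mapping $f:\Dom(X)\mapsto\Dom(X')$ on sub-population coverages and statistics), and then checks by hand that $p''=(s_1,s_2,X',Y')$ satisfies Definition~\ref{def:simpson} and is separator- and statistic-equivalent to $p$. You instead place $\mathbf{P}$, the witness $p_0$, and $p'$ into a single equivalence class via Theorem~\ref{thm:equiv}, and then read the conclusion off the product-space characterization of Lemma~\ref{prop:product-space}. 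Both arguments are sound and there is no circularity, since the paper proves Lemma~\ref{prop:product-space} from Lemmas~\ref{prop:sibling-eq}--\ref{prop:statistics-equivalence} and Theorem~\ref{thm:equiv} without appeal to this proposition. Your route buys two things: it is shorter, and it explicitly justifies the step from ``redundant to \emph{some} paradox in $\mathbf{P}$'' to ``redundant to \emph{every} paradox in $\mathbf{P}$'' via transitivity, a step the paper's proof silently assumes when it treats the arbitrary $p$ as if $p'$ were known to be redundant to it directly. What the paper's hands-on route buys is operational content: it exhibits the explicit mapping $f$ and the statistic equalities that make $p''$ a paradox, and it identifies exactly which equivalences (separator and statistic) relate $p''$ to $p$ --- which is what Algorithm~\ref{alg:aggr-division-equivalence} exploits when it extends the concise representation by adding $X'$ to $\mathbf{X}$ and $Y'$ to $\mathbf{Y}$. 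Your own closing caveat about AC validity (that $X'$ must not be the differential attribute of $(s_1,s_2)$, and that orientation must be preserved) is the one genuinely delicate point, but it is absorbed by Lemma~\ref{prop:product-space} exactly as you say, and the paper's proof carries the same implicit reliance, so this does not count against you.
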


\begin{example}[Extension by separator and statistic equivalence]
\label{ex:pruning-2}
Continuing from \Cref{ex:coverage-equivalence-pruning}, suppose we have now identified
$p'=((\ast,*,*,d_1),(\ast,*,*,d_2),C,Y_2)$
as redundant with respect to the sibling-child-equivalent redundant paradox group $\mathbf{P}$ characterized by $\mathcal{E}_1 \times \mathcal{E}_2 \times \{A\} \times \{Y_1\}$.
\Cref{prop:pruning-2} implies that all combinations from $\mathcal{E}_1$ and $\mathcal{E}_2$ with separator $C$ and label $Y_2$ are also redundant paradoxes.
For example, $((\ast,b_1,\ast,\ast),(\ast,b_2,\ast,\ast),C,Y_2)$ can be added directly.  
The characterization of the group simply becomes $\mathcal{E}_1 \times \mathcal{E}_2 \times \{A,C\} \times \{Y_1,Y_2\}$.\qed
\end{example}

\begin{algorithm}[t]
\caption{Extension by separator/statistic equivalence}
\label{alg:aggr-division-equivalence}
\begin{algorithmic}[1]
\Function{ExtendBySepStat}{$\Tilde{\mathbf{P}}, p'$}
  \State \textbf{Input:} Concise rep. $\Tilde{\mathbf{P}}$ for a sibling-child-equivalent paradox group;
  new Simpson's paradox $p'=(s_1',s_2',X',Y')$ redundant with respect to some paradoxes in $\Tilde{\mathbf{P}}$
  \State \textbf{Output:} Updated $\Tilde{\mathbf{P}}$
  \State $(\ubound(\mathcal{E}_1), \lbound(\mathcal{E}_1), \ubound(\mathcal{E}_2), \lbound(\mathcal{E}_2), \mathbf{X}, \mathbf{Y}) \gets \Tilde{\mathbf{P}}$;
  \State \Return $(\ubound(\mathcal{E}_1), \lbound(\mathcal{E}_1), \ubound(\mathcal{E}_2), \lbound(\mathcal{E}_2), \\ \phantom{ freturn } \mathbf{X} \cup \{X'\}, \mathbf{Y} \cup \{Y'\})$;
\EndFunction
\end{algorithmic}
\end{algorithm}

\subsubsection{Complete Algorithm}

Finally, we integrate these optimizations into a comprehensive algorithm (\Cref{alg:simpson}).  
We iterate only over coverage groups (using upper bounds),
constructing a sibling-child-equivalence redundant paradox group as soon as one paradox is found,
and extending groups with separator and statistic equivalence when applicable.
We maintain a hashmap $I$ where each key is $\textsc{Sig}(p)$ (see \Cref{def:info} below) and the value is the concise representation of a redundant paradox group
(though a group may contain only a single paradox if no redundancy is observed).

\begin{definition}[Signature]
\label{def:info}
Given populations $s_1$ and $s_2$, we define their \emph{joint signature} with respect to label attribute $Y$ as a triple:
\begin{align*}
    \textsc{JSig}_Y(s_1, s_2) = \langle \cov(s_1), \cov(s_2), \sign(P(Y|s_1) - P(Y|s_2)) \rangle.
\end{align*}
For an AC $p=(s_1,s_2,X,Y)$, its \textbf{signature} defined as:
\begin{align*}
    \textsc{Sig}(p) =
        & \langle \; \textsc{JSig}_Y(s_1, s_2),\\
        & \phantom{\langle} \left\{
            \textsc{JSig}_Y(s_1\substitute{X}{v}, s_2\substitute{X}{v}) \mid v \in \Dom(X)
          \right\} \; \rangle. \qquad \qed
\end{align*}
\end{definition}
%
In implementation, coverage sets $\cov(\cdot)$ are represented using integer hashes rather than storing full record sets. The signature $\textsc{Sig}(\cdot)$ becomes a vector containing integer hashes for coverage sets and sign values from $\{-1, 0, +1\}$ for frequency statistic differences. This enables efficient detection of redundant paradoxes: as established by \Cref{prop:signature}, all paradoxes within the same redundant paradox group share identical signatures. Thus, we can efficiently determine if a paradox $p$ belongs to an already discovered redundant paradox group by checking if $\textsc{Sig}(p)$ exists as a key in $I$.
%

\begin{lemma}
\label{prop:signature}
Two Simpson's paradoxes $p$ and $p'$ are redundant if and only if $\textsc{Sig}(p)=\textsc{Sig}(p')$. \qed
\end{lemma}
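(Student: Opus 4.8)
The plan is to prove the biconditional in two directions: the ``only if'' direction (redundant $\Rightarrow$ equal signature) and the ``if'' direction (equal signature $\Rightarrow$ redundant). Throughout I rely on \Cref{thm:equiv}, which tells us that redundancy is the reflexive--symmetric--transitive closure of the three basic equivalences collected in \Cref{def:coverage}. Consequently, for the ``only if'' direction it suffices to show that $\textsc{Sig}(\cdot)$ is invariant under each of the three generating equivalences; invariance then propagates along any chain of basic steps, so any two redundant paradoxes must share a signature.

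I would establish this invariance by a short case analysis, one case per equivalence type. For sibling child equivalence (\Cref{prop:sibling-eq}), the separator $X$ and label $Y$ are unchanged and $\cov(s_1)=\cov(s_1')$, $\cov(s_2)=\cov(s_2')$; since $\cov(s\substitute{X}{v})=\cov(s)\cap\{t : t.X=v\}$, equal coverages of the $s_i$ force equal coverages of every cell $s_i\substitute{X}{v}$, and equal coverage yields equal conditional probability, hence equal sign, so $\textsc{Sig}$ agrees componentwise. For separator equivalence (\Cref{prop:division-equivalence}), the triple $(s_1,s_2,Y)$ is fixed, so the top-level $\textsc{JSig}_Y(s_1,s_2)$ is identical, while the one-to-one mapping $f$ carries each cell triple for $X$ to the equal cell triple for $X'$; because the signature records the cells as an \emph{unordered set}, the set is preserved under this relabelling. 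For statistic equivalence (\Cref{prop:statistics-equivalence}), all coverages are literally unchanged (same $s_1,s_2,X$), and condition~(3) --- which is implied by conditions (1) and (2) --- states exactly that the aggregate and per-cell sign terms coincide for $Y$ and $Y'$. In all three cases $\textsc{Sig}(p)=\textsc{Sig}(p')$.

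For the ``if'' direction, suppose $\textsc{Sig}(p)=\textsc{Sig}(p')$ with $p=(s_1,s_2,X,Y)$ and $p'=(s_1',s_2',X',Y')$. Reading off the components gives (a) $\cov(s_1)=\cov(s_1')$ and $\cov(s_2)=\cov(s_2')$; (b) equality of the two sets of cell triples; and (c) agreement of the aggregate sign. The crucial step is to convert (b) into the one-to-one mapping that \Cref{prop:division-equivalence} requires: distinct values $v$ induce disjoint cells $\cov(s_1\substitute{X}{v})$, so on nonempty cells the first coordinate of a triple is a unique key, and the set equality in (b) therefore determines a bijection $f$ between the active values of $X$ and $X'$ with $\cov(s_i\substitute{X}{v})=\cov(s_i\substitute{X'}{f(v)})$. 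With (a), (b)/$f$, and (c) in hand I assemble a chain of three basic-equivalence steps,
\[
(s_1,s_2,X,Y)\ \longrightarrow\ (s_1',s_2',X,Y)\ \longrightarrow\ (s_1',s_2',X',Y)\ \longrightarrow\ (s_1',s_2',X',Y')=p',
\]
where the arrows are sibling child, separator, and statistic equivalence, justified by \Cref{prop:sibling-eq}, \Cref{prop:division-equivalence}, and \Cref{prop:statistics-equivalence} respectively (the last invoking condition~(3), which follows from (c) combined with the coverage equalities in (a)). Transitivity of redundancy (\Cref{thm:equiv}) then gives that $p$ and $p'$ are redundant.

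I expect the backward direction to be the main obstacle, for two related reasons. First, extracting $f$ requires care about cells with empty coverage: the signature records only comparable cells, so I must argue that degenerate cells can be matched consistently and that $f$ extends to a genuine one-to-one map as \Cref{prop:division-equivalence} demands. Second, and more delicate, every intermediate tuple in the chain must itself be a valid association configuration --- in particular $X$ must be a legal separator for $(s_1',s_2')$, i.e.\ $s_1'[X]=s_2'[X]=\ast$, which is \emph{not} automatic from $\cov(s_1)=\cov(s_1')$ since coverage-identical populations may differ in their wildcard pattern. I would resolve this by selecting, within each coverage group, representatives that carry wildcards simultaneously at the positions of both $X$ and $X'$ (for instance suitable lower bounds of the convex coverage groups of \Cref{prop:convex-property}), exploiting the fact that a strict paradox forces its separator to split the coverage nontrivially, and by ordering the chain so that each intermediate separator acts only on a population known to wildcard it. The remaining obligations --- that equal coverage implies equal conditional probabilities and signs --- are routine.
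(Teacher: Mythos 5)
Your proposal follows the same overall route as the paper's proof: both directions come down to matching the components of $\textsc{Sig}$ against exactly the data constrained by the three generating equivalences (Lemmas~\ref{prop:sibling-eq}, \ref{prop:division-equivalence}, and~\ref{prop:statistics-equivalence}). But your execution is more careful than the paper's, and in two places it corrects genuine looseness. First, in the forward direction the paper asserts that redundancy yields equality of frequency statistics, $P(Y|s_j)=P(Y'|s_j')$ at the aggregate and per-cell level; this does \emph{not} follow when the redundancy is statistic equivalence via condition~(3) of \Cref{prop:statistics-equivalence}, which guarantees only sign agreement. You correctly observe that sign agreement is all $\textsc{Sig}$ records, so invariance holds generator by generator, and you also supply the propagation along chains that \Cref{thm:equiv} implicitly requires --- the paper never addresses the fact that redundancy between two arbitrary members of a group is only realized through intermediate paradoxes. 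Second, the paper's backward direction is a one-sentence assertion that signature equality makes $p$ and $p'$ sibling-child, separator, and statistic equivalent; read literally against \Cref{def:coverage} this cannot hold when $p$ and $p'$ differ in more than one component, since each basic equivalence fixes all but one component of the configuration. Your three-step chain through intermediate configurations is precisely what is needed to make that assertion rigorous, and extracting the bijection $f$ from the unordered set of cell triples (using disjointness of nonempty cells as the key) is the right mechanism.

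One caveat on your resolution of the wildcard issue. It leans on the claim that a paradox forces its separator to split both coverages nontrivially, but \Cref{def:simpson} does not guarantee this: condition~(2) quantifies only over values $v$ for which \emph{both} cells are nonempty, so it and the strictness clause of condition~(3) can hold vacuously --- for instance when $\cov(s_1)$ and $\cov(s_2)$ lie entirely in two different cells of $X$. In that degenerate case a coverage-equal population $s_1'$ can indeed carry a non-wildcard at position $X$, and your chain needs a rerouting that your sketch does not fully pin down. Note, however, that this edge case is equally untreated by the paper, whose signature definition there even applies $\sign$ to probabilities of empty cells; outside of it, your argument is complete and strictly more rigorous than the paper's own proof.
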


\begin{algorithm}[t]
\caption{Finding non-redundant Simpson's paradoxes}
\label{alg:simpson}
\begin{algorithmic}[1]
\Input $\mathcal{P}/\equiv_{\cov}$ and \textsc{stats} as produced by \Cref{alg:materialization}
\Out Hashmap $I$ storing concise representations of redundant paradox groups, keyed by $\textsc{Sig}$
\State Initialize $I \gets \emptyset$;
\For{each coverage group $\mathcal{E} \in \mathcal{P}/\equiv_{\cov}$}
  \State Let $s \gets \ubound(\mathcal{E})$ \Comment{use upper bound as representative}
  \For{each $X_i$ with $s[i]=\ast$}
    \For{each pair $v_1, v_2 \in \Dom(X_{i_0})$}
      \State $s_1 \gets s\substitute{X_i}{v_1},\quad s_2 \gets \substitute{X_i}{v_2}$;
      \For{each $X_j \neq X_i$ with $s[j]=\ast$}
        \For{each label attribute $Y$}
          \State $p \gets (s_1,s_2,X_j,Y)$; \Comment{check first if $p$ is included in existing redundant paradox groups}
          \If{$\exists(\mathcal{E}_1,\mathcal{E}_2,\mathbf{X},\mathbf{Y})\in I$ $:$ $s_1\in \mathcal{E}_1 \wedge s_2\in \mathcal{E}_2 \wedge \\ \phantom{ if }\, X_j \in \mathbf{X} \wedge Y \in \mathbf{Y}$}
          \textbf{continue} \EndIf 
          \State Evaluate $p$ acc. Def.~\ref{def:simpson}; compute $\textsc{Sig}(p)$;
          \If{$p$ is a Simpson's paradox}
            \If{$I(\textsc{Sig}(p)) = \emptyset$}
                \State $I(\textsc{Sig}(p)) \gets \\\;\;\; \textsc{ConstructBySib}(p,\mathcal{E}/\equiv_{\cov})$;
            \Else
                \State $I(\textsc{Sig}(p)) \gets \\\;\;\; \textsc{ExtendBySepStat}(I(\textsc{Sig}(p)), p)$;
            \EndIf
          \EndIf
        \EndFor
      \EndFor
    \EndFor
  \EndFor
\EndFor
\State \Return $I$
\end{algorithmic}
\end{algorithm}

Together, these optimizations transform the brute-force enumeration into an efficient method that avoids repeated work while producing concise representations of redundant paradox groups.

\nop{

\hline

\section{Finding Non-Redundant Simpson's Paradoxes}
In this section, we present our computational methods for efficiently finding all non-redundant Simpson's paradoxes from a given data table based on the concise representation established in \Cref{sec:coverage}.
The overall framework of our approach is outlined in \Cref{alg:framework}.
Our method consists of two steps: First, we materialize the coverage and aggregate statistics for all non-empty populations and identify groups of populations with equal coverage.
Second, we discover all instances of Simpson's paradox using the materialized coverage information and statistics and simultaneously construct concise representations (\Cref{sec:redundancy}) for groups of (coverage) redundant Simpson's paradoxes.  

\begin{algorithm}[t]
\caption{Overall framework.}
\label{alg:framework}
\begin{algorithmic}[1]
\Input Data table $T = (\{X_i\}_{i=1}^n,\{Y_j\}_{j=1}^m)$
\Out (1) All instances of Simpson's paradox, and (2) Groups of redundant Simpson's paradoxes with concise representation
\State Materialize the coverage and aggregate statistics of all non-empty populations of $T$, and compute groups of coverage equivalent populations; \Comment{\Cref{sec:materialization}}
\State Discover coverage redundant and all instances of Simpson's paradox using the aggregate statistics. \Comment{\Cref{sec:discover}}
\end{algorithmic}
\end{algorithm}

\subsection{Materialization}
\label{sec:materialization}
The first step of our algorithm materializes all non-empty populations, compute their coverage and aggregate statistics, using a depth-first search (DFS) approach adapted from~\citet{lakshmanan2002quotient}. The algorithm outputs groups (or equivalence classes) of populations with equal coverage (denoted $\mathcal{E}/\equiv_{\cov}$). We detail the steps in \Cref{alg:materialization}.

\subsubsection{DFS-based Population Materialization}
\label{sec:dfs}
The algorithm traverses the lattice (see \Cref{fig:lattice}) bottom-up, starting from the root population $s_{\text{root}} = (\ast, \ast,\ldots,\ast)$ that covers the entire data table $T$, and progressively materializing populations that cover fewer records (\emph{i.e.,} sub-table of $T$), thereby traversing upward in the population lattice.

In each recursive step of the DFS, we aim to identify a convex group of coverage-identical populations. Since we traverse bottom-up, we take a lower bound population $s$ of a convex group of coverage-identical population as input and seek to find its corresponding upper bound $s'$. The upper bound $s'$ is constructed by replacing each dimension $i$ where $s[i] = \ast$ with a value $v \in \Dom(X_i)$ such that $\cov(s[X_i = v]) = \cov(s)$.

\begin{example}
\label{ex:upper-bound-jumping}
Consider the data in \Cref{tab:ex2} and suppose we are materializing the population $s = (a_1,\ast,\ast,\ast)$ and aiming to find its upper bound having the same coverage. To find the upper bound, we examine each dimension $i$ where $s[i] = \ast$. We observe that $\cov(s) = \cov(s[C = c_1]) = \{t_1,t_2,t_3\}$. The upper bound of $s$ is therefore $s' = (a_1,\ast,c_1,\ast)$.
\qed
\end{example}

Importantly, by \Cref{prop:convex-reconstruction} from \Cref{sec:redundancy}, a convex group of population (due to coverage equality) can be reconstructed by its upper and lower bounds. Therefore, any population $s''$ that lies between the lower bound $s$ and upper bound $s'$ (\emph{i.e.,} $s \succeq s'' \succeq s'$) automatically has identical coverage to both $s$ and $s'$. Therefore, these intermediate populations do not need to be explicitly materialized, as their coverage and aggrergate statistics can be directly inferred from the boundary populations, substantially improving the computational efficiency. 

After identifying the upper bound $s'$ we add the pair $(s, s')$ to our groups of coverage-identical populations $\mathcal{E}/\equiv_{\cov}$. We then recursively explore each non-empty child $\hat{s}$ of $s'$, continuing the DFS to find another convex group of coverage-identical populations having $\hat{s}$ as its lower bound.

\begin{example}
\label{ex:dfs-continuation}
Continuing from \Cref{ex:upper-bound-jumping}, where we find $(s=(a_1,\ast,\ast,\ast), s'=(a_1,\ast,c_1,\ast))$ to be the upper and lower bound of a coverage-identical group of populations. We then explore children of the upper bound $(a_1,\ast,c_1,\ast)$. For example, $(a_1,b_1,c_1,\ast)$ is a non-empty child of $(a_1,\ast,c_1,\ast)$ since $\cov((a_1,b_1,c_1,\ast)) \subset \cov((a_1,\ast,c_1,\ast))$ and $\cov((a_1,b_1,c_1,\ast)) \neq \varnothing$. Hence, one of the next recursive steps of the DFS is to find a coverage-identical group of populations with the lower bound being $(a_1,b_1,c_1,\ast)$.
\qed
\end{example}

The DFS recursion continues until one of two stopping conditions is met: (1) a lower bound population $s$ covers an unsubstantial amount of records. In this case, we skip the materialization of $s$ and all its descendants. We will discuss this aspect in more detail in \Cref{sec:population-pruning} on pruning. Or (2) the DFS reaches the top of population lattice. In this case, the traversal is naturally stopped.

\subsubsection{Group Merging}
\label{sec:group-merging}
The DFS traversal may discover multiple convex groups that share the same upper bound but have different lower bounds. This phenomenon arises from the nature of lattice traversal: the same convex subset of coverage-identical populations can be ``discovered'' through multiple entry points (different lower bounds) during the recursive exploration.

To eliminate this redundancy, \Cref{alg:materialization} merges groups sharing identical upper bounds. After merging, each group has a unique upper bound but potentially multiple lower bounds. We then refine the lower bound set by removing invalid lower bounds, \emph{i.e.,} those that are descendants of other lower bounds in the same group.

\begin{example}
\label{ex:group-merging}
Consider \Cref{tab:ex2} where populations $(a_1,\ast,\ast,\ast)$, $(\ast,\ast,c_1,\ast)$, $(a_1,\ast,c_1,\ast)$ all have identical coverage $\{t_1,t_2\}$. During DFS traversal, we might encounter this convex group through two different paths: (1) Starting from $(a_1,\ast,\ast,\ast)$ as lower bound, discovering upper bound $(a_1,\ast,c_1,\ast)$; and (2) Starting from $(\ast,\ast,c_1,\ast)$ as lower bound, discovering the same upper bound $(a_1,\ast,c_1,\ast)$. This results in two groups in $\mathcal{E}/\equiv_cov$: $((a_1,\ast,\ast,\ast),(a_1,\ast,c_1,\ast))$ and $((\ast,\ast,c_1,\ast),(a_1,\ast,c_1,\ast))$.

After merging groups with identical upper bounds, the consolidated group has upper bound $\{(a_1,\ast,c_1,\ast)\}$ and initial lower bounds $\{(a_1,\ast,\ast,\ast),(\ast,\ast,c_1,\ast)\}$. Since neither lower bound is an ancestor of the other, both remain as valid lower bounds.
\qed
\end{example}

\subsubsection{Population Pruning}
\label{sec:population-pruning}
A key advantage of the DFS-based approach is its natural support for pruning strategies that significantly reduce computational overhead. The pruning mechanism operates during the recursive exploration: if a population $s$ covers fewer than a predetermined threshold $\theta \cdot |T|$ (where $\theta \in (0,1]$ is a percentage) of records, we prune all descendants of $s$ from materialization.

This pruning serves dual purposes. First, it focuses computation on Simpson's paradoxes that impact larger portions of the dataset, as they are more likely to be statistically significant and critical for real-world decision-making. Second, it accelerates computation by reducing the number of populations requiring materialization and subsequent evaluations of ACs.

\begin{example}
\label{ex:popualation-pruning}
Consider \Cref{tab:ex2} with minimum coverage threshold $\theta = 0.4$ (requiring at least 40\% of records covered). During DFS, suppose we encounter population $s = (\ast,\ast,c_1,\ast)$ covering records $\{t_1, t_2, t_3\}$, giving $|\cov(s)| = 3$ out of 7 total records ($\approx 43\% > 40\%$). This population meets the threshold and is not pruned. However, if $\theta = 0.6$, then $s$ would be pruned along with all its descendants, including $(a_1,\ast,c_1,\ast)$, $(a_1,b_1,c_1,\ast)$, $(a_1,\ast,c_1,d_1)$, and $(a_1,b_1,c_1,d_1)$.
\qed
\end{example}

\subsubsection{Correctness Analysis}
\label{sec:materialization-correctness}
We now establish the correctness of our DFS-based materialization approach through two key properties: completeness and maximality of convex groups.

\begin{theorem}[Completeness]
\Cref{alg:materialization} materializes all non-empty populations in the given data table $T$.

\proof
We prove by contradiction. Assume there exists a non-empty population $s^*$ that satisfies the coverage threshold but is not materialized by \Cref{alg:materialization}. Since $s^*$ is non-empty, there exists at least one record $t \in T$ such that $t \in \cov(s^*)$.

Consider the unique path from the root $s_{\text{root}} = (\ast,\ast,\ldots,\ast)$ to $s*$ in the population lattice. This path consists of a sequence of populations $s_0 = s_{\text{root}} \succ s_1 \succ \ldots \succ s_k = s^*$ where each $s_{i+1}$ is the direct child of $s_i$.

At each step, if $|\cov(s_i)| \geqslant  \theta \cdot |T|$, the DFS continues the traversal to $s_{i+1}$. If the threshold is not met, all descendants of $s_i$ are pruned.

However, if $s^*$ is pruned due to insufficient coverage, then $s^*$ covers fewer than $\theta \cdot |T|$ records, contradicting our assumption that $s^*$ satisfies the coverage threshold. If $s^*$ is not pruned, then $\cov(s^*) \geqslant \theta \cdot |T|$. This means for each $s_i$ (where $0 \leqslant i < k$) in the sequence, $|\cov(s_i)| \geqslant \cov(s^*) \geqslant \theta \cdot |T|$ since coverage is monotonic along ancestor-descendant relationships. In other words, the stopping criterion of DFS is not met at $s_i$ and will continue to $s_{i+1}$. By induction, DFS will not stop at $s_{k-1}$ (the direct parent of $s^*$) and continues to $s_k = s^*$. This contradicts our assumption that $s^*$ is not reached (or materialized) by the DFS traversal.

Therefore, all non-empty populations (satisfying the coverage threshold) are materialized.
\qed
\end{theorem}

\begin{theorem}[Maximal Convex Group]
After group merging, \Cref{alg:materialization} produces maximal convex groups of coverage-identical populations.

A convex group of coverage-identical populations $\mathcal{E}'$ is maximal if there is no population not in $\mathcal{E}‘$ that has the same coverage as populations in $\mathcal{E}'$.

\proof

First, we establish that each merged group $\mathcal{E}'$ forms a convex subset. By \Cref{prop:convex-property}, populations with identical coverage form a convex subset. Since \Cref{alg:materialization} merges groups with the same upper bound (implying identical coverage), the merged groups maintain convexity.

Next, we prove maximality by contradiction. Assume there exists a merged group $\mathcal{E}'$ and a population $s \notin \mathcal{E}'$ such that $\cov(s) = \cov(s')$ for all $s' \in \mathcal{E}'$, and $\mathcal{E}' \cup \{s\}$ remains convex.

Since $s$ has identical coverage to populations in $\mathcal{E}'$, $s$ must share the same corresponding upper bound as $\mathcal{E}'$ (by \Cref{prop:convex-property}). During the DFS traversal, $s$ would be discovered and assigned to some group with this upper bound. The merging step combines all groups sharing identical upper bounds, so $s$ would be included in the merged group $\mathcal{E}'$, contradicting $s \notin \mathcal{E}'$.

Therefore, each merged group $\mathcal{E}'$ represents a maximal convex subset of coverage-identical populations. 
\qed
\end{theorem}

\nop{

For objective (1), conceptually, identifying all non-empty populations involves collecting: (a) the set of unique records in $T$, where each corresponds to a population that has no children; and (b) ancestors to the child-less populations found in (a). Specifically, for a given record $t = (t.X_1,\ldots,t.X_n,t.Y_1,\ldots,t.Y_m)$, its categorical values map to the corresponding population $c = (t.X_1,\ldots,t.X_n)$. The set of ancestors to $c$ is given by 
\begin{equation}
\label{eq:ancestors}
Anc(c) = \{p \mid p[j] \in \{c[j], *\},\, 1 \leqslant j \leqslant n\}.
\end{equation}
\todo{$p$ is used before to denote the probability. Change to another variable.}

\begin{example}
\label{ex:ancestors}
Consider the record $t_1 = (a_1,b_1,c_1,d_1,0,0)$ from Table~\ref{tab:ex2}. Its categorical values corresponds to the child-less population $c = (a_1,b_1,c_1,d_1)$. The set of ancestors for $c$ includes all populations $p$ where $p[j] \in \{c[j], \ast\}$ for $1 \leqslant j \leqslant 4$. This gives us populations such as $(\ast,b_1,c_1,d_1)$ and $(\ast,\ast,c_1,d_1)$, among others. Since for each of the 4 dimensions, an ancestor of $c$ can independently take either the value $c[j]$ or $\ast$ at dimension $j$, there are $2^4 = 16$ distinct ancestors (including $c$ itself) in total.
\qed
\end{example}

For objective (2), computing population coverage and aggregate statistics, we adopt a level-by-level appraoch based on the following definition:
\begin{definition}\label{def:level-i-population}
A population $c$ is a level-$i$ population if $|\{j \mid c[j] \neq *,\, 1 \leqslant j \leqslant n\}| = i$.
\end{definition}
Using this definition, our approach materializes populations from level-0 (containing only of $(\ast,\ast,\ldots,\ast)$), then level-1 (containing only of children to the level-0 population), and progressively until level-$n$ (containing populations mapping to unique records in $T$). This process can be thought of as a bottom-up traversal of the population lattice introduced in \Cref{sec:prelim-notation} (and visualized in Figure~\ref{fig:lattice}), where moving from level-$i$ to level-$(i+1)$ corresponds precisely to moving one hierarchy upward in the lattice, \emph{i.e.,} from one level of populations to their children populations.

This level-by-level approach facilitates an effective pruning strategy: if a level-$(i-1)$ population $c$ covers less than a pre-determined amount, $\theta$, of records, we prune all descendants of $c$ from materialization. This pruning mechanism serves dual purposes. First, it focuses our computation on finding Simpson's paradoxes that impact (\emph{a.k.a.,} cover) a larger portions of the dataset, as they are more likely to be statistically significant and are more critical for real-world decision-making. Second, it helps accelerate the execution by reducing the number of populations that need to be materialized and thus the number of ACs that need to be evaluated in subsequent computations.

\begin{example}
\label{ex:pop-prune}
Consider the data in Table~\ref{tab:ex2} and a minimum coverage threshold of $\theta = 3$. Suppose, during the level-by-level materialization, we are processing the level-1 population $c = (\ast,\ast,c_1,\ast)$. Observe that $\cov(c) = \{t_1,t_2,t_3\}$, which has a size of $|\cov(c)| = 3$, meeting the threshold requirement. Thus, pruning is not needed.

However, if we suppose a stricter threshold $\theta = 4$. In this case, $c$ would be pruned. When $c$ is pruned, all of its descendants would also be pruned, including $(a_1,\ast,c_1,\ast)$ (at level-2), $(a_1,b_1,c_1,\ast)$ (at level-3), and $(a_1,b_1,c_1,d_1)$ (at level-4).
\qed
\end{example}

For each unpruned population, we compute and store its coverage and aggregate statistics in a hashmap $S$ where each key represents a non-empty population $c$ and its associated values include two components: 
\begin{equation}
\label{eq:cover-of-c}
S(c)[cov] = \{k \mid T[k] \in \cov(c),\,1\leqslant k \leqslant |T|\},
\end{equation}
which stores the coverage of $c$ as a set of indices to records in $T$ covered by $c$, and 
\begin{equation}
\label{eq:stats-of-c}
S(c)[agg][j] = P(Y_j \mid c) = \frac{|\{t \in S(c)[cov] \mid T[k].Y_j = 1\}|}{|S(c)[cov]|},
\end{equation}
which stores the aggregate statistics of $P(Y_j \mid c)$ for every label attribute $Y_j$, $1 \leqslant j \leqslant m$.

For objective (3), grouping populations with equal coverage, we construct a companion hashmap $S_{\cov}$ where each key represents a set of indices to records in $T$ as coverage information and its associated value correponds to a subset of populations $\mathcal{E}$ of $S$ with that coverage. Recall from Proposition~\ref{prop:pop-equiv-convex}, $\mathcal{E}$ is a convex subset of populations, and that $\mathcal{E}$ has a unique upper bound. This finding is particularly helpful for our next step.
}

\nop{
To achieve an efficient, ideally constant-time, look-up, we maintain a key-value hashtable $\bm{S}$ to store our results.
In particular, each key represents a non-empty population $c$ of $T$, and its associated values include two components.
Firstly, the coverage of $c$, denoted as $\bm{S}(c)[cov]$.
In particular, according to Example \ref{ex:2.1}, the coverage information is stored as a set of indices to the records in $T$.
The second component stores the number of positive records (\emph{i.e.,} label value of 1) within the coverage of $c$ under the label attribute $Y_j$, for all $1 \leqslant j \leqslant m$ where $m$ refers to the total number of label attributes.
It is denoted as $\bm{S}(c)[pos][j]$.
In other words, the aggregate statistics of $c$ measured under the label $Y_j$ is $P(Y_j \mid c) = \bm{S}(c)[pos][j] / |\bm{S}(c)[cov]|$.
Algorithm \ref{alg:materialization} describes the process of materializing the hashtable $\bm{S}$.

In particular, Line 5 of the algorithm requires an enumeration of all ancestor populations for each record in $T$.
Assuming $n$ total categorical attributes, it is easy to verify that there are $2^n-1$ ancestor populations for each record. 
A simple yet effecitive speed-up of this process, as to avoid re-computation of duplicate records, would be to pre-compute the ancestor populations for each unique record, and simply perform a look-up when executing Line 5.
}

\begin{algorithm}[t]
\caption{Materialization.}
\label{alg:materialization}
\begin{algorithmic}[1]
\Input Data table $T = (\{X_i\}_{i=1}^n,\{Y_j\}_{j=1}^m)$, coverage threshold $\theta$
\Out Groups of populations with equal coverage $\mathcal{E}/\equiv_{\cov}$, each group $\mathcal{E}' \in \mathcal{E}/\equiv_{\cov}$ is a $(up(\mathcal{E}), low(\mathcal{E}), stats.)$ triplet
\State Initialize $\mathcal{E}/\equiv_{\cov}$ as an empty set; 
\State \textit{\textcolor{gray}{// Create the root population}}
\State Let $s_{\text{root}} \gets (\ast, \ast, \ldots, \ast)$;
\State \textit{\textcolor{gray}{// Start DFS-based materialization from the root}}
\State \textsc{DFS}$(s_{\text{root}},\, T,\, 0)$;
\State Merge groups in $\mathcal{E}/\equiv_{\cov}$ sharing identical upper bound;
\State \textit{\textcolor{gray}{// Each merged group $\mathcal{E}' \in \mathcal{E}/\equiv_{\cov}$ has one upper bound $up(\mathcal{E'})$ and multiple lower bounds, remove invalid lower bounds}}
\For{each merged group $\mathcal{E}' \in \mathcal{E}/\equiv_{\cov}$}
\State \textit{\textcolor{gray}{// keep valid lower bounds, i.e., greatest ancestors}}
\State $low(\mathcal{E}') \gets \{ s \in low(\mathcal{E}') \mid \nexists s' \in low(\mathcal{E}') : s \prec s' \}$;
\EndFor
\State \Return $\mathcal{E}/\equiv_{\cov}$.
\Statex
\Function{DFS}{$s,\, T',\, k$}
\State \textit{\textcolor{gray}{// $T'$ is a sub-table of $T$ covered by $s$. $s$ is the lower bound, find matching upper bound $d$ also covering $T'$}}
\State For each $1 \leqslant i \leqslant n$ where $s[i] = *$, set $d[i] \gets v$ if $\exists v \in \Dom(X_i) : \cov(s) = \cov(s[X_i = v])$;
\State \textit{\textcolor{gray}{// Compute aggregate statistics of $d$ and add to $\mathcal{E}/\equiv_{\cov}$}}
\State Let $stats(d) \gets \{P(Y_j | d)\}_{j=1}^m$ and add $(d, s, stats(d))$ to $\mathcal{E}/\equiv_{\cov}$;
\For{each $k < i \leqslant n$ where $d[i] = *$}
\State Partition $T'$ into $\{T'_v \mid v \in \Dom(X_i)\}$ where $T'_v \gets \{t \in T' \mid t.X_i = v\}$;
\For{each non-empty $T'_v$ where $|T'_v| \geqslant \theta \cdot |T|$}
\State \textit{\textcolor{gray}{// Materialize the child $d[X_i=v]$}}
\State \textsc{DFS}$(d[X_i=v],\, T_v',\, i)$;
\EndFor
\EndFor
\EndFunction
\end{algorithmic}
\end{algorithm}

\nop{
\begin{algorithm}[t]
\caption{Materialization.}
\label{alg:materialization}
\begin{algorithmic}[1]
\Input Data table $T = (\{X_1, \ldots,X_n\},\{Y_1,\ldots,Y_m\})$
\Out Materialized populations $\bm{S}$, grouped populations $S_{\cov}$
\State Initialize $S$ and $S_{\cov}$; 
\State \textit{\textcolor{gray}{// Objective (1): Identify non-empty populations}}
\For{each record $t \gets (t.X_1,\ldots,t.X_n,t.Y_1,\ldots,t.Y_n) \in T$}
\State \textit{\textcolor{gray}{// Create population $c$ representing $t$}}
\State Let $c \gets (t.X_1,\ldots,t.X_n)$;
\State Find the ancestors of $c$, $Anc(c)$, using Eq.~(\ref{eq:ancestors});
\State Add $c$ to $S$ and $p$ to $S$, for all $p \in Anc(c)$;
\EndFor
\State \textit{\textcolor{gray}{// Objective (2): Compute coverage and aggregate statistics}}
\For{each level $i \gets 0$ to $n$}
\State \textit{\textcolor{gray}{// Find all the level-$i$ populations, $C_i$}}
\State Let $C_i \gets \{c \in S \mid c \text{ is a level-} i \text{ population}\}$;
\For{each $c \in C_i$}
\State Compute coverage of $c$, $S(c)[cov]$, using Eq.~(\ref{eq:cover-of-c});
\State \textit{\textcolor{gray}{// Check whether to prune $c$}}
\If{$|S(c)[cov]| < \theta$}
\State \textit{\textcolor{gray}{// Prune $c$ and its descendants, $Des(c)$}}
\State Let $Des(c) \gets \{p \in S \mid p \text{ is a descendant of } c\}$;
\State Remove $c$ from $S$ and $p$ from $S$, for all $p \in Des(c)$;
\Else \textit{\textcolor{gray}{$\,$// Do not prune $c$}}
\State Compute aggregate statistics, $S(c)[agg]$, for all label attributes, each using Eq.~(\ref{eq:stats-of-c});
\EndIf
\EndFor
\EndFor
\State \textit{\textcolor{gray}{// Objective (3): Group populations with equal coverage}}
\State Add $c$ to $S_{\cov}(S(c)[cov])$, for each $c \in S$;
\State \Return $S$ and $S_{\cov}$.
\end{algorithmic}
\end{algorithm}
}

\nop{
\subsection{Grouping Coverage Equivalent Populations}
\label{sec:grouping}

By default, the coverage of a population is stored as a set of indices to the records.
As a directly consequence, the size of the set would increase linearly as the table $T$ expands and would inevitably hurt memory consumption in performing subsequent computations.
To this end, following Algorithm \ref{alg:materialization}, we hash the sets of indices to integers to compactly represent the coverage information.
One advantage is that populations with the same hash value would mean that they have the same coverage, and one can leverage this to find subsets of coverage equivalent populations.
In other words, we aim to build yet another hashtable, denoted $\bm{S}_{\cov}$ where each key refers to an integer hash, and its associated values are subsets of populations whose coverages are hashed to that key.
We summarize the procedure in \Cref{alg:grouping}.

Given $n$ categorical attributes, the number of unique records of $T$, denoted $\text{unique}(T)$, is bounded by the following
\begin{equation}
\label{eq:max-unique}
    \text{unique}(T) \leqslant \prod_{i = 1}^n |\Dom(X_i)|
\end{equation}
which is simply the size of the full Cartesian product across all categorical attributes.
As described in \Cref{sec:prelim-notation}, populations are represented under the same structure as records with an additional $*$ value allowed at each dimension of the tuple.
Therefore, the number of distinct populations (\emph{i.e.,} $|\bm{S}|$) and the number of subsets of coverage equivalent populations (\emph{i.e.,} $|\bm{S}_{\cov}|$) are bounded by
\begin{equation}
\label{eq:max-pop}
    |\bm{S}_{\cov}| \leqslant |\bm{S}| \leqslant \prod_{i = 1}^n \left( |\Dom(X_i)| + 1 \right).
\end{equation}
We notice an additional $+1$ term in each multiplication to account for the inclusion of $*$. 
As the number of categorical attributes and the cardinality of each increases, and further suppose every population has unique coverage information, hash collision would be inevitable.
For this study, we assume no hash collisions in our use cases.

\begin{algorithm}
\caption{Finding subsets of coverage equivalent populations.}
\label{alg:grouping}
\begin{algorithmic}[1]
\Input Materialized hashtable $\bm{S}$
\Out Hashtable $\bm{S}_{\cov}$ of coverage equivalent populations
\For{each population $c \in \bm{S}$}
\State Let $h \gets \text{hash}(\bm{S}(c)[cov])$ be the hash of $c$'s coverage;
\State Add $c$ to $\bm{S}_{\cov}(h)$;
\EndFor
\State \Return $\bm{S}_{\cov}$.
\end{algorithmic}
\end{algorithm}
}

\subsection{Finding Simpson's Paradoxes and Coverage Redundancies}
\label{sec:discover}
Intuitively, the task of finding Simpson's paradoxes can be conceptualized as a two-step process: (1) systematic enumeration of all possible association analysis configurations (ACs), and (2) evaluation of each AC against the formal definition of Simpson's paradox (Definition~\ref{def:simpson}). To this end, \Cref{alg:brute-force} presents a simple, but brute-force, approach to accomplish the two-step process. Essentially, the AC enumeration process begins by iterating through every non-empty population $c$. For each population $c$, we identify all sibling children populations $(c_1,c_2)$ of $c$. Each unique combination of a sibling pair $(c_1,c_2)$ with a valid separator attribute and a label attribute constitutes an AC.

The brute-froce approach exhibits two drawbacks. First, it lacks structured organization (and also representation) of discovered Simpsons paradoxes into groups of (coverage) redundant instances. Second, the execution efficiency is compromised by repetitive calculations. In particular, the repetition manifests in two aspects: (1) iteration over populations with equivalent coverage; and (2) evaluation of ACs that are (coverage) redundant to Simpson's paradoxes discovered in prior calculations. In the following sections, we present optimizations that address these limitations and improve upon the brute-force approach.

\nop{
Recall in Definition~\ref{def:simpson}, a Simpson's paradox is represented as the tuple $(s, X_{i_{0}}, \{u_1,u_2\}, X_{i_{1}}, Y_{i_{2}})$ where $s$ refers to a non-empty population, $X_{i_{0}}$ is the differential attribute (\emph{s.t.} $s[i_0] = *$) with differential values of $\{u_1,u_2\}$, $X_{i_1}$ is the separator attribute (\emph{s.t.} $s[i_1] = *$), and $Y_{i_2}$ is the label attribute.
In general, the tuple $(s, X_{i_{0}}, \{u_1,u_2\}, X_{i_{1}}, Y_{i_{2}})$ is called a \textbf{paradox candidate} and Simpson's paradoxes are the subsets of paradox candidates satisfying the conditions described in Definition~\ref{def:simpson}.

To this end, a brute-force approach of finding all Simpson's paradoxes is to enumerate all possible paradox candidates,
and evaluate each based on conditions in Definition~\ref{def:simpson}.
The iterative procedure of enumerating paradox candidates is illustrated in \Cref{alg:brute-force}.
}

\begin{algorithm}
\caption{Finding Simpson's paradoxes (brute-force).}
\label{alg:brute-force}
\begin{algorithmic}[1]
\Input Materialized populations $S$ from $T = (\{X_i\}_{i=1}^n,\{Y_j\}_{j=1}^m)$
\Out All instances of Simpson's paradox
\For{each populations $c \in S$}
\For{each dimension $1 \leqslant i_0 \leqslant n$ s.t. $c[i_0] = *$}
\For{each pair $(u_1,u_2) \in \Dom(X_{i_0})$ where $u_1 \neq u_2$}
\State \textit{\textcolor{gray}{// Create the sibling children $(c_1,c_2)$ of $c$}}
\State Let $c_1 \gets c[X_{i_{0}} = u_1]$ and $c_2 \gets c[X_{i_{0}} = u_2]$;
\For{each dimension $1 \leqslant i_1 \neq i_0 \leqslant n$ s.t. $c[i_1] = *$}
\For{each dimension $1 \leqslant i_2 \leqslant m$}
\State Evaluate whether $(c_1,c_2, X_{i_1}, Y_{i_2})$ is a Simpson's paradox by Definition~\ref{def:simpson}.
\EndFor
\EndFor
\EndFor
\EndFor
\EndFor
\end{algorithmic}
\end{algorithm}

\nop{
This requires an exhaustive and nested iteration over (a) the common parent population $s$; (b) the differential attributes and differential values $(X_{i_0}, \{u_1,u_2\})$; (c) the separator attributes $X_{i_1}$; and (d) the label attribtues $Y_{i_2}$.
The step-by-step computations are as follows:

\begin{enumerate}
    \item iterate through every non-empty population $s$;
    \item for each population $s$ as the common parent, iterate through all possible sibling children populations, denoted $(s_1, s_2)$ with $(u_1, u_2)$ as differential values;
    \item for each sibling $(s_1, s_2)$, iterate through every possible separator attribute $X_{i_1}$ (\emph{i.e.,} $s_1[i_1] = s_2[i_1] = *$) and label attribute $Y_{i_2}$;
    \item check and evaluate whether the candidate $(s, X_{i_0}, \\ \{u_1,u_2\}, X_{i_1}, Y_{i_2})$ is a Simpson's paradox per Def. \ref{def:simpson}.
\end{enumerate}
}

\subsubsection{Iteration Over Coverage Distinct Populations}
\label{sec:upper-bound}

In \Cref{alg:brute-force} (brute-force), the first step of a full enumeration of ACs is to iterate over every non-empty population. A key limitation of this approach is that multiple populations often share identical coverage, resulting in redundant computations. To address this inefficiency, we can leverage the population grouping performed in \Cref{alg:materialization} (\Cref{sec:materialization}), where we identified groups of populations with equal coverage and stored them in $S_{\cov}$. A straightforward improvement is therefore to iterate through each group in $S_{\cov}$ and select just one representative population from each group. Via Proposition~\ref{prop:pop-equiv-convex}, each group is a convex subset of populations with a unique upper bound. For simplicity, we default to using the upper bound of each group for our iteration.
\nop{
In \Cref{alg:brute-force} (brute-force), the idea is to iterate over every non-empty population $s$ (Line 1).
For each $s$, iterate through every possible differential attributes $X_{i_0}$ where $s[i_0] = *$ (Line 2), and every other separator attributes $X_{i_1}$ where $s[i_1] = *$ and $i_0 \neq i_1$ (Line 4).
Importantly, a differential or separator attribute is \textbf{valid} at dimension $i$ of $s$ if there exists $t, t' \in \cov(s)$ where $t \neq t'$ such that $t[i] \neq t'[i]$. 
In other words, if $t[i] = v$ for every record $t \in \cov(s)$, then no valid differential or separator attribute can be formed at dimension $i$.
With this, we obtain the following.
\begin{proposition} 
\label{prop:upper-bound}
Let $\bm{C}$ be a subset of coverage equivalent populations, then $\{i \mid s[i] = *,\,\forall s \in \bm{C}\}$ is the only set of dimensions that form valid differential or separator attributes for all populations in $\bm{C}$.
\end{proposition}
\begin{proof}
\label{prf:upper-bound}
Suppose dimension $j$ forms valid differential or separator attributes for all populations in $\bm{C}$ such that $\exists s, s' \in \bm{C}$ ($s \neq s'$) where $s[j] = *$ and $s'[j] = v \neq *$.
Since $\cov(s) = \cov(s')$, it means that $t[j] = v,\, \forall t \in \cov(s)$.
By contradiction, $j$ cannot form valid differential or separator attribute for any population in $\bm{C}$.
%
%
\end{proof}
From Proposition~\ref{prop:upper-bound}, for a given subset $\bm{C}$ of coverage equivalent populations, the population $s''$ where 
\begin{equation}
\label{eq:upper-bound}
    s''[i] = \begin{cases}
    * \mid s[i] = *,\, \forall s \in \bm{C} \\
    v \mid s[i] = v,\, v \neq *,\, \exists s \in \bm{C}
\end{cases} \quad 1 \leqslant i \leqslant n
\end{equation}
is referred as the \textbf{upper bound} of $\bm{C}$.
In most cases, the upper bound is usually the population with the fewest number of $*$ in $\bm{C}$.
A nice property of the upper bound $s''$ of $\bm{C}$ is that, whenever $s''[i] = *$, $X_{i}$ is a valid differential or separator attribute for all populations in $\bm{C}$.
Hence, instead of iterating through every non-empty population, a more efficient approach is to iterate through the upper bound of every subset of coverage equivalent populations.
}

\subsubsection{Evaluating Non-redundant Simpson's Paradoxes}
\label{sec:pruning}

The brute-force approach outlined in \Cref{alg:brute-force} evaluates each AC against Definition~\ref{def:simpson}, without considering potential (coverage) redundancies between discovered instances of Simpson's paradox. This leads to repetitive computations and thus reduced efficiency. To this, we propose a pruning strategy that eliminates the need for repetitive evaluation of coverage redundant Simpson's paradoxes. Our approach is based on a key insight: once an AC $p$ is confirmed as a Simpson's paradox, we identify all other ACs that would be (coverage) redundant to $p$, skip their evaluations, and directly include them in our output. We develop this strategy from two perspectives.
\nop{
In \Cref{alg:brute-force} (brute-force), each paradox candidate is evaluated against Definition~\ref{def:simpson} to determine if it qualifies as a Simpson's paradox.
If Simpson's paradoxes are later known to be coverage redundant, the process then incurs repetitive evaluation of redundant information.
Hence, we aim to design a pruning strategy to maximally eliminate repeated evaluation of coverage redundant paradox candidates.
The overarching idea is that, given a paradox candidate $p$ that is later identified as a Simpson's paradox, the algorithm should preemptively enumerate a set of Simpson's paradoxes that are coverage redundant to $p$, directly include them in the output, and skip their evaluation in subsequent iterations.
We discuss the details from two persepctives.}

\paragraph{Pruning 1}
\label{paragraph:pruning-1}

The first part of our pruning strategy leverages the sibling equivalence property of (coverage) redundancy. The core idea is that once we identify a Simpson's paradox, we can immediately detect all other Simpson's paradoxes that are sibling equivalent to it without explicitly evaluating for each. 

Recall from Proposition~\ref{prop:sibling-eq} that when two ACs $p_1 \neq p_2$, where $p_k = (s_{1_k}, s_{2_k}, X_{i_1}, Y_{i_2})$ $(k=1,2)$, such that $\cov(s_{1_1}) = \cov(s_{1_2})$ and $\cov(s_{2_1}) = \cov(s_{2_2})$, if $p_1$ is a Simpson's paradox, then $p_2$ must also be a Simpson's paradox. Building upon this, we remark on the following proposition, which concretely identify the set of ACs that are sibling equivalent to a given instance of Simpson's paradox.

\begin{proposition}
\label{prop:coverage-equivalence-pruning}
Let $p = (s_1, s_2, X_{i_1}, Y_{i_2})$ be a Simpson's paradox.
Let $\bm{S}_1$ be the set of populations that have the same coverage as $s_1$, and $\bm{S}_2$ be the set of populations that have the same coverage as $s_2$.
Then, for each $(s_1', s_2') \in \bm{S}_1 \times \bm{S}_2$ such that $(s_1', s_2')$ are siblings, the AC $p' = (s_1', s_2', X_{i_1}, Y_{i_2})$ is a Simpson's paradox and is (coverage) redundant (\emph{i.e.,} sibling equivalent) to $p$.
\proof
Since $\cov(s_1') = \cov(s_1)$ and $\cov(s_2') = \cov(s_2)$, according to Proposition~\ref{prop:sibling-eq}, $p'$ is also a Simpson's paradox. Since $p$ and $p'$ share identical separator and label attributes, according to Definition~\ref{def:coverage}, $p$ and $p'$ are coverage redundant.
\qed
\end{proposition}

As we identify groups of sibling equivalent Simpson's paradoxes via Proposition~\ref{prop:coverage-equivalence-pruning}, we can concisely represent them using the structure $((up(\bm{S}_1),low(\bm{S}_1)),(up(\bm{S}_2),low(\bm{S}_2)),\bm{X}_{i_1},\bm{Y}_{i_2})$. Here $\bm{S}_1$ and $\bm{S}_2$ are equivalence classes containing populations with identical coverage to the sibling populations of a given Simpson's paradox. Fortunately, these equivalence classes were already identified in objective (3) of \Cref{alg:materialization} (\Cref{sec:materialization}) and stored in the hashmap $S_{\cov}$.

\begin{example}
\label{ex:coverage-equivalence-pruning}
Consider Table~\ref{tab:ex2}. As shown in Example~\ref{ex:sibling}, $p = ((\ast,b_1,\ast,\ast),(\ast,b_2,\ast,\ast), A, Y_1)$ is a Simpson's paradox. Following Proposition~\ref{prop:coverage-equivalence-pruning}, we can find all Simpson's paradoxes that are sibling equivalent to $p$. We first retrieve all populations with identical coverage to $(\ast,b_1,\ast,\ast)$ and $(\ast,b_2,\ast,\ast)$ in $p$. For $s_1 = (\ast,b_1,\ast,\ast)$, we have $S_{\cov}(\cov(s_1)) = S_{\cov}(\{t_1,t_2,t_4,t_5\}) = \{(*, b_1, *, *), (*, *, *, d_1), (*, b_1, *, d_1)\}$. For $s_2 = (\ast,b_2,\ast,\ast)$, we have $S_{\cov}(\cov(s_2)) = S_{\cov}(\{t_3,t_6,t_7\}) = \{(*, b_2, *, *), (*, *, *, d_2), (*, b_2, *, d_2)\}$. From these, we identify valid sibling pairs from $S_{\cov}(\cov(s_1)) \times S_{\cov}(\cov(s_1))$, which gives $((\ast, b_1, \ast, \ast), (\ast, b_2, \ast, \ast))$ and $((\ast, \ast, \ast, d_1), (\ast, \ast, \ast, d_2))$. The former are the ones in $p$, the latter gives a sibling equivalent Simpson's paradox $p' = ((\ast, \ast, \ast, d_1), (\ast, \ast, \ast, d_2), A, Y_1)$. We can concisely represent this group of sibling equivalent Simpson's paradox following the structure described in \Cref{sec:redundancy}, using upper and lower bounds of $S_{\cov}(\cov(s_1))$ and $S_{\cov}(\cov(s_2))$. \qed
\end{example}

\Cref{alg:sibling-equivalence-pruning} formalizes this pruning strategy, as illustrated in Example~\ref{ex:coverage-equivalence-pruning}. Unfortunately, \emph{pruning 1} alone does not produce a complete group of (coverage) redundant Simpson's paradoxes since it does not consider eliminating the repetitive evaluation of division and statistics equivalent Simpson's paradoxes. To this we discuss the second aspect of our pruning strategy below.
%
%
\begin{algorithm}
\caption{Finding sibling equivalent Simpson's paradoxes.}
\label{alg:sibling-equivalence-pruning}
\begin{algorithmic}[1]
\Function{Pruning1}{$p, \bm{S}_{\cov}$}
    \State \textbf{Input:} Simpson's paradox $p = (s_1, s_2, X_{i_1}, Y_{i_2})$, grouped populations $\bm{S}_{\cov}$;
    \State \textbf{Output:} The set $\mathbf{P}$ of Simpson's paradoxes that are sibling equivalent to $p$;
    \For{each $(s_1', s_2') \in \bm{S}_{\cov}(\cov(s_1)) \times \bm{S}_{\cov}(\cov(s_2))$ \\ \phantom{for } such that $(s_1',s_2')$ are siblings}
        \State Add $(s_1',s_2', X_{i_1}, Y_{i_2})$ to $\mathbf{P}$; \Comment{Proposition~\ref{prop:coverage-equivalence-pruning}}
    \EndFor
    \State \Return $\mathbf{P}$.
\EndFunction
\end{algorithmic}
\end{algorithm}

\paragraph{Pruning 2}
\label{paragraph:pruning-2}

The second part of our pruning strategy addresses division and statistics equivalences in (coverage) redundancy. Specifically, when we find a Simpson's paradox $p'$ that is coverage redundant to a known Simpson's paradox $p$ but differs in separator and/or label attributes, we can identify all other Simpson's paradoxes that are division and/or statistics equivalent to $p$ without evaluating for each. Building upon Proposition~\ref{prop:division-equivalence} and Proposition~\ref{prop:statistics-equivalence} on division and statistics equivalences, we remark on the following that identify a set of ACs that are division and/or statistics equivalent to a given group of sibling equivalent Simpson's paradoxes.
%
\begin{proposition}
\label{prop:pruning-2}
Let $\mathbf{P}$ be a set of sibling equivalent Simpson's paradoxes where they share the same separator attribute $X_{i_1}$ and label attribute $Y_{i_2}$. Let $p' = (s_1',s_2', X'_{i_1}, Y'_{i_2})$, where $X'_{i_1} \neq X_{i_1}$ and $Y'_{i_2} \neq Y_{i_2}$, be a Simpson's paradox that is coverage redundant to Simpson's paradoxes in $\mathbf{P}$. Then, for every $p = (s_1,s_2, X_{i_1}, Y_{i_2}) \in \mathbf{P}$, $p'' = (s_1,s_2, X'_{i_1}, Y'_{i_2})$ is a Simpson's paradox and is (coverage) redundant to $p$.
\proof
Let $p = (s_1,s_2,X_{i_1},Y_{i_2}) \in \mathbf{P}$. Since $p'$ is (coverage) redundant to $p$, by Definition~\ref{def:coverage}, we have:
\begin{enumerate}
    \item $\cov(s_j) = \cov(s_j')$ $(j=1,2)$;
    \item $P(Y_2\mid s_j) = P(Y_2' \mid s_j')$ $(j=1,2)$; and
    \item there exists a one-to-one mapping $f$ between $\Dom(X_{i_1})$ and $\Dom(X_{i_1}')$ such that for every $v \in \Dom(X_{i_1})$ and $j\in\{1,2\}$:
    \begin{enumerate}
        \item $\cov(s_j[X_{i_1} = v]) = \cov(s_j'[X_{i_1}' = f(v)])$;
        \item $P(Y_2\mid s_j[X_{i_1}=v]) = P(Y_2' \mid s_j'[X_{i_1}' = f(v)])$.
    \end{enumerate}
\end{enumerate}
For the AC $p'' = (s_1,s_2,X_{i_1}',Y_{i_2}')$, we need to show it's a Simpson's paradox. First, since $p$ is a Simpson's paradox, we know $P(Y_2 \mid s_1) > P(Y_2 \mid s_2)$. From sibling and statistics equivalences between $p$ and $p'$, we have $P(Y_2'\mid s_1) > P(Y_2'\mid s_2)$. Second, from division equivalence between $p$ and $p'$, we have $P(Y_2' \mid s_1[X_{i_1}' = f(v)]) \leqslant P(Y_2' \mid s_2[X_{i_1}' = f(v)])$ for every $v \in \Dom(X_{i_1})$. This shows that $p''$ satisfies Definition~\ref{def:simpson} and is a Simpson's paradox. 

We then show that $p''$ is (coverage) redundant to $p$. First, the same one-to-one mapping $f$ that established division equivalence between $p$ and $p'$ also establishes division equivalence between $p$ and $p''$. Second, from statistics equivalence between $p$ and $p'$, $p$ and $p''$ are also statistics equivalent. Therefore, by Definition~\ref{def:coverage}, $p''$ is (coverage) redundant to $p$. \qed 
%
\end{proposition}

As we identify additional Simpson's paradoxes that are coverage redundant to a set of sibling equivalent simpson's paradoxes, we extend our concise representation by incorporating all relevant (\emph{i.e.,} division equivalent) separator attributes into $\bm{X}_{i_1}$ and all relevant (\emph{i.e.,} statistics equivalent) label attributes into $\bm{Y}_{i_2}$.

\begin{example}
\label{ex:pruning-2}
Continuing from Example~\ref{ex:coverage-equivalence-pruning} using data from Table~\ref{tab:ex2}. We have identified a group of sibling equivalent Simpson's paradoxes $\mathbf{P}$ represented by $(S_{\cov}(\cov(s_1)), S_{\cov}(\cov(s_2)), \{A\}, \{Y_1\})$. Suppose we discover a new Simpson's paradox $p' = ((\ast,\ast,\ast,d_1),(\ast,\ast,\ast,d_2),C,Y_2)$ that is (coverage) redundant to Simpson's paradoxes in $\mathbf{P}$. Following Proposition~\ref{prop:pruning-2}, for every existing Simpson's paradox $p = (s_1,s_2,A,Y_1) \in \mathbf{P}$, $p'' = (s_1,s_2,C,Y_2)$ is also a Simpson's paradox that is coverage redundant to $p$. For example, $((\ast,b_1,\ast,\ast),(\ast,b_2,\ast,\ast),C,Y_2)$ is automatically a Simpson's paradox without requiring additional evaluation. With this, we update our concise representation to $(S_{\cov}(\cov(s_1)), S_{\cov}(\cov(s_2)), \{A,C\}, \{Y_1,Y_2\})$. \qed
\end{example}

\Cref{alg:aggr-division-equivalence} formalizes this pruning strategy, as illustrated in Example~\ref{ex:pruning-2}. Unlike \emph{pruning 1}, which can be applied immediately when a new Simpson's paradox is discovered, \emph{pruning 2} requires at least one Simpson's paradox to be evaluated for each separator and/or label attributes. This is because we cannot predetermine which attributes and labels will establish division and statistics equivalence without evaluation. Together, \emph{pruning 1} and \emph{pruning 2} reduce the computational overhead of finding all Simpson's paradoxes by eliminating the need for redundant evaluations.
\begin{algorithm}
\caption{Finding div. and stats. equiv. Simpson's paradoxes.}
\label{alg:aggr-division-equivalence}
\begin{algorithmic}[1]
\Function{Pruning2}{$\mathbf{P}, p'$}
\State \textbf{Input:} Set $\mathbf{P}$ of sibling equivalent Simpson's paradoxes, Simpson's paradox $p' = (s_1',s_2', X'_{i_1}, Y'_{i_2}) \notin \mathbf{P}$ that is coverage redundant to those in $\mathbf{P}$;
\State \textbf{Output:} Updated set $\mathbf{P}$ of (coverage) redundant Simpson's paradoxes;
\For{each $p \gets (s_1,s_2, X_{i_1}, Y_{i_2}) \in \mathbf{P}$}
\State Add $p'' \gets (s_1,s_2, X'_{i_1}, Y'_{i_2})$ to $\mathbf{P}$; \Comment{Proposition~\ref{prop:pruning-2}}
\EndFor
\State \Return $\mathbf{P}$.
\EndFunction
\end{algorithmic}
\end{algorithm}

\subsubsection{Putting Things Together}

Building upon our pruning strategies, we now present the comprehensive approach to efficiently identify groups of (coverage) redundant Simpson's paradoxes. To systematically determine if two Simpson's paradoxes are coverage redundant, we introduce a helper function:
\begin{definition}
\label{def:info}
Given a Simpson's paradox $p = (s_1,s_2, X_{i_1}, Y_{i_2})$,
\begin{align*}
    \textsc{Info}(p) & = \{\cov(s_1), \cov(s_2), P(Y_{i_2} \mid s_1), P(Y_{i_2} \mid s_2)\} \\
    & \cup \{\cov(s_1[X_{i_1} = v]) \mid \forall v \in \Dom(X_{i_1})\} \\
    & \cup \{\cov(s_2[X_{i_1} = v]) \mid \forall v \in \Dom(X_{i_1})\} \\
    & \cup \{P(Y_{i_2} \mid s_1[X_{i_1} = v]) \mid \forall v \in \Dom(X_{i_1})\} \\ 
    & \cup \{P(Y_{i_2} \mid s_2[X_{i_1} = v]) \mid \forall v \in \Dom(X_{i_1})\}.
\end{align*}
\end{definition}
In essence, $\textsc{Info}(p)$ collects the coverage and aggregate statistics of both the sibling populations and divided sub-populations of the Simpson's paradox $p$. Hence, by Definition~\ref{def:coverage}, two Simpson's paradoxes $p$ and $p'$ are coverage redundant if and only if $\textsc{Info}(p) = \textsc{Info}(p')$. To organize our findings, we use a set $P$ to contain all discovered intances of Simpson's paradox and a hashmap $I$ to store groups of coverage redundant Simpson's paradox. In particular, each key in $I$ represents $\textsc{Info}(p)$ for a Simpson's paradox $p$, with its corresponding value containing the group of Simpson's paradoxes (coverage) redundant to $p$, along with its concise representation. \Cref{alg:simpson} presents our complete approach, which integrates the efficiency improvements from \Cref{sec:upper-bound} and pruning strategies from \Cref{sec:pruning} into the brute-force framework of \Cref{alg:brute-force}.

Interestingly, the pruning strategies can also be applied to ACs that are not Simpson's paradoxes. We omit the discussion of this scenario in \Cref{alg:simpson}. To achieve this, in essence, one can create a complementary set $P'$ to include the remaining ACs that are not Simpson's paradoxes, and a hashmap $I'$ store groups of ``coverage redundant'' non-Simpson's paradoxes.

\begin{algorithm}
\caption{Finding coverage redundant Simpson's paradoxes.}
\label{alg:simpson}
\begin{algorithmic}[1]
\Input Data table $T = (\{X_i\}_{i=1}^n,\{Y_j\}_{j=1}^m)$, Materialized populations $S$, grouped populations $S_{\cov}$
\Out Set of all Simpson's paradoxes $P$, groups of (coverage) redundant Simpson's paradoxes $I$
\State Initialize empty $P$ and $I$; 
\State \textcolor{gray}{\textit{// Iterate over populations of distinct coverage, \Cref{sec:upper-bound}}}
\For{each subset $C \in S_{\cov}$}
\State \textcolor{gray}{\textit{// For simplicity, choose the upper bound from each group}}
\State Let $s$ be the upper bound of $C$;
\For{each dimension $1 \leqslant i_0 \leqslant n$ s.t. $s[i_0] = *$}
\For{each $(u_1, u_2) \in \Dom(X_{i_0}) \times \Dom(X_{i_0})$}
\For{each dimension $1 \leqslant i_1 \neq i_0 \leqslant n$ s.t. $s[i_1] = *$}
\For{each dimension $1 \leqslant i_2 \leqslant m$}
\State Let $p \gets (s_1, s_2, X_{i_1}, Y_{i_2})$ be an AC;
\If{$p \in P$} \Comment{$p$ already in output}
\State Continue to the next iteration;
\ElsIf{$p$ is a Simpson's paradox by Defini- \\ \phantom{else if } tion~\ref{def:simpson} and $I(\textsc{Info}(p)) = \varnothing$} 
\State \textcolor{gray}{\textit{// Find Simpson's paradoxes sibling equivalent to $p$ using pruning 1}}
\State Let $\mathbf{P} \gets \textsc{Pruning1}(p, \bm{S}_{\cov})$ using \Cref{alg:sibling-equivalence-pruning}, Let $I(\textsc{Info}(p)) \gets \mathbf{P}$, and add $p'$ to $P$, for every $p' \in \mathbf{P}$;
\ElsIf{$p$ is a Simpson's paradox by Defini- \\ \phantom{else if } tion~\ref{def:simpson} and $I(\textsc{Info}(p)) \neq \varnothing$} 
\State \textcolor{gray}{\textit{// Find Simpson's paradoxes division and statistics equivalent to those in $I(\textsc{Info}(p))$ using pruning 2}}
\State Let $\mathbf{P} \gets \textsc{Pruning2}(I(\textsc{Info}(p)), p)$ using \Cref{alg:aggr-division-equivalence}, update $I(\textsc{Info}(p))$ with $\mathbf{P}$, and add $p''$ to $P$, for every $p'' \in \mathbf{P}$;
\Else \Comment{$p$ is not a Simpson's paradox}
\State Continue to the next iteration;
\EndIf
\EndFor
\EndFor
\EndFor
\EndFor
\EndFor
\State \Return $P$ and $I$.
\end{algorithmic}
\end{algorithm}

}

\section{Experimental Results}
\label{sec:experiments}

In this section, we evaluate instances of coverage redundant Simpson's paradoxes on both real-world and synthetic datasets, described in \Cref{sec:datasets}. 
Our study is guided by three research questions (RQs): 
\textbf{RQ1} investigates whether (coverage) redundant Simpson's paradoxes are rare in practice (\Cref{sec:rq1}); 
\textbf{RQ2} evaluates the scalability of our computational framework (\Cref{sec:rq2}); and 
\textbf{RQ3} examines the structural robustness of the discovered (coverage) redundant Simpson's paradoxes (\Cref{sec:rq3}). 

For each question, we conduct quantitative experiments and provide detailed analyses. 
Overall, our results show that redundant Simpson's paradoxes occur frequently in real-world datasets, our method scales effectively in practice, and the identified paradoxes are structurally robust under data perturbation.

\subsection{Experimental Setup}
\label{sec:datasets}
We conducted our experiments on the Duke Computer Science Department's computing cluster, using nodes equipped with Intel Xeon Gold 5317 processors (3.0 GHz, 12 cores) and 64 GB of RAM.

We evaluate our methods on both real-world categorical datasets and synthetic datasets generated with controlled parameters. The real-world datasets allow us to measure the prevalence of redundant paradoxes, while the synthetic datasets provide a way to systematically assess efficiency and scalability under controlled conditions.

\subsubsection{Real-World Datasets}
\label{sec:real-world-data}

We use datasets from diverse domains:
\begin{itemize}
    \item \textbf{Adult:} A census income dataset with 48{,}842 records, 8 attributes (e.g., education, occupation), and a binary label indicating whether annual income exceeds \$50K~\cite{misc_adult_2}.
    \item \textbf{Mushroom:} A dataset of 8{,}124 records describing mushrooms using 22 categorical attributes (e.g., cap shape, habitat), with edibility as the binary label~\cite{misc_mushroom_73}.
    \item \textbf{Loan:} A large financial dataset\footnote{\url{https://www.kaggle.com/datasets/ikpeleambrose/irish-loan-data}} containing about 3 million loan applications, with 12 categorical attributes (e.g., loan purpose, home ownership) and a label of loan approval.
    \item \textbf{CDC Diabetes Health Indicators:} A healthcare dataset\footnote{\url{https://www.kaggle.com/datasets/alexteboul/diabetes-health-indicators-dataset/data}} of 253{,}681 individuals, with 35 categorical attributes (covering demographics, laboratory results, and lifestyle factors) and a binary label indicating diabetes status.
\end{itemize}

\begin{table}[t]
  \centering\small
  \caption{Simpson's paradoxes in real-world datasets.}
    \begin{tabular}{|c|c|c|c|c|}
    \hline
    Dataset & Adult & Mushroom & Loan  & Diabetes \\
    \hline \hline
    \#paradoxes & 3,880 & 6,878 & 18,330 & 1,464,250 \\
    \#groups & 3,460 & 4,931 & 16,293 & 1,065,189 \\
    \#standalone & 3,094 & 3,590 & 14,354 & 809,388 \\
    \#sibling-child eq. & 366 & 1,220 & 1,939 & 255,690 \\
    \#separator eq. & 0 & 146 & 0 & 340 \\
    \#statistic eq. & 0 & 0 & 0 & 0 \\
    \hline
    \end{tabular}
  \label{tab:sp-real-world}
\end{table}

\subsubsection{Synthetic Datasets}
\label{sec:synthetic}

To evaluate performance in a controlled setting, we employ a synthetic data generator that produces datasets with user-specified structural properties. The generator accepts the following key parameters:
\begin{itemize}
    \item $n$: number of categorical attributes;
    \item $m$: number of label attributes;
    \item $d$: number of values per attribute.
\end{itemize}

The generation process proceeds in two steps. First, it constructs individual instances of Simpson's paradox. For a given AC $(s_1, s_2, X, Y)$, the generator enforces a consistent trend across all sub-populations (e.g., $P(Y | s_1\substitute{X}{v}) > P(Y | s_2\substitute{X}{v})$ for all $v \in \Dom(X)$). Then, it solves an optimization problem to distribute records across sub-populations such that the aggregate association reverses (e.g., $P(Y | s_1) \leq P(Y | s_2)$). Detailed procedures are provided in the artifact supplements.

Second, the framework introduces redundancy by modifying the generated records. Sibling child equivalence is induced by aligning sibling pairs $(s_1, s_2)$ and $(s_1', s_2')$ so that $\cov(s_1) = \cov(s_1')$ and $\cov(s_2) = \cov(s_2')$. Separator equivalence is enforced by mapping domains of two separator attributes $X$ and $X'$ and updating each record $r$ with $r.X' = f(r.X)$. Finally, statistic equivalence is created by defining a new label $Y'$ as a direct copy of an existing label $Y$.

This workflow is repeated until the dataset reaches a target size. Each iteration generates at least one unique non-redundant paradox, together with multiple redundant variants, while ensuring that all populations and sub-populations involved in a paradox contain at least a minimum number of records.

By varying $n$, $d$, and $m$, the generator naturally controls the richness of paradoxes and redundancies. Larger $n$ and $d$ values, for instance, expand the number of potential sibling values and separator attributes, increasing opportunities for Simpson's paradoxes and sibling child and separator equivalences. We study these effects in detail in Figure~\ref{fig:synthetic-rq1} in Section~\ref{sec:rq1}.

\subsection{Q1: Are Coverage-Redundant Simpson's Paradoxes Rare?}
\label{sec:rq1}
 
Our analysis shows that (coverage) redundant Simpson's paradoxes are common in real-world datasets. 
As summarized in \Cref{tab:sp-real-world}, a substantial fraction of discovered paradoxes are redundant: 20.3\% in Adult, 47.8\% in Mushroom, 21.7\% in Loan, and 44.7\% in Diabetes.  
Among the three types of equivalence, \emph{sibling child equivalence} is the most prevalent across all datasets.  
\emph{Separator equivalence}, which requires a one-to-one correspondence between separator attributes, is less frequent and appears only in the higher-dimensional Mushroom (10.7\%) and Diabetes (0.1\%) datasets.  
No \emph{statistic equivalence} is observed because each dataset contains only a single label attribute.

\begin{figure}[t]
    \centering
    \begin{subfigure}[b]{0.22\textwidth}
        \centering
        \includegraphics[width=\textwidth, height=70pt]{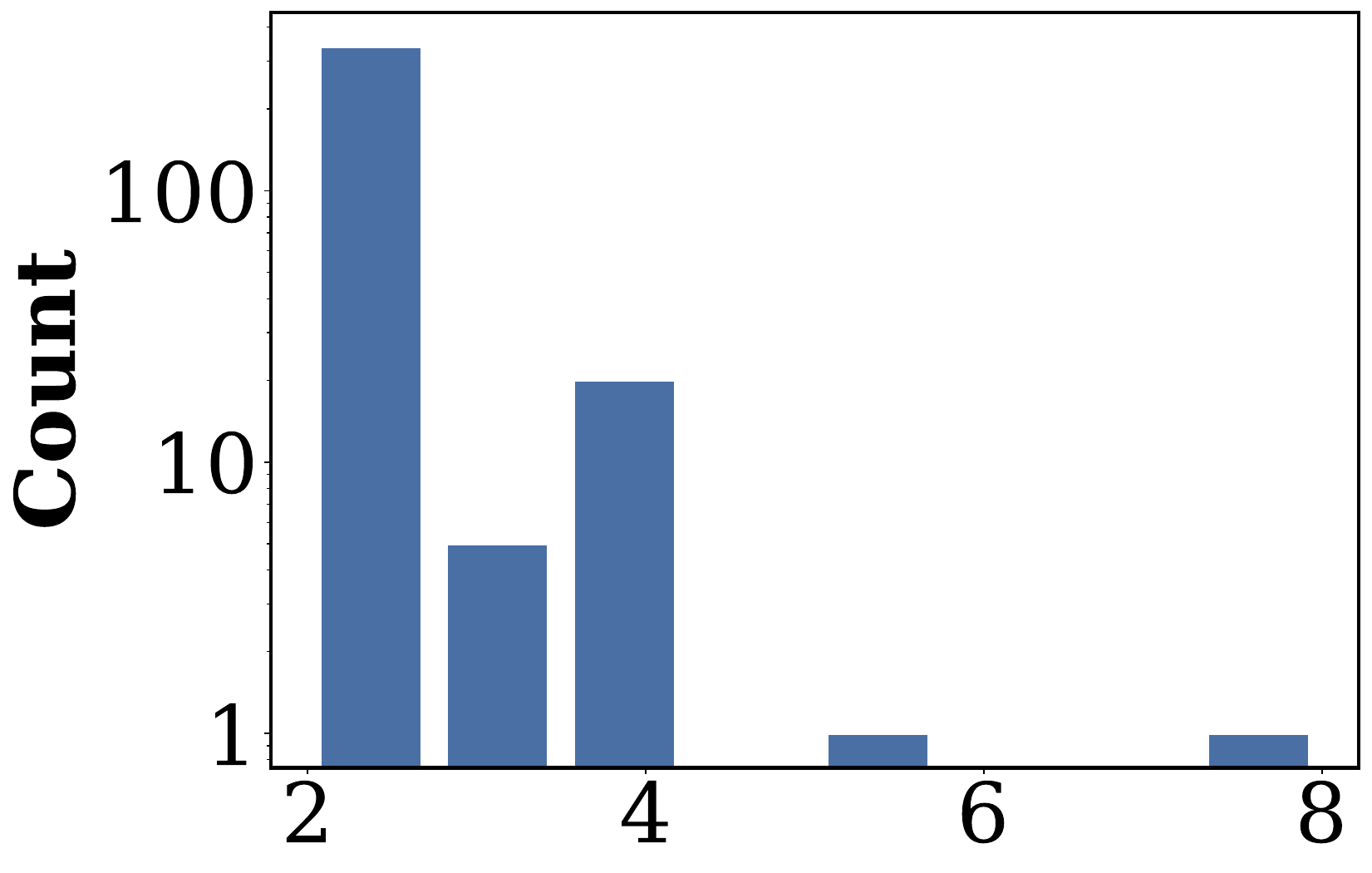}
        \caption{Adult}
    \end{subfigure}
    \hfill
    \begin{subfigure}[b]{0.22\textwidth}
        \centering
        \includegraphics[width=\textwidth, height=70pt]{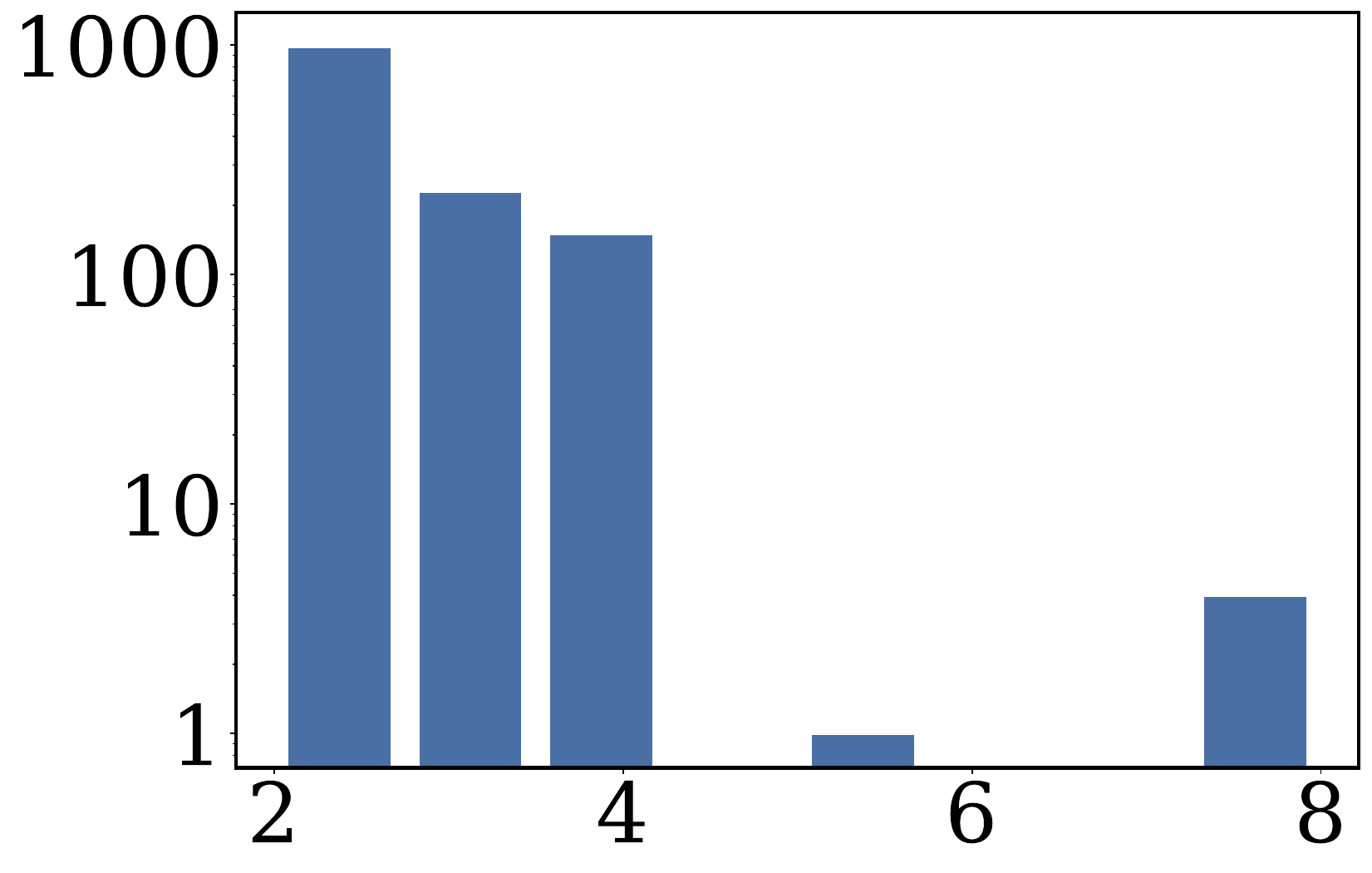}
        \caption{Mushroom}
    \end{subfigure}
    
    \begin{subfigure}[b]{0.22\textwidth}
        \centering
        \includegraphics[width=\textwidth]{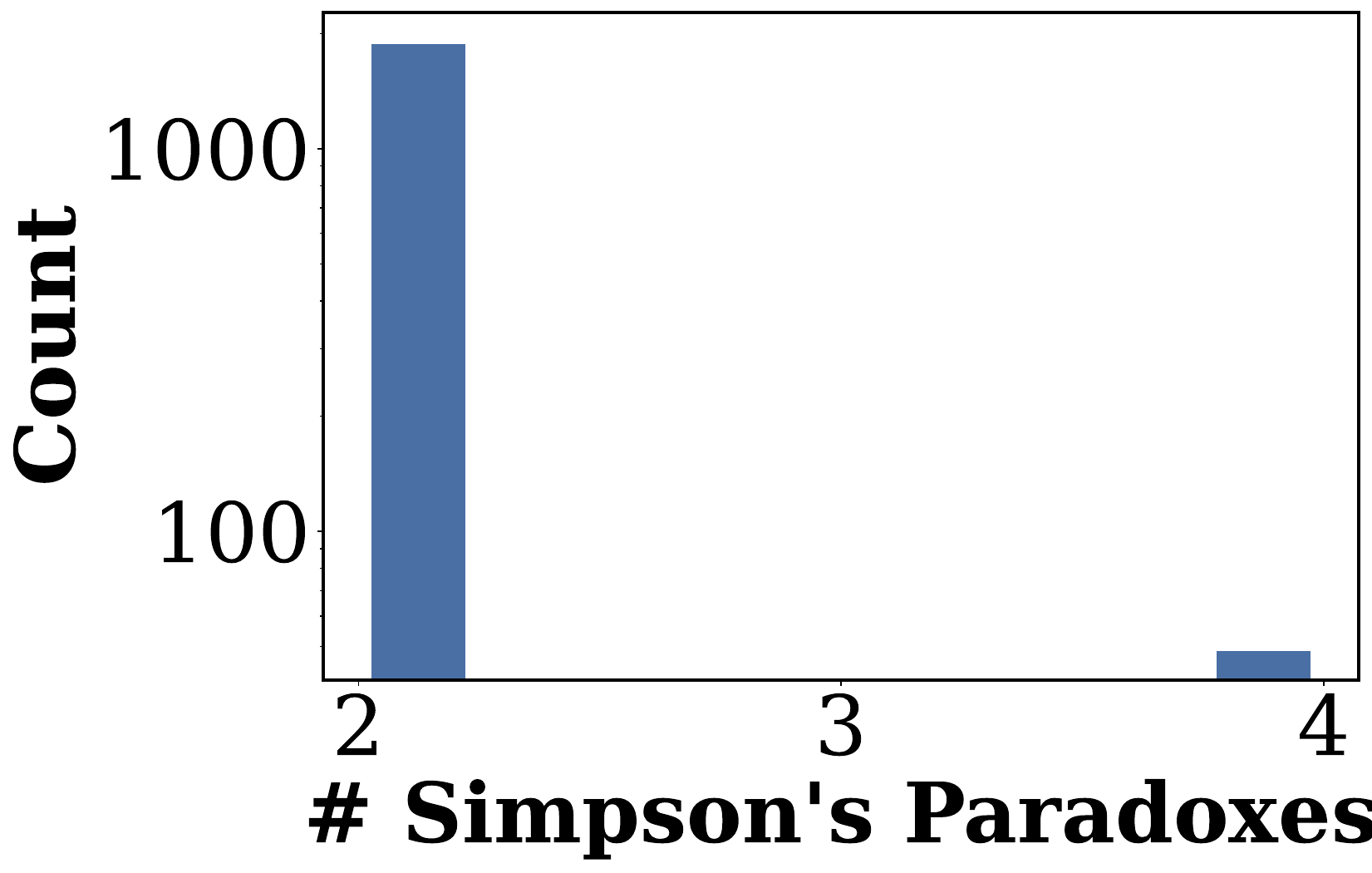}
        \caption{Loan}
    \end{subfigure}
    \hfill
    \begin{subfigure}[b]{0.22\textwidth}
        \centering
        \includegraphics[width=\textwidth]{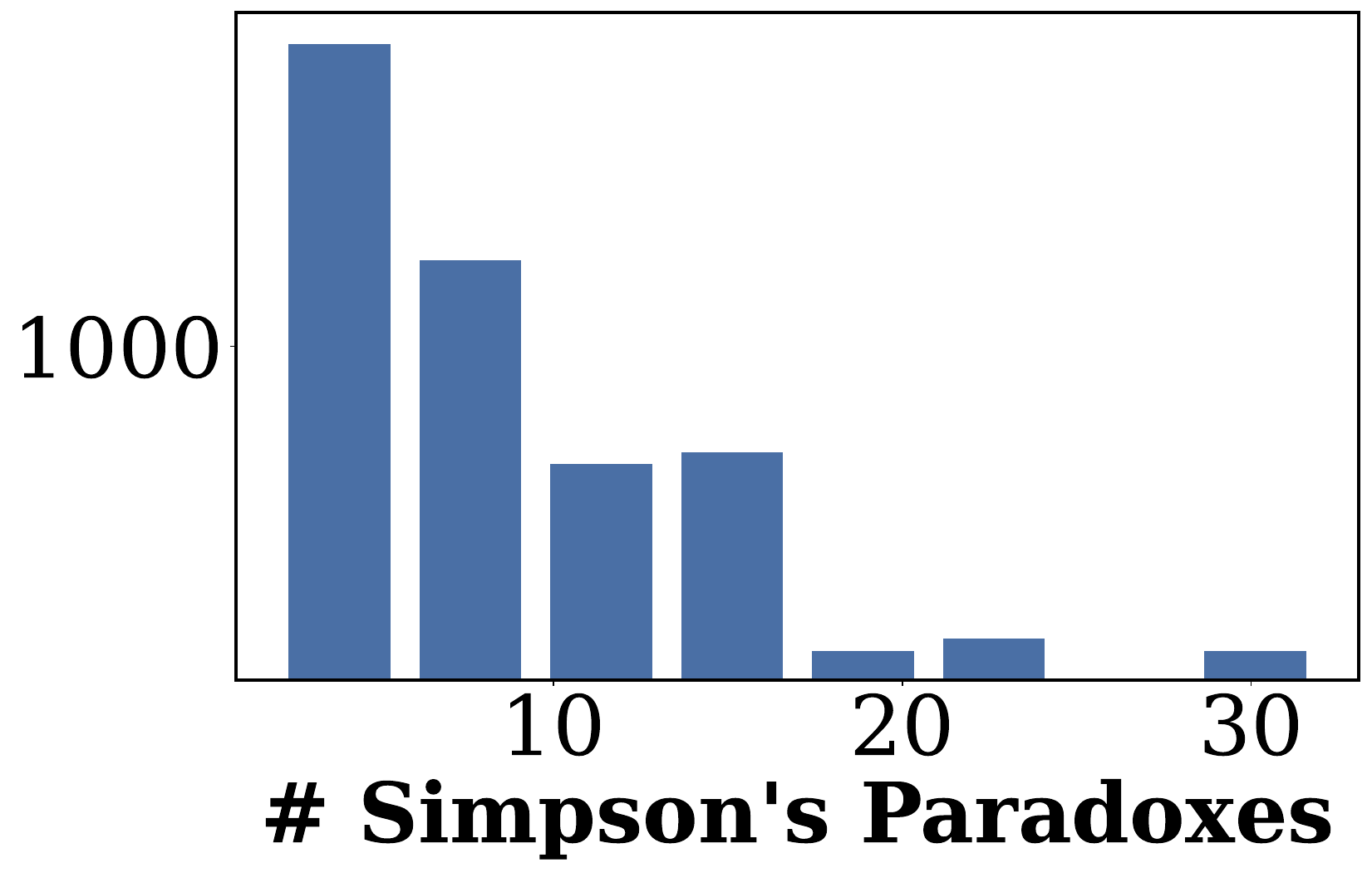}
        \caption{Diabetes}
    \end{subfigure}
    
    \caption{Distribution of the number of Simpson's paradoxes per redundant group in four real-world datasets.}
    \label{fig:dist-subset-size-real-world}
\end{figure}

To further analyze redundancy, \Cref{fig:dist-subset-size-real-world} shows the distribution of group sizes. 
In all datasets, most redundant paradox groups consist of only 2 or 3 paradoxes. 
In higher-dimensional datasets such as Mushroom and Diabetes, however, groups can grow much larger, containing up to 30 paradoxes due to the combined effects of sibling child and separator equivalences.

We also examine how redundancy patterns emerge in synthetic datasets.  
Since each synthetic dataset is generated with a fixed number of records (\Cref{sec:synthetic}), the number of unique, non-redundant paradoxes is limited, keeping the number of redundant paradox groups relatively stable across parameter settings.  
Our analysis therefore focuses on how generator parameters ($n=8$, $m=4$, $d=8$ by default) affect the \emph{total} number of paradoxes.  
With a stable number of paradox groups, growth in the total count reflects an increase in redundant instances.  
\Cref{fig:synthetic-rq1} presents the results, from which we draw the following observations:
\begin{itemize}
    \item \textbf{Number of categorical attributes ($n$):} The total number of paradoxes grows exponentially with $n$, as additional attributes create more opportunities for sibling child and separator equivalences.  
    \item \textbf{Domain cardinality ($d$):} The total number of paradoxes first increases and then decreases with $d$. Larger domains cause each paradox to cover more records, reducing the number of unique, non-redundant paradoxes that can be generated under the fixed record budget.  
    \item \textbf{Number of label attributes ($m$):} The total number of paradoxes increases linearly with $m$. Each additional label introduces a statistic-equivalent version of every existing paradox, effectively scaling the total count.  
\end{itemize}

\begin{figure}[t]
    \centering
    \begin{subfigure}[b]{0.15\textwidth}
        \centering
        \includegraphics[width=\textwidth]{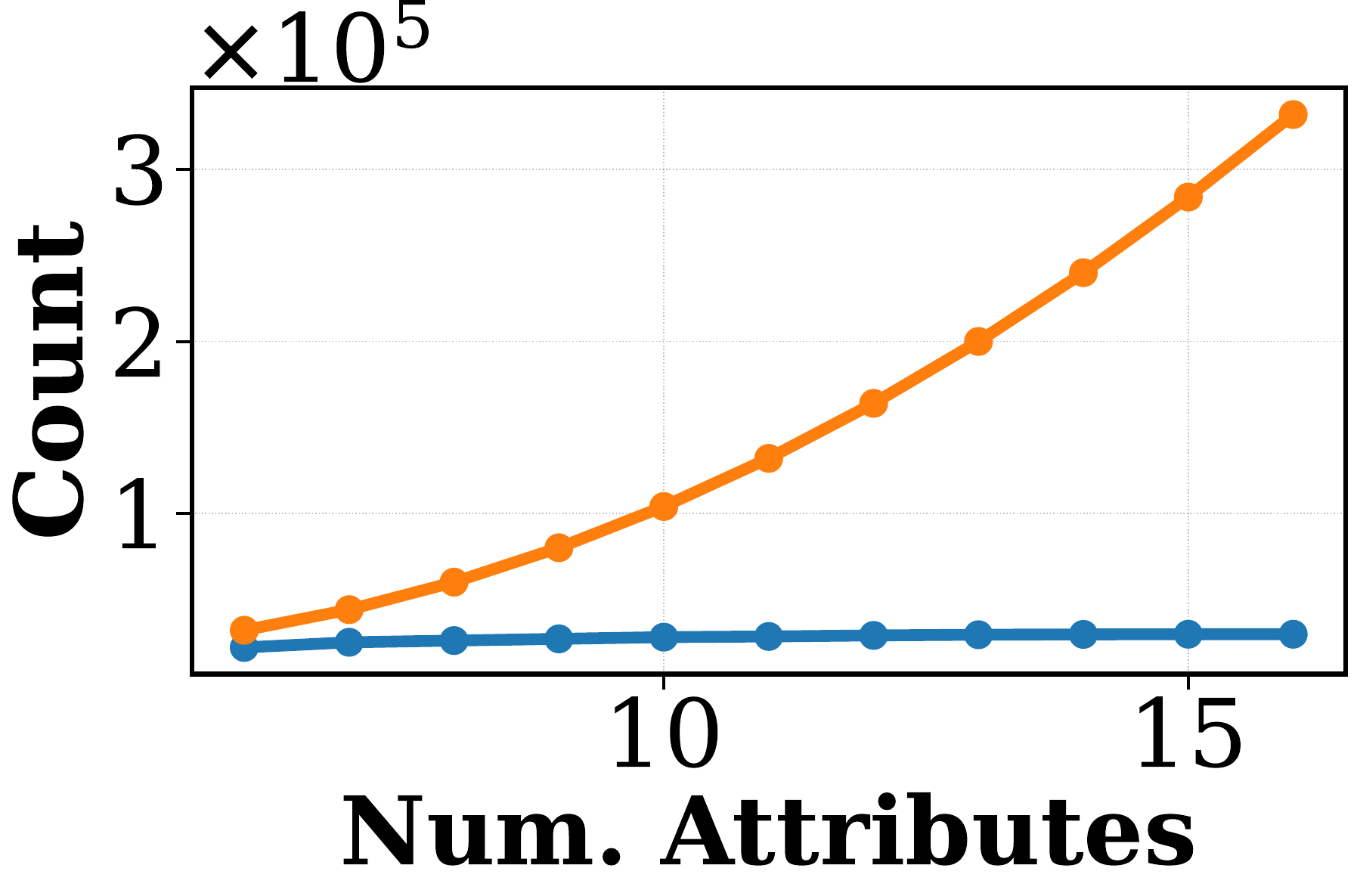}
        \label{fig:rq1a}
    \end{subfigure}
    \hfill
    \begin{subfigure}[b]{0.15\textwidth}
        \centering
        \includegraphics[width=\textwidth]{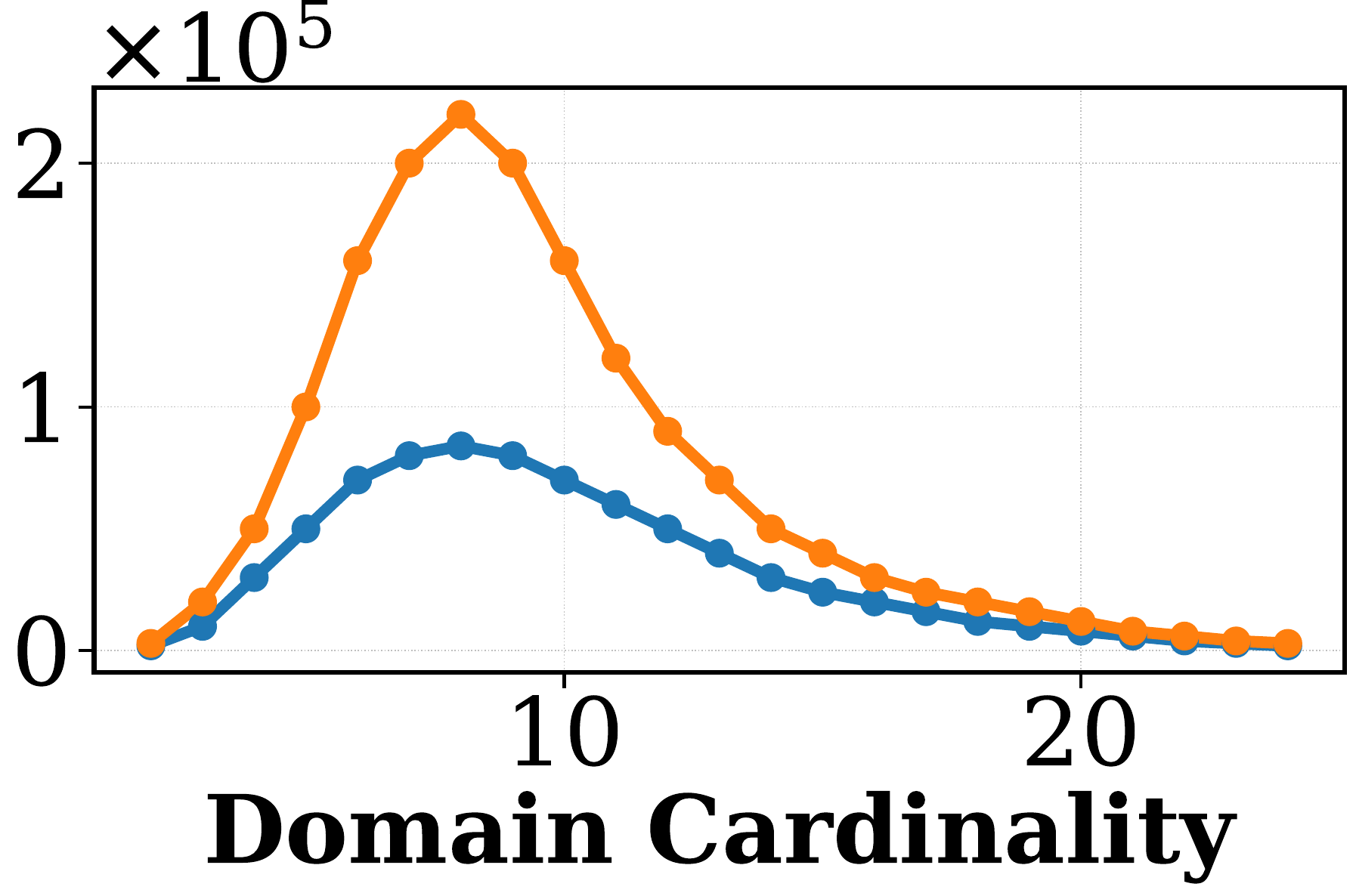}
        \label{fig:rq1b}
    \end{subfigure}
    \hfill
    \begin{subfigure}[b]{0.15\textwidth}
        \centering
        \includegraphics[width=\textwidth]{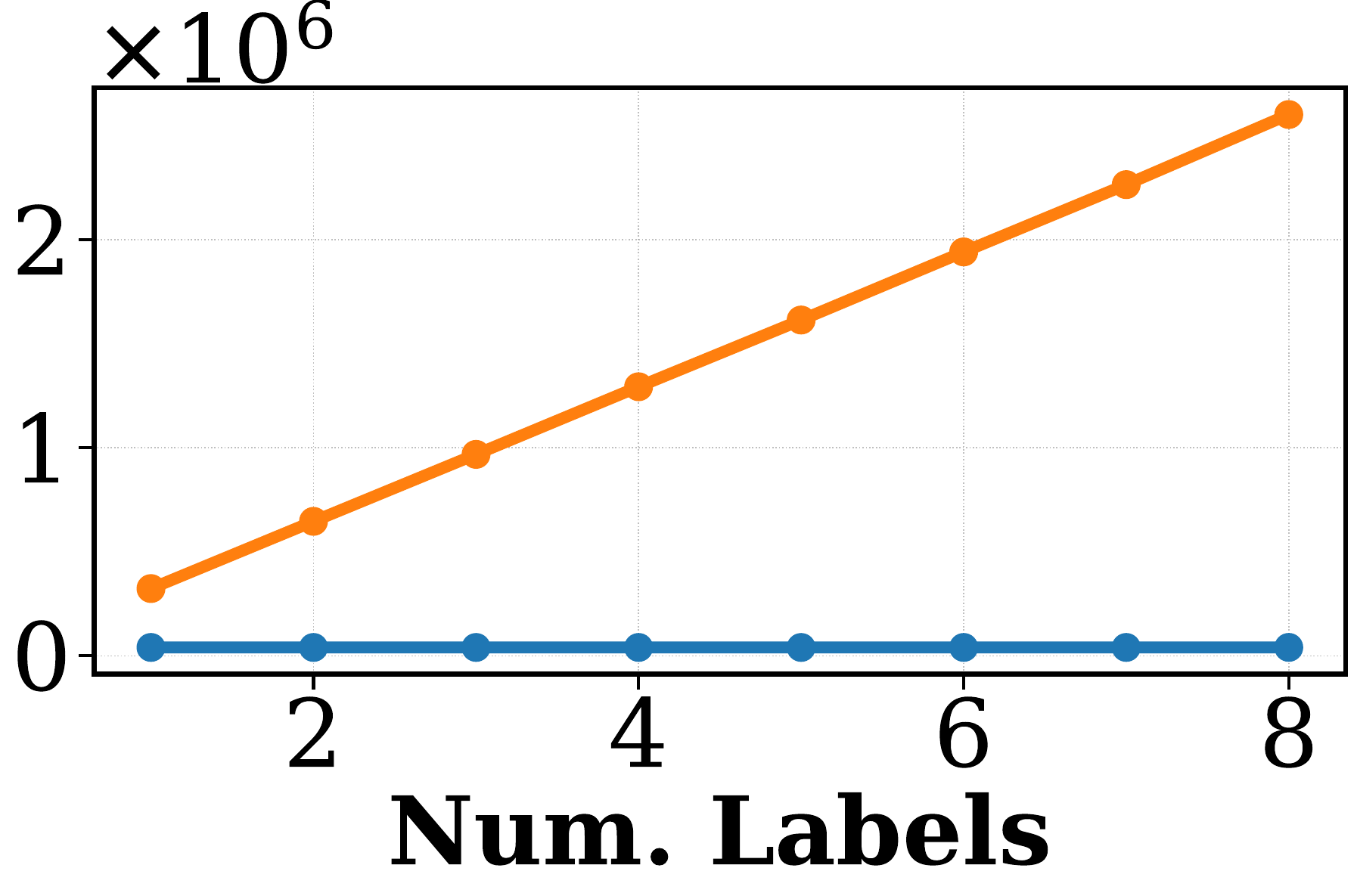}
        \label{fig:rq1c}
    \end{subfigure}\vspace{-5mm}
    \caption{Effect of dataset parameters on the total number of Simpson's paradoxes (orange) and redundant paradox groups (blue) in synthetic data.}
    \label{fig:synthetic-rq1}
\end{figure}

\subsection{Q2: Scalability}
\label{sec:rq2}

We next evaluate the computational efficiency of our method on both real-world and synthetic datasets. 
Our experiments yield three main findings:  
(i) our base algorithm (cf.\ Alg.~\ref{alg:materialization},~\ref{alg:simpson}), without population pruning (Sec.~\ref{sec:population-pruning}), achieves an average $6.72\times$ speedup over brute-force baselines (cf.\ Alg.~\ref{alg:brute-force-materialization},~\ref{alg:brute-force}) on real-world datasets, enabled by DFS-based materialization and redundancy-aware paradox discovery;  
(ii) population pruning, which excludes populations covering fewer than 0.1\% of total records, yields an additional 50\% run time reduction on top of our base algorithm; and  
(iii) run time on synthetic datasets scales predictably with the total number of Simpson's paradoxes, as characterized in Section~\ref{sec:rq1} (Fig.~\ref{fig:synthetic-rq1}).

\subsubsection{Scalability on Real-World Datasets}
\label{sec:real-world-scalability}

Figure~\ref{fig:real-world-scala} (a), on the left, compares three methods: (1) the brute-force baseline (including brute-force materialization cf.\ Alg.~\ref{alg:brute-force-materialization} and paradox discovery cf.\ Alg.~\ref{alg:brute-force}); (2) our base algorithm (including DFS-based materialization cf.\ Alg.~\ref{alg:materialization} and redundancy-aware paradox discovery cf.\ Alg.~\ref{alg:simpson}) with no population pruning (Sec.~\ref{sec:population-pruning}); and (3) our base algorithm with 0.1\% population pruning.  

Across all datasets, the base algorithm achieves an average $6.72\times$ speedup relative to brute-force.  
This gain is driven by two key optimizations:  
(1) DFS-based materialization (Alg.~\ref{alg:materialization}) yields an average $4.94\times$ improvement by identifying upper and lower bounds of coverage groups, avoiding explicit enumeration of intermediate populations;  
(2) redundancy-aware paradox discovery (Alg.~\ref{alg:sibling-equivalence-pruning}--\ref{alg:simpson}) achieves an average $20.24\times$ improvement by eliminating repeated enumeration and evaluation of redundant paradoxes.  
Finally, pruning populations below the 0.1\% threshold reduces run time by an additional 50\%, confirming that small-coverage populations are abundant in practice.  
Together, these optimizations make the problem computationally tractable even for high-dimensional datasets such as Diabetes, with over 250{,}000 records and 35 attributes.

Figure~\ref{fig:real-world-scala} (b), on the right, shows the effect of varying the pruning threshold $\theta$ from 0\% to 0.2\% of total records.  
We observe a reciprocal ($1/x$-like) trend: even minimal pruning (0.05\%) reduces run time by an average of 41.2\%, as many low-coverage populations are immediately eliminated.  
Improvements diminish beyond 0.1\%, where most such populations have already been pruned and run time stabilizes.

\begin{figure}[t]
    \centering
    \begin{subfigure}[b]{0.284\textwidth}
    \centering
    \includegraphics[width=0.95\linewidth]{figures/brute-scala-compare.pdf}
    \end{subfigure}
    \hfill
    \begin{subfigure}[b]{0.176\textwidth}
    \centering
    \includegraphics[width=0.995\linewidth]{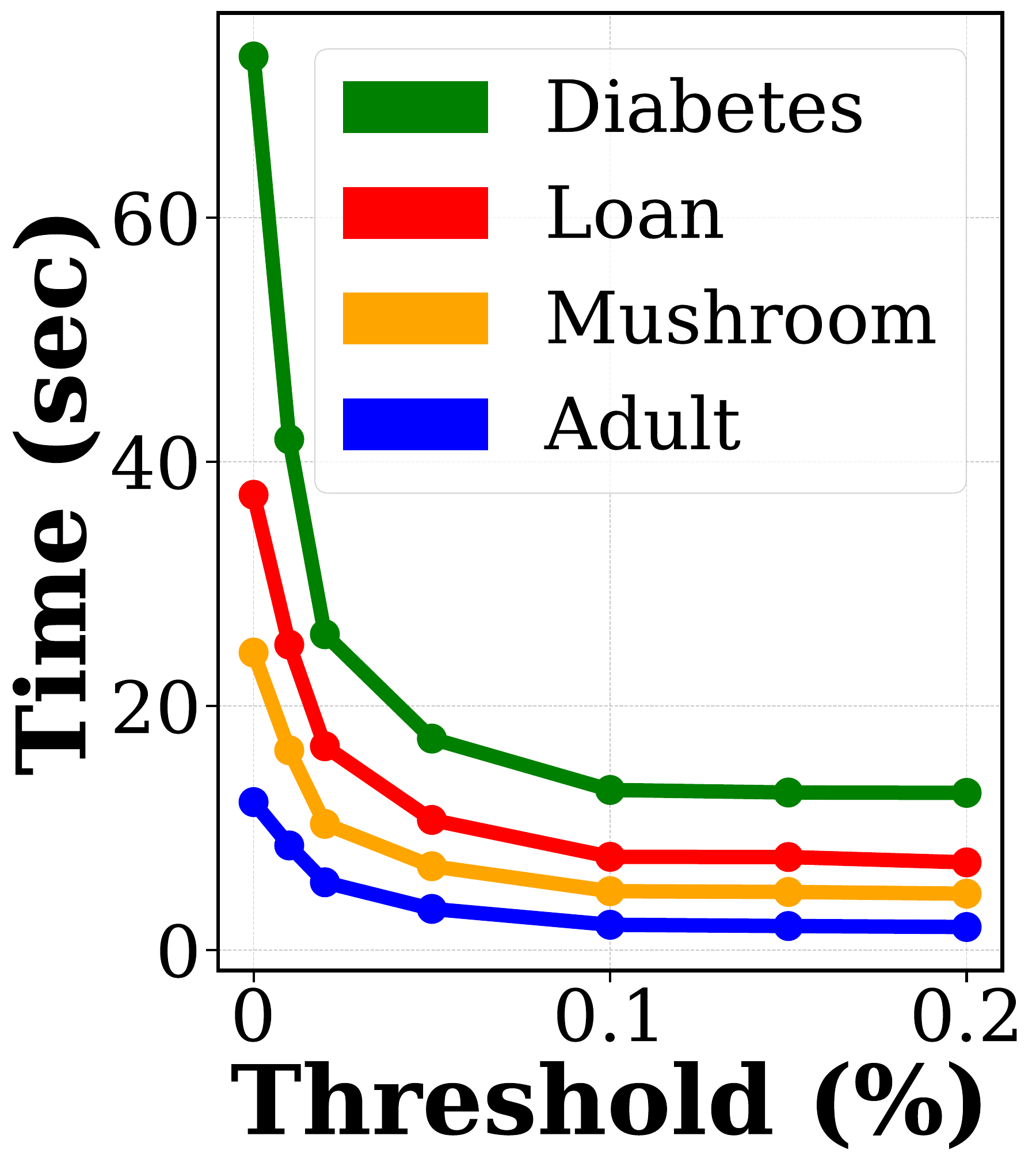}
    \end{subfigure}
    \caption{(a) Run time comparison on real-world datasets. Yellow shaded regions represent materialization time. (b) Run time vs.\ pruning threshold on real-world datasets.}
    \label{fig:real-world-scala}
\end{figure}

\subsubsection{Scalability on Synthetic Datasets}
\label{sec:synthetic-scalability}

\Cref{fig:synthetic-rq2} reports run time with respect to three parameters in synthetic datasets: number of attributes ($n$), domain cardinality ($d$), and number of labels ($m$), with default values $n=8$, $d=8$, and $m=4$.  
To stress-test scalability, we generate datasets with up to 30 million records. Each plot shows both materialization time and total run time.

Run time trends align closely with the total number of Simpson's paradoxes observed in Figure~\ref{fig:synthetic-rq1}.  
This correlation arises because the materialization phase dominates computation and directly depends on the number of populations:
\begin{itemize}
    \item \textbf{Number of categorical attributes ($n$):} As $n$ grows, the population lattice expands exponentially, since each population can branch into multiple children for the new attribute’s values. This explains the exponential increase in materialization time.
    \item \textbf{Domain cardinality ($d$):} Larger domains reduce the number of paradoxes (Figure~\ref{fig:synthetic-rq1}), which in turn reduces the number of populations requiring materialization. Run time therefore decreases after an initial peak.
    \item \textbf{Number of label attributes ($m$):} Each additional label requires computing frequency statistics for the same set of populations, yielding linear scaling in materialization time.
\end{itemize}

In contrast, paradox discovery time (the difference between total and materialization time) remains nearly constant across parameter settings.  
This stability follows from our synthetic generation procedure, where the number of redundant paradox groups is held constant regardless of the total number of paradoxes (\Cref{fig:synthetic-rq1}).

\begin{figure}[t]
    \centering
    \begin{subfigure}[b]{0.15\textwidth}
        \centering
        \includegraphics[width=\textwidth]{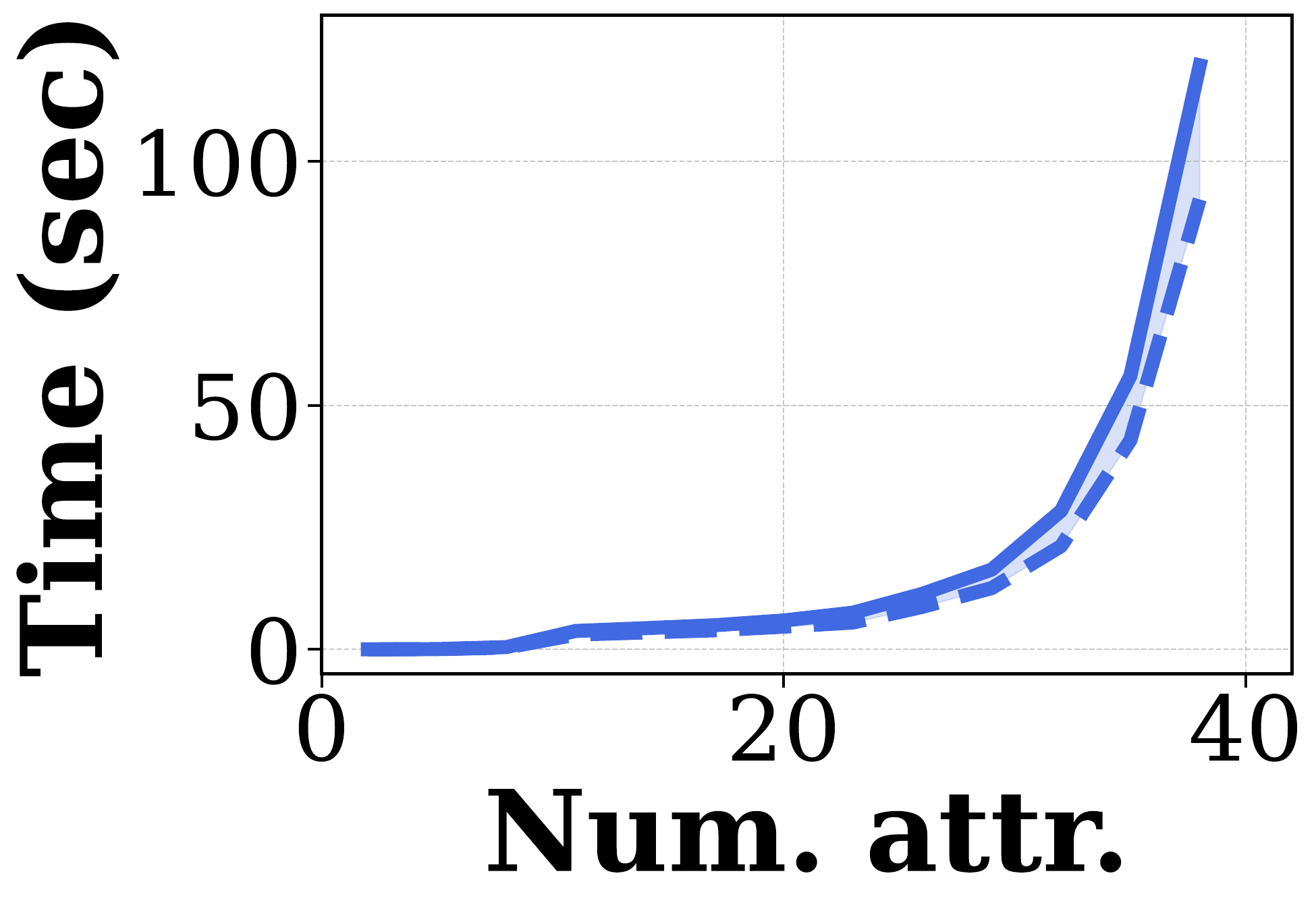}
    \end{subfigure}
    \hfill
    \begin{subfigure}[b]{0.15\textwidth}
        \centering
        \includegraphics[width=\textwidth]{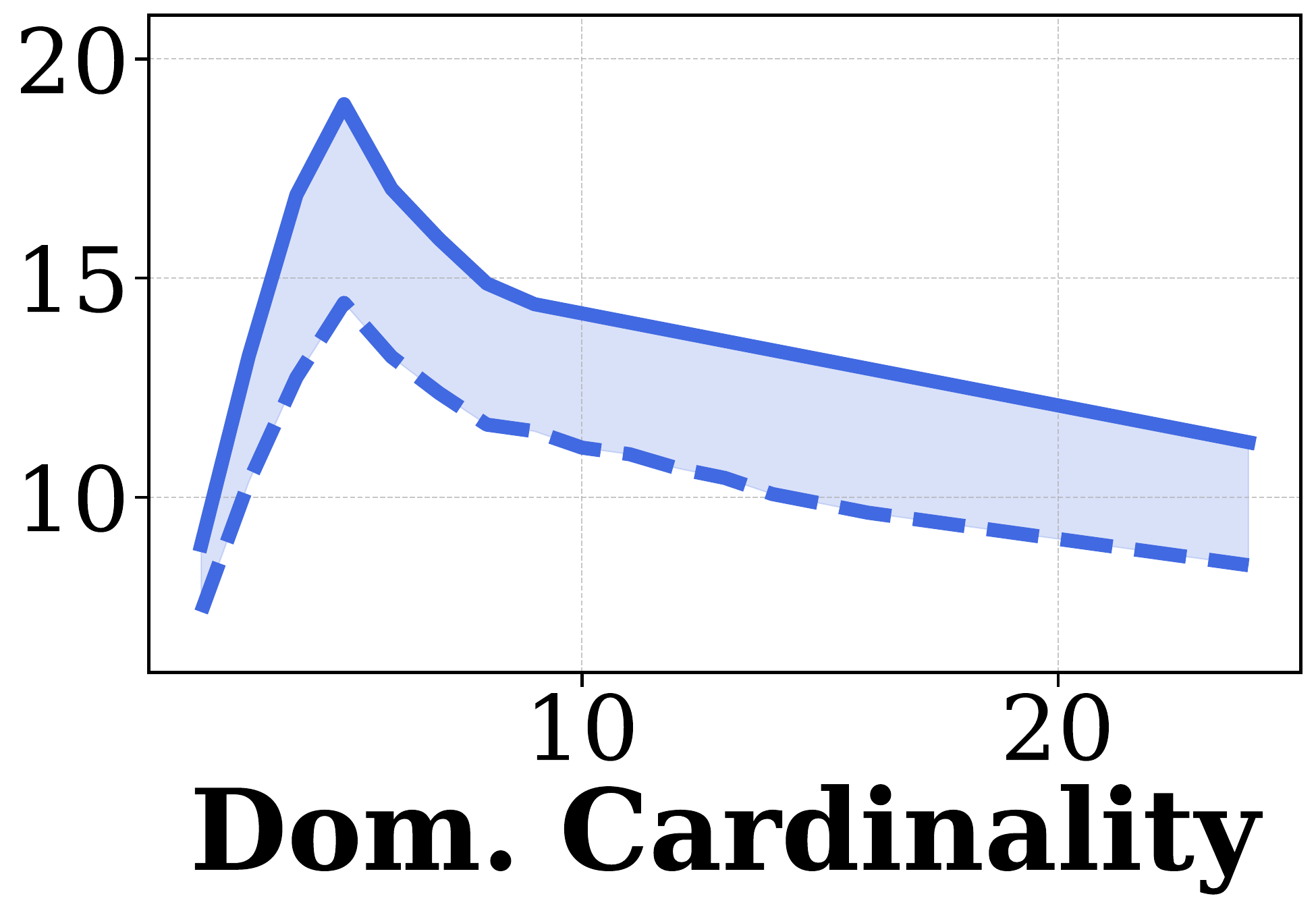}
    \end{subfigure}
    \hfill
    \begin{subfigure}[b]{0.15\textwidth}
        \centering
        \includegraphics[width=\textwidth]{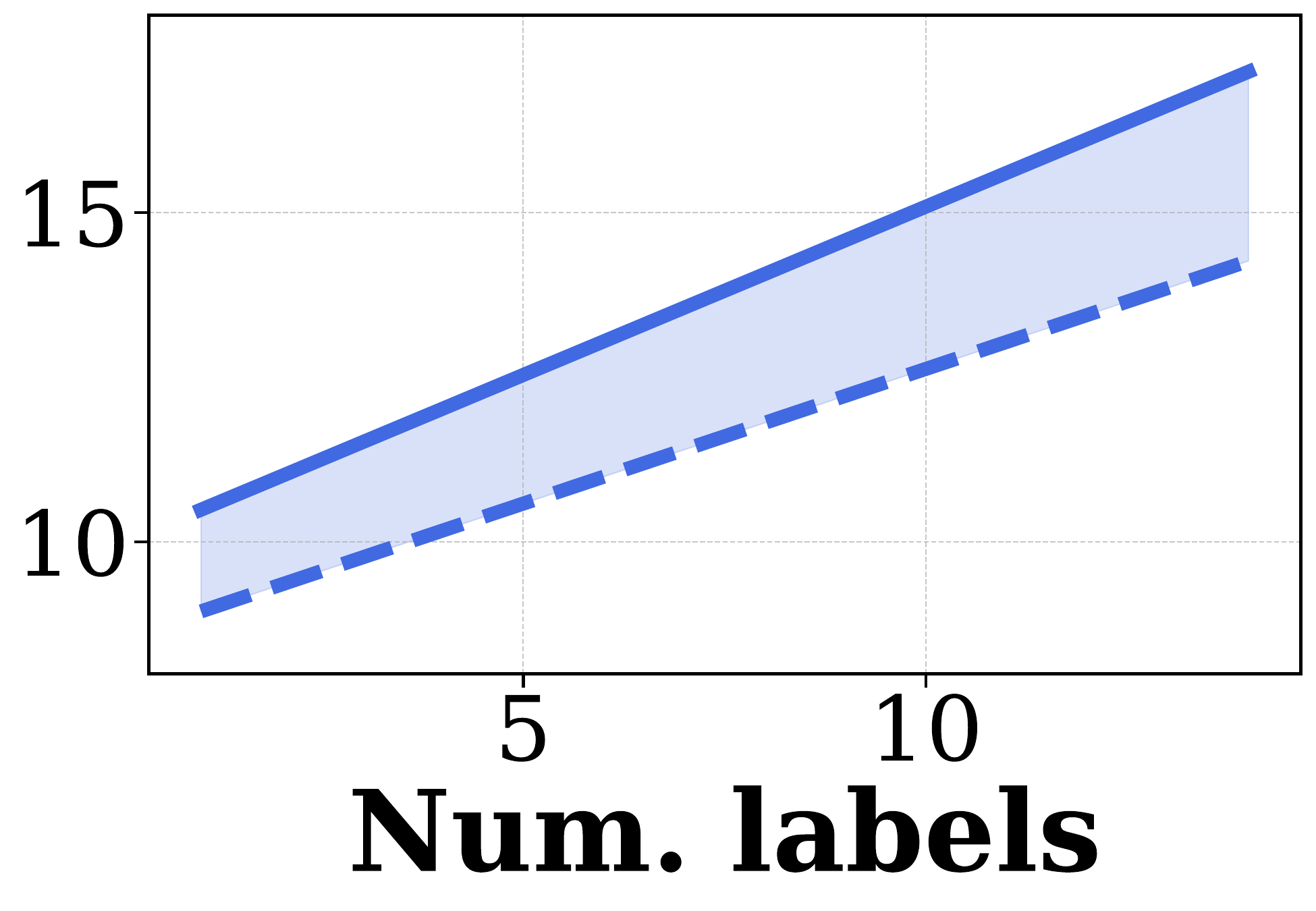}
    \end{subfigure}
    \caption{Run time scaling with synthetic dataset parameters. Solid lines denote total run time; dotted lines denote materialization time.}
    \label{fig:synthetic-rq2}
\end{figure}

\nop{
\subsubsection{Overall Performance Analysis}
Across all experiments, pruning consistently improves run time, with the largest gains in high-dimensional settings (large $n$) and dense coverage structures (small $d$). 
Materialization benefits most, confirming that early elimination of sparse populations effectively reduces the search space.  

These results demonstrate that our method achieves sub-minute run times on real-world datasets of moderate dimensionality ($n \leq 12$) even without pruning. 
For higher dimensions, the 1\% pruning threshold makes otherwise intractable cases feasible by focusing on statistically significant populations.
}
\nop{
\subsubsection{Impact of Pruning Threshold}
\Cref{fig:real-rq2} shows run time performance on real-world datasets as the pruning threshold varies. 
Both materialization and total run time decrease rapidly for small thresholds, with diminishing returns thereafter. 

Without pruning (0\%), the average total run time is about 17 seconds, of which materialization accounts for 14 seconds. 
Introducing a small threshold (0.05\%) reduces run time by nearly 40\%. 
The curve follows a reciprocal ($1/x$-like) trend: initial pruning eliminates many sparse populations, while further increases prune relatively few additional populations, leading to diminishing gains.  

Variance across datasets also decreases with higher thresholds, yielding more consistent performance. 
Beyond 0.2\%, run time stabilizes, as most statistically insignificant populations have already been removed.
}

\subsection{Q3: Are Coverage-Redundant Simpson's Paradoxes Robust?}
\label{sec:rq3}

Beyond demonstrating the prevalence of (coverage) redundant Simpson's paradoxes, we now investigate whether these paradoxical reversals and redundancies reflect \emph{genuine structural properties} of the data or are merely random artifacts introduced by noise or errors in data collection.  
To this end, we adopt a perturbation-based framework to test the \emph{robustness} of paradoxes and redundancies.  

At a high level, we measure \emph{tolerance:} given an observed Simpson's paradox (or redundancy), we quantify how much the data can be perturbed before the paradoxical (or redundant) relationship disappears. Robust patterns should persist under small perturbations, whereas random artifacts should vanish quickly.  

We evaluate two aspects:  
(1) the \textbf{robustness} of individual Simpson's paradoxes under label and record perturbations; and  
(2) the \textbf{persistence} of coverage redundancies under record perturbations.  

\subsubsection{Robustness of Individual Simpson's Paradoxes}

We first examine whether Simpson's paradoxes persist under perturbations. Any frequency statistics $P(Y | s)$ can be decomposed as
\[
P(Y|s) = \sum_{v \in \Dom(X)} \frac{|\cov(s\substitute{X}{v})|}{|\cov(s)|} \, P(Y | s\substitute{X}{v}).
\]
This weighted sum consists of two components: (i) weights representing record distribution across sub-populations, and (ii) frequency statistics within each sub-population. Simpson's paradoxes emerge from specific interactions between these components.  

Accordingly, for each paradox $p = (s_1, s_2, X, Y)$, we apply two perturbation strategies:  
\begin{itemize}
    \item \textbf{Label perturbation:} Randomly flip labels of 5\% of the records in $\cov(s_1) \cup \cov(s_2)$ to alter the frequency statistics $P(Y | s\substitute{X}{v})$. This tests whether paradoxical reversals depend critically on exact label assignments.  
    \item \textbf{Coverage perturbation:} Randomly modify the separator attribute $X$ for 5\% of records to change the weights $|\cov(s\substitute{X}{v})| / |\cov(s)|$ across sub-populations. We then reassign labels in each sub-population according to their original frequency statistics, thereby isolating the effect of record distribution.  
\end{itemize}

Each perturbation is repeated 10{,}000 times, and robustness is measured as the survival rate (percentage of trials where the paradox persists).  
Figure~\ref{fig:real-rq3} shows robustness under label perturbations. The fraction of robust paradoxes -- those surviving in at least 95\% of trials -- increases with higher pruning thresholds. This indicates that paradoxes supported by larger populations are more tolerant to perturbations.  

\begin{figure}[t]
    \centering
    \begin{subfigure}[b]{0.22\textwidth}
        \centering
        \includegraphics[width=\textwidth]{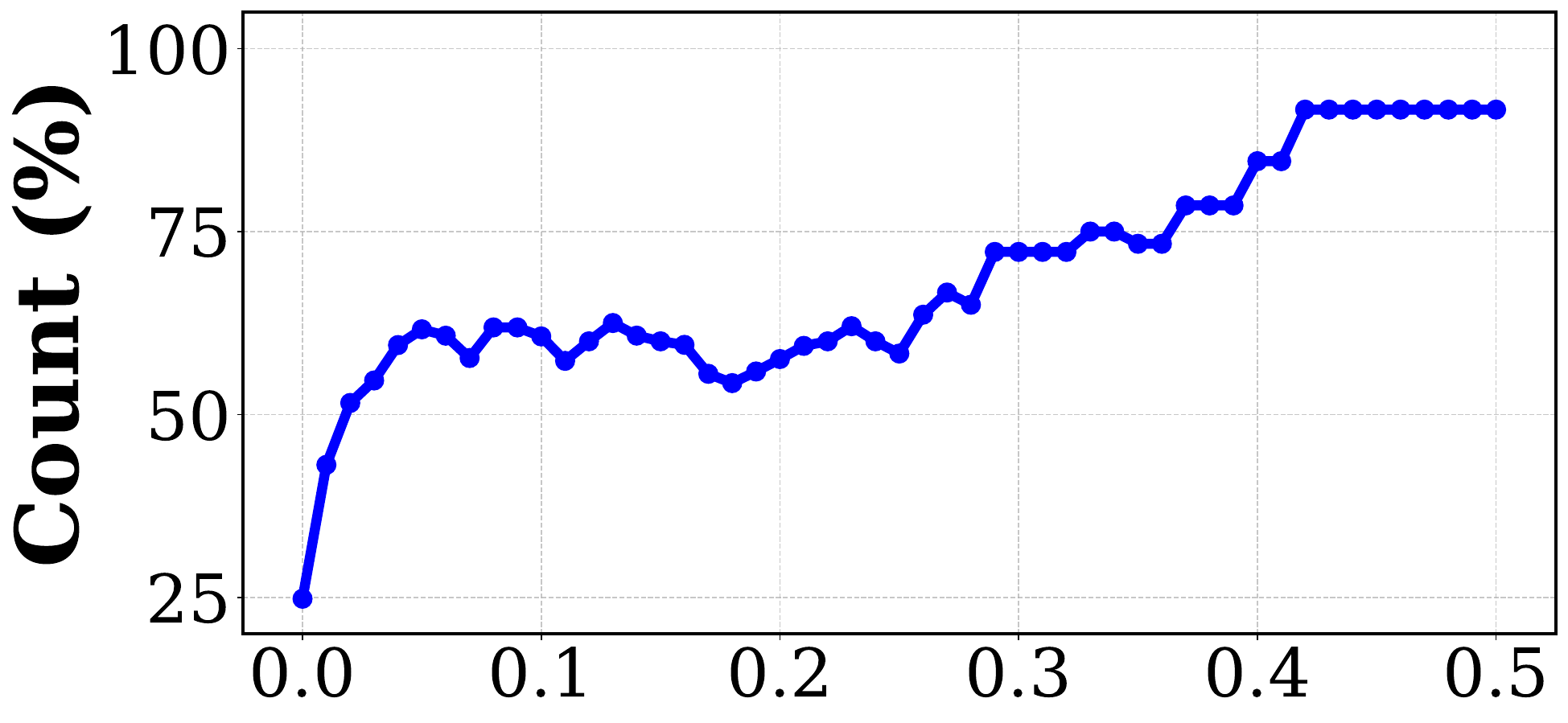}
        \caption{Adult}
        \label{fig:rq3a}
    \end{subfigure}
    \hspace{0.025\textwidth}
    \begin{subfigure}[b]{0.22\textwidth}
        \centering
        \includegraphics[width=\textwidth]{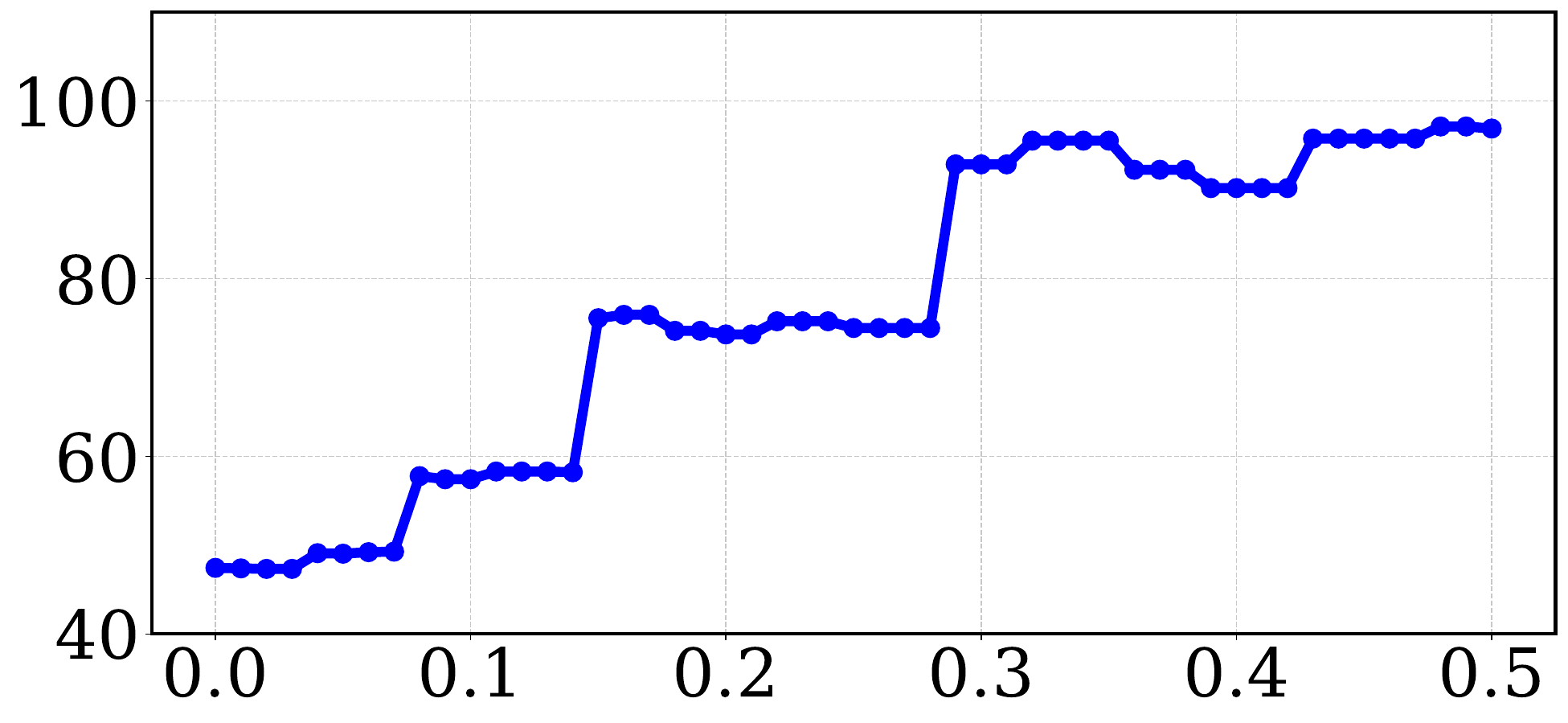}
        \caption{Mushroom}
        \label{fig:rq3b}
    \end{subfigure}
    \begin{subfigure}[b]{0.22\textwidth}
        \centering
        \includegraphics[width=\textwidth]{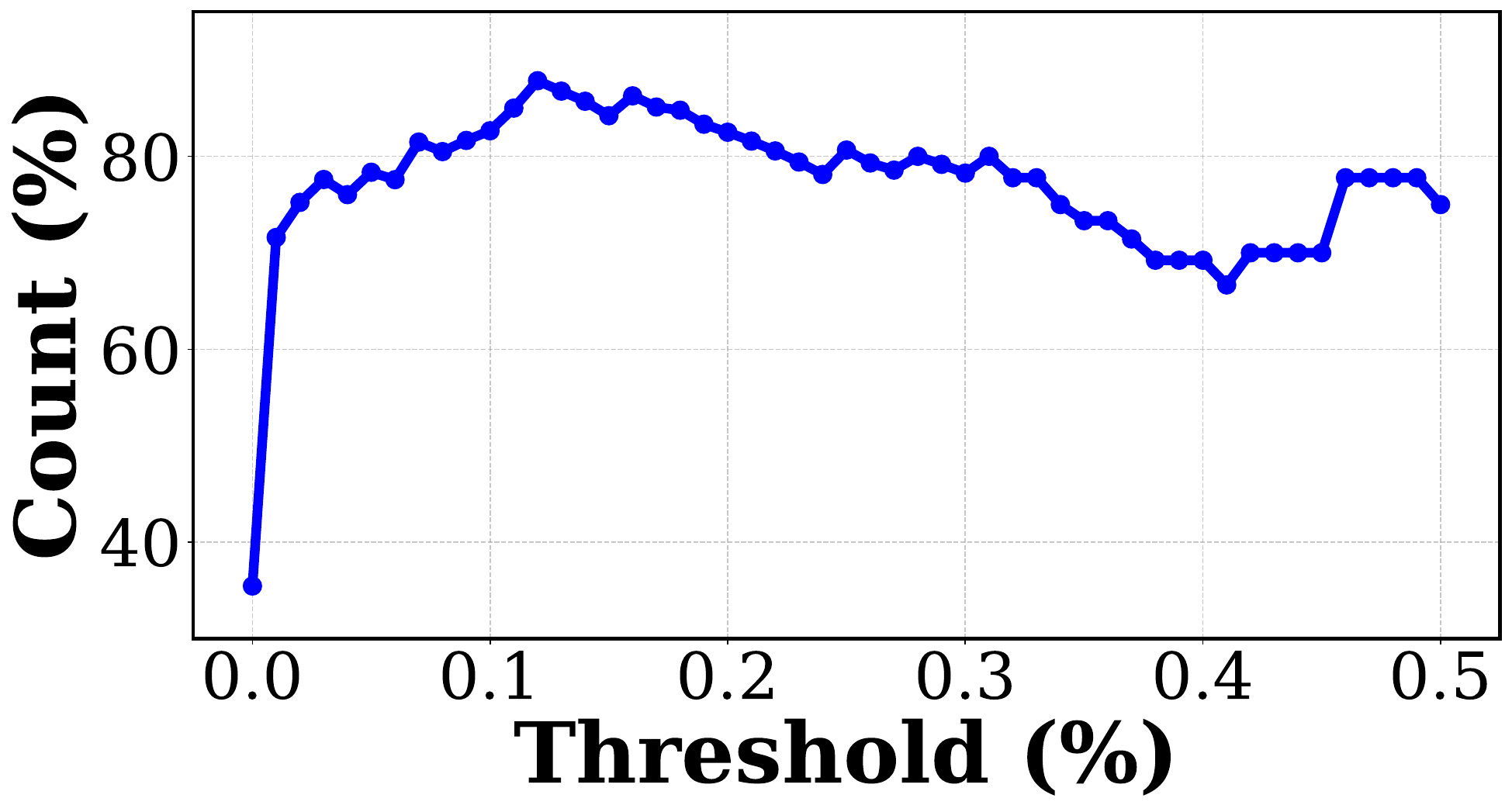}
        \caption{Loan}
        \label{fig:rq3c}
    \end{subfigure}
    \hspace{0.025\textwidth}
    \begin{subfigure}[b]{0.22\textwidth}
        \centering
        \includegraphics[width=\textwidth]{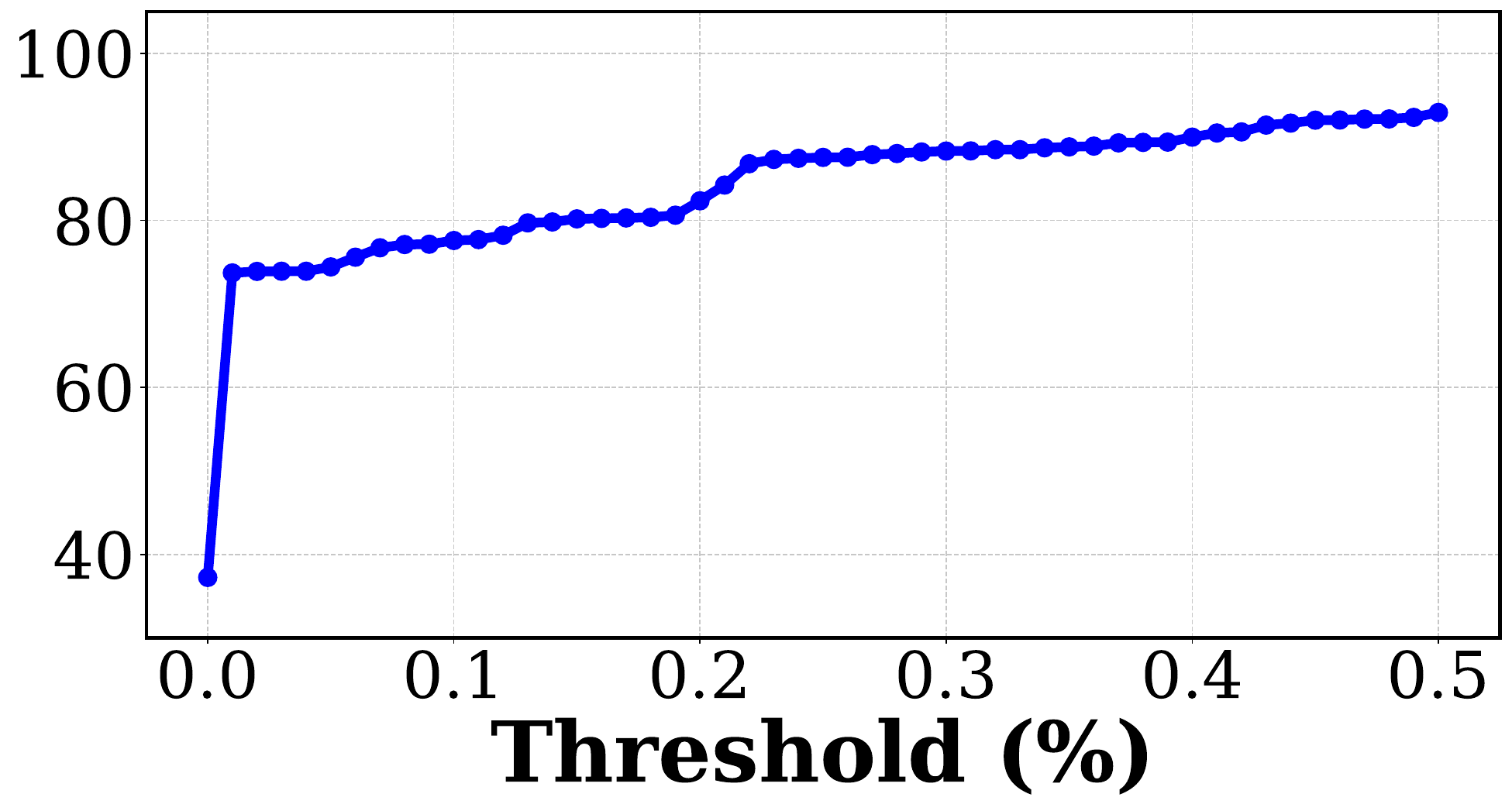}
        \caption{Diabetes}
        \label{fig:rq3d}
    \end{subfigure}
    \caption{Fraction of structurally robust Simpson's paradoxes vs.\ pruning threshold across four real-world datasets.}
    \label{fig:real-rq3}
\end{figure}

\subsubsection{Robustness of Redundant Groups}

Sibling child and separator equivalence rely on populations with identical coverage, which can be organized into convex coverage groups. Redundancy is therefore robust if coverage identicality persist under perturbations. We test robustness with two strategies:  
\begin{itemize}
    \item \textbf{Sibling child equivalence} (\Cref{prop:sibling-eq}): For sibling-child-equivalent paradoxes $p=(s_1,s_2,X,Y)$ and $p'=(s_1',s_2',X,Y)$, we randomly alter one attribute value in 5\% of the records in $\cov(s_1)\cup\cov(s_2)$. This tests whether $\cov(s_1)=\cov(s_1')$ and $\cov(s_2)=\cov(s_2')$ remain intact.  
    \item \textbf{Separator equivalence} (\Cref{prop:division-equivalence}): For separator-equivalent paradoxes $p=(s_1,s_2,X,Y)$ and $p'=(s_1,s_2,\\X',Y)$, we perturb 5\% of records in $\cov(s_1)\cup\cov(s_2)$ on attributes other than $X$ and $X'$. The one-to-one mapping $f:X \mapsto X'$ is preserved, and we test whether equivalence $\cov(s\substitute{X}{v}) = \cov(s\substitute{X'}{f(v)})$ persists despite changes in other attributes.  
\end{itemize}

As before, each perturbation is repeated 10{,}000 times, and robustness is measured as the survival rate of redundant paradox groups.  
Figure~\ref{fig:cov-sign-rq3} reports the results. The fraction of robust redundant groups rises with higher pruning thresholds, indicating that sibling- and division-equivalent paradoxes with larger coverage are more tolerant to perturbations.  

\begin{figure}[t]
    \centering
    \begin{subfigure}[b]{0.21\textwidth}
        \centering
        \includegraphics[width=\textwidth]{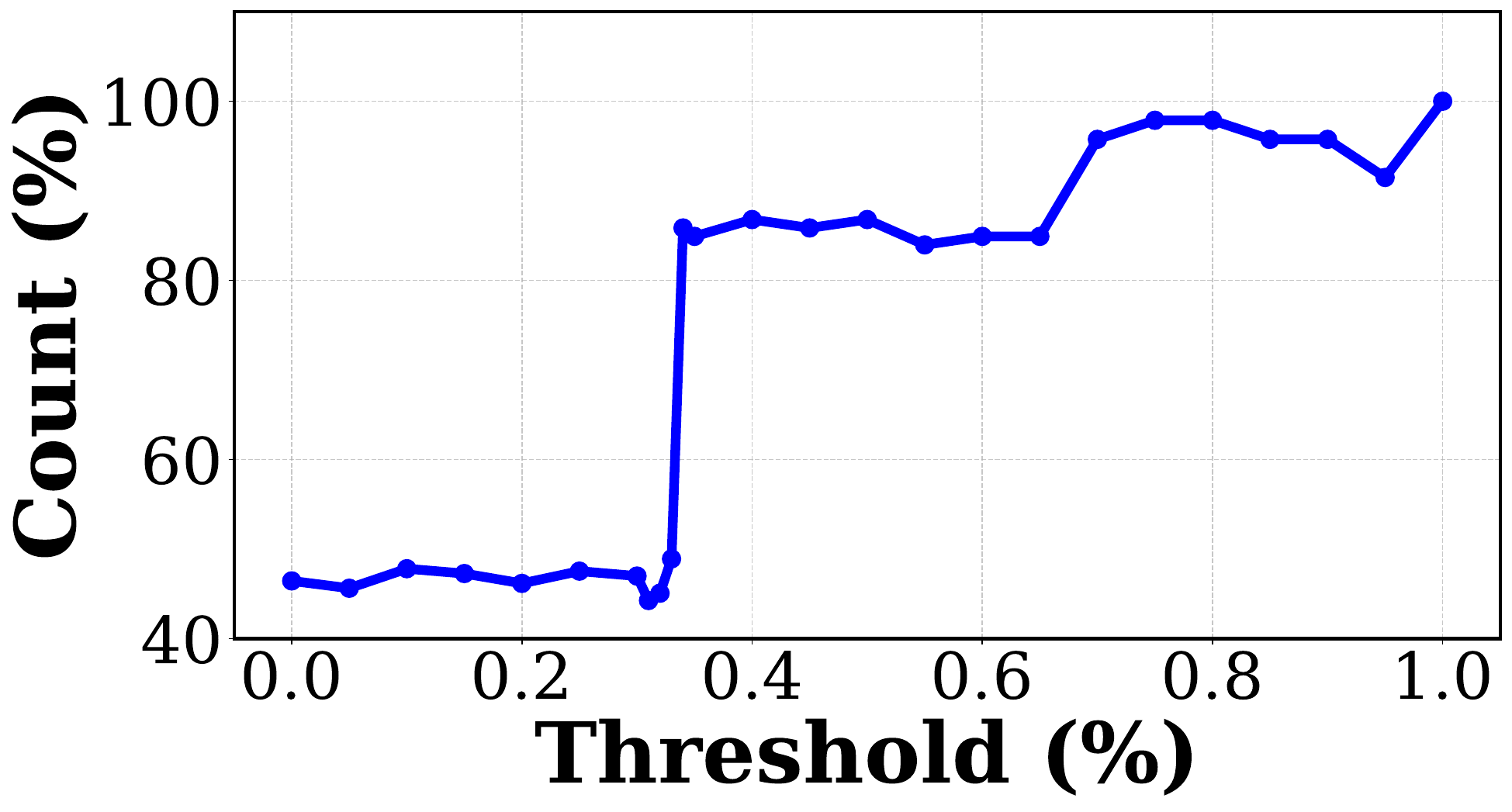}
        \caption{Adult}
        \label{fig:rq3a}
    \end{subfigure}
    \hspace{0.025\textwidth}
    \begin{subfigure}[b]{0.23\textwidth}
        \centering
        \includegraphics[width=\textwidth]{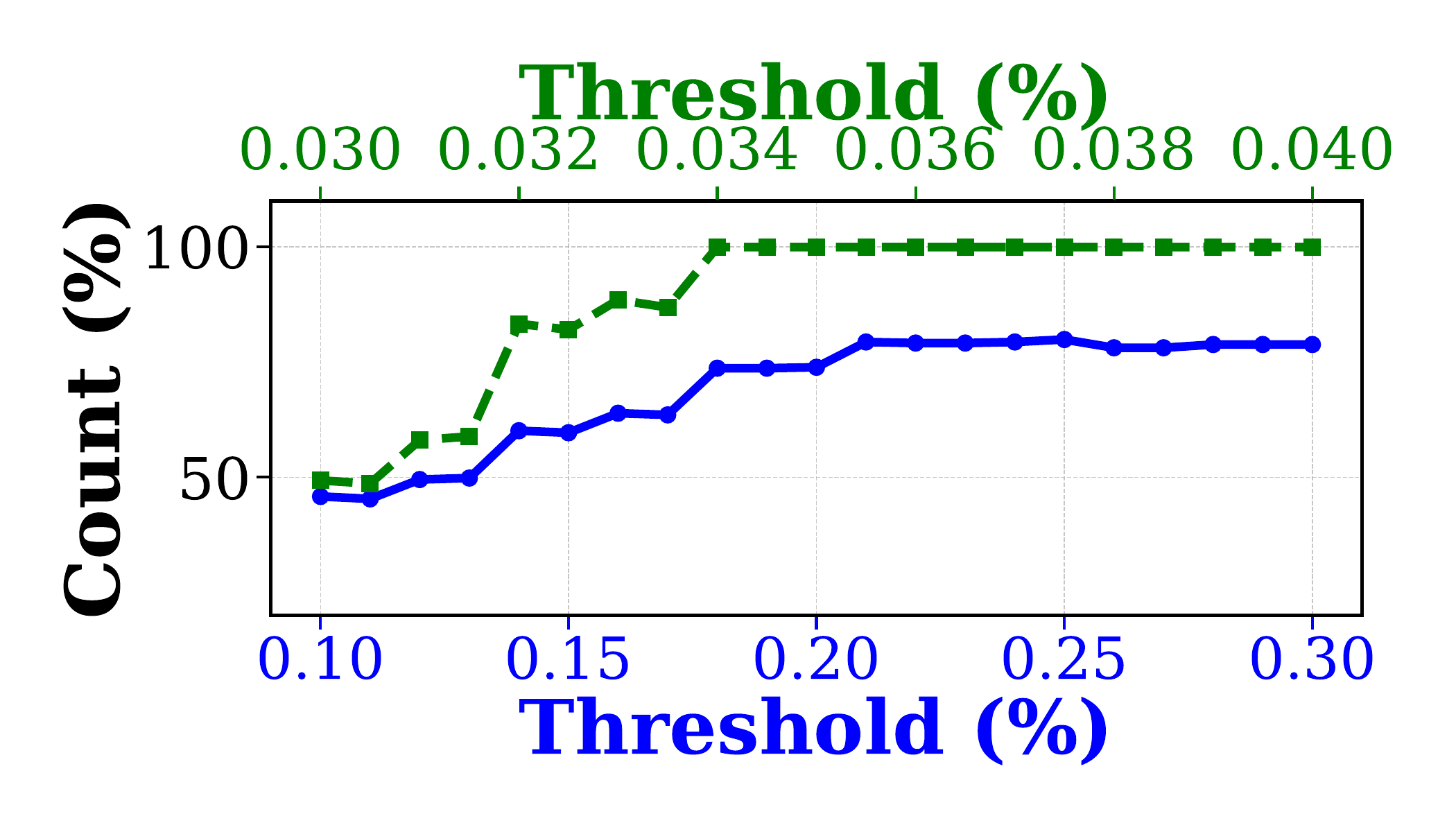}
        \caption{Mushroom}
        \label{fig:rq3b}
    \end{subfigure}

    \begin{subfigure}[b]{0.21\textwidth}
        \centering
        \includegraphics[width=\textwidth]{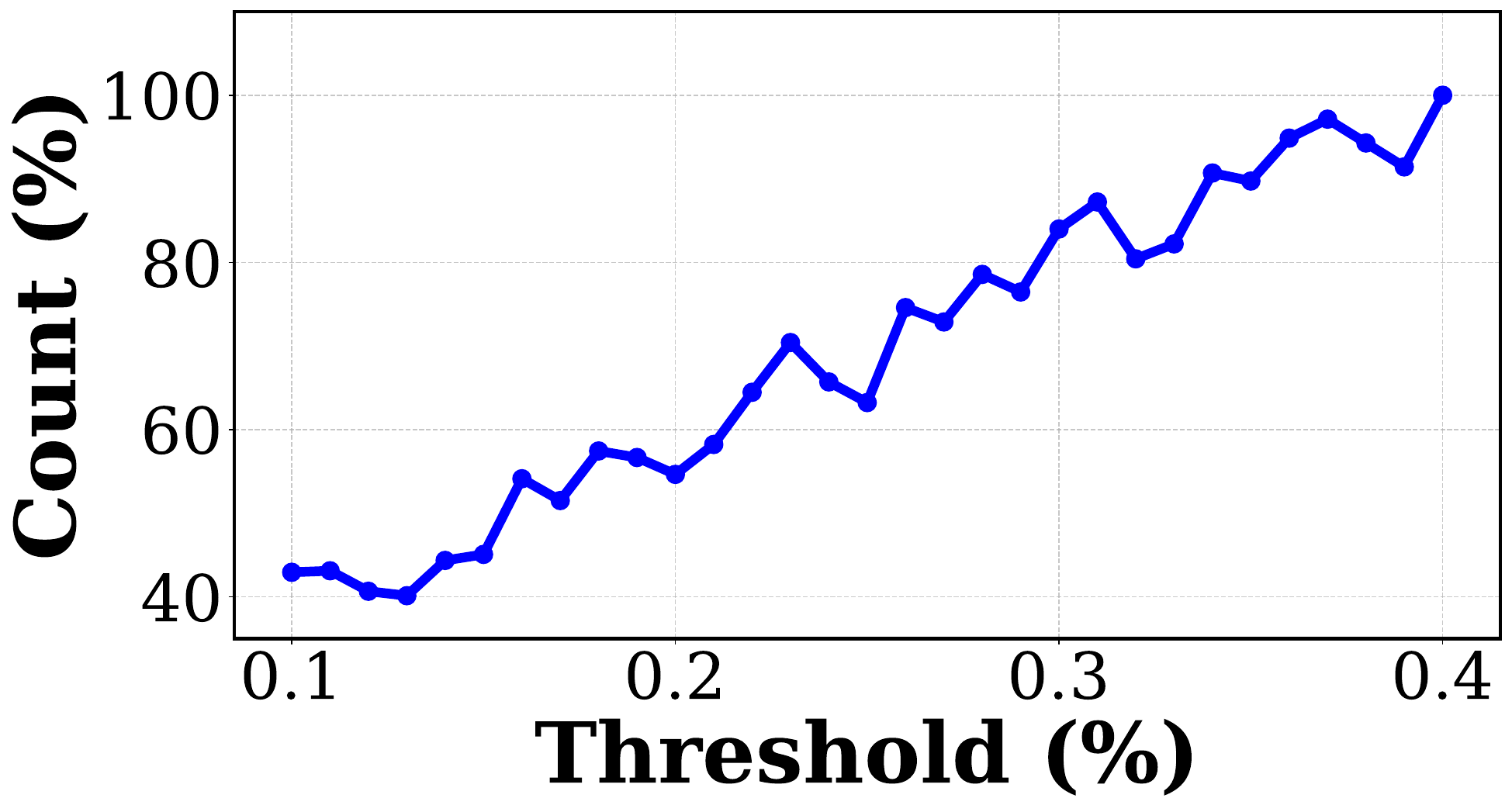}
        \caption{Loan}
        \label{fig:rq3c}
    \end{subfigure}
    \hspace{0.025\textwidth}
    \begin{subfigure}[b]{0.23\textwidth}
        \centering
        \includegraphics[width=\textwidth]{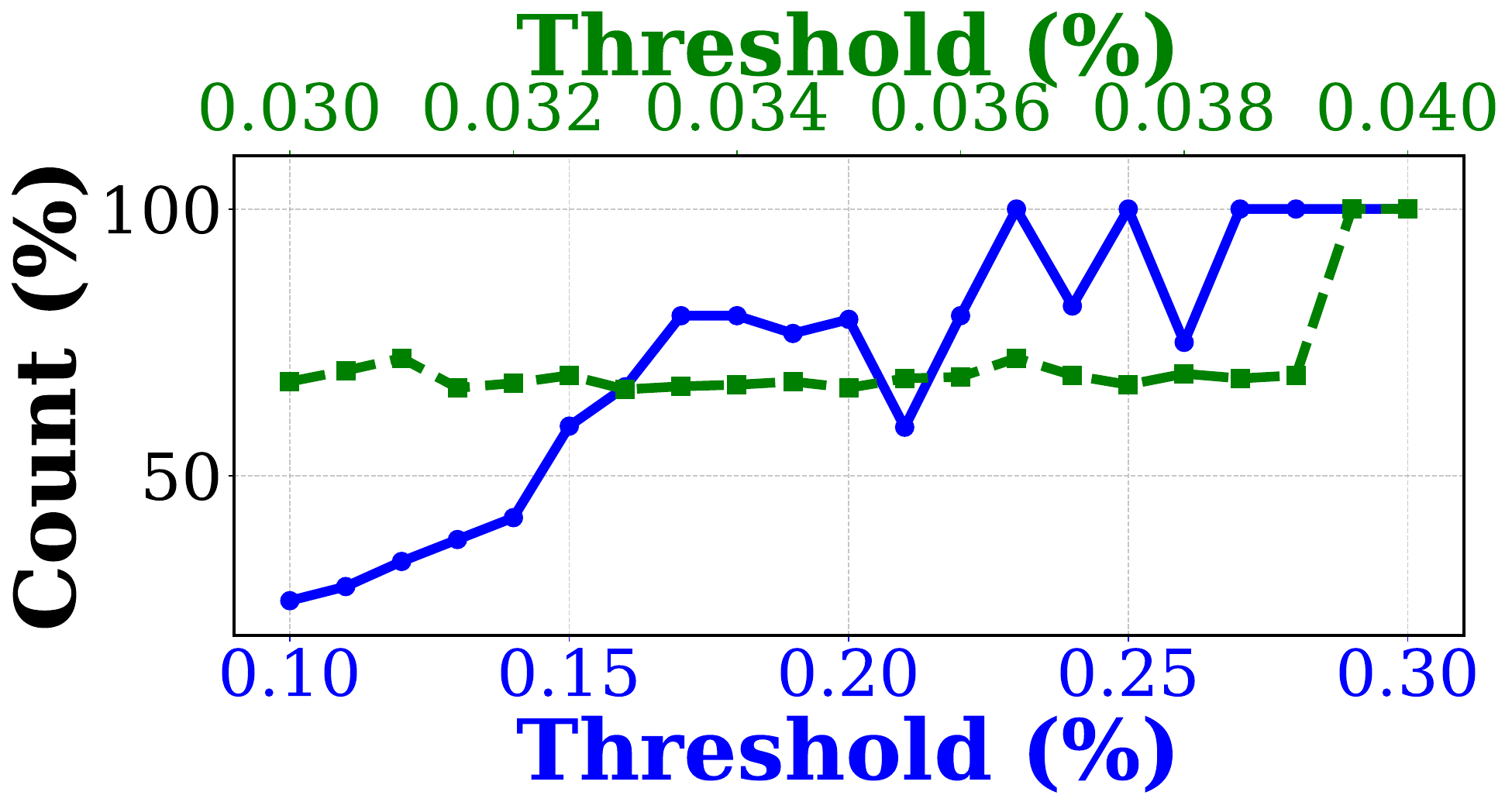}
        \caption{Diabetes}
        \label{fig:rq3d}
    \end{subfigure}
    \caption{Fraction of robust redundant paradox groups vs.\ pruning threshold across four real-world datasets.}
    \label{fig:cov-sign-rq3}
\end{figure}

\subsection*{Summary of Findings}

Across RQ1-RQ3, our experiments highlight three main insights.  
First, coverage-redundant Simpson's paradoxes are common: they represent a substantial fraction of all paradoxes in real-world datasets, with sibling child equivalence being the most frequent redundancy type.  
Second, our computational framework scales efficiently, achieving practical run times even on moderate- to high-dimensional data, and further benefits from population pruning that eliminate low-coverage populations early.  
Third, both individual paradoxes and their coverage redundancies exhibit structural robustness under data perturbations, confirming that these patterns reflect genuine properties of the data rather than random noise or collection errors.  

Overall, our findings show that coverage-redundant Simpson's paradoxes are prevalent, can be detected efficiently, and capture meaningful structural characteristics in real-world datasets.

\nop{
\section{Experiments}
In this section, we explore instances of coverage redundant Simpson's paradoxes within both real-world and synthetically generated datasets, as detailed in Section~\ref{sec:datasets}. Specifically, our analysis is guided by three research questions (RQs): \textbf{RQ1} examines whether coverage redundant Simpson's paradoxes are rare in practice (Section~\ref{sec:rq1}); \textbf{RQ2} evaluates the scalability of our computational method (Section~\ref{sec:rq2}); and \textbf{RQ3} investigates the statistical significance of the discovered coverage redundant Simpson's paradoxes (Section~\ref{sec:rq3}). For each research question, we analyze quantitative results through comprehensive experiments and provide insights into the nature of coverage redundant Simpson's paradoxes. Our findings demonstrate that redundant Simpson's paradoxes are prevalent in real-world data, our computational method exhibits practical scalability, and the discovered redundant Simpson's paradoxes exhibit statistical significance beyond random chance.

\subsection{Datasets}
\label{sec:datasets}
We evaluate our methods on both real-world categorical datasets and synthetically generated datasets with controlled parameters. This approach allows us to comprehensively assess both the prevalence of coverage redundant Simpson's paradoxes in practice and the efficiency of our computational methods.

\subsubsection{Real World Dataset}
\label{sec:real-world-data}
We analyze four real-world datasets spanning different domains :
\begin{itemize}
    \item \textbf{Adult:} This census income dataset contains 48,842 records with 8 categorical attributes (including education, occupation, and marital status) and a binary label indicating whether an individual's annual income exceeds \$50K~\cite{misc_adult_2}.
    \item \textbf{Mushroom:} Comprising 8,124 records, this dataset describes mushroom characteristics through 22 categorical attributes (such as cap shape, odor, and habitat), with edibility as the binary label~\cite{misc_mushroom_73}.
    \item \textbf{Loan:} This financial dataset contains approximately 3 million loan application records with 12 categorical attributes (including loan purpose, employment length, and home ownership) and loan approval status as the binary label.\footnote{\url{https://www.kaggle.com/datasets/ikpeleambrose/irish-loan-data}}
    \item \textbf{CDC Diabetes Health Indicators:} Comprising healthcare statistics from 253,681 individuals, this dataset contains 35 categorical attributes (covering demographics, lab results, and lifestyle factors) and a binary label indicating diabetes status (healthy or diabetic).\footnote{\url{https://www.kaggle.com/datasets/alexteboul/diabetes-health-indicators-dataset/data}}
\end{itemize}

\subsubsection{Synthetic Dataset}
\label{sec:synthetic}
Our synthetic data generation framework, described in Section 5, enables systematic evaluation under controlled settings. The generator accepts five key parameters that control the generated data:
\begin{itemize}
    \item $n$: Number of categorical attributes;
    \item $m$: Number of label attributes;
    \item $d$: Domain cardinality for each categorical attribute;
    \item $U$: The size of each instance of Simpson’s paradox;
    \item $t$: Total number of unique records.
\end{itemize}
For simplicity in our synthetic data generation, we assume a uniform domain cardinality $d$ across all categorical attributes, meaning each attribute has exactly $d$ possible values. Additionally, the generator provides boolean flags to include specific types of coverage equivalences in the generated data: sibling equivalence, division equivalence, and aggregate statistics equivalence.

\subsection{Are Coverage Redundant Simpson's Paradoxes Rare?}
\label{sec:rq1}
Our analysis demonstrates that coverage redundant Simpson's paradoxes are far from rare in real-world datasets. As shown in Table~\ref{tab:sp-real-world}, redundant Simpson's paradoxes constitute a substantial proportion of all discovered Simpson's paradoxes: 20.3\% in Adult, 47.8\% in Mushroom, 21.7\% in Loan, and 44.7\% in Diabetes datasets. Among the three types of coverage equivalences (Table~\ref{tab:sp-real-world}), sibling equivalence predominates across all datasets, while division equivalence appears less frequently and only in the Mushroom and Diabetes datasets (accounting for 10.7\% and 0.1\% of all redundant Simpson's paradoxes in these datasets, respectively). Aggregate statistics equivalence was not observed in our experiments, due to the presence of only singular label attributes in the tested datasets.

To better understand the structure of coverage redundancies, Figure~\ref{fig:dist-coverage-real-world} illustrates the distribution of group sizes of redundant Simpson's paradoxes across the four datasets. In the Adult, Mushroom, and Loan datasets, most groups contain 2–3 redundant Simpson's paradoxes. In the Diabetes dataset, some groups contain up to 30 redundant Simpson's paradoxes. This larger group size can be attributed to the combination of sibling and division equivalences.

We further examine the occurrence of redundant Simpson's paradoxes in real-world datasets through the lens of materialized populations. From Proposition~\ref{prop:gen-cov-eq}, we established that distinct populations can share identical coverage when datasets represent only a proper subset of the complete Cartesian product of their categorical attributes. As shown in Table~\ref{tab:real-world-statistics}, we observe that across all datasets, the number of unique records is substantially smaller than the size of the complete Cartesian product of their categorical attributes. As a result, $|S_{cov}|$ (\emph{i.e.,} the number of groups of coverage-identical populations) is significantly smaller than $|S|$ (\emph{i.e.,} the total number of distinct, non-empty populations).
\begin{table}[htbp]
  \centering
  \caption{Materialized statistics of real-world datasets.}
    \begin{tabular}{|c|c|c|c|c|}
    \hline
    Dataset & Adult & Mushroom & Loan  & \multicolumn{1}{c|}{Diabetes} \bigstrut\\
    \hline \hline
    \# Records &    48,842   &   8,124    &   3,387,379    &  253,681 \bigstrut\\
    \hline
    \# Uniques &   7,722    &   2,106    &   46,941    &  40,857 \bigstrut\\
    \hline
    $|\mathcal{S}|$   &   383,292    &   1,087,954    &    570,823   &  8,349,137 \bigstrut\\
    \hline
    $|\mathcal{S}_{cov}|$ &   76,693    &   38,582    &   308,253    & 3,215,900 \bigstrut\\
    \hline
    \end{tabular}%
  \label{tab:real-world-statistics}%
\end{table}%
\begin{table}[htbp]
  \centering
  \caption{Simpson's Paradoxes in real-world datasets.}
    \begin{tabular}{|c|c|c|c|c|}
    \hline
    Dataset & Adult & Mushroom & Loan  & Diabetes \bigstrut\\
    \hline \hline
    $|P|$     &   3,880    &   6,878    &   18,330    &  1,464,250 \bigstrut\\
    \hline
    $|I|$     &   3,460    &   4,931    &    16,293   &  1,065,189  \bigstrut\\
    \hline
    standalone &   3,094    &   3,590    &    14,354   &  809,388 \bigstrut\\
    \hline
    sibling  &   366    &   1,220    &   1,939    & 255,690 \bigstrut\\
    \hline
    division &    0   &   146    &   0    &  340 \bigstrut\\
    \hline
    agg. stats. &   0    &   0    &   0    & 0 \bigstrut\\
    \hline
    \end{tabular}%
  \label{tab:sp-real-world}%
\end{table}%
\begin{figure*}[t]
    \centering
    \begin{subfigure}[b]{0.24\textwidth}
        \centering
        \includegraphics[width=\textwidth]{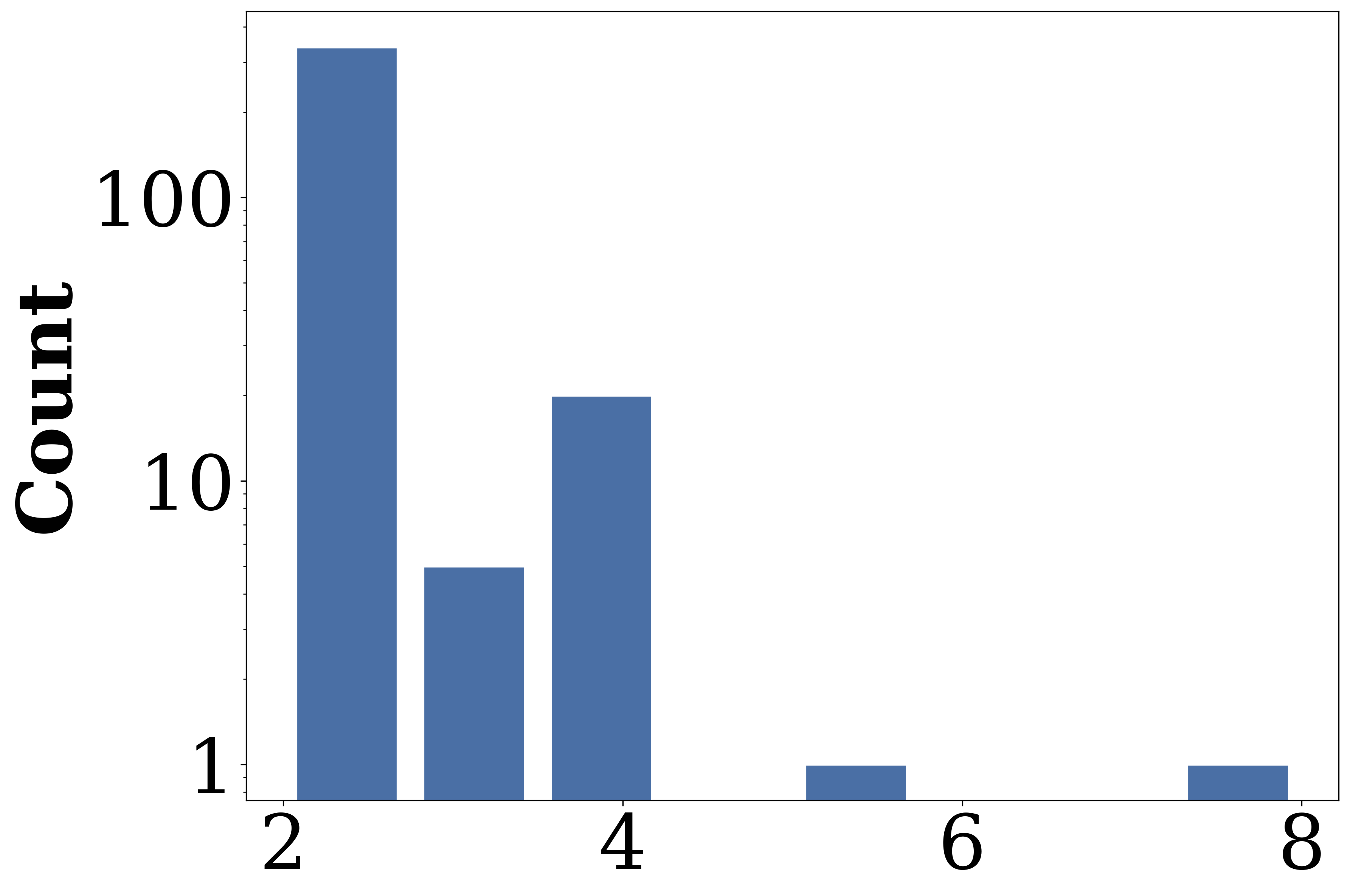}
        \caption{Adult}
    \end{subfigure}
    \hfill
    \begin{subfigure}[b]{0.24\textwidth}
        \centering
        \includegraphics[width=\textwidth]{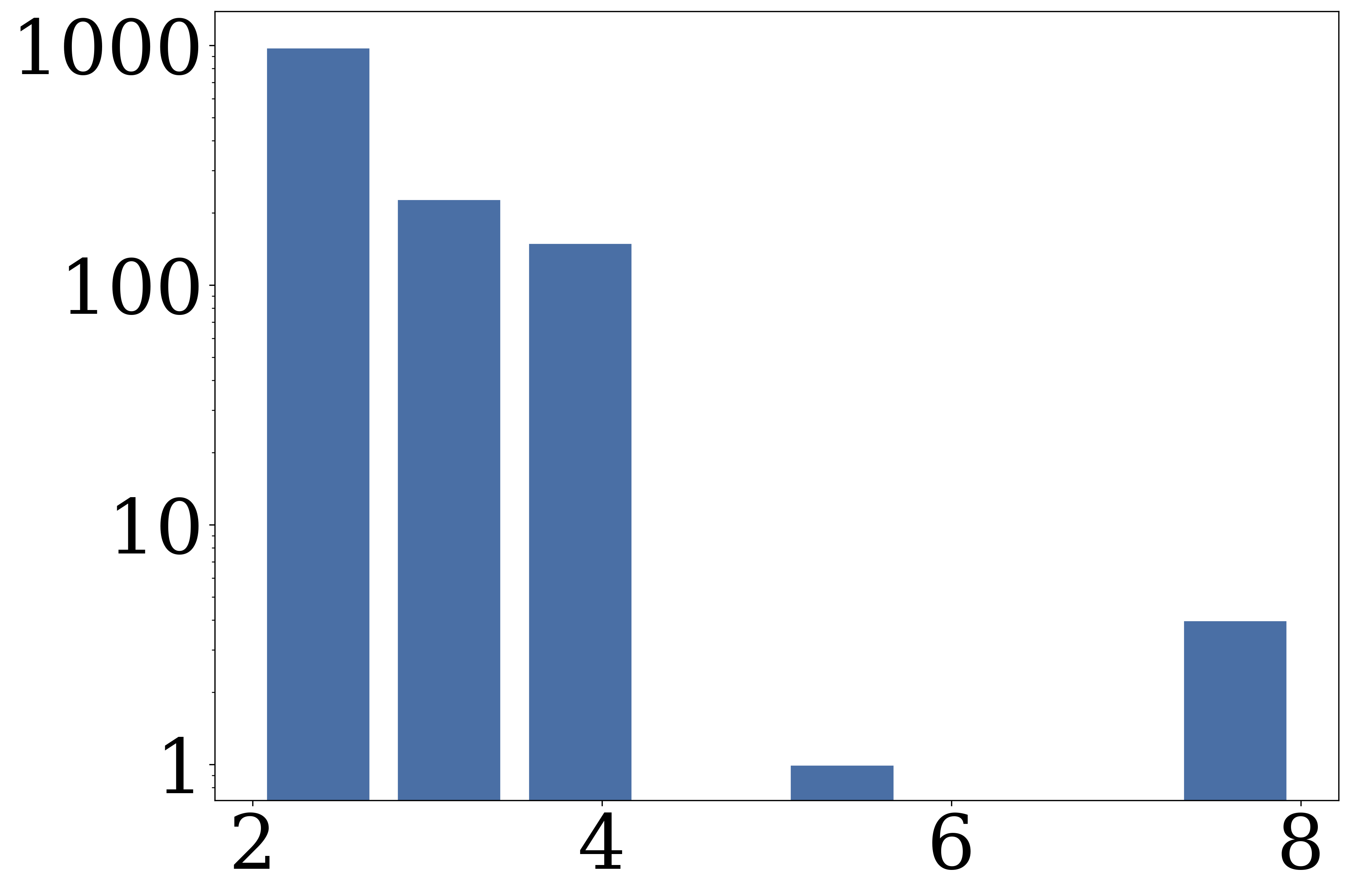}
        \caption{Mushroom}
    \end{subfigure}
    \hfill
    \begin{subfigure}[b]{0.24\textwidth}
        \centering
        \includegraphics[width=\textwidth]{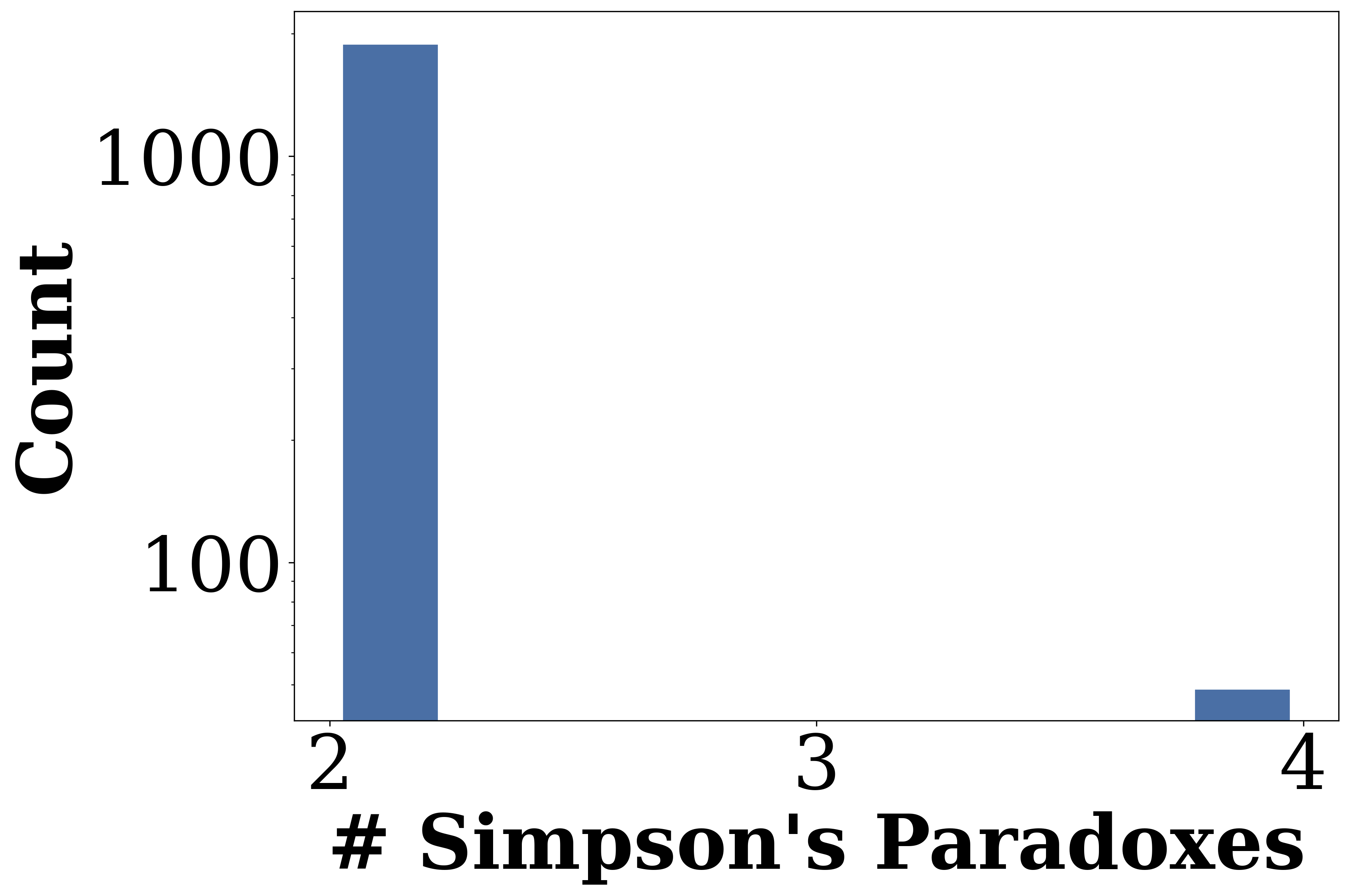}
        \caption{Loan}
    \end{subfigure}
    \hfill
    \begin{subfigure}[b]{0.24\textwidth}
        \centering
        \includegraphics[width=\textwidth]{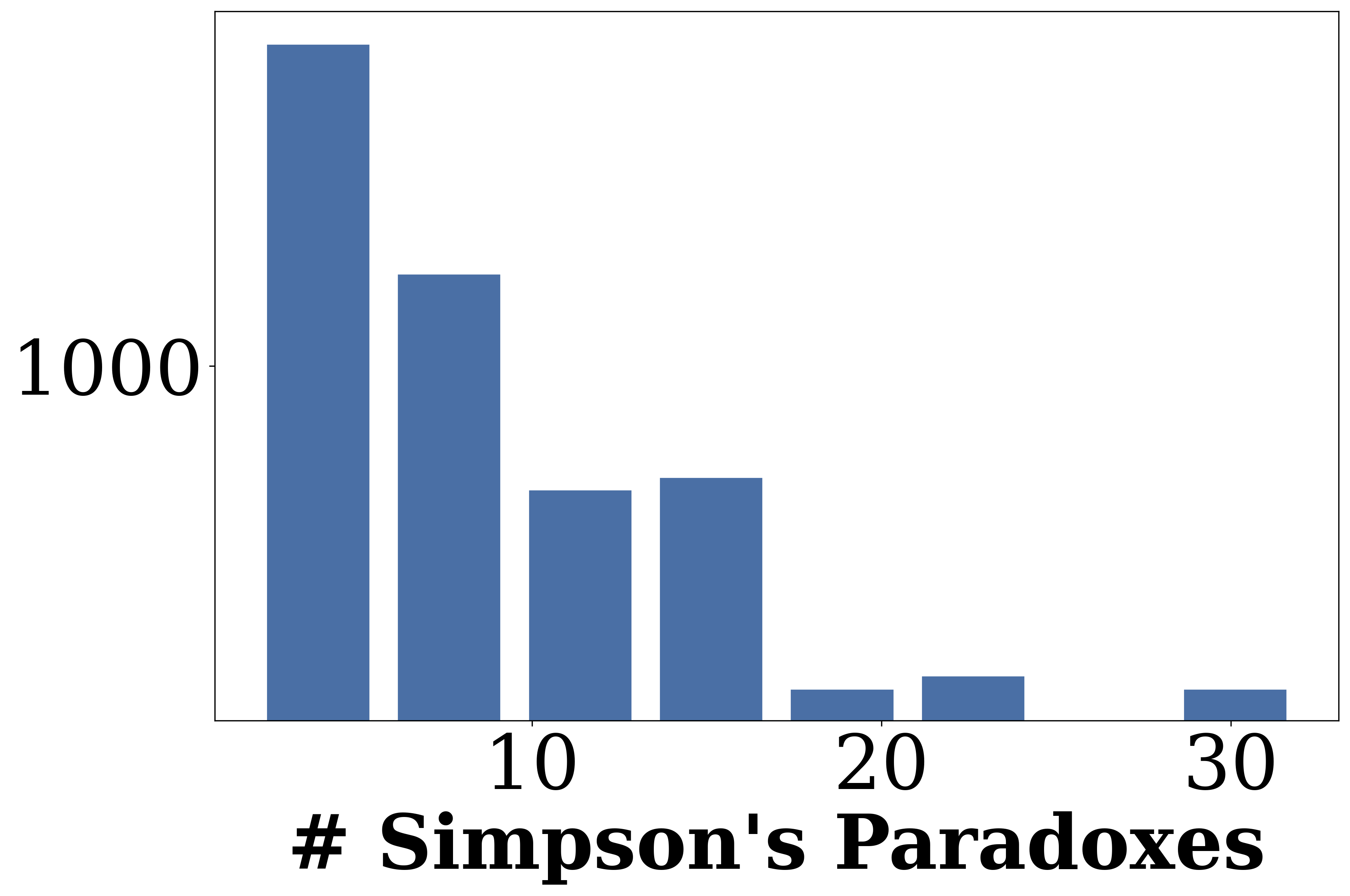}
        \caption{Diabetes}
    \end{subfigure}
    \caption{Distribution of the number of Simpson's paradoxes in each equivalence class across four real-world datasets.}
    \label{fig:dist-subset-size-real-world}
\end{figure*}
%
\begin{figure*}[t]
    \centering
    \begin{subfigure}[b]{0.24\textwidth}
        \centering
        \includegraphics[width=\textwidth]{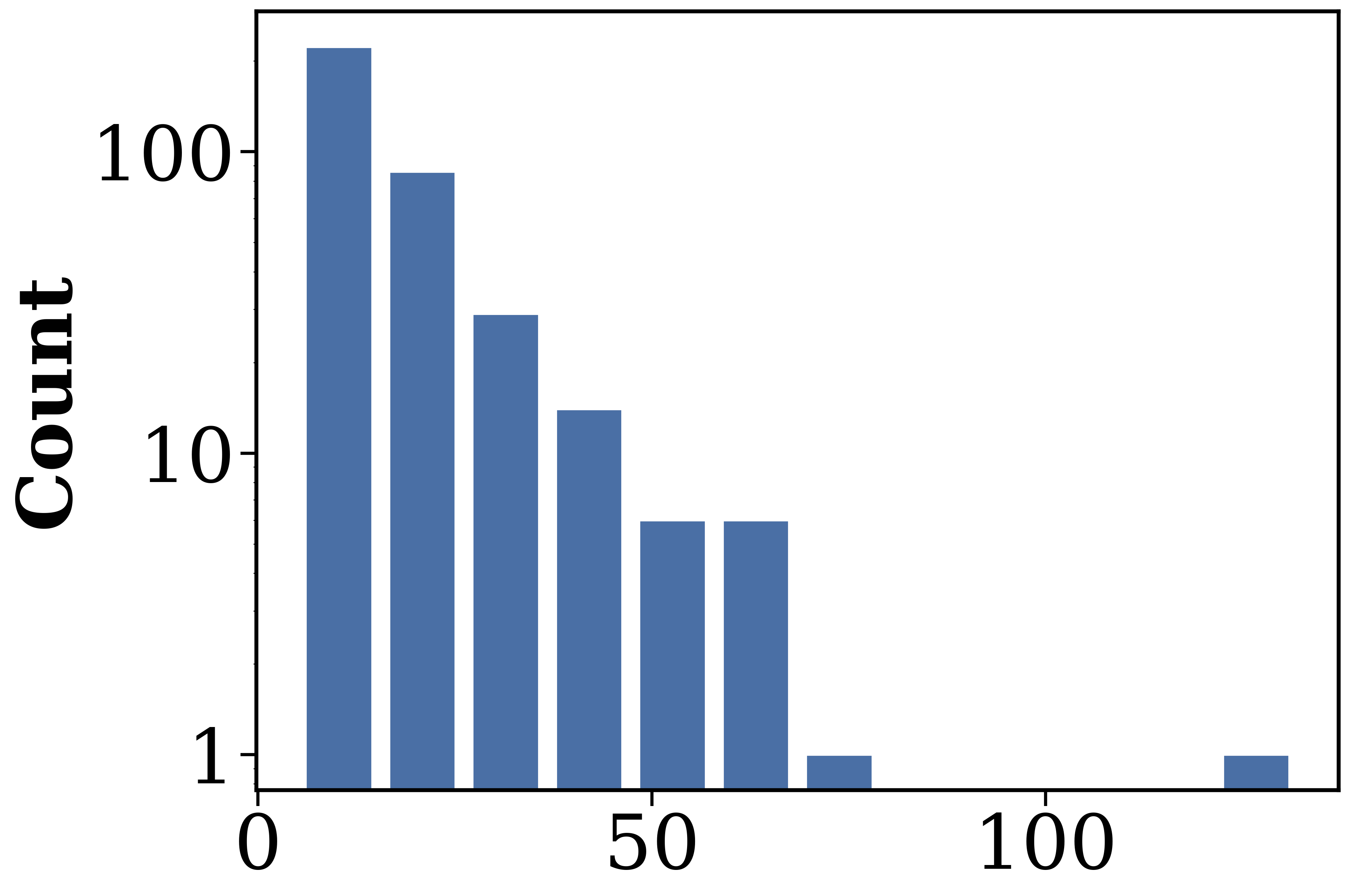}
        \caption{Adult}
    \end{subfigure}
    \hfill
    \begin{subfigure}[b]{0.24\textwidth}
        \centering
        \includegraphics[width=\textwidth]{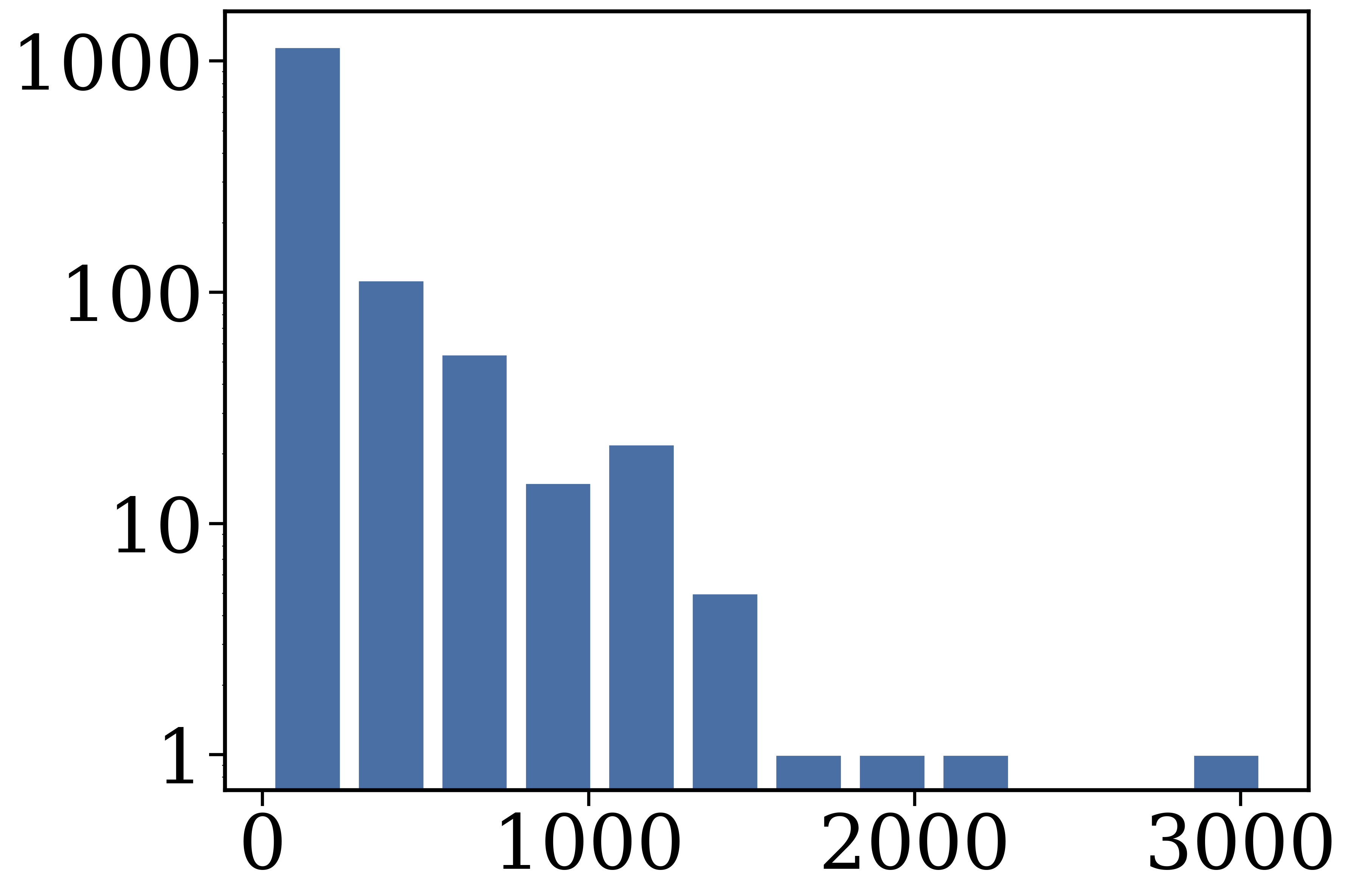}
        \caption{Mushroom}
    \end{subfigure}
    \hfill
    \begin{subfigure}[b]{0.24\textwidth}
        \centering
        \includegraphics[width=\textwidth]{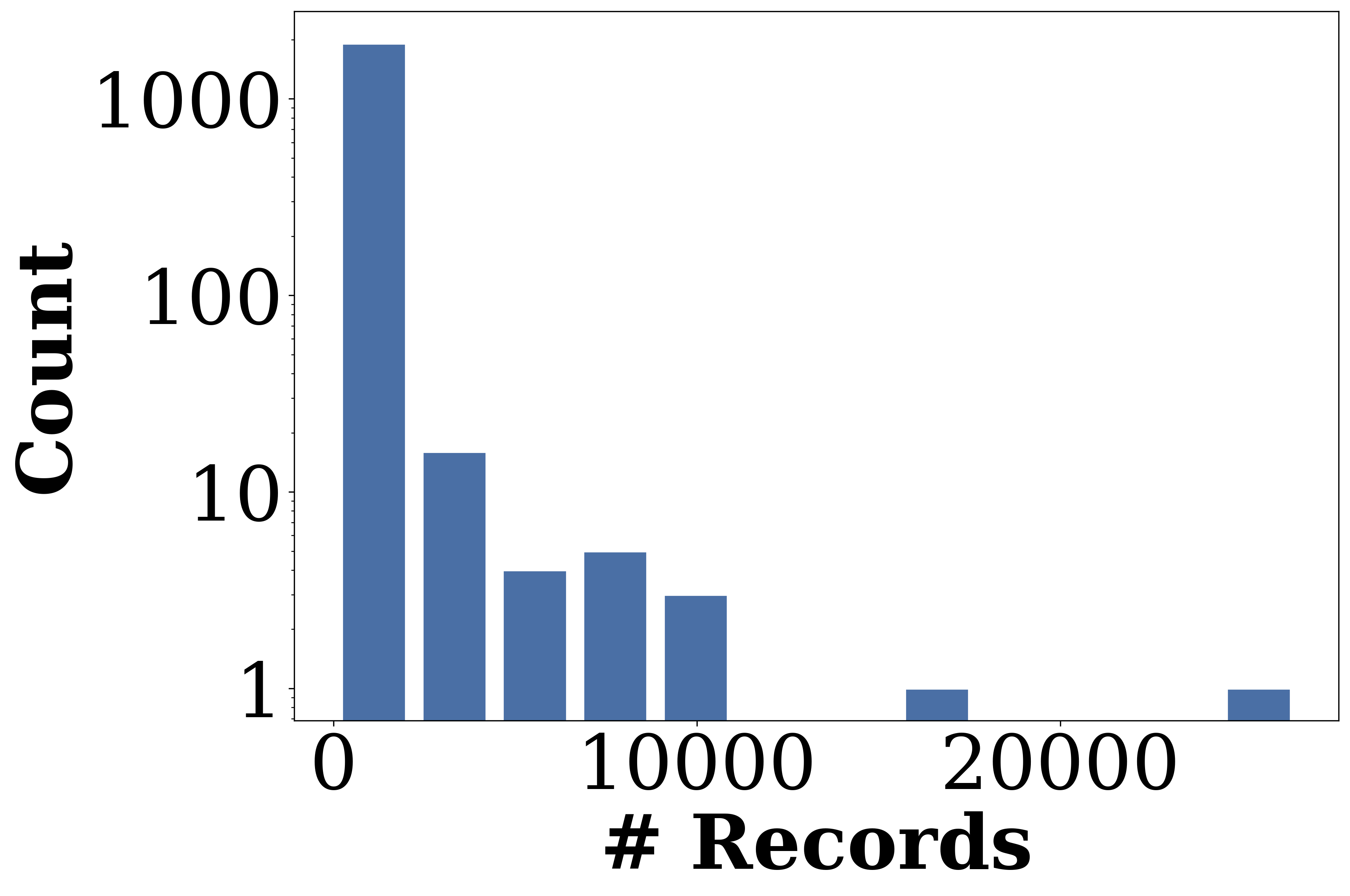}
        \caption{Loan}
    \end{subfigure}
    \hfill
    \begin{subfigure}[b]{0.24\textwidth}
        \centering
        \includegraphics[width=\textwidth]{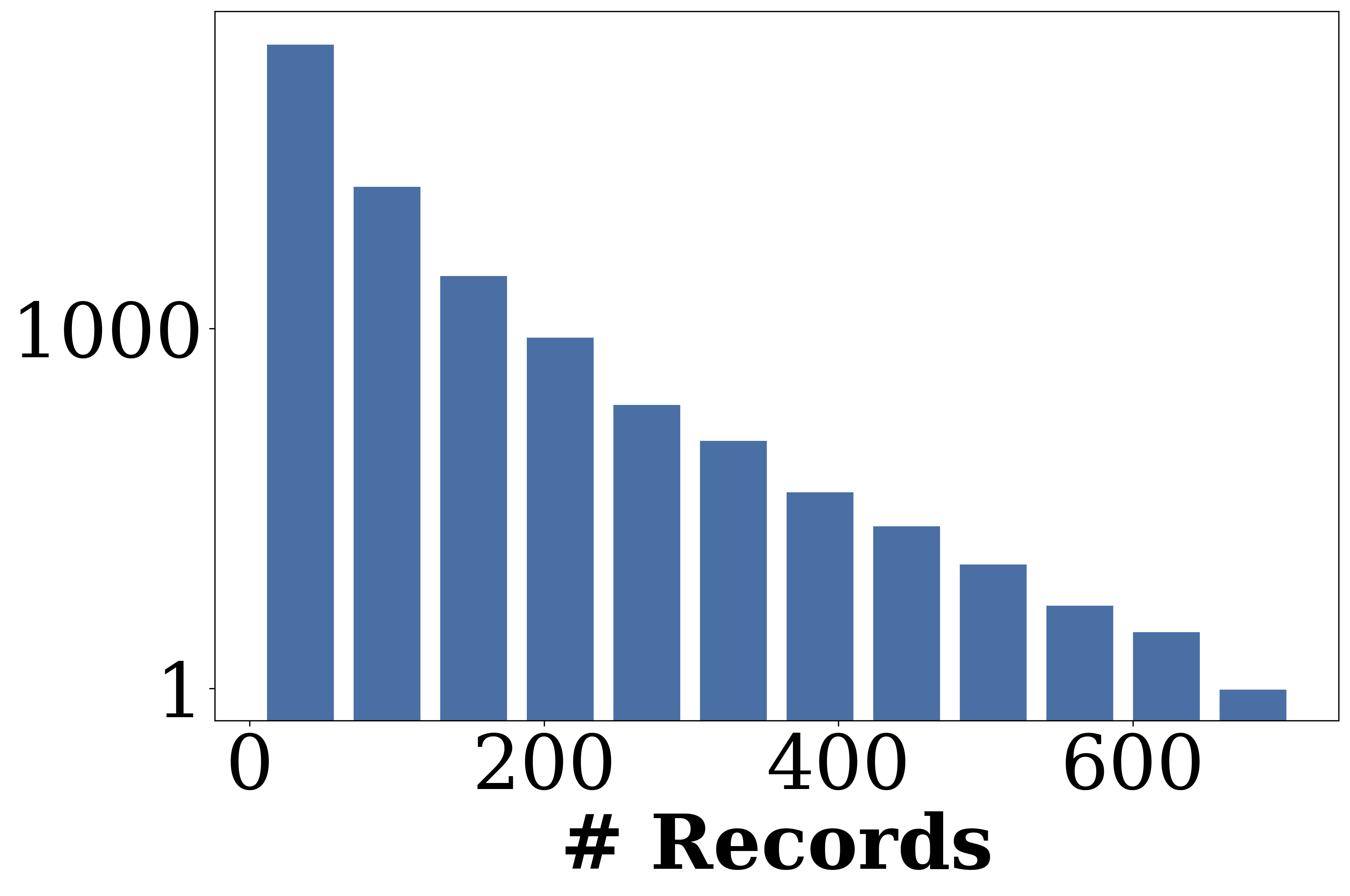}
        \caption{Diabetes}
    \end{subfigure}
    \caption{Distribution of \# records covered by redundant Simpson's paradoxes across four real-world datasets.}
    \label{fig:dist-coverage-real-world}
\end{figure*}

To systematically evaluate how structural properties of the dataset -- such as the number of categorical attributes and number of labels -- affect the occurrence of (redundant) Simpson's paradoxes, Figure~\ref{fig:synthetic-rq1} presents results from synthetic datasets generated using Algorithm~\ref{alg:generating}. The figure plots both $|P|$ (total Simpson's paradoxes) and $|I|$ (number of groups of redundant Simpson's paradoxes) against five key parameters discussed in Section~\ref{sec:synthetic}, with each plot varying one parameter while keeping others at their default values ($n = 8, m = 4, d = 8, U = 800, t = 32,000$). We observe several patterns:

First, from Figure~\ref{fig:rq1a}, $|I|$ remains constant with $n$ (number of categorical attributes) as each non-redundant Simpson's paradox generated by Algorithm~\ref{alg:gen-sp-separate} covers approximately $2d$ unique records. With a fixed total of $t$ unique records, the number of non-redundant Simpson's paradoxes remains bounded regardless of $n$. Meanwhile, $|P|$ grows quadratically with $n$ as one can establish more sibling and division equivalences given more categorical attributes.

Second, from Figure~\ref{fig:rq1b}, both $|P|$ and $|I|$ initially increase then decrease with $d$ (domain cardinality). Since each non-redundant Simpson's paradox covers roughly $2d$ unique records and $t$ remains fixed, $|P|$ and $|I|$ would inevitably decrease as $d$ grows larger.

Finally, from Figure~\ref{fig:rq1e}, $|P|$ and $|I|$ demonstrate linear increase with $t$ (total unique records). As $t$ increase, one can generate correspondingly more non-redundant Simpson's paradoxes, given that each covers roughly $2d$ unique records.
\begin{figure*}[t]
    \centering
    \begin{subfigure}[b]{0.32\textwidth}
        \centering
        \includegraphics[width=\textwidth]{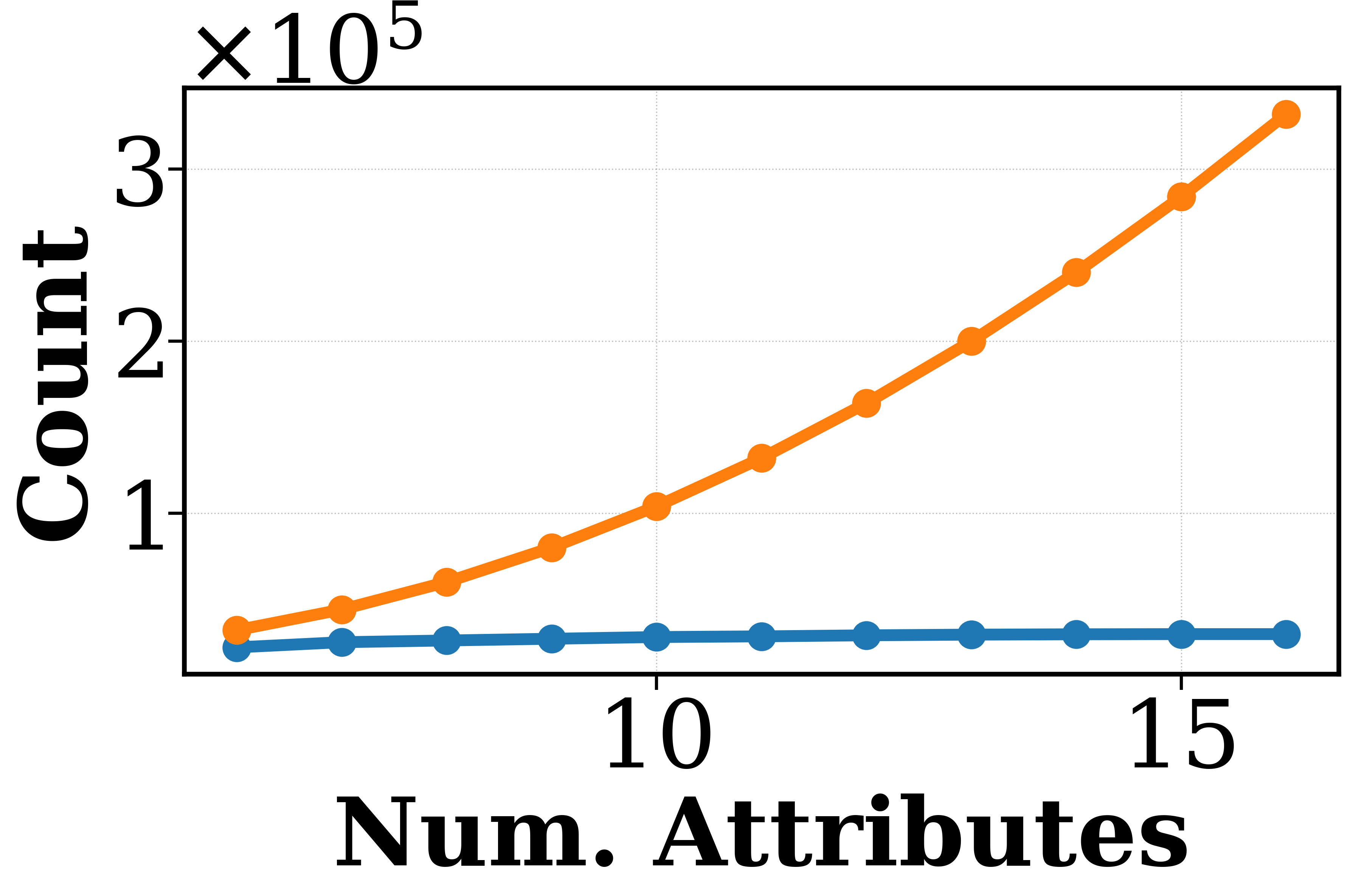}
        \caption{\# cat. attributes $n$ \emph{v.s.} $|P|$ and $|I|$}
        \label{fig:rq1a}
    \end{subfigure}
    \hfill
    \begin{subfigure}[b]{0.32\textwidth}
        \centering
        \includegraphics[width=\textwidth]{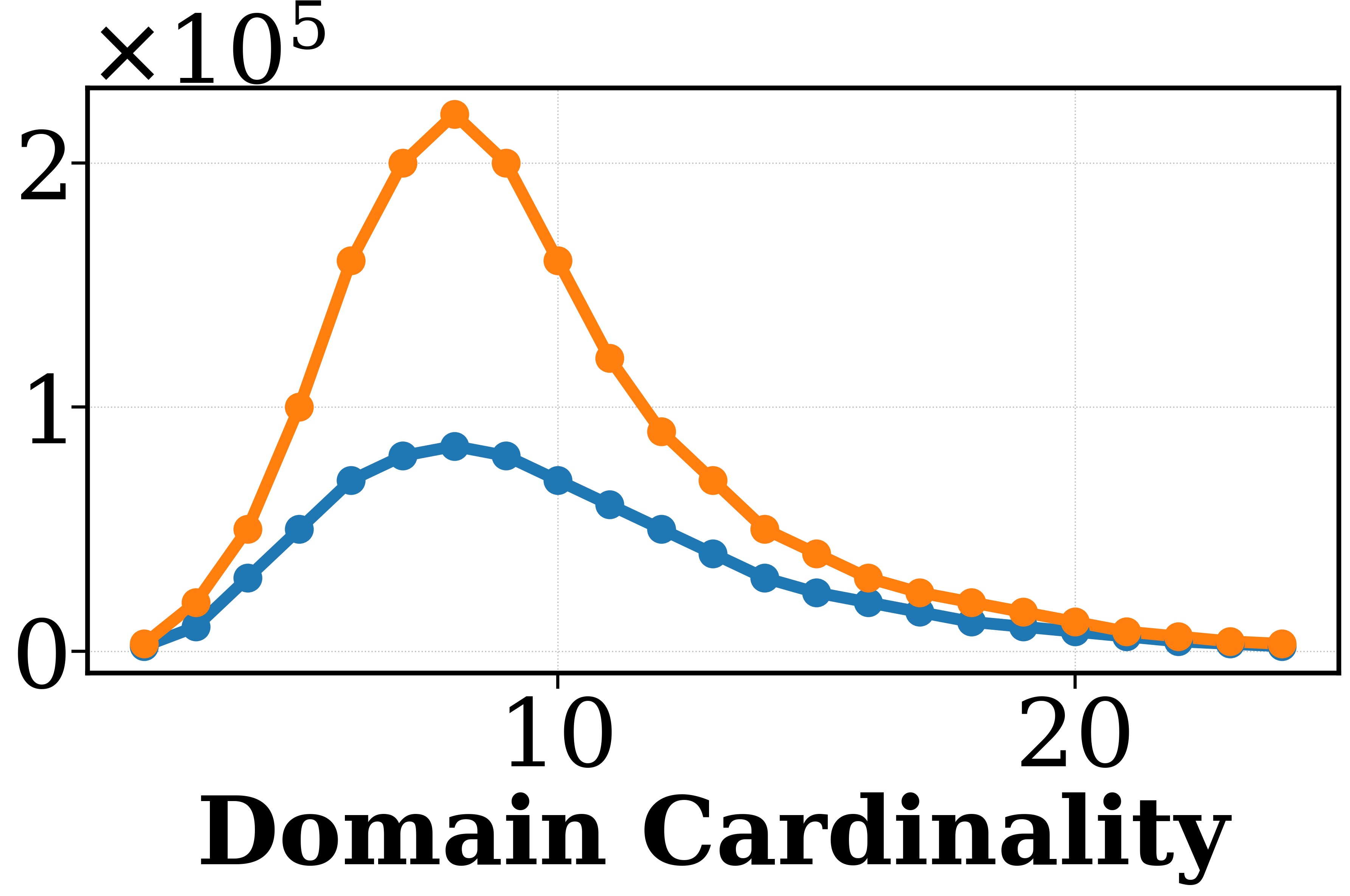}
        \caption{Domain size $d$ \emph{v.s.} $|P|$ and $|I|$}
        \label{fig:rq1b}
    \end{subfigure}
    \hfill
    \begin{subfigure}[b]{0.32\textwidth}
        \centering
        \includegraphics[width=\textwidth]{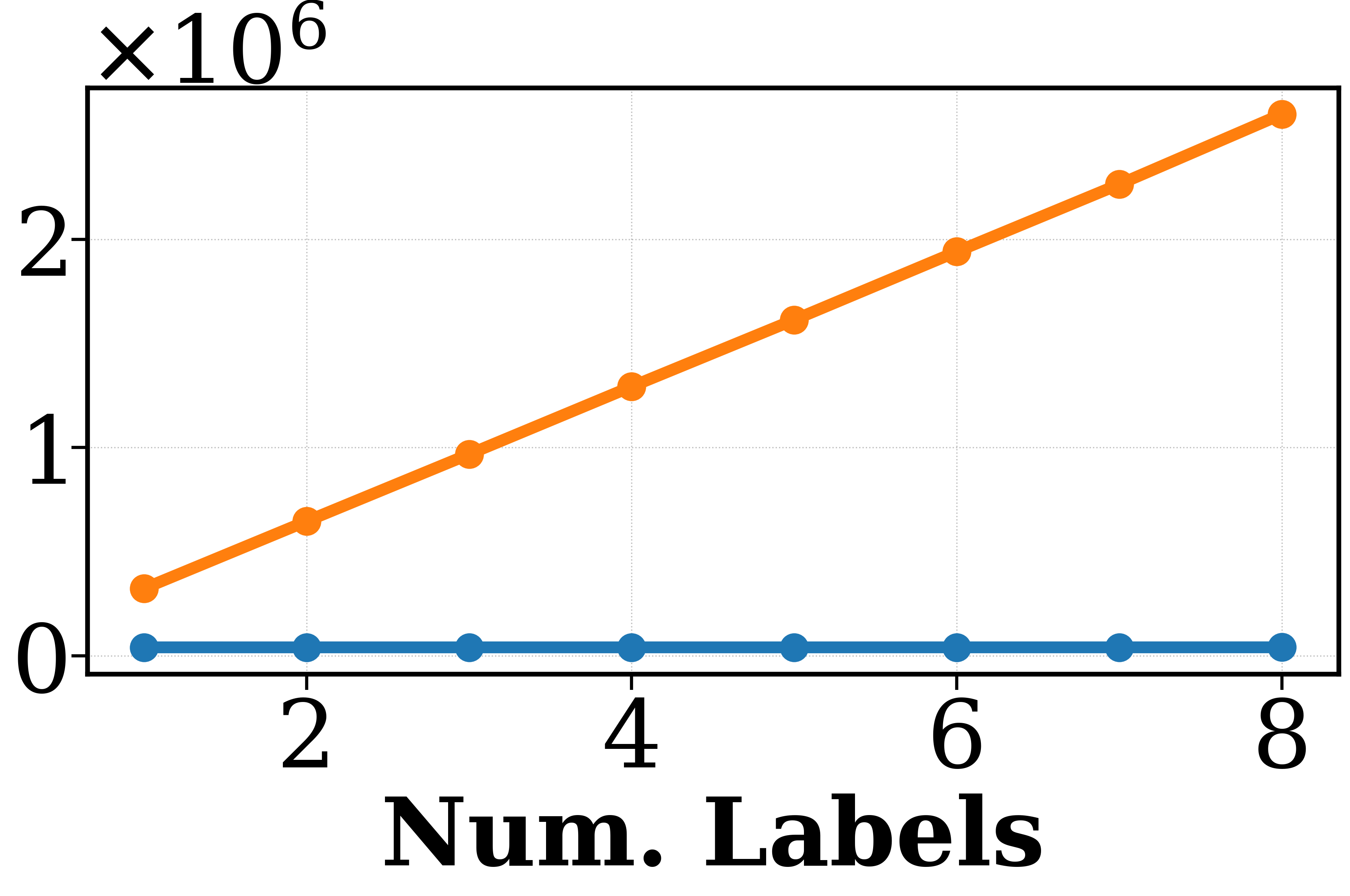}
        \caption{\# label attributes $m$ \emph{v.s.} $|P|$ and $|I|$}
        \label{fig:rq1c}
    \end{subfigure}
    
    \vspace{1em}  
    \begin{minipage}{\textwidth}
    \centering
    \begin{subfigure}[b]{0.32\textwidth}
        \centering
        \includegraphics[width=\textwidth]{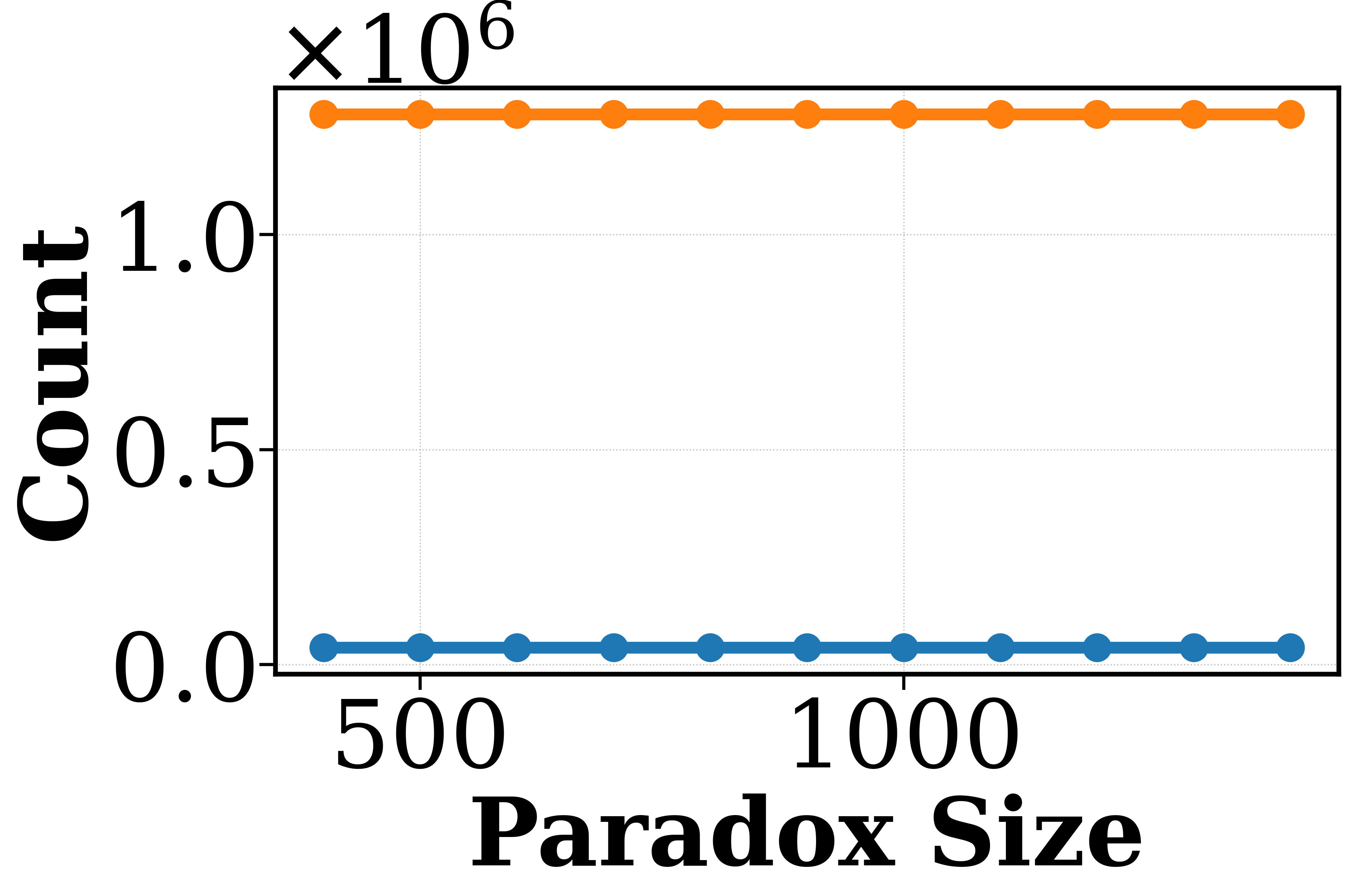}
        \caption{Paradox size $U$ \emph{v.s.} $|P|$ and $|I|$}
        \label{fig:rq1d}
    \end{subfigure}
    \hspace{0.025\textwidth}
    \begin{subfigure}[b]{0.32\textwidth}
        \centering
        \includegraphics[width=\textwidth]{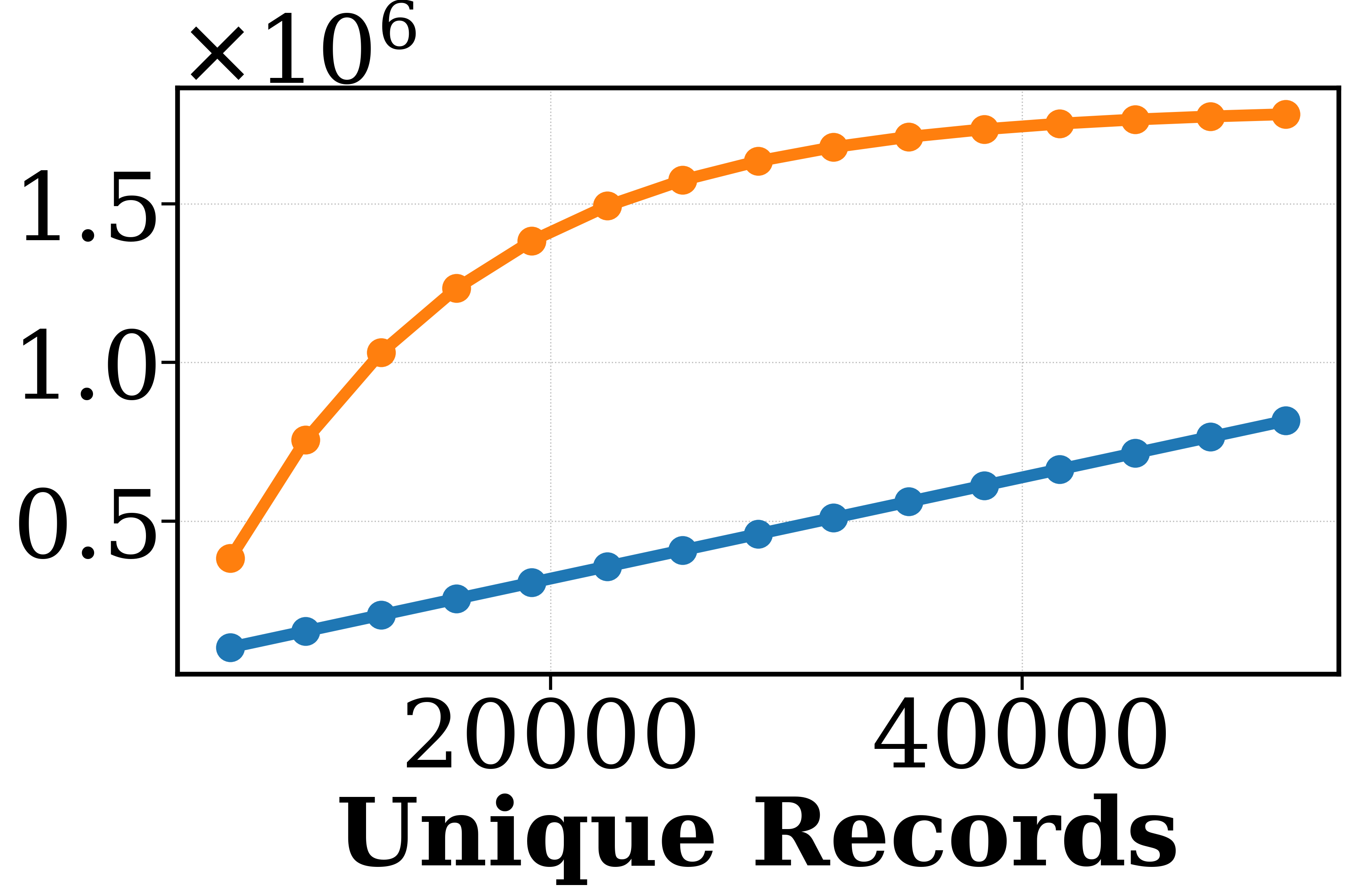}
        \caption{\# unique records $t$ \emph{v.s.} $|P|$ and $|I|$}
        \label{fig:rq1e}
    \end{subfigure}
    \end{minipage}
    
    \caption{Impact of key parameters on \# Simpson's paradoxes $|P|$ (orange) and equivalence classes $|I|$ (blue) in synthetic data.} 
    \label{fig:synthetic-rq1}
\end{figure*}
\subsection{Algorithmic Scalability}
\label{sec:rq2}

We evaluate the computational efficiency of our method through experiments on synthetic datasets generated with controlled parameters and characteristics. For each experiment, we measure both the materialization time (Algorithms 2-3) and total execution time (including Algorithms 4-7) with and without pruning optimizations.

Population pruning refers to removing populations from materialization if their coverage (number of records covered) falls below a threshold percentage of the total records. Throughout our experiments, we use a pruning threshold of 1\%, meaning populations covering less than 1\% of records are excluded from materialization. This pruning strategy aims to focus computation on statistically significant patterns while reducing overhead from sparse populations.

\subsubsection{Impact of Data Parameters} 
Figure 6 shows run time scaling against five synthetic data parameters: domain cardinality ($d$), number of attributes ($n$), number of labels ($m$), number of unique records ($t$), and paradox size ($U$). Each plot compares four measurements: materialization and total run time, both with and without pruning.

The domain cardinality results reveal an interesting pattern -- run time peaks at $d=5$ before gradually decreasing. This occurs because smaller domains create more opportunities for coverage equivalence, leading to larger $|S_{cov}|$ groups. As $d$ increases, coverage equivalence becomes rarer, reducing redundancy detection overhead. The 1\% pruning threshold is particularly effective here, eliminating many small coverage populations at higher $d$ values.

The number of attributes ($n$) shows the most dramatic impact, with run time growing exponentially after $n>10$ due to combinatorial growth in the population lattice. However, pruning reduces run time by up to 60\% for large $n$ by eliminating sparse populations early in the materialization phase.

The number of label attributes ($m$) and unique records ($t$) both demonstrate linear scaling, indicating our algorithm handles these dimensions efficiently. The pruning maintains consistent effectiveness, providing roughly constant factor improvement.

Paradox size ($U$) shows linear run time growth, suggesting our method scales well with individual paradox complexity. Pruning is particularly effective here, maintaining near-linear scaling even as $U$ increases by focusing computation on well-supported populations.

\subsubsection{Overall Performance Analysis}
Our pruning strategy provides consistent improvements across all synthetic data parameters, with greatest impact on high-dimensional data (large $n$) and complex coverage patterns (small $d$). The materialization phase benefits more from pruning than paradox detection, indicating that early elimination of sparse populations successfully reduces the search space.

These results demonstrate that our method is practical for real-world datasets with moderate dimensionality ($n \leq 12$), achieving sub-minute run times even without pruning. For higher dimensions, the 1\% pruning threshold makes previously intractable cases computationally feasible by focusing on statistically significant populations.

\begin{figure*}[t]
    \centering
    \begin{subfigure}[b]{0.32\textwidth}
        \centering
        \includegraphics[width=\textwidth]{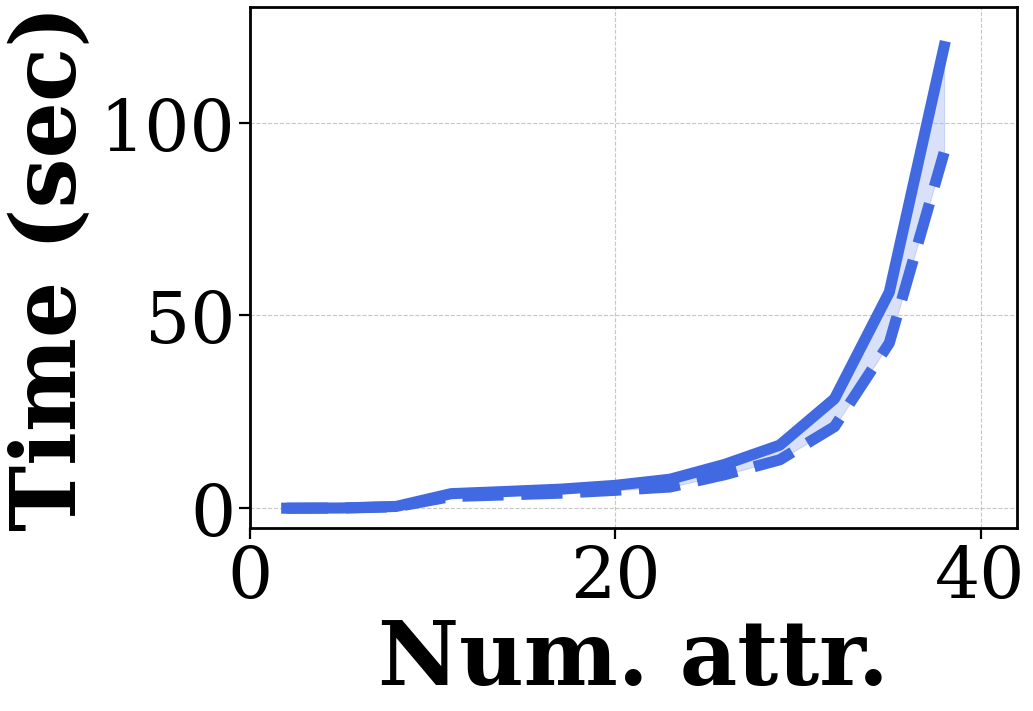}
        \caption{\# cat. attributes $n$ \emph{v.s.} Time}
        \label{fig:rq2a}
    \end{subfigure}
    \hfill
    \begin{subfigure}[b]{0.32\textwidth}
        \centering
        \includegraphics[width=\textwidth]{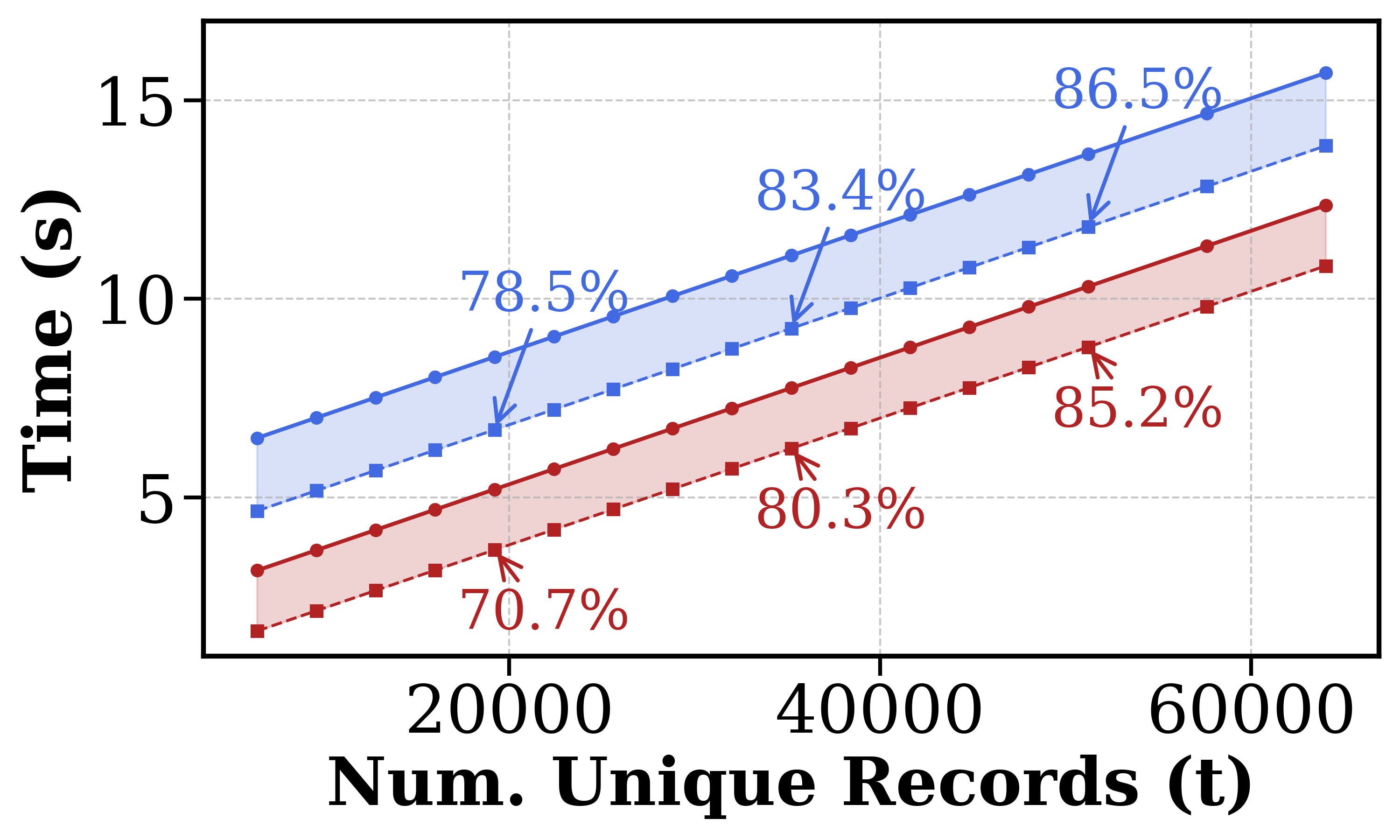}
        \caption{\# unique records $t$ \emph{v.s.} Time}
        \label{fig:rq2b}
    \end{subfigure}
    \hfill
    \begin{subfigure}[b]{0.32\textwidth}
        \centering
        \includegraphics[width=\textwidth]{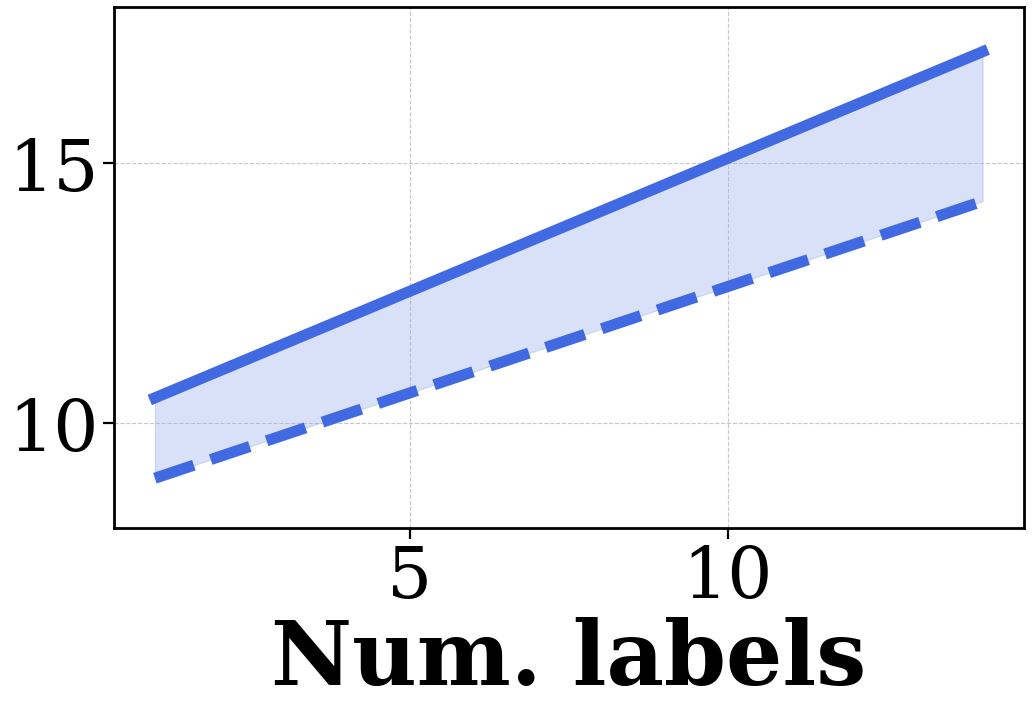}
        \caption{\# label attributes $m$ \emph{v.s.} Time}
        \label{fig:rq2c}
    \end{subfigure}
    
    \vspace{1em}  
    \begin{minipage}{\textwidth}
    \centering
    \begin{subfigure}[b]{0.32\textwidth}
        \centering
        \includegraphics[width=\textwidth]{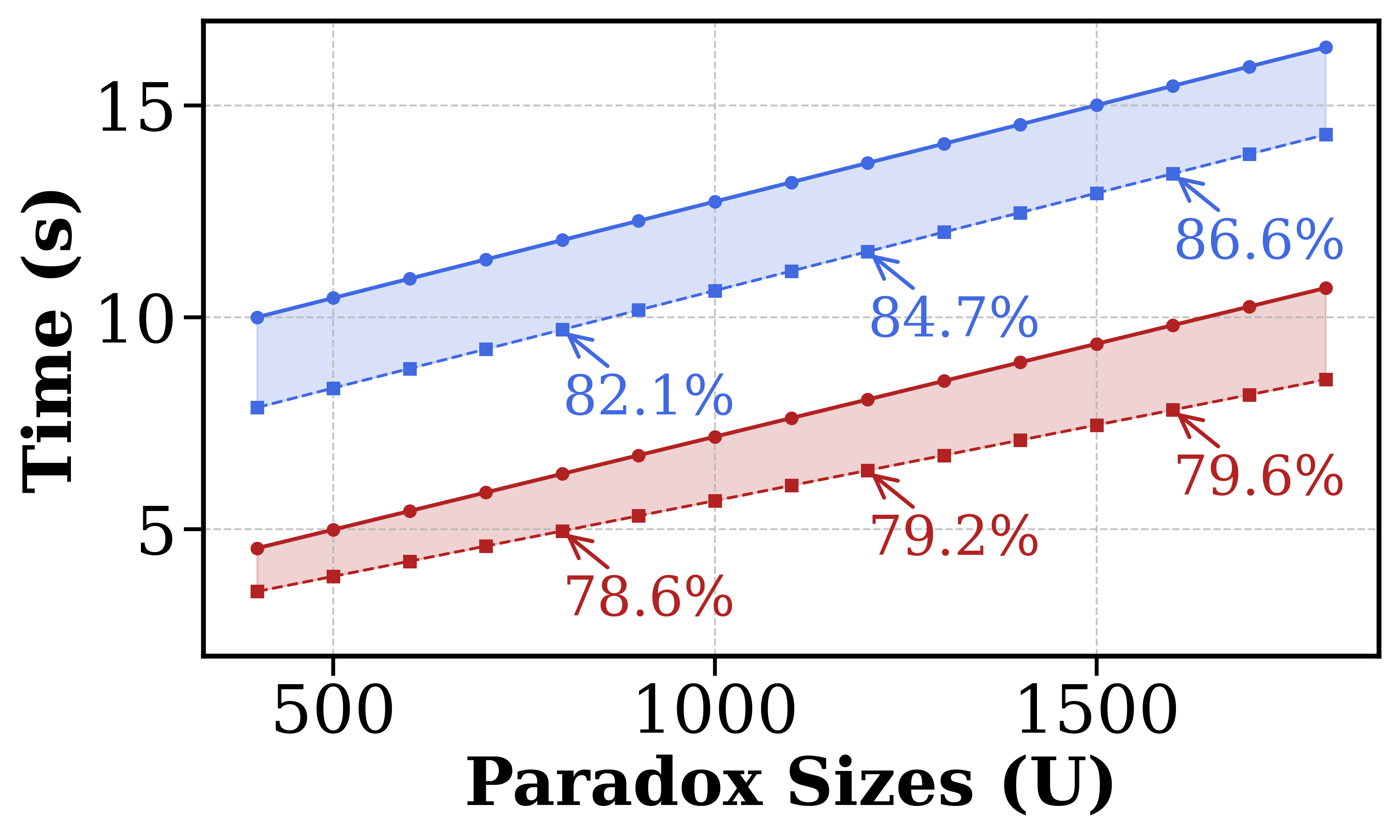}
        \caption{Paradox size $U$ \emph{v.s.} Time}
        \label{fig:rq2d}
    \end{subfigure}
    \hspace{0.025\textwidth}
    \begin{subfigure}[b]{0.33\textwidth}
        \centering
        \includegraphics[width=\textwidth]{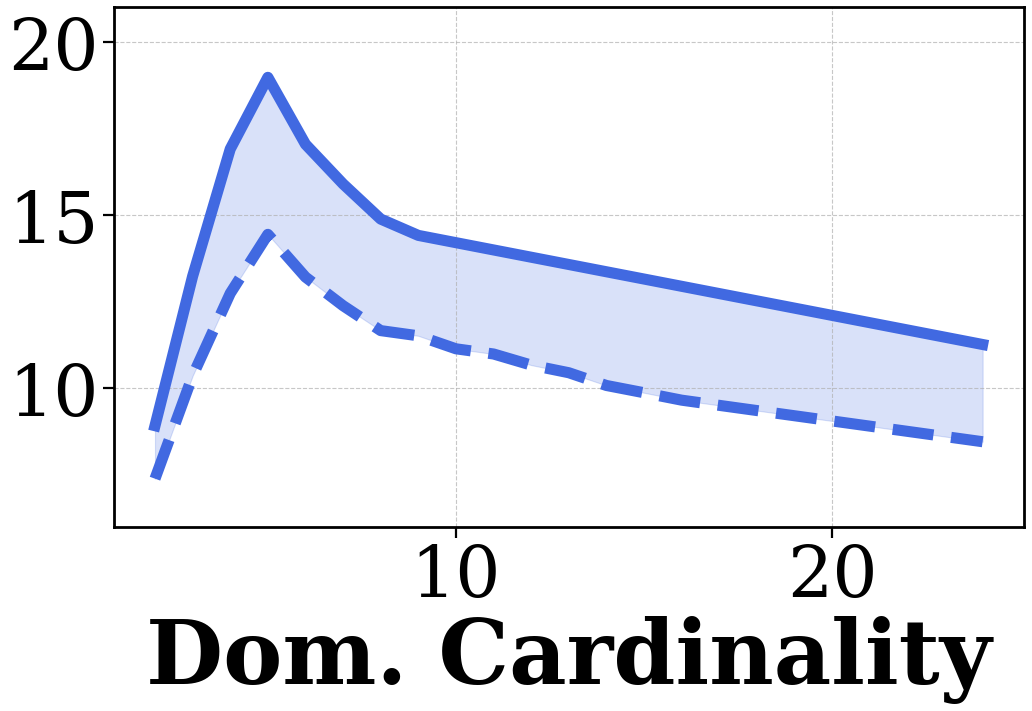}
        \caption{Domain size $d$ \emph{v.s.} Time}
        \label{fig:rq2e}
    \end{subfigure}
    \end{minipage}
    
    \caption{Impact of key parameters on run time in synthetic data.}
    \label{fig:synthetic-rq2}
\end{figure*}

\subsubsection{Impact of Pruning Threshold}

\begin{figure}
    \centering
    \includegraphics[width=0.90\linewidth]{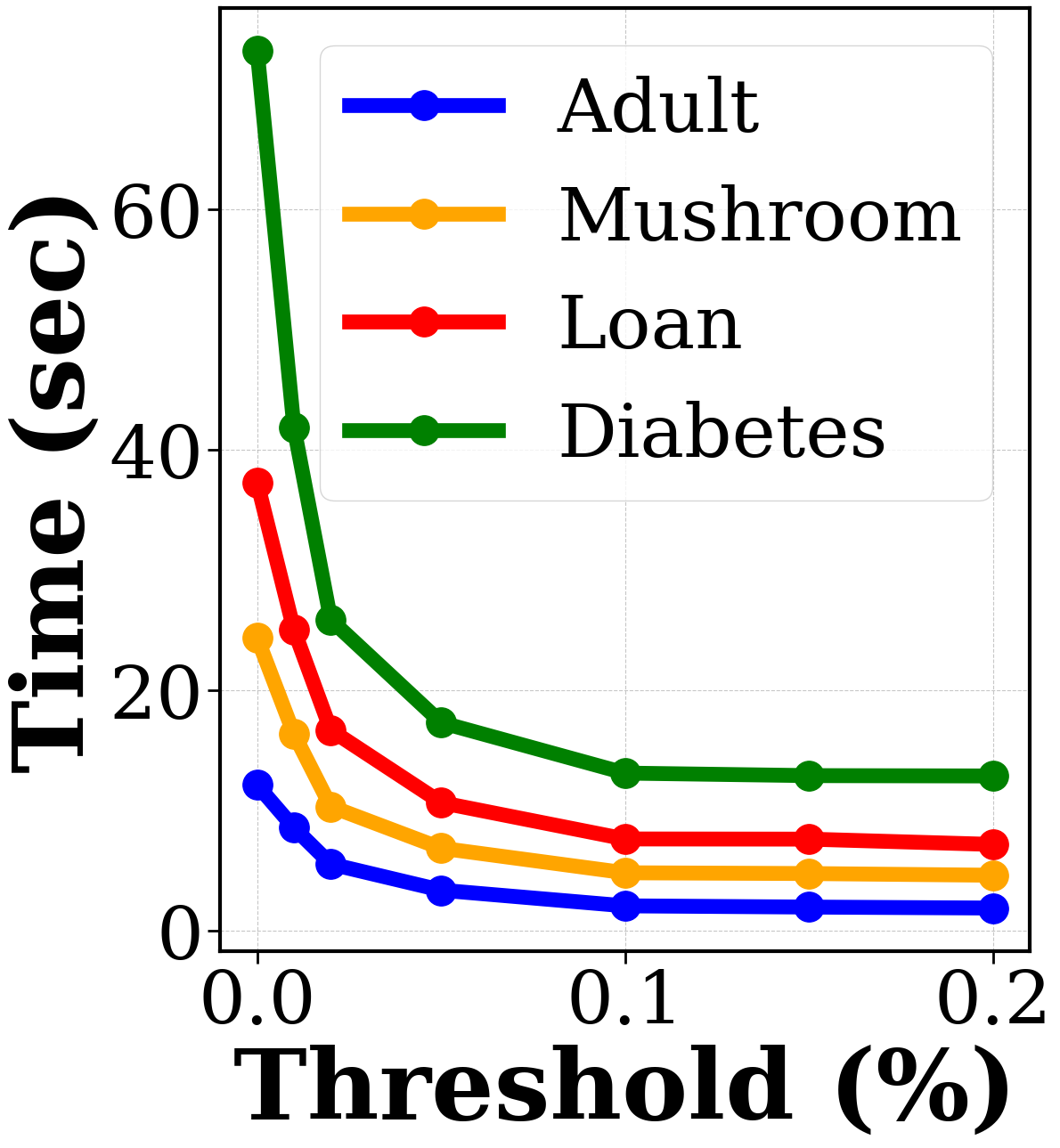}
    \caption{Avg.\ run time vs.\ pruning \% on real-world data.}
    \label{fig:real-rq2}
\end{figure}

Figure 7 illustrates the effect of varying pruning percentage thresholds on run time performance, averaged across our four real-world datasets (Adult, Mushroom, Loan, and Diabetes). The error bars represent standard deviation across these datasets. Both materialization time and total execution time exhibit a clear reciprocal trend with increasing pruning percentage.

Without pruning (0\%), the average total run time is approximately 17 seconds, with materialization taking about 14 seconds. The run time curve follows a reciprocal pattern ($\frac{1}{x}$-like behavior), showing rapid improvement initially followed by diminishing returns as the pruning percentage increases. The most substantial performance gains occur in the 0-0.05\% range, where total run time decreases by nearly 40\%.

This reciprocal trend can be explained by the inherent coverage distribution of populations in categorical data. Most populations cover only a small fraction of the total records, while relatively few populations have broad coverage. Consequently, even small initial pruning thresholds (e.g., 0.05\%) eliminate a large number of sparse populations during materialization. As the pruning threshold increases further, fewer additional populations are pruned because the remaining ones tend to have substantial coverage well above the threshold, leading to diminishing performance gains.

Notably, the variance in run time (shown by error bars) also decreases with higher pruning percentages, indicating more consistent performance across different datasets. The stabilization of run time beyond 0.2\% confirms that most statistically insignificant populations (those with small coverage) are already eliminated at lower thresholds, while the remaining populations have coverage percentages significantly above these higher thresholds.

\subsection{Are Coverage Redundant Simpson's Paradoxes Statistically Significant?}
\label{sec:rq3}

\begin{figure*}[t]
    \centering
    \begin{subfigure}[b]{0.48\textwidth}
        \centering
        \includegraphics[width=\textwidth]{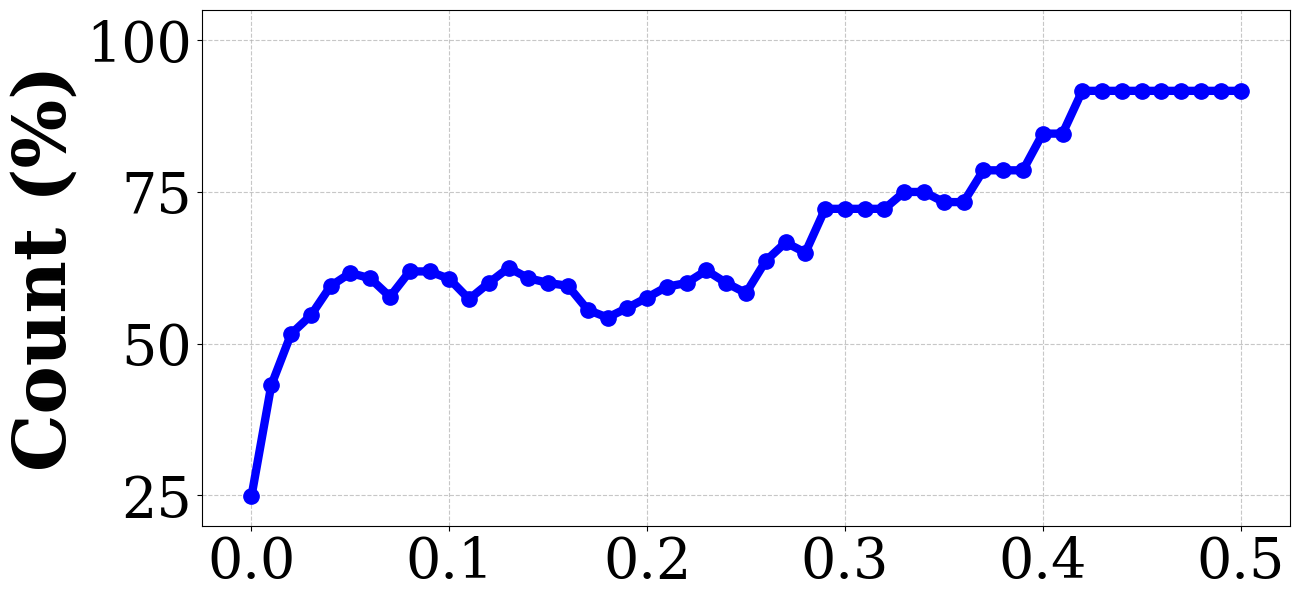}
        \caption{Adult}
        \label{fig:rq3a}
    \end{subfigure}
    \hspace{0.025\textwidth}
    \begin{subfigure}[b]{0.48\textwidth}
        \centering
        \includegraphics[width=\textwidth]{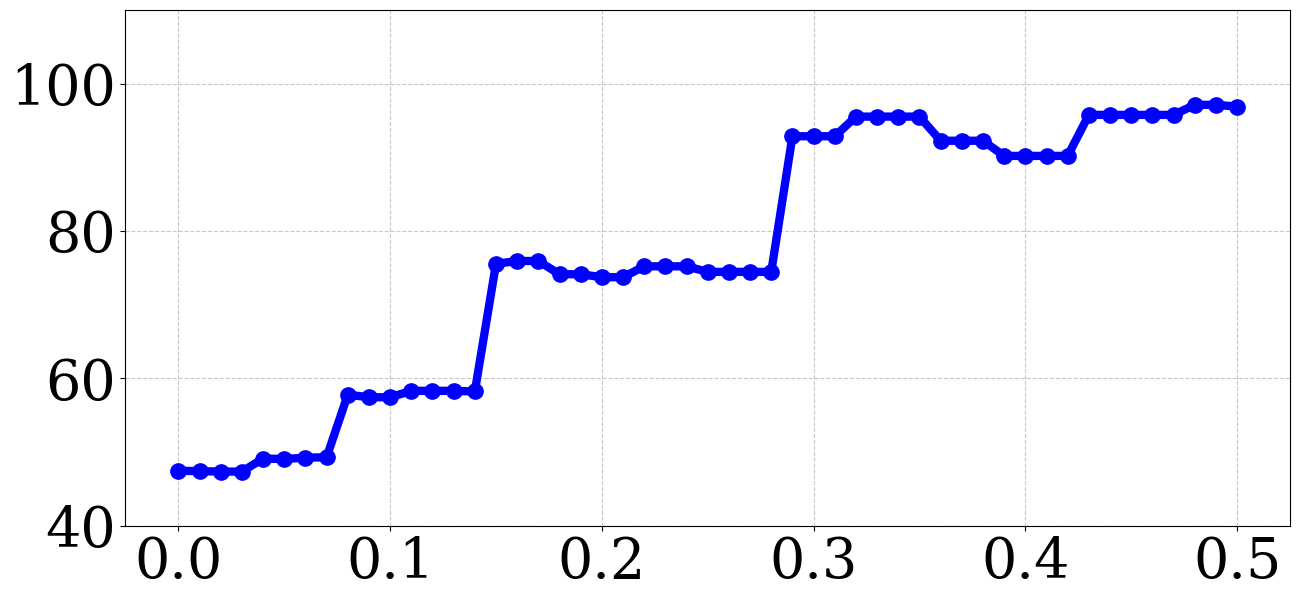}
        \caption{Mushroom}
        \label{fig:rq3b}
    \end{subfigure}
    
    \vspace{1em}  
    \begin{minipage}{\textwidth}
    \centering
    \begin{subfigure}[b]{0.48\textwidth}
        \centering
        \includegraphics[width=\textwidth]{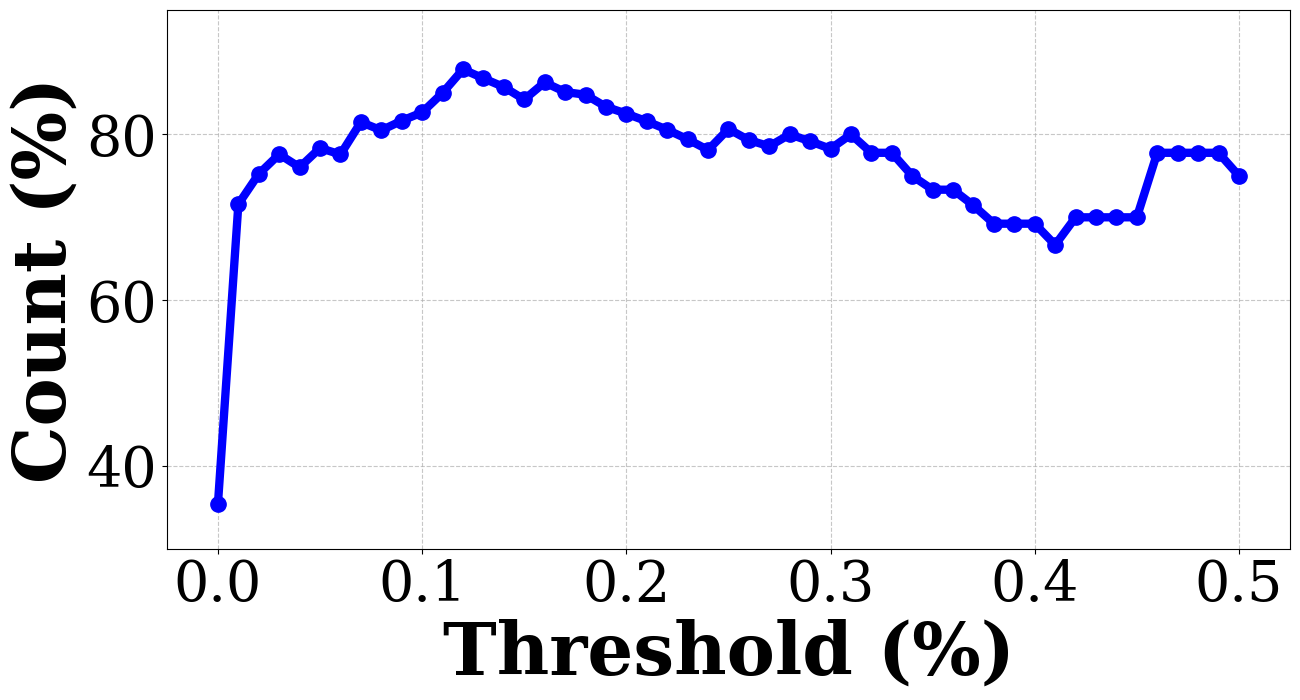}
        \caption{Loan}
        \label{fig:rq3c}
    \end{subfigure}
    \hspace{0.025\textwidth}
    \begin{subfigure}[b]{0.48\textwidth}
        \centering
        \includegraphics[width=\textwidth]{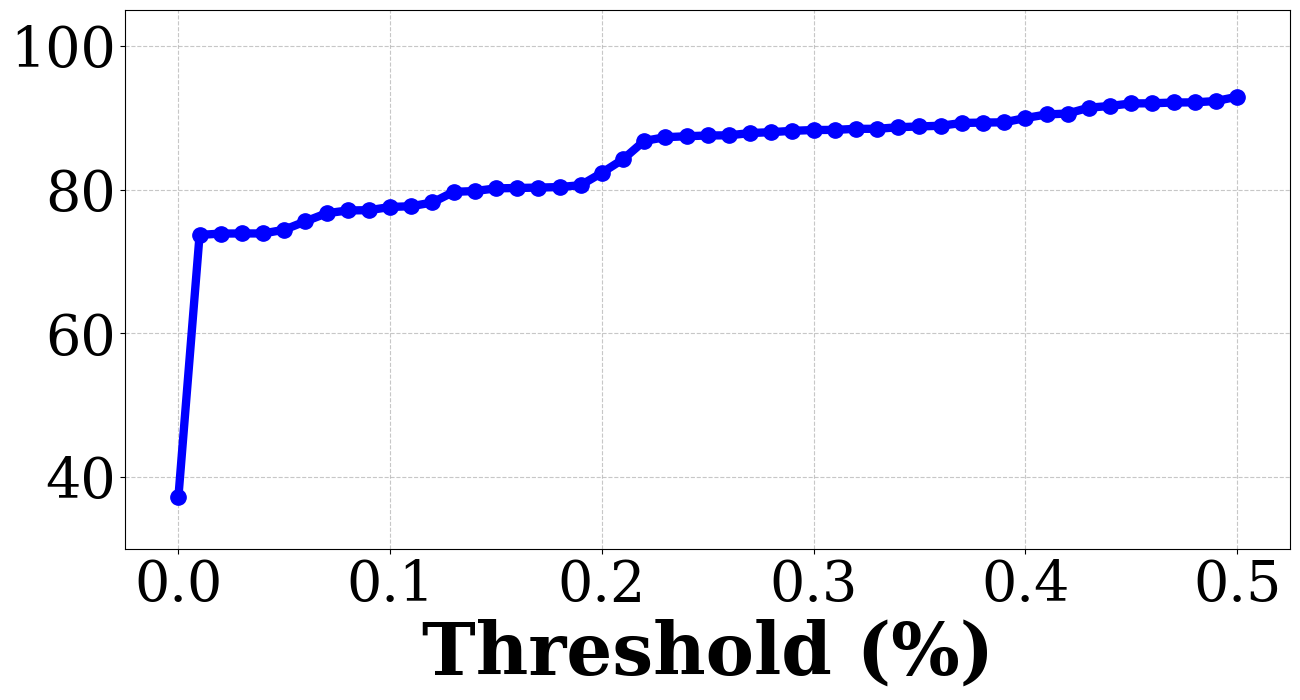}
        \caption{Diabetes}
        \label{fig:rq3d}
    \end{subfigure}
    \end{minipage}
    
    \caption{\# statistically significant Simpson's paradoxes vs.\ pruning thresholds.}
    \label{fig:real-rq3}
\end{figure*}


Beyond identifying the presence of coverage redundant Simpson's paradoxes in real-world and synthetic datasets, we investigate the \emph{robustness} of these discovered patterns against data perturbations. It is important to note that our significance testing differs from statistical significance measures commonly applied to Simpson's paradoxes. Traditional approaches include: \textbf{chi-square tests of independence}, which assess whether a separator attribute $X_{i_1}$ and label attribute $Y_{i_2}$ are statistically independent within populations -- that is, whether the conditional probability $P(Y_{i2} = 1 | X_{i_1} = v, s)$ remains constant across all values $v \in Dom(X_{i_1})$ for a given population $s$, indicating no association between the separator and outcome variables; \textbf{Fisher's exact tests}, which provide exact $p$-values for the same independence hypothesis as the chi-square test especially when population coverages $|cov(s)|$ are small or when partitioned sub-population sizes $|cov(s[X_{i_1} = v])|$ fall below statistical adequacy thresholds; and \textbf{effect size measurements based on reversion magnitude}, which quantify the degree of paradox by measuring the difference $|P(Y_{i_2}|s_1) - P(Y_{i_2}|s_2)|$ between sibling populations compared to the magnitude of sub-population differences $|P(Y_{i_2}|s_1[X_{i_1} = v]) - P(Y_{i_2}|s_2[X_{i_1} = v])|$ across all separator values $v$.

Instead, our approach specifically evaluates the \emph{robustness} of discovered patterns under a different conceptual framework. Our null hypothesis posits that both Simpson's paradoxes and their coverage redundancies are \textbf{random artifacts} -- that is, paradoxes aris purely from random noise, measurement errors, or stochastic variations in the data collection or generation process. Under this hypothesis, the paradoxical reversals we observe in Definition~\ref{def:simpson} and the coverage equivalences we identify in Definition~\ref{def:coverage} would disappear or change unpredictably when subjected to minor data perturbations. Conversely, rejecting this null hypothesis provides evidence that paradoxical reversals and redundancies represent \textbf{genuine structural properties} of the data that persist despite noise. Specifically, we evaluate two distinct aspects: (1) the \textbf{stability} of individual Simpson's paradoxes under label perturbations -- whether the fundamental paradoxical pattern (where $P(Y_{i_2}|s_1) \geqslant P(Y_{i_2}|s_2)$ but $P(Y_{i_2}|s_1[X_{i_1} = v]) \leqslant P(Y_{i_2}|s_2[X_{i_1} = v])$ for all $v$) persists when we randomly alter binary label values for a small fraction of records; and (2) the \textbf{persistence} of coverage redundancy under attribute value perturbations -- whether the equivalence structures $cov(s_j^1) = cov(s_j^2)$ for sibling, division, and statistics equivalences maintain their coverage redundancy according to Definition~\ref{def:coverage} when we systematically modify attribute values for a small fraction of records.

\subsubsection{Significance Testing for Individual Simpson's Paradoxes}

For each discovered Simpson's paradox, we assess its statistical significance through a perturbation approach. Specifically, we randomly select 5\% of records covered by the Simpson's paradox and flip their label values. We then check if the perturbed data still produces a Simpson's paradox according to Definition~\ref{def:simpson}. This procedure is repeated 10,000 times to estimate the probability that the observed Simpson's paradox could arise by chance.
We establish a significance threshold with $p$-value of 0.05, meaning that a Simpson's paradox is considered statistically significant if the perturbed data still produces a Simpson's paradox in at most 500 out of 10,000 trials. This indicates that the paradoxical pattern is unlikely to be a product of random data variation.
Figure~\ref{fig:real-rq3} illustrates the percentage of statistically significant Simpson's paradoxes (over all discoverd Simpson's paradoxes) across different pruning thresholds for our four real-world datasets. We observe a clear positive correlation between pruning threshold and statistical significance. As the pruning threshold increases, Simpson's paradoxes cover a larger number of records, making them less likely a random artifact in the data generating process.

\subsubsection{Significance Testing for Redundancies of Simpson's Paradoxes}

To assess the statistical significance of coverage redundancies, we apply a modified perturbation approach tailored to the specific equivalence types. For a group of coverage redundant Simpson's paradoxes, we randomly select 5\% of records covered by these paradoxes and perturb their attribute values in ways that specifically target the equivalence relationships.
For sibling equivalence (Definition~\ref{def:coverage}, condition 1), we focus perturbations on $*$-valued attributes (except for separator attributes) of the lower bound sibling populations over a group of (coverage) redundant Simpson's paradoxes. This perturbation aims to break the coverage equivalence of sibling populations among Simpson's paradoxes. For division equivalence (Definition~\ref{def:coverage}, condition 3), we concentrate perturbations on the set of distinct separator attributes, which directly challenges the one-to-one mapping between domain values.
After perturbation, we evaluate whether the resulting set of Simpson's paradoxes maintains coverage redundancy according to Definition~\ref{def:coverage}. As with individual Simpson's paradoxes, we repeat this process 10,000 times and use $p$-value of 0.05 as the significance threshold.
Our analysis reveals that coverage redundancies in the Adult and Mushroom datasets exhibit higher statistical significance than those in the Loan and Diabetes datasets. This suggests that the redundancy patterns in the former datasets reflect genuine structural properties rather than random associations. Additionally, sibling equivalence demonstrates higher statistical robustness compared to division equivalence across all datasets, consistent with our earlier observation that sibling equivalence is the predominant form of coverage redundancy in real-world data.

\begin{figure*}[t]
    \centering
    \begin{subfigure}[b]{0.48\textwidth}
        \centering
        \includegraphics[width=\textwidth]{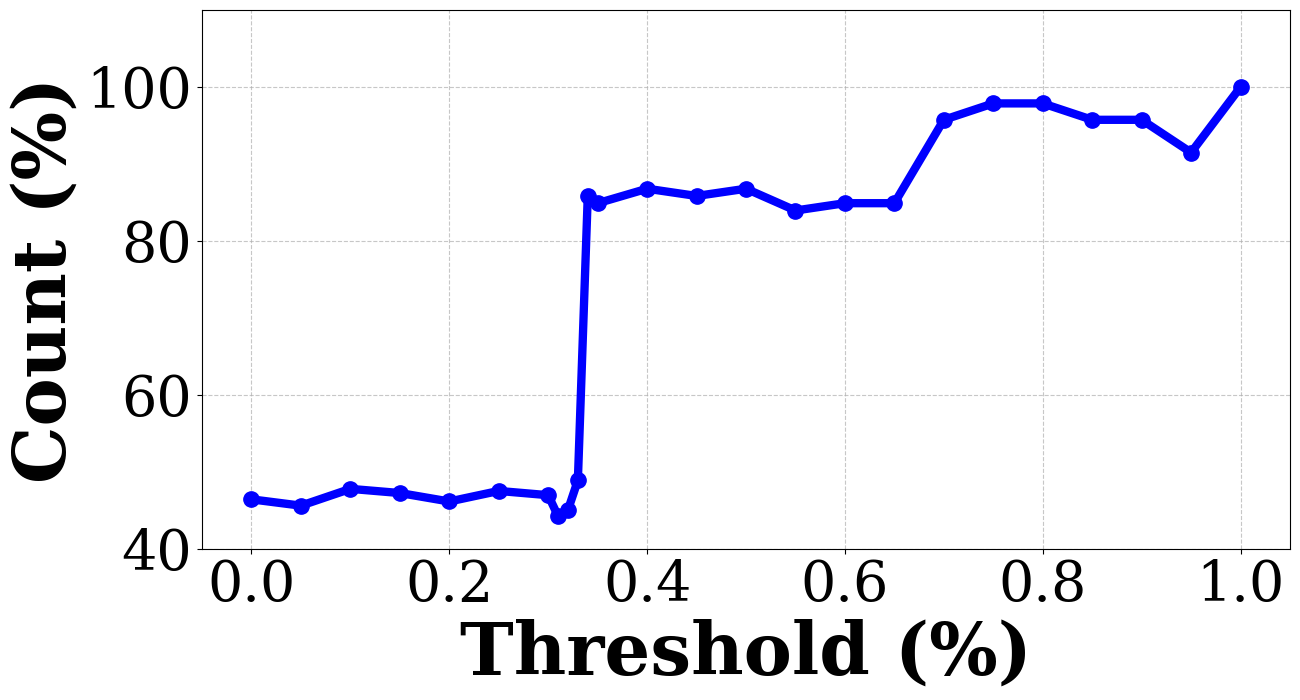}
        \caption{Adult}
        \label{fig:rq3a}
    \end{subfigure}
    \hspace{0.025\textwidth}
    \begin{subfigure}[b]{0.48\textwidth}
        \centering
        \includegraphics[width=\textwidth]{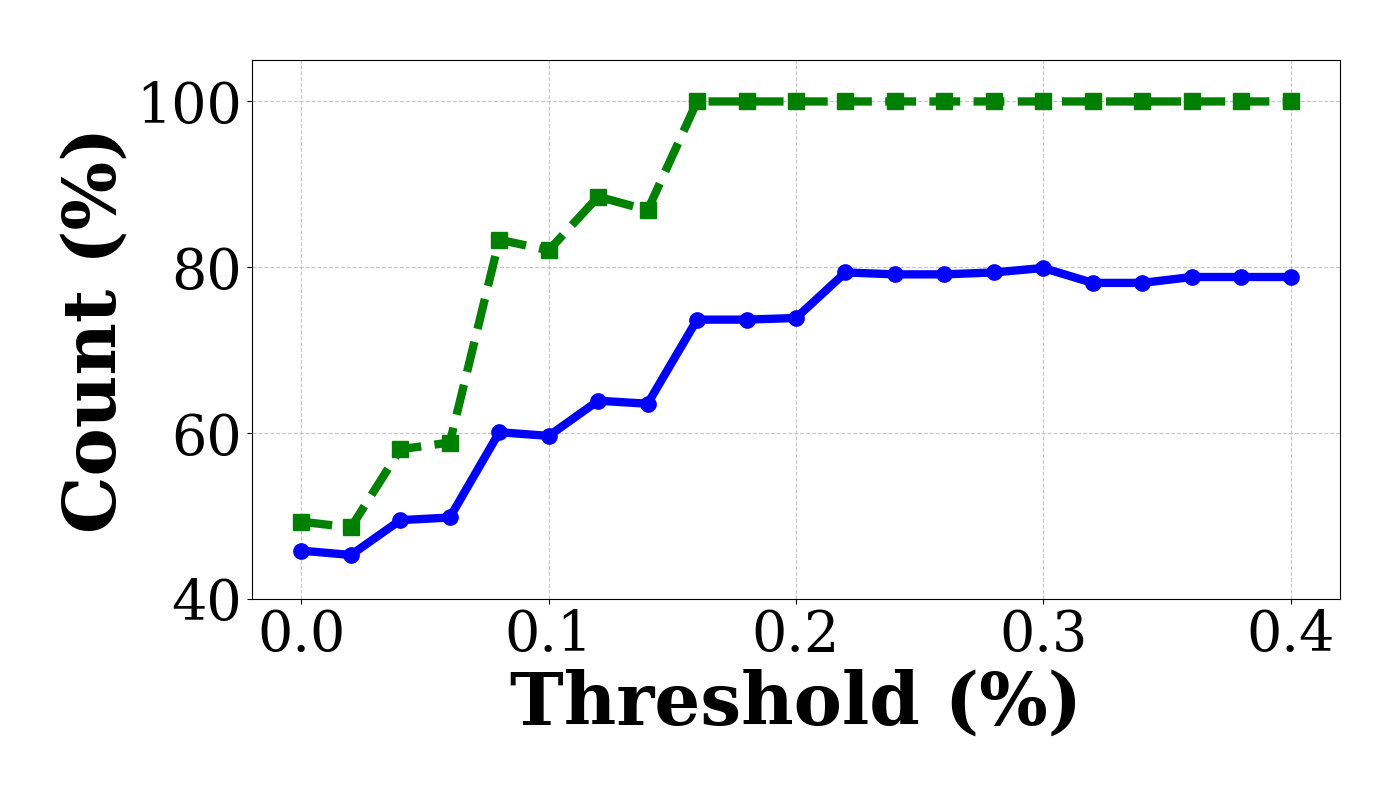}
        \caption{Mushroom}
        \label{fig:rq3b}
    \end{subfigure}
    
    \vspace{1em}  
    \begin{minipage}{\textwidth}
    \centering
    \begin{subfigure}[b]{0.46\textwidth}
        \centering
        \includegraphics[width=\textwidth]{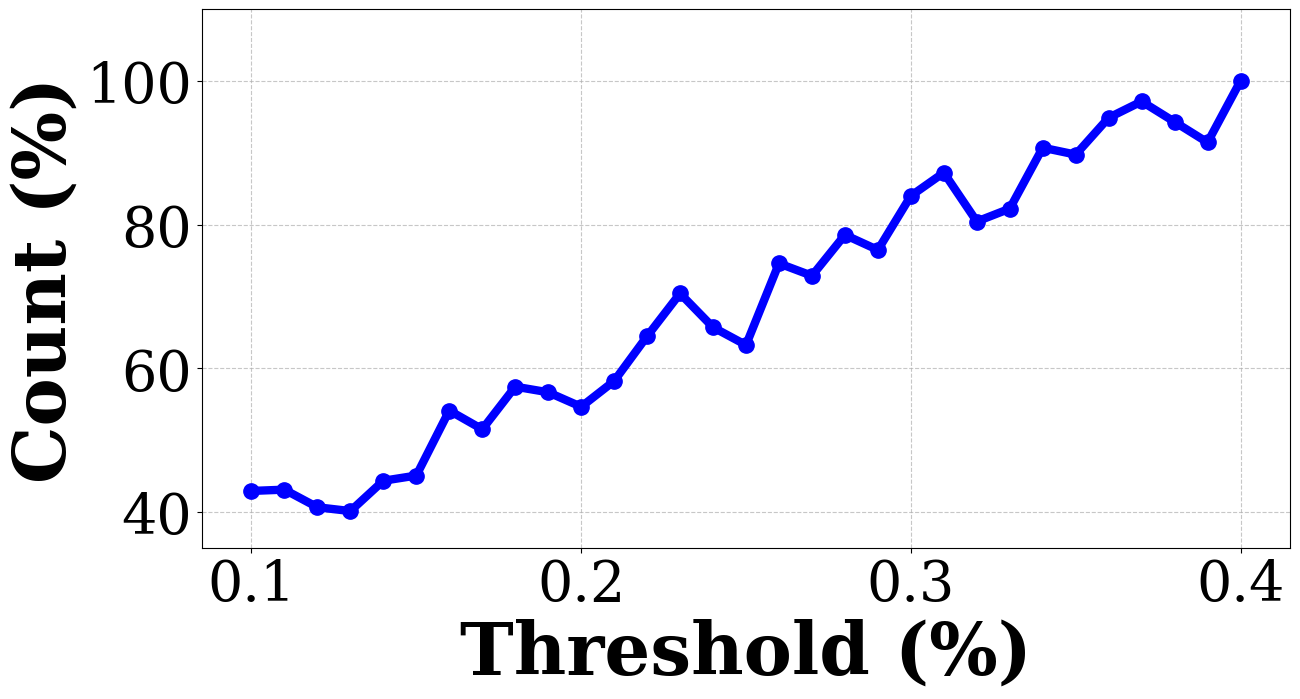}
        \caption{Loan}
        \label{fig:rq3c}
    \end{subfigure}
    \hspace{0.025\textwidth}
    \begin{subfigure}[b]{0.50\textwidth}
        \centering
        \includegraphics[width=\textwidth]{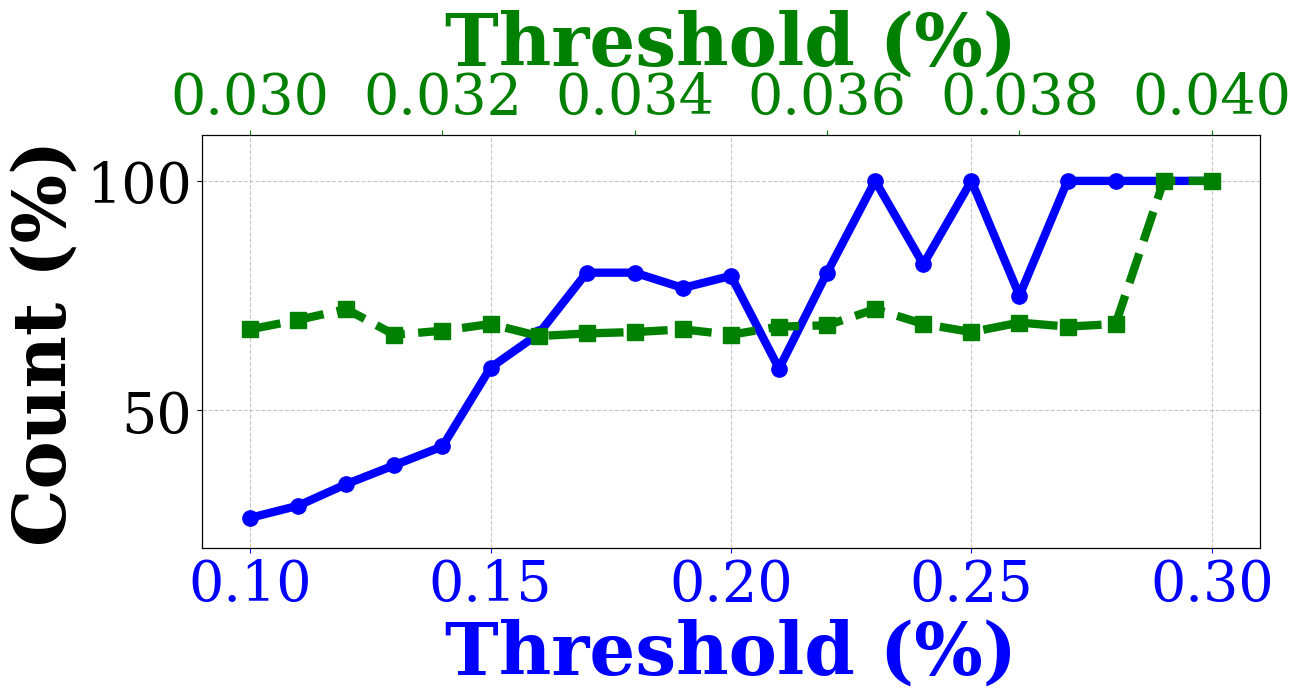}
        \caption{Diabetes}
        \label{fig:rq3d}
    \end{subfigure}
    \end{minipage}
    
    \caption{\# statistically significant coverage redundancies vs.\ pruning thresholds.}
    \label{fig:cov-sign-rq3}
\end{figure*}
}
\section{Related Work}
\label{sec:related}
To the best of our knowledge, this is the first work to study redundancy among Simpson's paradoxes. 
Our contribution connects to two lines of prior research: (1) methods for detecting Simpson's paradox in high-dimensional data; and (2) techniques for concisely summarizing data populations. 

\subsection{Detecting Simpson's Paradox}

Simpson’s paradox has been extensively studied since the introduction of Association Reversal (AR) and Amalgamation Paradox (AMP)~\cite{yule1903notes}. Early detection methods relied on statistical modeling. \citet{freitas2007integrating} constructed Bayesian networks to detect paradoxes by analyzing network structure, and \citet{fabris2006discovering} extended this approach by ranking paradoxes according to their estimated surprisingness for knowledge discovery. \citet{alipourfard2018can} introduced a statistical test comparing global trends with disaggregated sub-population patterns, while \citet{xu2018detecting} used Pearson correlation to detect reversals between continuous variables. \citet{sharma2022detecting} further extended this approach to categorical variables through binarization.

Beyond standalone methods, several works integrate paradox detection into broader analytical frameworks. Salimi et al.~\cite{salimi2018bias, salimi2018hypdb, salimi2020database} developed a system that identifies biased OLAP queries susceptible to Simpson’s paradox using independence tests and resolves them through automated query rewriting. Liu et al.~\cite{liu2011towards} proposed a data-driven framework that discovers sub-populations for hypothesis testing and reveals confounding factors underlying paradoxes. More recently, automated methods have been introduced: \citet{wang2023learning} employed neural models to disaggregate data and evaluate associations across subgroups, and \citet{jiang2025fedcfa} designed a federated learning framework that mitigates association reversals in distributed data via counterfactual learning. Domain-specific efforts include \citet{portela2019search}, who applied regression trees to identify conditional outliers affected by Simpson's paradox.

Despite these advances, existing methods emphasize global analyses and overlook paradoxes within \emph{local populations} defined by subspaces of the data. Such local paradoxes can still reveal structural patterns and are important for causal analysis and decision-making. The closest related work is \citet{xu2022finding}, who proposed a combinatorial search over all possible local subspaces. However, their algorithm (cf.\ Alg.\ref{alg:brute-force-materialization},\ref{alg:brute-force}) does not detect redundancy and is computationally expensive, repeatedly enumerating populations with identical coverage and evaluating redundant paradoxes.

\subsection{Summarizing Data Populations}

A cornerstone of our approach is identifying and organizing subsets of populations with identical coverage. This connects to the extensive literature on data cubes~\cite{10.5555/645481.655593, chen2020exploring, phan2016data, xie2016olap}, which address computational challenges in OLAP. \citet{kenneth1997fast} proposed efficient materialization algorithms for sparse multidimensional data using divide-and-conquer partitioning. \citet{beyer1999bottom} introduced bottom-up materialization with coverage-based pruning, which is closely related to our method (cf.\ Alg.~\ref{alg:materialization}). Other work has focused on cube condensation to reduce storage and improve query performance~\cite{wang2002condensed, sismanis2002dwarf}. More recently, \citet{john2022high} proposed partial materialization that reconstructs missing populations via linear programming, and \citet{you2025soc} developed an adaptive caching system that selectively materializes convex equivalence classes under memory constraints.  

Most directly relevant are quotient cubes~\cite{lakshmanan2002quotient, lakshmanan2003qc}, which partition the population lattice into equivalence classes defined by monotone aggregate functions (e.g., coverage, count, min, max). The quotient cube preserves roll-up and drill-down semantics, improving the efficiency of analytical queries such as \texttt{GROUP BY} and \texttt{CUBE BY}. While our method exploits similar structural properties -- specifically convex equivalence classes via coverage-based partitioning -- our focus is different. Unlike prior cube condensation techniques~\cite{beyer1999bottom, wang2002condensed, sismanis2002dwarf, you2025soc, kenneth1997fast}, which aim to optimize storage and query time, we leverage convex partitions of populations to identify and eliminate redundancy among Simpson's paradoxes.

\nop{
\section{Related Work}
\label{sec:related}

To the best of our knowledge, we are the first to propose the discovery of redundant relations among Simpson's paradoxes. Our work is related to two lines of existing work: detection of Simpson's paradox in high dimensional data tables and concise summarization of data populations.
We discuss each aspect in this section.

\subsection{Detecting Simpson's Paradox}
Following the initial introduction of AR and AMP, discovering Simpson's paradoxes in real-world datasets has attracted broad scholarly interest.
For instance, \citet{freitas2007integrating} construct Bayesian Networks from data as pre-processings for detecting occurrences of Simpson's paradox by analyzing the network structures.
\citet{alipourfard2018can} propose a statistical method to automatically identify AR in data by comparing statistical trend in the overall population to those in the disaggregated sub-populations.
Similarly, \citet{xu2018detecting} leverage Pearson correlation to measure trend between two continuous variables and detect correlation reversal in sub-populations partitioned by categorical attribtues.
\citet{sharma2022detecting} extend the Pearson correlation measure to categorical variables by binarizing the categorical attributes.
\citet{wang2023learning} calculate association strength of two variables in multiple sub-groups and uses neural network models to automatically disaggregate data into sub-groups.
However, none of the aforementioned methods consider finding Simpson's paradox in local populations within the sub-space of the entire data, which can still be statistically significant for causal analysis and decision making.
Our work here addresses this issue.

\citet{xu2022finding} fit the closest to our objective where the proposed algorithm combinatorially searches over all possible local sub-spaces.
However, it is a brute-force method (described in Algorithm~\ref{alg:brute-force}) that is costly due to reptitive searches of coverage equivalent populations and thus enumerations of redundant paradox candidates.

\subsection{Summarizing Data Population}
A key cornerstone in discovering coverage redundancy in Simpson's paradoxes is to identify subset of populations, whether global or local, with identical coverages, and the ability to systematically and efficiently identify all such subsets.
In other words, we aim to provide a logical organization and summarization of the data populations with respect to commonalities of their semantics, such as coverage, as explored in our study. The idea of data cube~\cite{10.5555/645481.655593} and quotient cube~\cite{lakshmanan2002quotient} fits particularly well to this need.

Particularly, one of the key objectives in our work is to obtain a convex partition of data population given an equivalence relation defined based on equality of outputs of a monotone aggregate function (\emph{e.g.,} coverage, count, min, max). The QC algorithm~\cite{lakshmanan2002quotient} performs a bottom-up depth-first traversal on the population lattice (see Figure~\ref{fig:lattice} for an example illustration) to collect the equivalence classes (or quotients, hence the name) of data populations.
One advantage of the QC partition is that it preserves the roll-up and drill-down semantics of the population lattice, or data cube~\cite{10.5555/645481.655593} in database literature, which greatly enhances the efficiency of query execution such as \texttt{GROUP BY} and \texttt{CUBE BY} clauses.
Since our work does not focus on the aspect of data cube optimization for enhanced memory consumption and faster query responses, we do not follow the algorithmic merit of QC, or other similar data cube condensation techniques~\cite{beyer1999bottom, wang2002condensed, sismanis2002dwarf}.
}
\section{Conclusions}
\label{sec:conclusion}

In this paper, we addressed the problem of redundancy in Simpson's paradox, a long-standing statistical phenomenon with broad applications in data analysis and causal inference. 
We showed that many paradoxes in multidimensional data are redundant, arising from populations with identical coverage or equivalent separator and label attributes. 
To resolve this issue, we formally defined three types of coverage redundancy, proved that redundancy forms an equivalence relation, and introduced a concise representation based on convexity properties of the population lattice. 
We further developed efficient algorithms that combine depth-first materialization, pruning, and redundancy-aware evaluation to discover all non-redundant Simpson's paradoxes. 
Experiments on both real-world and synthetic datasets demonstrated that redundant paradoxes are prevalent in practice, that our algorithms scale efficiently, and that the discovered paradoxes are structurally robust.  

Future work includes extending our framework to richer data types and continuous attributes, incorporating causal semantics to further refine redundancy definitions, and applying our methods in practical domains such as healthcare, finance, and social science where Simpson's paradox continues to pose challenges for interpretation and decision-making.



\bibliographystyle{ACM-Reference-Format}
\bibliography{main}


\begin{thebibliography}{60}


\ifx \showCODEN    \undefined \def \showCODEN     #1{\unskip}     \fi
\ifx \showDOI      \undefined \def \showDOI       #1{#1}\fi
\ifx \showISBNx    \undefined \def \showISBNx     #1{\unskip}     \fi
\ifx \showISBNxiii \undefined \def \showISBNxiii  #1{\unskip}     \fi
\ifx \showISSN     \undefined \def \showISSN      #1{\unskip}     \fi
\ifx \showLCCN     \undefined \def \showLCCN      #1{\unskip}     \fi
\ifx \shownote     \undefined \def \shownote      #1{#1}          \fi
\ifx \showarticletitle \undefined \def \showarticletitle #1{#1}   \fi
\ifx \showURL      \undefined \def \showURL       {\relax}        \fi
\providecommand\bibfield[2]{#2}
\providecommand\bibinfo[2]{#2}
\providecommand\natexlab[1]{#1}
\providecommand\showeprint[2][]{arXiv:#2}

\bibitem[\protect\citeauthoryear{??}{mis}{1987}]%
        {misc_mushroom_73}
 \bibinfo{year}{1987}\natexlab{}.
\newblock \bibinfo{title}{{Mushroom}}.
\newblock \bibinfo{howpublished}{UCI Machine Learning Repository}.
\newblock


\bibitem[\protect\citeauthoryear{Alipourfard, Fennell, and Lerman}{Alipourfard et~al\mbox{.}}{2018}]%
        {alipourfard2018can}
\bibfield{author}{\bibinfo{person}{Nazanin Alipourfard}, \bibinfo{person}{Peter~G Fennell}, {and} \bibinfo{person}{Kristina Lerman}.} \bibinfo{year}{2018}\natexlab{}.
\newblock \showarticletitle{Can you trust the trend? discovering simpson's paradoxes in social data}. In \bibinfo{booktitle}{\emph{Proceedings of the eleventh ACM international conference on web search and data mining}}.
\newblock


\bibitem[\protect\citeauthoryear{Becker and Kohavi}{Becker and Kohavi}{1996}]%
        {misc_adult_2}
\bibfield{author}{\bibinfo{person}{Barry Becker} {and} \bibinfo{person}{Ronny Kohavi}.} \bibinfo{year}{1996}\natexlab{}.
\newblock \bibinfo{title}{{Adult}}.
\newblock \bibinfo{howpublished}{UCI Machine Learning Repository}.
\newblock


\bibitem[\protect\citeauthoryear{Beyer and Ramakrishnan}{Beyer and Ramakrishnan}{1999}]%
        {beyer1999bottom}
\bibfield{author}{\bibinfo{person}{Kevin Beyer} {and} \bibinfo{person}{Raghu Ramakrishnan}.} \bibinfo{year}{1999}\natexlab{}.
\newblock \showarticletitle{Bottom-up computation of sparse and iceberg cube}. In \bibinfo{booktitle}{\emph{Proceedings of the 1999 ACM SIGMOD international conference on Management of data}}. \bibinfo{pages}{359--370}.
\newblock


\bibitem[\protect\citeauthoryear{Bickel, Hammel, and O'Connell}{Bickel et~al\mbox{.}}{1975}]%
        {bickel1975sex}
\bibfield{author}{\bibinfo{person}{Peter~J Bickel}, \bibinfo{person}{Eugene~A Hammel}, {and} \bibinfo{person}{J~William O'Connell}.} \bibinfo{year}{1975}\natexlab{}.
\newblock \showarticletitle{Sex bias in graduate admissions: Data from {Berkeley}}.
\newblock \bibinfo{journal}{\emph{Science}} (\bibinfo{year}{1975}).
\newblock


\bibitem[\protect\citeauthoryear{Blyth}{Blyth}{1972}]%
        {blyth1972simpson}
\bibfield{author}{\bibinfo{person}{Colin~R Blyth}.} \bibinfo{year}{1972}\natexlab{}.
\newblock \showarticletitle{On Simpson's paradox and the sure-thing principle}.
\newblock \bibinfo{journal}{\emph{J. Amer. Statist. Assoc.}} \bibinfo{volume}{67}, \bibinfo{number}{338} (\bibinfo{year}{1972}), \bibinfo{pages}{364--366}.
\newblock


\bibitem[\protect\citeauthoryear{Bonchi, Gullo, Mishra, and Ramazzotti}{Bonchi et~al\mbox{.}}{2018}]%
        {bonchi2018probabilistic}
\bibfield{author}{\bibinfo{person}{Francesco Bonchi}, \bibinfo{person}{Francesco Gullo}, \bibinfo{person}{Bud Mishra}, {and} \bibinfo{person}{Daniele Ramazzotti}.} \bibinfo{year}{2018}\natexlab{}.
\newblock \showarticletitle{Probabilistic causal analysis of social influence}. In \bibinfo{booktitle}{\emph{Proceedings of the 27th ACM International Conference on Information and Knowledge Management}}. \bibinfo{pages}{1003--1012}.
\newblock


\bibitem[\protect\citeauthoryear{Cates}{Cates}{2002}]%
        {cates2002simpson}
\bibfield{author}{\bibinfo{person}{Christopher~J Cates}.} \bibinfo{year}{2002}\natexlab{}.
\newblock \showarticletitle{Simpson's paradox and calculation of number needed to treat from meta-analysis}.
\newblock \bibinfo{journal}{\emph{BMC Medical research methodology}} \bibinfo{volume}{2}, \bibinfo{number}{1} (\bibinfo{year}{2002}), \bibinfo{pages}{1}.
\newblock


\bibitem[\protect\citeauthoryear{Charig, Webb, Payne, and Wickham}{Charig et~al\mbox{.}}{1986}]%
        {charig1986comparison}
\bibfield{author}{\bibinfo{person}{Christopher~R Charig}, \bibinfo{person}{David~R Webb}, \bibinfo{person}{Stephen~R Payne}, {and} \bibinfo{person}{John~EA Wickham}.} \bibinfo{year}{1986}\natexlab{}.
\newblock \showarticletitle{Comparison of treatment of renal calculi by open surgery, percutaneous nephrolithotomy, and extracorporeal shockwave lithotripsy}.
\newblock \bibinfo{journal}{\emph{British Medical Journal (Clinical research ed.)}} (\bibinfo{year}{1986}).
\newblock


\bibitem[\protect\citeauthoryear{Chen, Feng, and Xiang}{Chen et~al\mbox{.}}{2003}]%
        {chen2003computation}
\bibfield{author}{\bibinfo{person}{Changqing Chen}, \bibinfo{person}{Jianlin Feng}, {and} \bibinfo{person}{Longgang Xiang}.} \bibinfo{year}{2003}\natexlab{}.
\newblock \showarticletitle{Computation of sparse data cubes with constraints}. In \bibinfo{booktitle}{\emph{International Conference on Data Warehousing and Knowledge Discovery}}. Springer, \bibinfo{pages}{14--23}.
\newblock


\bibitem[\protect\citeauthoryear{Chen, You, Zou, Gan, Zhang, and Jia}{Chen et~al\mbox{.}}{2020}]%
        {chen2020exploring}
\bibfield{author}{\bibinfo{person}{Yu Chen}, \bibinfo{person}{Jinguo You}, \bibinfo{person}{Benyuan Zou}, \bibinfo{person}{Guoyu Gan}, \bibinfo{person}{Ting Zhang}, {and} \bibinfo{person}{L Jia}.} \bibinfo{year}{2020}\natexlab{}.
\newblock \showarticletitle{Exploring Structural Characteristics of Lattices in Real World}.
\newblock \bibinfo{journal}{\emph{Complexity}} (\bibinfo{year}{2020}).
\newblock


\bibitem[\protect\citeauthoryear{Deng, Wang, Cao, Qiao, Wang, Xu, Yan, and Madden}{Deng et~al\mbox{.}}{2024}]%
        {deng2024outlier}
\bibfield{author}{\bibinfo{person}{Yuhao Deng}, \bibinfo{person}{Yu Wang}, \bibinfo{person}{Lei Cao}, \bibinfo{person}{Lianpeng Qiao}, \bibinfo{person}{Yuping Wang}, \bibinfo{person}{Jingzhe Xu}, \bibinfo{person}{Yizhou Yan}, {and} \bibinfo{person}{Samuel Madden}.} \bibinfo{year}{2024}\natexlab{}.
\newblock \showarticletitle{Outlier summarization via human interpretable rules}.
\newblock \bibinfo{journal}{\emph{Proceedings of the VLDB Endowment}} \bibinfo{volume}{17}, \bibinfo{number}{7} (\bibinfo{year}{2024}), \bibinfo{pages}{1591--1604}.
\newblock


\bibitem[\protect\citeauthoryear{Fabris and Freitas}{Fabris and Freitas}{2006}]%
        {fabris2006discovering}
\bibfield{author}{\bibinfo{person}{Carem~C Fabris} {and} \bibinfo{person}{Alex~A Freitas}.} \bibinfo{year}{2006}\natexlab{}.
\newblock \showarticletitle{Discovering surprising instances of Simpson's paradox in hierarchical multidimensional data}.
\newblock \bibinfo{journal}{\emph{International Journal of Data Warehousing and Mining (IJDWM)}} \bibinfo{volume}{2}, \bibinfo{number}{1} (\bibinfo{year}{2006}), \bibinfo{pages}{27--49}.
\newblock


\bibitem[\protect\citeauthoryear{Freitas, McGarry, and Correa}{Freitas et~al\mbox{.}}{2007}]%
        {freitas2007integrating}
\bibfield{author}{\bibinfo{person}{Alex Freitas}, \bibinfo{person}{Kenneth McGarry}, {and} \bibinfo{person}{Elon Correa}.} \bibinfo{year}{2007}\natexlab{}.
\newblock \bibinfo{booktitle}{\emph{Integrating Bayesian networks and Simpson's paradox in data mining}}.
\newblock \bibinfo{publisher}{College Publications}.
\newblock


\bibitem[\protect\citeauthoryear{Good and Mittal}{Good and Mittal}{1987}]%
        {good1987amalgamation}
\bibfield{author}{\bibinfo{person}{Irving~John Good} {and} \bibinfo{person}{Yashaswini Mittal}.} \bibinfo{year}{1987}\natexlab{}.
\newblock \showarticletitle{The amalgamation and geometry of two-by-two contingency tables}.
\newblock \bibinfo{journal}{\emph{The Annals of Statistics}} (\bibinfo{year}{1987}).
\newblock


\bibitem[\protect\citeauthoryear{Gray, Bosworth, Layman, and Pirahesh}{Gray et~al\mbox{.}}{1996a}]%
        {10.5555/645481.655593}
\bibfield{author}{\bibinfo{person}{Jim Gray}, \bibinfo{person}{Adam Bosworth}, \bibinfo{person}{Andrew Layman}, {and} \bibinfo{person}{Hamid Pirahesh}.} \bibinfo{year}{1996}\natexlab{a}.
\newblock \showarticletitle{Data Cube: A Relational Aggregation Operator Generalizing Group-By, Cross-Tab, and Sub-Total}. In \bibinfo{booktitle}{\emph{Proceedings of the Twelfth International Conference on Data Engineering}} \emph{(\bibinfo{series}{ICDE '96})}. \bibinfo{publisher}{IEEE Computer Society}, \bibinfo{address}{USA}, \bibinfo{pages}{152–159}.
\newblock
\showISBNx{0818672404}


\bibitem[\protect\citeauthoryear{Gray, Bosworth, Lyaman, and Pirahesh}{Gray et~al\mbox{.}}{1996b}]%
        {492099}
\bibfield{author}{\bibinfo{person}{J. Gray}, \bibinfo{person}{A. Bosworth}, \bibinfo{person}{A. Lyaman}, {and} \bibinfo{person}{H. Pirahesh}.} \bibinfo{year}{1996}\natexlab{b}.
\newblock \showarticletitle{Data cube: a relational aggregation operator generalizing GROUP-BY, CROSS-TAB, and SUB-TOTALS}. In \bibinfo{booktitle}{\emph{Proceedings of the Twelfth International Conference on Data Engineering}}.
\newblock


\bibitem[\protect\citeauthoryear{Guo, Binnig, and Kraska}{Guo et~al\mbox{.}}{2017}]%
        {guo2017you}
\bibfield{author}{\bibinfo{person}{Yue Guo}, \bibinfo{person}{Carsten Binnig}, {and} \bibinfo{person}{Tim Kraska}.} \bibinfo{year}{2017}\natexlab{}.
\newblock \showarticletitle{What you see is not what you get! Detecting Simpson's Paradoxes during Data Exploration}. In \bibinfo{booktitle}{\emph{Proceedings of the 2nd Workshop on Human-In-the-Loop Data Analytics}}.
\newblock


\bibitem[\protect\citeauthoryear{Han, Pei, Dong, and Wang}{Han et~al\mbox{.}}{2001}]%
        {han2001efficient}
\bibfield{author}{\bibinfo{person}{Jiawei Han}, \bibinfo{person}{Jian Pei}, \bibinfo{person}{Guozhu Dong}, {and} \bibinfo{person}{Ke Wang}.} \bibinfo{year}{2001}\natexlab{}.
\newblock \showarticletitle{Efficient computation of iceberg cubes with complex measures}. In \bibinfo{booktitle}{\emph{Proceedings of the 2001 ACM SIGMOD international conference on Management of data}}. \bibinfo{pages}{1--12}.
\newblock


\bibitem[\protect\citeauthoryear{Harinarayan, Rajaraman, and Ullman}{Harinarayan et~al\mbox{.}}{1996}]%
        {harinarayan1996implementing}
\bibfield{author}{\bibinfo{person}{Venky Harinarayan}, \bibinfo{person}{Anand Rajaraman}, {and} \bibinfo{person}{Jeffrey~D Ullman}.} \bibinfo{year}{1996}\natexlab{}.
\newblock \showarticletitle{Implementing data cubes efficiently}.
\newblock \bibinfo{journal}{\emph{Acm Sigmod Record}} \bibinfo{volume}{25}, \bibinfo{number}{2} (\bibinfo{year}{1996}), \bibinfo{pages}{205--216}.
\newblock


\bibitem[\protect\citeauthoryear{Jiang, Xu, Zhang, Shen, Li, Kuang, Cai, and Wu}{Jiang et~al\mbox{.}}{2025}]%
        {jiang2025fedcfa}
\bibfield{author}{\bibinfo{person}{Zhonghua Jiang}, \bibinfo{person}{Jimin Xu}, \bibinfo{person}{Shengyu Zhang}, \bibinfo{person}{Tao Shen}, \bibinfo{person}{Jiwei Li}, \bibinfo{person}{Kun Kuang}, \bibinfo{person}{Haibin Cai}, {and} \bibinfo{person}{Fei Wu}.} \bibinfo{year}{2025}\natexlab{}.
\newblock \showarticletitle{Fedcfa: Alleviating simpson’s paradox in model aggregation with counterfactual federated learning}. In \bibinfo{booktitle}{\emph{Proceedings of the AAAI Conference on Artificial Intelligence}}, Vol.~\bibinfo{volume}{39}. \bibinfo{pages}{17662--17670}.
\newblock


\bibitem[\protect\citeauthoryear{John and Koch}{John and Koch}{2022}]%
        {john2022high}
\bibfield{author}{\bibinfo{person}{Sachin~Basil John} {and} \bibinfo{person}{Christoph Koch}.} \bibinfo{year}{2022}\natexlab{}.
\newblock \showarticletitle{High-dimensional data cubes}.
\newblock \bibinfo{journal}{\emph{Proceedings of the VLDB Endowment}} (\bibinfo{year}{2022}).
\newblock


\bibitem[\protect\citeauthoryear{Kenneth and Srivastava}{Kenneth and Srivastava}{1997}]%
        {kenneth1997fast}
\bibfield{author}{\bibinfo{person}{A Kenneth} {and} \bibinfo{person}{D Srivastava}.} \bibinfo{year}{1997}\natexlab{}.
\newblock \showarticletitle{Fast computation of sparse datacubes}. In \bibinfo{booktitle}{\emph{Proceedings of the 23rd International Conference on Very Large Data Bases, VLDB}}.
\newblock


\bibitem[\protect\citeauthoryear{Kievit, Frankenhuis, Waldorp, and Borsboom}{Kievit et~al\mbox{.}}{2013}]%
        {kievit2013simpson}
\bibfield{author}{\bibinfo{person}{R. Kievit}, \bibinfo{person}{W. Frankenhuis}, \bibinfo{person}{L. Waldorp}, {and} \bibinfo{person}{D. Borsboom}.} \bibinfo{year}{2013}\natexlab{}.
\newblock \showarticletitle{Simpson's paradox in psychological science: a practical guide}.
\newblock \bibinfo{journal}{\emph{Frontiers in psychology}} (\bibinfo{year}{2013}).
\newblock


\bibitem[\protect\citeauthoryear{Krishnan, Wang, Wu, Franklin, and Goldberg}{Krishnan et~al\mbox{.}}{2016}]%
        {krishnan2016activeclean}
\bibfield{author}{\bibinfo{person}{Sanjay Krishnan}, \bibinfo{person}{Jiannan Wang}, \bibinfo{person}{Eugene Wu}, \bibinfo{person}{Michael~J Franklin}, {and} \bibinfo{person}{Ken Goldberg}.} \bibinfo{year}{2016}\natexlab{}.
\newblock \showarticletitle{Activeclean: Interactive data cleaning for statistical modeling}.
\newblock \bibinfo{journal}{\emph{Proceedings of the VLDB Endowment}} \bibinfo{volume}{9}, \bibinfo{number}{12} (\bibinfo{year}{2016}), \bibinfo{pages}{948--959}.
\newblock


\bibitem[\protect\citeauthoryear{Lakshmanan, Pei, and Han}{Lakshmanan et~al\mbox{.}}{2002}]%
        {lakshmanan2002quotient}
\bibfield{author}{\bibinfo{person}{Laks~VS Lakshmanan}, \bibinfo{person}{Jian Pei}, {and} \bibinfo{person}{Jiawei Han}.} \bibinfo{year}{2002}\natexlab{}.
\newblock \showarticletitle{Quotient cube: How to summarize the semantics of a data cube}. In \bibinfo{booktitle}{\emph{VLDB'02: Proceedings of the 28th International Conference on Very Large Databases}}. Elsevier, \bibinfo{pages}{778--789}.
\newblock


\bibitem[\protect\citeauthoryear{Lakshmanan, Pei, and Zhao}{Lakshmanan et~al\mbox{.}}{2003}]%
        {lakshmanan2003qc}
\bibfield{author}{\bibinfo{person}{Laks~VS Lakshmanan}, \bibinfo{person}{Jian Pei}, {and} \bibinfo{person}{Yan Zhao}.} \bibinfo{year}{2003}\natexlab{}.
\newblock \showarticletitle{QC-Trees: An efficient summary structure for semantic OLAP}. In \bibinfo{booktitle}{\emph{Proceedings of the 2003 ACM SIGMOD international conference on Management of data}}. \bibinfo{pages}{64--75}.
\newblock


\bibitem[\protect\citeauthoryear{Lenz and Shoshani}{Lenz and Shoshani}{1997}]%
        {lenz1997summarizability}
\bibfield{author}{\bibinfo{person}{H-J Lenz} {and} \bibinfo{person}{Arie Shoshani}.} \bibinfo{year}{1997}\natexlab{}.
\newblock \showarticletitle{Summarizability in OLAP and statistical data bases}. In \bibinfo{booktitle}{\emph{Proceedings. Ninth International Conference on Scientific and Statistical Database Management (Cat. No. 97TB100150)}}. IEEE, \bibinfo{pages}{132--143}.
\newblock


\bibitem[\protect\citeauthoryear{Lin, Youngmann, Moskovitch, Jagadish, and Milo}{Lin et~al\mbox{.}}{2021}]%
        {lin2021detecting}
\bibfield{author}{\bibinfo{person}{Y. Lin}, \bibinfo{person}{B. Youngmann}, \bibinfo{person}{Y. Moskovitch}, \bibinfo{person}{HV. Jagadish}, {and} \bibinfo{person}{T. Milo}.} \bibinfo{year}{2021}\natexlab{}.
\newblock \showarticletitle{On detecting cherry-picked generalizations}.
\newblock \bibinfo{journal}{\emph{Proc. of the VLDB Endowment}} (\bibinfo{year}{2021}).
\newblock


\bibitem[\protect\citeauthoryear{Liu, Feng, Wang, Wong, Ng, Mah, and Lee}{Liu et~al\mbox{.}}{2011}]%
        {liu2011towards}
\bibfield{author}{\bibinfo{person}{Guimei Liu}, \bibinfo{person}{Mengling Feng}, \bibinfo{person}{Yue Wang}, \bibinfo{person}{Limsoon Wong}, \bibinfo{person}{See-Kiong Ng}, \bibinfo{person}{Tzia~Liang Mah}, {and} \bibinfo{person}{Edmund Jon~Deoon Lee}.} \bibinfo{year}{2011}\natexlab{}.
\newblock \showarticletitle{Towards exploratory hypothesis testing and analysis}. In \bibinfo{booktitle}{\emph{2011 IEEE 27th International Conference on Data Engineering}}.
\newblock


\bibitem[\protect\citeauthoryear{Ma}{Ma}{2015}]%
        {ma2015simpson}
\bibfield{author}{\bibinfo{person}{Y~Zee Ma}.} \bibinfo{year}{2015}\natexlab{}.
\newblock \showarticletitle{Simpson’s paradox in GDP and per capita GDP growths}.
\newblock \bibinfo{journal}{\emph{Empirical Economics}} (\bibinfo{year}{2015}).
\newblock


\bibitem[\protect\citeauthoryear{Pearl}{Pearl}{2014}]%
        {pearl2014comment}
\bibfield{author}{\bibinfo{person}{Judea Pearl}.} \bibinfo{year}{2014}\natexlab{}.
\newblock \showarticletitle{Comment: Understanding Simpson's Paradox}.
\newblock \bibinfo{journal}{\emph{The American Statistician}} \bibinfo{volume}{68}, \bibinfo{number}{1} (\bibinfo{year}{2014}), \bibinfo{pages}{8--13}.
\newblock


\bibitem[\protect\citeauthoryear{Pearson~Karl and Leslie}{Pearson~Karl and Leslie}{1899}]%
        {pearson1899genetic}
\bibfield{author}{\bibinfo{person}{LA Pearson~Karl} {and} \bibinfo{person}{BM Leslie}.} \bibinfo{year}{1899}\natexlab{}.
\newblock \showarticletitle{Genetic (reproductive) selection: inheritance of fertility in man, and of fecundity in thoroughbred racehorses}.
\newblock \bibinfo{journal}{\emph{Philos. Trans. R. Soc. Lond. Ser. A}} (\bibinfo{year}{1899}).
\newblock


\bibitem[\protect\citeauthoryear{Phan-Luong}{Phan-Luong}{2016}]%
        {phan2016data}
\bibfield{author}{\bibinfo{person}{Viet Phan-Luong}.} \bibinfo{year}{2016}\natexlab{}.
\newblock \showarticletitle{A Data Cube Representation for Efficient Querying and Updating}. In \bibinfo{booktitle}{\emph{2016 International Conference on Computational Science and Computational Intelligence}}.
\newblock


\bibitem[\protect\citeauthoryear{Portela, Ribeiro, and Gama}{Portela et~al\mbox{.}}{2019}]%
        {portela2019search}
\bibfield{author}{\bibinfo{person}{Eduarda Portela}, \bibinfo{person}{Rita~P Ribeiro}, {and} \bibinfo{person}{Joao Gama}.} \bibinfo{year}{2019}\natexlab{}.
\newblock \showarticletitle{The search of conditional outliers}.
\newblock \bibinfo{journal}{\emph{Intelligent Data Analysis}} \bibinfo{volume}{23}, \bibinfo{number}{1} (\bibinfo{year}{2019}), \bibinfo{pages}{23--39}.
\newblock


\bibitem[\protect\citeauthoryear{Salazar, Neutatz, and Abedjan}{Salazar et~al\mbox{.}}{2021}]%
        {salazar2021automated}
\bibfield{author}{\bibinfo{person}{Ricardo Salazar}, \bibinfo{person}{Felix Neutatz}, {and} \bibinfo{person}{Ziawasch Abedjan}.} \bibinfo{year}{2021}\natexlab{}.
\newblock \showarticletitle{Automated feature engineering for algorithmic fairness}.
\newblock \bibinfo{journal}{\emph{Proceedings of the VLDB Endowment}} (\bibinfo{year}{2021}).
\newblock


\bibitem[\protect\citeauthoryear{Salimi, Cole, Li, Gehrke, and Suciu}{Salimi et~al\mbox{.}}{2018a}]%
        {salimi2018hypdb}
\bibfield{author}{\bibinfo{person}{Babak Salimi}, \bibinfo{person}{Corey Cole}, \bibinfo{person}{Peter Li}, \bibinfo{person}{Johannes Gehrke}, {and} \bibinfo{person}{Dan Suciu}.} \bibinfo{year}{2018}\natexlab{a}.
\newblock \showarticletitle{HypDB: a demonstration of detecting, explaining and resolving bias in OLAP queries}.
\newblock \bibinfo{journal}{\emph{Proceedings of the VLDB Endowment}} \bibinfo{volume}{11}, \bibinfo{number}{12} (\bibinfo{year}{2018}), \bibinfo{pages}{2062--2065}.
\newblock


\bibitem[\protect\citeauthoryear{Salimi, Gehrke, and Suciu}{Salimi et~al\mbox{.}}{2018b}]%
        {salimi2018bias}
\bibfield{author}{\bibinfo{person}{Babak Salimi}, \bibinfo{person}{Johannes Gehrke}, {and} \bibinfo{person}{Dan Suciu}.} \bibinfo{year}{2018}\natexlab{b}.
\newblock \showarticletitle{Bias in OLAP queries: Detection, explanation, and removal}. In \bibinfo{booktitle}{\emph{Proceedings of the 2018 International Conference on Management of Data}}. \bibinfo{pages}{1021--1035}.
\newblock


\bibitem[\protect\citeauthoryear{Salimi, Howe, and Suciu}{Salimi et~al\mbox{.}}{2020}]%
        {salimi2020database}
\bibfield{author}{\bibinfo{person}{Babak Salimi}, \bibinfo{person}{Bill Howe}, {and} \bibinfo{person}{Dan Suciu}.} \bibinfo{year}{2020}\natexlab{}.
\newblock \showarticletitle{Database repair meets algorithmic fairness}.
\newblock \bibinfo{journal}{\emph{ACM SIGMOD Record}} \bibinfo{volume}{49}, \bibinfo{number}{1} (\bibinfo{year}{2020}), \bibinfo{pages}{34--41}.
\newblock


\bibitem[\protect\citeauthoryear{Samuels}{Samuels}{1993}]%
        {samuels1993simpson}
\bibfield{author}{\bibinfo{person}{Myra~L Samuels}.} \bibinfo{year}{1993}\natexlab{}.
\newblock \showarticletitle{Simpson's paradox and related phenomena}.
\newblock \bibinfo{journal}{\emph{J. Amer. Statist. Assoc.}} (\bibinfo{year}{1993}).
\newblock


\bibitem[\protect\citeauthoryear{Sharma, Garayev, Kaushik, Peious, Tiwari, and Draheim}{Sharma et~al\mbox{.}}{2022}]%
        {sharma2022detecting}
\bibfield{author}{\bibinfo{person}{Rahul Sharma}, \bibinfo{person}{Huseyn Garayev}, \bibinfo{person}{Minakshi Kaushik}, \bibinfo{person}{Sijo~Arakkal Peious}, \bibinfo{person}{Prayag Tiwari}, {and} \bibinfo{person}{Dirk Draheim}.} \bibinfo{year}{2022}\natexlab{}.
\newblock \showarticletitle{Detecting Simpson’s Paradox: A Machine Learning Perspective}. In \bibinfo{booktitle}{\emph{International Conference on Database and Expert Systems Applications}}. Springer, \bibinfo{pages}{323--335}.
\newblock


\bibitem[\protect\citeauthoryear{Simpson}{Simpson}{1951}]%
        {https://doi.org/10.1111/j.2517-6161.1951.tb00088.x}
\bibfield{author}{\bibinfo{person}{E.~H. Simpson}.} \bibinfo{year}{1951}\natexlab{}.
\newblock \showarticletitle{The Interpretation of Interaction in Contingency Tables}.
\newblock \bibinfo{journal}{\emph{Journal of the Royal Statistical Society: Series B (Methodological)}} \bibinfo{volume}{13}, \bibinfo{number}{2} (\bibinfo{year}{1951}).
\newblock


\bibitem[\protect\citeauthoryear{Sismanis, Deligiannakis, Roussopoulos, and Kotidis}{Sismanis et~al\mbox{.}}{2002}]%
        {sismanis2002dwarf}
\bibfield{author}{\bibinfo{person}{Yannis Sismanis}, \bibinfo{person}{Antonios Deligiannakis}, \bibinfo{person}{Nick Roussopoulos}, {and} \bibinfo{person}{Yannis Kotidis}.} \bibinfo{year}{2002}\natexlab{}.
\newblock \showarticletitle{Dwarf: Shrinking the petacube}. In \bibinfo{booktitle}{\emph{Proceedings of the 2002 ACM SIGMOD international conference on Management of data}}. \bibinfo{pages}{464--475}.
\newblock


\bibitem[\protect\citeauthoryear{Spirtes, Glymour, and Scheines}{Spirtes et~al\mbox{.}}{2000}]%
        {spirtes2000causation}
\bibfield{author}{\bibinfo{person}{Peter Spirtes}, \bibinfo{person}{Clark Glymour}, {and} \bibinfo{person}{Richard Scheines}.} \bibinfo{year}{2000}\natexlab{}.
\newblock \bibinfo{booktitle}{\emph{Causation, Prediction, and Search} (\bibinfo{edition}{2nd} ed.)}.
\newblock \bibinfo{publisher}{MIT Press}, \bibinfo{address}{Cambridge, MA}.
\newblock
\showISBNx{9780262194402}


\bibitem[\protect\citeauthoryear{Sprenger and Weinberger}{Sprenger and Weinberger}{2021}]%
        {sprenger2021simpson}
\bibfield{author}{\bibinfo{person}{Jan Sprenger} {and} \bibinfo{person}{Naftali Weinberger}.} \bibinfo{year}{2021}\natexlab{}.
\newblock \showarticletitle{Simpson's Paradox}.
\newblock In \bibinfo{booktitle}{\emph{The {Stanford} Encyclopedia of Philosophy}}. \bibinfo{publisher}{Metaphysics Research Lab, Stanford University}.
\newblock


\bibitem[\protect\citeauthoryear{Tang, Bailey, Pei, and Dong}{Tang et~al\mbox{.}}{2013}]%
        {tang2013mining}
\bibfield{author}{\bibinfo{person}{Guanting Tang}, \bibinfo{person}{James Bailey}, \bibinfo{person}{Jian Pei}, {and} \bibinfo{person}{G. Dong}.} \bibinfo{year}{2013}\natexlab{}.
\newblock \showarticletitle{Mining multidimensional contextual outliers from categorical relational data}. In \bibinfo{booktitle}{\emph{Proceedings of the 25th International Conference on Scientific and Statistical Database Management}}.
\newblock


\bibitem[\protect\citeauthoryear{Tu, Gunnell, and Gilthorpe}{Tu et~al\mbox{.}}{2008}]%
        {tu2008simpson}
\bibfield{author}{\bibinfo{person}{Yu-Kang Tu}, \bibinfo{person}{David Gunnell}, {and} \bibinfo{person}{Mark~S Gilthorpe}.} \bibinfo{year}{2008}\natexlab{}.
\newblock \showarticletitle{Simpson's Paradox, Lord's Paradox, and Suppression Effects are the same phenomenon--the reversal paradox}.
\newblock \bibinfo{journal}{\emph{Emerging themes in epidemiology}} \bibinfo{volume}{5}, \bibinfo{number}{1} (\bibinfo{year}{2008}), \bibinfo{pages}{2}.
\newblock


\bibitem[\protect\citeauthoryear{Valiant}{Valiant}{1979}]%
        {valiant1979complexity}
\bibfield{author}{\bibinfo{person}{Leslie~G Valiant}.} \bibinfo{year}{1979}\natexlab{}.
\newblock \showarticletitle{The complexity of enumeration and reliability problems}.
\newblock \bibinfo{journal}{\emph{siam Journal on Computing}} \bibinfo{volume}{8}, \bibinfo{number}{3} (\bibinfo{year}{1979}), \bibinfo{pages}{410--421}.
\newblock


\bibitem[\protect\citeauthoryear{Vitter, Wang, and Iyer}{Vitter et~al\mbox{.}}{1998}]%
        {vitter1998data}
\bibfield{author}{\bibinfo{person}{Jeffrey~Scott Vitter}, \bibinfo{person}{Min Wang}, {and} \bibinfo{person}{Bala Iyer}.} \bibinfo{year}{1998}\natexlab{}.
\newblock \showarticletitle{Data cube approximation and histograms via wavelets}. In \bibinfo{booktitle}{\emph{Proceedings of the seventh international conference on Information and knowledge management}}. \bibinfo{pages}{96--104}.
\newblock


\bibitem[\protect\citeauthoryear{Wagner}{Wagner}{1982}]%
        {wagner1982simpson}
\bibfield{author}{\bibinfo{person}{C. Wagner}.} \bibinfo{year}{1982}\natexlab{}.
\newblock \showarticletitle{Simpson’s paradox in real life}.
\newblock \bibinfo{journal}{\emph{The American Statistician}} (\bibinfo{year}{1982}).
\newblock


\bibitem[\protect\citeauthoryear{Wang, He, Xu, Li, and Chu}{Wang et~al\mbox{.}}{2023}]%
        {wang2023learning}
\bibfield{author}{\bibinfo{person}{Jingwei Wang}, \bibinfo{person}{Jianshan He}, \bibinfo{person}{Weidi Xu}, \bibinfo{person}{Ruopeng Li}, {and} \bibinfo{person}{Wei Chu}.} \bibinfo{year}{2023}\natexlab{}.
\newblock \showarticletitle{Learning to Discover Various Simpson's Paradoxes}. In \bibinfo{booktitle}{\emph{Proceedings of the 29th ACM SIGKDD Conference on Knowledge Discovery and Data Mining}}. \bibinfo{pages}{5092--5103}.
\newblock


\bibitem[\protect\citeauthoryear{Wang, Feng, Lu, and Yu}{Wang et~al\mbox{.}}{2002}]%
        {wang2002condensed}
\bibfield{author}{\bibinfo{person}{Wei Wang}, \bibinfo{person}{Jianlin Feng}, \bibinfo{person}{Hongjun Lu}, {and} \bibinfo{person}{Jeffrey~Xu Yu}.} \bibinfo{year}{2002}\natexlab{}.
\newblock \showarticletitle{Condensed cube: An effective approach to reducing data cube size}. In \bibinfo{booktitle}{\emph{Proceedings 18th International Conference on Data Engineering}}. IEEE, \bibinfo{pages}{155--165}.
\newblock


\bibitem[\protect\citeauthoryear{Xie, Hao, Pedersen, Jin, and Chen}{Xie et~al\mbox{.}}{2016}]%
        {xie2016olap}
\bibfield{author}{\bibinfo{person}{Xike Xie}, \bibinfo{person}{Xingjun Hao}, \bibinfo{person}{Torben~Bach Pedersen}, \bibinfo{person}{Peiquan Jin}, {and} \bibinfo{person}{Jinchuan Chen}.} \bibinfo{year}{2016}\natexlab{}.
\newblock \showarticletitle{OLAP over probabilistic data cubes I: Aggregating, materializing, and querying}. In \bibinfo{booktitle}{\emph{2016 IEEE 32nd International Conference on Data Engineering (ICDE)}}.
\newblock


\bibitem[\protect\citeauthoryear{Xu, Brown, and Grant}{Xu et~al\mbox{.}}{2018}]%
        {xu2018detecting}
\bibfield{author}{\bibinfo{person}{Chenguang Xu}, \bibinfo{person}{Sarah~M Brown}, {and} \bibinfo{person}{Christan Grant}.} \bibinfo{year}{2018}\natexlab{}.
\newblock \showarticletitle{Detecting Simpson’s paradox}. In \bibinfo{booktitle}{\emph{The Thirty-First International Flairs Conference}}.
\newblock


\bibitem[\protect\citeauthoryear{Xu, Pei, and Cong}{Xu et~al\mbox{.}}{2022}]%
        {xu2022finding}
\bibfield{author}{\bibinfo{person}{Jay Xu}, \bibinfo{person}{Jian Pei}, {and} \bibinfo{person}{Zicun Cong}.} \bibinfo{year}{2022}\natexlab{}.
\newblock \showarticletitle{Finding Multidimensional Simpson's Paradox}.
\newblock \bibinfo{journal}{\emph{ACM SIGKDD Explorations Newsletter}} \bibinfo{volume}{24}, \bibinfo{number}{2} (\bibinfo{year}{2022}), \bibinfo{pages}{48--60}.
\newblock


\bibitem[\protect\citeauthoryear{You, Wang, Huang, Yi, Fu, Liu, Zhang, and Yao}{You et~al\mbox{.}}{2025}]%
        {you2025soc}
\bibfield{author}{\bibinfo{person}{Jinguo You}, \bibinfo{person}{Yuxuan Wang}, \bibinfo{person}{Xingrui Huang}, \bibinfo{person}{Zhenrui Yi}, \bibinfo{person}{Wanting Fu}, \bibinfo{person}{Kaiqi Liu}, \bibinfo{person}{Pengchen Zhang}, {and} \bibinfo{person}{Bin Yao}.} \bibinfo{year}{2025}\natexlab{}.
\newblock \showarticletitle{SOC: A Succinct Adaptive Semantic OLAP Caching: J. You et al.}
\newblock \bibinfo{journal}{\emph{Data Science and Engineering}} (\bibinfo{year}{2025}), \bibinfo{pages}{1--18}.
\newblock


\bibitem[\protect\citeauthoryear{Youngmann, Cafarella, Gilad, and Roy}{Youngmann et~al\mbox{.}}{2024}]%
        {youngmann2024summarized}
\bibfield{author}{\bibinfo{person}{Brit Youngmann}, \bibinfo{person}{Michael Cafarella}, \bibinfo{person}{Amir Gilad}, {and} \bibinfo{person}{Sudeepa Roy}.} \bibinfo{year}{2024}\natexlab{}.
\newblock \showarticletitle{Summarized causal explanations for aggregate views}.
\newblock \bibinfo{journal}{\emph{Proceedings of the ACM on Management of Data}} \bibinfo{volume}{2}, \bibinfo{number}{1} (\bibinfo{year}{2024}), \bibinfo{pages}{1--27}.
\newblock


\bibitem[\protect\citeauthoryear{Yule}{Yule}{1903}]%
        {yule1903notes}
\bibfield{author}{\bibinfo{person}{G~Udny Yule}.} \bibinfo{year}{1903}\natexlab{}.
\newblock \showarticletitle{Notes on the theory of association of attributes in statistics}.
\newblock \bibinfo{journal}{\emph{Biometrika}} (\bibinfo{year}{1903}).
\newblock


\bibitem[\protect\citeauthoryear{Zhao, Deshpande, and Naughton}{Zhao et~al\mbox{.}}{1997}]%
        {zhao1997array}
\bibfield{author}{\bibinfo{person}{Yihong Zhao}, \bibinfo{person}{Prasad~M Deshpande}, {and} \bibinfo{person}{Jeffrey~F Naughton}.} \bibinfo{year}{1997}\natexlab{}.
\newblock \showarticletitle{An array-based algorithm for simultaneous multidimensional aggregates}. In \bibinfo{booktitle}{\emph{Proceedings of the 1997 ACM SIGMOD international conference on Management of data}}. \bibinfo{pages}{159--170}.
\newblock


\bibitem[\protect\citeauthoryear{Zuyderhoff, Denuit, and Trufin}{Zuyderhoff et~al\mbox{.}}{2025}]%
        {zuyderhoff2025simpson}
\bibfield{author}{\bibinfo{person}{Pierre Zuyderhoff}, \bibinfo{person}{M Denuit}, {and} \bibinfo{person}{J Trufin}.} \bibinfo{year}{2025}\natexlab{}.
\newblock \showarticletitle{Simpson’s Paradox for Kendall’s Rank Coefficient}.
\newblock \bibinfo{journal}{\emph{Methodology and Computing in Applied Probability}} (\bibinfo{year}{2025}).
\newblock


\end{thebibliography}
\appendix
\section{Synthetic Generation}

In this section, we introduce a few additional findings on coverage redundant Simpson's paradoxes from a data generative perspective. We first discuss the generation of separate instances of Simpson's paradox in Section~\ref{sec:sp-generation}. We then introduce the process of generating coverage redundant Simpson's paradoxes in Section~\ref{sec:cr-generation}. Finally, we summarize the overall data synthesization procedure in Section~\ref{sec:generative-procedure}.

\subsection{Generating Simpson's Paradoxes}
\label{sec:sp-generation}
Recall from Definition~\ref{def:simpson} that an association configuration (AC) $p = (s_1,s_2,X,Y)$ is a Simpson's paradox if:
\begin{align*}
    (1) \; &P(Y | s_1) \leq P(Y | s_2); \text{ and}\\
    (2) \; &P(Y | s_1\substitute{X}{v}) \geq P(Y | s_2\substitute{X}{v}), \; \forall v \in \Dom(X). 
\end{align*}
In particular, the frequency statistics $P(Y | s_j)$ for $j=1,2$ in condition (1), is obtained by weighted averaging their sub-population frequency statistics in condition (2), where the weights are determined by the relative coverage sizes of each sub-population. Specifically, we have that:
\begin{align*}
    P(Y | s_j) &= \sum_{v \in \Dom(X)} \frac{|\cov(s_j\substitute{X}{v})|}{|\cov(s_j)|} \cdot P(Y | s_j\substitute{X}{v}) \\
    &= \bm{Q}(s_j | X) \cdot \bm{P}(s_j | Y, X)^\top,
\end{align*}
where
\begin{multline*}
    \bm{Q}(s_j | X) = \left[ \frac{|\cov(s_j\substitute{X}{v_1})|}{|\cov(s_j)|}, \frac{|\cov(s_j\substitute{X}{v_2})|}{|\cov(s_j)|}, \right. \\ \left. \ldots, \frac{|\cov(s_j\substitute{X}{v_{|\Dom(X)|}})|}{|\cov(s_j)|} \right]
\end{multline*}
is the sample distribution of $s_j$ partitioned under $X$, and
\begin{multline*}
    \bm{P}(s_j | Y, X) = \left[ P(Y | s_j\substitute{X}{v_1}), P(Y | s_j\substitute{X}{v_2}), \right. \\ \left. \ldots, P(Y | s_j\substitute{X}{v_{|\Dom(X)|}}) \right]
\end{multline*}
is the frequency statistics of $s_j$'s sub-populations partitioned by $X$. With this, we can rephrase Definition~\ref{def:simpson} by substituting the terms, that $(s_1,s_2,X,Y)$ is a Simpson's paradox if:
%
%
%
\nop{
This gives us the following inequalities,
\begin{multline}
\label{eq:aggr-bound}
\min_{1 \leq j \leq |\Dom(X)|} \bm{P}(s_1 \mid Y_1, X)[j] \leq P(Y_1 \mid s_1) \\ \leq \max_{1 \leq j \leq |\Dom(X)|} \bm{P}(s_1 \mid Y_1, X)[j].
\end{multline}
Suppose $s_2$ being a sibling of $s_1$ with differential values $\{u_1,u_2\}$ under the differential attribute $X_{i_0}$ ($X_{i_0} \neq X$), and let $s$ denote their common parent, following Definition~\ref{def:simpson}, $(s, X_{i_0}, \{u_1,u_2\}, X, Y_1)$ is a Simpson's paradox if,
}
\begin{align}
    \bm{Q}(s_1 | X) \cdot \bm{P}(s_1 | Y, X)^\top < \bm{Q}(s_2 | X) \cdot \bm{P}(s_2 | Y, X)^\top; \text{ and} \label{eq:aggr-ineq} \\
    \bm{P}(s_1 | Y, X)[j] > \bm{P}(s_2 | Y, X)[j], \,  1 \leq j \leq |\Dom(X)|. \label{eq:sub-pop-aggr}
\end{align}
Therefore, the essence of generating an instance of Simpson's paradox is to \emph{(a)} find the set of sub-population frequency statistics $\bm{P}(s_1 | Y, X)$ and $\bm{P}(s_2 | Y, X)$ that satisfy inequality (\ref{eq:sub-pop-aggr}); and \emph{(b)} solve for the sample distributions $\bm{Q}(s_1 | X)$ and $\bm{Q}(s_2 | X)$ that satisfy inequality (\ref{eq:aggr-ineq}). We discuss each in the following paragraphs.

\paragraph{Sub-population Frequency Statistics}
\label{sec:sub-pop-aggr-gen}
The first step of generating an instance of Simpson's paradox is to ensure that each sub-population in $s_2$, partitioned by $X$, has an frequency statistics value smaller than its sibling sub-population in $s_1$ (inequality~(\ref{eq:sub-pop-aggr})). A simple pattern that achieves this is:
\begin{multline}
    \bm{P}(s_2 | Y, X)[1] < \bm{P}(s_1 | Y, X)[1] < \\
     \bm{P}(s_2 | Y, X)[2] < \bm{P}(s_1 | Y, X)[2] < \\
     \cdots \\
     \bm{P}(s_2 | Y, X)[|\Dom(X)|] < \\ 
     \bm{P}(s_1 | Y, X)[|\Dom(X)|]. \label{eq:sub-aggr-pair-pattern}
\end{multline}
This pattern ensures that for any value $v_j$ $(1 \leq j \leq |\Dom(X)|)$, we have $\bm{P}(s_1 | Y, X)[j] > \bm{P}(s_2 | Y, X)[j]$, satisfying inequality~(\ref{eq:sub-pop-aggr}). With this pattern, we can now focus on finding sample distributions that satisfy inequality (\ref{eq:aggr-ineq}).

\paragraph{Sample Distributions}
\label{sec:sample-dist-gen}
\nop{
Suppose $\bm{P}(s_1 \mid Y_1, X)$ and $\bm{P}(s_2 \mid Y_1, X)$ follow the pattern in (\ref{eq:sub-aggr-pair-pattern}), 
then to satisfy inequality (\ref{eq:aggr-ineq}), 
$\bm{Q}(s_1 \mid X)$ and $\bm{Q}(s_2 \mid X)$ can be obtained by simply assigning,
$$\bm{Q}(s_1 \mid X)[j] = \begin{cases}
    1 \mid j = 1 \\
    0 \mid \text{otherwise }
\end{cases} 1 \leq j \leq |\Dom(X)|,$$
$$\bm{Q}(s_2 \mid X)[j] = \begin{cases}
    1 \mid j = |\Dom(X)| \\
    0 \mid \text{otherwise }
\end{cases} 1 \leq j \leq |\Dom(X)|.$$
However, this would result in $cov(s_1) = cov(s_1[X = v_1])$ and $cov(s_2) = cov(s_2[X = v_{|\Dom(X)|}])$.
As such, $X$ no longer provides a valid sub-population partition for $s_1$ and $s_2$.
As an alternative approach to ensure non-empty assignment of sub-populations, we can aim to make the sample distributions as uniform as possible while maintaining inequality (\ref{eq:aggr-ineq}). 
While various objective functions could be used, here we present one possible formulation where $\bm{Q}(s_m \mid X)$ for $m=1,2$ can be obtained by solving the following convex quadratic program (QP):
}
Given the sub-population frequency statistics pattern established above, we now need to solve for sample distributions $\bm{Q}(s_1 | X)$ and $\bm{Q}(s_2 | X)$ that satisfy inequality~(\ref{eq:aggr-ineq}). To achieve this, we formulate the problem as a quadratic program:
\begin{equation} \label{eq:qp-convex-pair}
\begin{aligned}
    \text{minimize} \quad & \sum_{j=1}^2 \left\| \bm{Q}(s_j | X) - \frac{1}{|\Dom(X)|}\mathbf{1} \right\|_2^2 \\
    \text{subject to} \quad & \text{(i)} \quad \sum_{k=1}^{|\Dom(X)|} \bm{Q}(s_j | X)[k] = 1, \quad j = 1, 2; \\
    & \text{(ii)} \quad \bm{Q}(s_j | X)[k] > 0, \quad j = 1,2 \\
    & \phantom{(i)} \quad \text{and } 1 \leq k \leq |\Dom(X)|; \\
    & \text{(iii)} \quad \bm{Q}(s_2 | X) \cdot \bm{P}(s_2 | Y, X)^\top \\
    & \phantom{(i)} \quad > \bm{Q}(s_1 | X) \cdot \bm{P}(s_1 | Y, X)^\top.
\end{aligned}
\end{equation}
We incorporate inequality (\ref{eq:aggr-ineq}) as a linear constraint specified in condition (iii). Moreover, the objective function of the QP aims to minimize the squared distance between each sample distribution and the uniform distribution $\frac{1}{|\Dom(X)|}\mathbf{1}$ to promote uniformity. Our synthetic generator also supports optimizing towards other distribution patterns such as normal or Zipfian distributions to better mimic the statistics of real-world data. 
%

\paragraph{Synthesis of Simpson's Paradox}
\label{para:gen-one-sp}
To summarize, generating an instance of Simpson's paradox requires establishing sub-population frequency statistics satisfying inequality~(\ref{eq:sub-pop-aggr}) and solving for sample distributions that produce the reversal effect outlined in inequality~(\ref{eq:aggr-ineq}). Algorithm~\ref{alg:gen-sp-separate} formalizes this generation procedure, taking as input an AC $(s_1,s_2,X,Y)$ and crucially, a size parameter $U$ that controls the number of records covered by the generated Simpson's paradox. The algorithm proceeds in three main steps: first, it generates the sub-population frequency statistics following the pattern in Equation~(\ref{eq:sub-aggr-pair-pattern}); second, it solves the quadratic program in~(\ref{eq:qp-convex-pair}) to obtain optimal sample distributions; and third, it populates the output table with $2 \cdot U$ records for the Simpson's paradox $(s_1,s_2,X,Y)$, distributing records across sub-populations according to the sample distributions and assigning label values according to probabilities given by the sub-population frequency statistics.

\nop{
\paragraph{Generating Multiple Simpson's Paradoxes}
\label{sec:one-sp-gen}
An immediate generalization is that one can now generate multiple Simpson's paradoxes simultaneously on populations $\{s_1, s_2, \ldots, s_k\}$ that are mutual siblings with differential values $\{u_1,u_2,\ldots,u_k\}$ under the same differential attribute $X_{i_0}$, and a common parent $s$.
Suppose $s_1[i_1] = s_2[i_1] = \cdots = s_k[i_1] = *$, then for every $(u_{k_1},u_{k_2})$ where $1 \leq k_1 \neq k_2 \leq k$, $(s, X_{i_0}, \{u_{k_1},u_{k_2}\}, X, Y_1)$ is a Simpson's paradox if we can find the set of sub-population aggregate statistics $\{\bm{P}(s_1 \mid Y_1, X), \ldots, \bm{P}(s_k \mid Y_1, X)\}$ and the set of sample distributions $\{\bm{Q}(s_1 \mid X), \ldots, \bm{Q}(s_k \mid X)\}$ that satisfy the following:
\begin{equation}\label{eq:aggr-ineq-general}
    P(Y_1 \mid s_1) \leq \cdots \leq P(Y_1 \mid s_k) \quad \text{and}
\end{equation}
\begin{equation}\label{eq:sub-pop-aggr-general}
    \begin{aligned}
    P(Y_1 \mid s_1[X = v_j]) > & \cdots > P(Y_1 \mid s_k[X = v_j]) \\
    \text{for all } 1 \leq & \,j \leq |\Dom(X)|.
\end{aligned}
\end{equation}
Again, without loss of generality, we assume that $$\bm{P}(s_m \mid Y_1, X)[1] < \cdots < \bm{P}(s_m \mid Y_1, X)[|\Dom(X)|],$$ for all $1 \leq m \leq k$. 
Then, to ensure inequalities in (\ref{eq:aggr-ineq-general}), it follows that $\bm{P}(s_m \mid Y_1, X)[1] \leq \bm{P}(s_{m+1} \mid Y_1, X)[|\Dom(X)|]$, for all $1 \leq m \leq k - 1$.
This suggests that, for the sake of simplicity, a generalization of the pattern in Equation (\ref{eq:sub-aggr-pair-pattern}) would be:
\begin{multline}
    \bm{P}(s_k \mid Y_1, X)[1] < \cdots < \bm{P}(s_1 \mid Y_1, X)[1] < \\
     \bm{P}(s_k \mid Y_1, X)[2] < \cdots < \bm{P}(s_1 \mid Y_1, X)[2] < \\
     \cdots \\
     \bm{P}(s_k \mid Y_1, X)[|\Dom(X)|] < \\ 
     \cdots < \bm{P}(s_1 \mid Y_1, X)[|\Dom(X)|]. \label{eq:sub-aggr-general-pattern}
\end{multline}
To solve for the set of sample distributions $\{\bm{Q}(s_1 \mid X), \ldots, \bm{Q}(s_k \mid X)\}$, the QP in (\ref{eq:qp-convex-pair}) naturally generalizes to the following:
\begin{equation} \label{eq:qp-convex-general}
\begin{aligned}
    \text{minimize} \quad & \sum_{m=1}^k \left\| \bm{Q}(s_m \mid X) - \frac{1}{|\Dom(X)|}\mathbf{1} \right\|_2^2 \\
    \text{subject to} \quad & \text{(i)} \quad \sum_{j=1}^{|\Dom(X)|} \bm{Q}(s_m \mid X)[j] = 1,\, 1 \leq m \leq k; \\
    & \text{(ii)} \quad \bm{Q}(s_m \mid X)[j] \geq 0, \quad 1 \leq m \leq k \\
    & \phantom{(i)} \quad \text{and } 1 \leq j \leq |\Dom(X)|; \\
    & \text{(iii)} \quad \bm{Q}(s_{m+1} \mid X) \cdot \bm{P}(s_{m+1} \mid Y_1, X)^\top \\
    & \phantom{(i)} \quad \geq \bm{Q}(s_m \mid X) \cdot \bm{P}(s_m \mid Y_1, X)^\top, \\
    & \phantom{(i)} \quad 1 \leq m \leq k - 1.
\end{aligned}
\end{equation}
}

\begin{algorithm}
\caption{Generate non-redundant Simpson's paradox.}
\label{alg:gen-sp-separate}
\begin{algorithmic}[1]
\Input An AC $(s_1,s_2,X,Y)$, paradox size $U$
\Out Data records $T$ for the Simpson's paradox $(s_1,s_2,X,Y)$
\State Obtain the sub-population frequency statistics $\bm{P}(s_1 | Y,X)$, $\bm{P}(s_2 | Y,X)$ following Equation~(\ref{eq:sub-aggr-pair-pattern});
\State Obtain the sample distributions $\bm{Q}(s_1 | X)$, $\bm{Q}(s_2 | X)$ by solving the quadratic program (QP) in~(\ref{eq:qp-convex-pair});
\State \textcolor{gray}{\textit{// Populate data records following the obtained sample distributions and sub-population aggregate statistics}}
\For{each $1 \leq k \leq |\Dom(X)|$ and each $j \in [1, 2]$}
\State \textcolor{gray}{\textit{// Find the number of records for each sub-population}}
\State Let $U_{j, k} \gets U \cdot \bm{Q}(s_j | X)[k]$; 
\State \textcolor{gray}{\textit{// Assign labels according to sub-population aggr. stats.}}
\State Add $U_{j, k}$ copies of $s_j\substitute{X}{v_k}$ as records to $T$ and assign $U_{j, k} \cdot \bm{P}(s_j | Y, X)[k]$ of them with $(Y = 1)$;
\EndFor
\State \Return $T$.
\end{algorithmic}
\end{algorithm}

\nop{
\noindent To summarize, Algorithm \ref{alg:gen-sp-separate} outlines the procedure for generating a set of distinct (\emph{i.e.,} non-redundant) ${k \choose 2}$ instances of Simpson's paradox. 
Observe that the algorithm populates the output table $T$ with a sufficiently large number of data records for each instance of Simpson's paradox.
Specifically, given a size parameter $U \in \mathbb{Z}_{> 0}$, every Simpson's paradox produced by the algorithm is of size $2 \cdot U$.
}

\nop{
Suppose the set of inequalities in (\ref{eq:sub-pop-aggr}) hold, 
according to inequality (\ref{eq:aggr-bound}), inquality (\ref{eq:aggr-ineq}) would hold if,
\begin{multline}
    \max_{1 \leq j \leq |\Dom(X)|} \bm{P}(s_2 \mid Y_1, X)[j]  \geq \\ \min_{1 \leq j \leq |\Dom(X)|} \bm{P}(s_1 \mid Y_1, X)[j] \label{eq:min-max-aggr}
\end{multline}
To this end, our goal is to first generate the set of sub-population aggregate statistics $\bm{P}(s_1 \mid Y_1, X)$ and $\bm{P}(s_2 \mid Y_1, X)$ satisfying inequalities (\ref{eq:sub-pop-aggr}) and (\ref{eq:min-max-aggr}).

To this end, given a set of aggregate statistics $\{P(Y_1 \mid s_1[X = v]) \mid \forall v \in \Dom(X)\}$ and $\{P(Y_1 \mid s_2[X = v]) \mid \forall v \in \Dom(X)\}$ for the sub-populations of $s_1$ and $s_2$ that satisfy inequalities in (\ref{eq:sub-pop-aggr}) and (\ref{eq:min-max-aggr}), to realize Simpson's paradox, the goal is to solve for the set of fractions (\emph{i.e.,} sample distribution) $Q(s_1 \mid X) := \left\{ \frac{|cov(s_1[X = v])|}{|cov(s_1)|} \mid \forall v \in \Dom(X)\right\}$ and $Q(s_2 \mid X) := \left\{\frac{|cov(s_2[X = v])|}{|cov(s_2)|} \mid \forall v \in \Dom(X)\right\}$ such that: 
\begin{itemize}
    \item[i.] The inquality between (\ref{eq:s1-aggr}) and (\ref{sq:s2-aggr}) holds; and 
    \item[ii.] (\ref{eq:linear-comb}) and (\ref{eq:non-zero-frac}) hold for $Q(s_1 \mid X)$ and $Q(s_2 \mid X)$. 
\end{itemize}
For simplicity, we denote $Q(s_1 \mid X = v) : = \frac{|cov(s_1[X = v])|}{|cov(s_1)|}$, where $v \in \Dom(X)$.
From inequality~(\ref{eq:min-max-aggr}), we let 
\begin{equation}
\label{eq:v1-max}
v_{s_1} = \arg \min_{v \in \Dom(X)} P(Y_1 \mid s_1[X = v]),
\end{equation}
then $Q(s_1 \mid X)$ can be initialized as 
\begin{align}
    Q(s_1 \mid X = v_{s_1}) & = 1  \label{eq:v1-init}\\
    Q(s_1 \mid X = v) & = 0,\quad \forall v \in \Dom(X) \setminus \{v_{s_1}\}. \label{eq:v-init}
\end{align}
Similarly, we let 
\begin{equation}
\label{eq:v2-init}
    v_{s_2} = \arg \max_{v \in \Dom(X)} P(Y_1 \mid s_2[X = v]),
\end{equation}
and $Q(s_2 \mid X)$ follows the same initialization as that of $Q(s_1)$. 
Observe that the initial assignment for $Q(s_1 \mid X)$ and $Q(s_2 \mid X)$ automatically satisfy the inequality between (\ref{eq:s1-aggr}) and (\ref{sq:s2-aggr}), but not the non-zero constraint in (\ref{eq:non-zero-frac}).
To satisfy (\ref{eq:non-zero-frac}), a simple appraoch is to increment the initially zero-valued fraction by a small interval $\epsilon$ and decrement the non-zero (\emph{i.e.,} $v_{s_1}$ and $v_{s_2}$) fraction by $c \cdot \epsilon$ accordingly, where $c$ counts the number of initially zero-valued fractions.
This process takes several iterations until further incrementation would violate the inequality between (\ref{eq:s1-aggr}) and (\ref{sq:s2-aggr}). 
Observe that $Q(s_1 \mid X)$ and $Q(s_2 \mid X)$ are initially skewed, rather extremely, towards $v_{s_1}$ and $v_{s_2}$, respectively.
Hence, a more complicated approach would involve smoothing the initial $Q(s_1 \mid X)$ and $Q(s_2 \mid X)$~\cite{diggle1993fourier}. 

An immediate generalization is that one can now generate multiple Simpson's paradoxes simultaneously on populations $\{s_1, \ldots, s_k\}$ that are mutual siblings with differential values $\{u_1,\ldots,u_k\}$, differential attribtue $X_{i_0}$, and common parent population $s$. 
Suppose $s_1[i_1] = \cdots = s_k[i_1] = *$, then for every $(u_i, u_j)$ where $1 \leq i \neq j \leq k$, $(s, X_{i_0}, \{u_i, u_j\}, X, Y_1)$ is a Simpson's paradox if we can solve for the set of sample distributions $\{Q(s_1 \mid X), \ldots, Q(s_k \mid X)\}$ s.t.
\begin{equation}
\label{eq:general-stat-ineq}
    P(Y_1 \mid s_1) \leq \cdots \leq P(Y_1 \mid s_k) \quad \text{ given that }
\end{equation}
    $$P(Y_1 \mid s_1[X = v]) > \cdots > P(Y_1 \mid s_k[X = v]) \quad \forall v \in \Dom(X).$$
Specifically, for every $1 \leq j \leq k - 1$, one has to further ensure that, as a result of generalizing inequality~(\ref{eq:min-max-aggr}), 
$$\max_{v \in \Dom(X)} P(Y_1 \mid s_{j+1}[X = v]) > \min_{v \in \Dom(X)} P(Y_1 \mid s_j[X = v]).$$
Moreover, constraints (\ref{eq:linear-comb}) and (\ref{eq:non-zero-frac}) should hold for all solved $\{Q(s_1 \mid X),Q(s_2 \mid X),\ldots,Q(s_k \mid X)\}$. 

Finally, we discuss a simple approach to generate the set of sub-population aggregate statistics $\{P(Y_1 \mid s_1[X = v]) \mid \forall v \in \Dom(X)\}$ and $\{P(Y_1 \mid s_2[X = v]) \mid \forall v \in \Dom(X)\}$ that satisfy inequalities (\ref{eq:sub-pop-aggr}) and (\ref{eq:min-max-aggr}) for the sibling populations $s_1$ and $s_2$.
Suppose $\Dom(X) = \{v_1,\ldots,v_{|\Dom(X)|}\}$ where $P(Y_1 \mid s_1[X = v_1]) < \cdots < P(Y_1 \mid s_1[X = v_{|\Dom(X)|}])$ and $P(Y_1 \mid s_2[X = v_1]) < \cdots < P(Y_1 \mid s_2[X = v_{|\Dom(X)|}])$, meaning that the aggregate statistics of the sub-populations of $s_1$ and $s_2$ exhibit a sorted order given by the sequence $v_1,\ldots,v_{|\Dom(X)|}$, then inequality (\ref{eq:min-max-aggr}) would hold if $P(Y_1 \mid s_2[X = v_2]) > P(Y_1 \mid s_1[X = v_1])$.
More generally, we can satisfy both (\ref{eq:sub-pop-aggr}) and (\ref{eq:min-max-aggr}) if, for every $1 \leq j \leq |\Dom(X)| - 1$, 
$$P(Y_1 \mid s_2[X = v_{j+1}]) > P(Y_1 \mid s_1[X = v_j]).$$
This suggests that the desired sequence of sub-population aggregate statistics should follow the pattern:
\begin{multline}
    P(Y_1 \mid s_2[X = v_1]) < P(Y_1 \mid s_1[X = v_1]) < \\
     P(Y_1 \mid s_2[X = v_2]) < P(Y_1 \mid s_1[X = v_2 ]) < \\
     \cdots \\
     P(Y_1 \mid s_2[X = v_{|\Dom(X)|}]) < \\ 
     P(Y_1 \mid s_1[X = v_{|\Dom(X)|}]) \label{eq:sub-aggr-2-pattern}.
\end{multline}
Generalizing to $k$ populations of mutual siblings:
\begin{multline}
    P(Y_1 \mid s_k[X = v_1]) < \cdots < P(Y_1 \mid s_1[X = v_1]) < \\
    P(Y_1 \mid s_k[X = v_2]) < \cdots < P(Y_1 \mid s_1[X = v_2 ]) < \\
    \cdots  \\
     P(Y_1 \mid s_k[X = v_{|\Dom(X)|}]) < \\ 
     \cdots < P(Y_1 \mid s_1[X = v_{|\Dom(X)|}]) \label{eq:sub-aggr-k-pattern}.
\end{multline}
}

\subsection{Realizing Coverage Redundancies}
\label{sec:cr-generation}
Having established a method for generating individual instances of Simpson's paradox, we now discuss creating (coverage) redundant Simpson's paradoxes. Building on the discussions in Section~\ref{sec:equivalence-types} and  Definition~\ref{def:coverage}, (coverage) redundancies fundamentally arise when distinct populations have identical coverage within the data. To systematically realize (coverage) redundancies, we must understand the conditions under which populations would share the same coverage, as this property serves as the foundation for realizing both sibling child and separator equivalences. To this end, we remark on the following proposition.
\begin{proposition}[Impossibility of coverage identicality]
\label{prop:gen-cov-eq}
Let $T$ be a base table with $n$ categorical attributes $\{X_1, \ldots, X_n\}$. If the set of unique records in $T$ corresponds exactly to all possible combinations of attribute values (\emph{i.e.,} the complete Cartesian product $\prod_{i=1}^{n}\Dom(X_i)$), then no two distinct populations of $T$ share the same coverage.
\proof
Assume, by contradiction, that there exist two distinct populations $s$ and $s'$ with $\cov(s) = \cov(s')$. Since $s \neq s'$, let $X_{k_0}$ be an attribute for which $s[k_0] \neq s'[k_0]$. Without loss of generality, assume $s[k_0] = v$ for a fixed value $v \in \Dom(X_{k_0})$ and $s'[k_0] = \ast$.

Let us consider two records, $r_1$ and $r_2$, which are identical on all attributes except $X_{k_0}$. For $r_1$, let $r_1.X_{k_0} = v$, and for $r_2$, let $r_2.X_{k_0} = v'$ where $v' \neq v$ and $v' \in \Dom(X_{k_0})$. For all other attributes $X_j$ where $j \neq k_0$, if $s[j] \neq \ast$, then $r_1.X_j = r_2.X_j = s[j]$. Observe that both records $r_1$ and $r_2$ must exist in $T$ because $T$ contains the complete Cartesian product of all attribute domains.

Now, $r_1$ is covered by both $s$ and $s'$. However, record $r_2$ is covered by $s'$ (because $s'[k_0] = \ast$) but not by $s$ (because $s[k_0] = v$ but $r_2.X_{k_0} = v'$, $v' \neq v$).

This means $r_2 \in \cov(s')$ but $r_2 \notin \cov(s)$, which contradicts our assumption that $\cov(s) = \cov(s')$. Therefore, no two distinct populations can share the same coverage when $T$ contains the complete Cartesian product of all attribute domains.
\qed
\end{proposition}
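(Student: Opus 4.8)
The plan is to argue by contradiction, exploiting the wildcard semantics of $\cov(\cdot)$ together with the completeness of the Cartesian product to manufacture a single record that separates the two supposedly coverage-identical populations. First I would assume, toward a contradiction, that there exist distinct populations $s \neq s'$ with $\cov(s)=\cov(s')$. Since $s\neq s'$, they must disagree on at least one attribute $X_{k_0}$, i.e.\ $s[k_0]\neq s'[k_0]$. I would then split on how they disagree, which falls into exactly two cases up to relabeling: (a) exactly one of $s[k_0],s'[k_0]$ equals the wildcard $\ast$; and (b) both are concrete values $v\neq v'$ in $\Dom(X_{k_0})$.

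For case (a), relabel $s,s'$ so that $s[k_0]=v\in\Dom(X_{k_0})$ and $s'[k_0]=\ast$. The idea is to build a witness record $r_2$ by setting $r_2.X_j = s'[j]$ whenever $s'[j]\neq\ast$ and an arbitrary domain value otherwise, while forcing $r_2.X_{k_0}=v'$ for some $v'\neq v$ (possible since $s'[k_0]=\ast$ imposes no constraint at $k_0$). By completeness $r_2\in T$; it lies in $\cov(s')$ by construction, yet $r_2\notin\cov(s)$ because $s[k_0]=v\neq v'=r_2.X_{k_0}$, contradicting $\cov(s)=\cov(s')$. For case (b) the argument is even more direct: completeness yields a record $r_1$ with $r_1.X_{k_0}=v$ matching the finitely many concrete constraints of $s$, so $r_1\in\cov(s)$; but $r_1\notin\cov(s')$ since $r_1.X_{k_0}=v\neq v'=s'[k_0]$, again a contradiction. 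In both cases the assumption $\cov(s)=\cov(s')$ fails, so distinct populations cannot share coverage.

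The main obstacle I anticipate is making the ``without loss of generality'' step fully rigorous, since the sketch above only spells out the case where one coordinate is a wildcard. Concretely, I must verify that the coordinate-by-coordinate definition of $r_2$ keeps it inside $\cov(s')$ on \emph{every} attribute simultaneously, not just at $X_{k_0}$; the clean way is to check membership directly from the definition of $\cov(\cdot)$, observing that for each $j$ either $s'[j]=\ast$ (trivially satisfied) or $r_2.X_j=s'[j]$ by construction. Once this check is stated carefully, the completeness hypothesis does all the heavy lifting and both cases close immediately. I would also remark that the hypothesis is essential and tight: the earlier example in \Cref{tab:ex1}, where $\cov(a_1,\ast,\ast)=\cov(a_1,\ast,c_1)$, shows that coverage identicality arises precisely when $T$ is a \emph{proper} subset of the full Cartesian product.
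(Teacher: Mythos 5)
Your proof is correct and follows essentially the same route as the paper's: assume two distinct coverage-identical populations, pick an attribute where they disagree, and use completeness of the Cartesian product to manufacture a record lying in one coverage but not the other. If anything, your version is tighter than the paper's, whose ``without loss of generality'' treats only the concrete-versus-wildcard case and silently omits your case (b) where both populations carry distinct concrete values at the disagreeing attribute; your witness $r_2$ is also built directly from $s'$ (so $r_2 \in \cov(s')$ holds by construction), whereas the paper must first place its record $r_1$ in $\cov(s')$ via the assumed coverage equality and then transfer to $r_2$.
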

The contrapositive of Proposition \ref{prop:gen-cov-eq} implies that populations sharing identical coverage can only exist when the dataset contains a proper subset of the complete Cartesian product of attribute domains. Therefore, to facilitate the generation of coverage redundant Simpson's paradoxes, we impose a size threshold $t$ that is significantly smaller than $\prod_{i=1}^{n}|\Dom(X_i)|$ to constrain the number of unique records in the generated dataset.
%

Having established the condition for realizing populations with identical coverage, we now proceed to develop methods for realizing each of the three types of coverage equivalences: sibling child equivalence, separator equivalence, and statistic equivalence.
\paragraph{Sibling Child Equivalence}
Suppose we have a set of data records $T$ that produces a Simpson's paradox $p_1 = (s_{1},s_{2},X,Y)$, where $s_{1} = s\substitute{X_0}{u_1}$ and $s_{2} = s\substitute{X_0}{u_2}$ are siblings from a common parent $s$. To realize sibling equivalence, our goal is update the records in $T$ such that they also produce another Simpson's paradox $p_2 = (s_{1}',s_{2}',X,Y)$, where $s_{1}' = s'\substitute{X_0'}{v_1}$ and $s_{2}' = s'\substitute{X_0'}{v_2}$ are siblings from a common parent $s'$, and that $p_2$ is sibling child equivalent to $p_1$. According to Definition~\ref{def:coverage}, sibling child equivalence requires $\cov(s_{1}) = cov(s_{1}')$ and $\cov(s_{2}) = \cov(s_{2}')$. Based on the relationship between $(s, s_{1}, s_{2})$ and $(s', s_{1}', s_{2}')$, we have three scenarios:
\begin{enumerate}
    \item \textbf{Scenario 1}: $s \neq s'$, $X_{0} = X_{0}'$, and $\{u_1,u_2\} = \{v_1,v_2\}$. In this case, sibling child equivalence is achieved by ensuring $\cov(s) = \cov(s')$. To this, for each categorical attribute $X_k$ where $s[k] \neq \ast$ or $s'[k] \neq \ast$, we update every record $r$ in $\cov(s)$ (within $T$), such that $r.X_k = s[k]$ or $r.X_k = s'[k]$. If $s[k] = s'[k] = \ast$, then no update is needed for $r.X_k$.
    \item \textbf{Scenario 2}: $s = s'$, $X_{0} \neq X_{0}'$, and $\{u_1,u_2\}\neq\{v_1,v_2\}$. In this case, to achive sibling child equivalence, we establish a one-to-one mapping $f : \{u_1,u_2\} \mapsto \{v_1,v_2\}$ such that $f(u_1) = v_1$ and $f(u_2) = v_2$. For each record $r$ in $T$, we set $r.X_{0}' = f(r.X_{0})$ when $r.X_{0} \in \{u_1,u_2\}$. This ensures $\cov(s\substitute{X_0}{u_k}) = \cov(s\substitute{X_0'}{f(u_k)})$ for $k=1,2$. 
    \item \textbf{Scenario 3}: $s \neq s'$, $X_{0} \neq X_{0}'$, and $\{u_1,u_2\}\neq\{v_1,v_2\}$. This combines the previous senarios. To achieve sibling child equivalence, we first ensure $\cov(s) = \cov(s')$ as in Scenario 1, then establish the one-to-one mapping as in Scenario 2.
\end{enumerate}
%
\begin{example}
Consider the data records in a slightly perturbed version of Table~\ref{tab:ex2} where attribute values in $D$ are randomized. Supposed the perturbed Table~\ref{tab:ex2} is populated as a result of generating the Simpson's paradox $p_1 = ((*, b_1, *, *), (*, b_2, *, *), A, Y_1)$. To create a sibling equivalent Simpson's paradox $p_2 = ((*, *, *, d_1), (*, *, *, d_2), A, Y_1)$, we apply both Scenarios 1 and 2. 

For Scenario 1, the parent population is $(\ast,\ast,\ast,\ast)$ for both $p_1$ and $p_2$, which are identical, so no adjustment to records in the perturbed table is needed. 

For Scenario 2, we define a one-to-one mapping $f: \{b_1, b_2\} \rightarrow \{d_1, d_2\}$ where $f(b_1) = d_1$ and $f(b_2) = d_2$. We then update each record in the perturbed table with $B = b_1$ gets $D = d_1$ and each record with $B = b_2$ gets $D = d_2$. This establishes that $cov((*, b_1, *, *)) = cov((*, *, *, d_1))$ and $cov((*, b_2, *, *)) = cov((*, *, *, d_2))$.

In this way, we make $p_1$ and $p_2$ to be sibling equivalent, as verified in Example~\ref{ex:sibling}.
\qed
\end{example}
Algorithm~\ref{alg:sibling-equivalence} formalizes this process of generating sibling-child-equivalent Simpson's paradoxes.
\nop{
Based on Proposition \ref{prop:coverage-eq} and Definition \ref{def:coverage}, Simpson's paradoxes \( p_1 = (s_1, X_{i_{0_1}}, \{u_{1_1}, u_{1_2}\}, X, Y) \) and \( p_2 = (s_2, X_{i_{0_2}}, \{u_{1_2}, u_{2_2}\}, X, Y) \) are coverage redundant if one of the following three scenarios holds:
\begin{enumerate}
    \item \( cov(s_1) = cov(s_2) \), where \( s_1 \neq s_2 \), \( X_{i_{0_1}} = X_{i_{0_2}} \), and \( \{u_{1_1}, u_{2_1}\} = \{u_{1_2}, u_{2_2}\} \);
    \item \( cov(s_1[X_{i_{0_1}} = u_{k_1}]) = cov(s_2[X_{i_{0_2}} = u_{k_2}]) \) $(k=1,2)$, where \( s_1 = s_2 \), \( X_{i_{0_1}} \neq X_{i_{0_2}} \), and \( \{u_{1_1}, u_{2_1}\} \neq \{u_{1_2}, u_{2_2}\} \); or
    \item \( cov(s_1[X_{i_{0_1}} = u_{k_1}]) = cov(s_2[X_{i_{0_2}} = u_{k_2}]) \) $(k=1,2)$, where \( s_1 \neq s_2 \), \( X_{i_{0_1}} \neq X_{i_{0_2}} \), and \( \{u_{1_1}, u_{2_1}\} \neq \{u_{1_2}, u_{2_2}\} \).
\end{enumerate}
In Scenario (1), assuming identical differential attribute and differential values, sibling equivalence refers to coverage equivalent common parent populations.
Fortunately, the realization of coverage equivalent populations is already discussed in Proposition \ref{prop:gen-cov-eq}.
In Scenarios (2) and (3), sibling equivalence refers to a one-to-one mapping $f$ between the differential values $\{u_{1_1}, u_{2_1}\}$ and $\{u_{1_2}, u_{2_2}\}$ such that for each $k = 1,2$,
$$cov(s_1[X_{i_{0_1}} = u_{k_1}]) = cov(s_2[X_{i_{0_2}} = f(u_{k_1})]).$$
As an immediate generalization, one can establish a sequence of $m - 1$ one-to-one mappings $f_j$ (for $1 \leq j \leq m - 1$) with respect to a set of $m$ differential value pairs $\{\{u_{1_1}, u_{2_1}\}, \ldots, \{u_{1_m}, u_{2_m}\}\}$ such that for each $j$ from 1 to $m - 1$, and for each $k = 1, 2$,
\begin{equation}
\label{eq:general-sibling-mapping}
    cov(s_{j}[X_{i_{0_j}} = u_{k_j}]) = cov(s_{j+1}[X_{i_{0_{j+1}}} = f_j(u_{k_j})]).
\end{equation}
This is achieved by Algorithm~\ref{alg:sibling-equivalence}, which produces coverage redundant set of  Simpson's paradoxes $\{(s, X_{i_{0_j}}, \{u_{1_j}, u_{2_j}\}, X, Y)\}_{j=1}^m$.
}
\begin{algorithm}
\caption{Realizing sibling child equivalence.}
\label{alg:sibling-equivalence}
\begin{algorithmic}[1]
\Input Data records $T$ producing the Simpson's paradox $p_1 = (s_{1},s_{2},X,Y)$ where $s_{1} = s\substitute{X_0}{u_1}$ and $s_{2} = s\substitute{X_0}{u_2}$, sibling populations $(s_{1},s_{2})$ where $s_{1}' = s\substitute{X_0'}{v_1}$ and $s_{2}' = s\substitute{X_0'}{v_2}$
\Out Updated data records $T$ producing a sibling-child-equivalent Simpson's paradox $p_2 = (s_{1}',s_{2}',X,Y)$
\State \textcolor{gray}{\textit{// Scenario 1: Ensure $\cov(s) = \cov(s')$}}
\For{each record $r \in T$ s.t. $r \in \cov(s)$}
\For{each attribute $X_k$ s.t. $s[k] \neq \ast$ or $s'[k] \neq \ast$}
\State Set $r.X_k \gets s[k]$ if $s[k] \neq \ast$, else $r.X_k \gets s'[k]$;
\EndFor
\EndFor
\State \textcolor{gray}{\textit{// Scenario 2: Establish the one-to-one mapping}}
\State Establish the mapping $f$ where $f(u_j) = v_j$ for $j=1,2$;
\For{each record $r \in T$}
\State Set $r.X_{0}' \gets f(r.X_{0})$ if $r.X_{0} \in \{u_1,u_2\}$.
\EndFor
\end{algorithmic}
\end{algorithm}

\paragraph{Separator Equivalence}
Recall from Proposition~\ref{prop:division-equivalence} and Definition~\ref{def:coverage}, Simpson's paradoxes $p_1 = (s_1,s_2,X_{1},Y)$ and $p_2 = (s_1,s_2,X_{1}',Y)$ are separator equivalent if there exists a one-to-one mapping $f$ between $\Dom(X_{1})$ and $\Dom(X_{1}')$ such that for every $v \in \Dom(X_{1})$ and $j=1,2$,
$$\cov(s_j\substitute{X_1}{v}) = \cov(s_j\substitute{X_1'}{f(v)}).$$
To achieve this, for every record $r$ in $T$, we set $r.X_{1}' = f(r.X_{1})$, which we formalize the process in Algorithm~\ref{alg:division-equivalence}.
\begin{example}
\label{ex:gen-division}
Consider a perturbed version of Table~\ref{tab:ex2} where attribute values in $C$ are initially randomized. Suppose the perturbed Table~\ref{tab:ex2} is populated as a result of generating the Simpson's paradox $p_1 = ((*, b_1, *, *), (*, b_2, *, *), A, Y_1)$. To create a separator equivalent Simpson's paradox $p_2 = ((*, b_1, *, *), (*, b_2, *, *), C, Y_1)$, we establish a one-to-one mapping $f$ between $\Dom(A) = {a_1, a_2}$ and $\Dom(C) = {c_1, c_2}$ where $f(a_1) = c_1$ and $f(a_2) = c_2$. We then update each record in the perturbed table so that whenever $A = a_1$, we set $C = c_1$, and whenever $A = a_2$, we set $C = c_2$. This ensures that $\cov((*, b_1, *, *)\substitute{A}{a_k}) = \cov((*, b_1, *, *)\substitute{C}{c_k})$ and $\cov((, b_2, *, )\substitute{A}{a_k}) = \cov((, b_2, *, *)\substitute{C}{c_k})$ for $k \in \{1, 2\}$. As verified in Example~\ref{ex:division}, $p_1$ and $p_2$ are separator equivalent.
\qed
\end{example}
\nop{
Similarly, to generalize, one can establish a sequence of $m - 1$ one-to-one mappings $f_j$ (for $1 \leq j \leq m - 1$) with respect to a set of $m$ separator attributes $\{X_{i_{1_1}}, \ldots, X_{i_{1_m}}\}$ such that for each $j$ from 1 to $m - 1$, for each $k = 1, 2$, and for every value $v \in \Dom(X_{i_{1_j}})$, 
\begin{equation}
\label{eq:general-division-mapping}
cov(s[X_{i_0} = u_k, X_{i_{1_j}} = v]) = cov(s[X_{i_0} = u_k, X_{i_{1_{j+1}}} = f_j(v)]).
\end{equation}
To achieve this, we first require that \( |\Dom(X_{i_{1_j}})| = \kappa \) for all \( 1 \leq j \leq m \), where \( \kappa \geq 2 \). 
Furthermore, we assume an implicit order such that, for each \( \Dom(X_{i_{1_j}}) \), \( \Dom(X_{i_{1_j}}) = (v_{1_j}, \ldots, v_{\kappa_j}) \), where \( \Dom(X_{i_{1_j}})[k] = v_{k_j} \) for \( 1 \leq k \leq \kappa \).
We outline the process of constructing the sequence of one-to-one mappings that satisfy Equation~(\ref{eq:general-division-mapping}) in Algorithm~\ref{alg:division-equivalence}, which produces coverage redundant set of Simpson's paradoxes \( \{(s, X_{i_0}, \{u_1, u_2\}, X_{i_{1_j}}, Y)\}_{j=1}^m \).
}

\begin{algorithm}
\caption{Realizing separator equivalence.}
\label{alg:division-equivalence}
\begin{algorithmic}[1]
\Input Set of data records $T$ producing the Simpson's paradox $(s_1,s_2,X_{1},Y)$, a separator attribute $X_{1}'$
\Out Updated set of data records $T$ producing a separator equivalent Simpson's paradox $p_2 = (s_1,s_2,X_{1}',Y)$
\State Let $f:\Dom(X_{1}) \mapsto \Dom(X_{1}')$ be the one-to-one map;
\For{each record $r \in T$}
\State Set $r.X_{1}' \gets f(r.X_{1})$.
\EndFor
\end{algorithmic}
\end{algorithm}

\paragraph{Statistic Equivalence}
Recall from Proposition~\ref{prop:statistics-equivalence} and Definition~\ref{def:coverage}, Simpson's paradoxes $p_1 = (s_1, s_2, X, Y_2)$ and $p_2 = (s_1, s_2, X, Y_{2}')$ are statistic equivalent if for each $s_j$ ($j = 1, 2$) $P(Y_{2} | s_j) = P(Y_{2}' | s_j)$, and for every value $v \in \Dom(X)$, $P(Y_{2}|s_j\substitute{X}{v}) = P(Y_{2}'|s_j\substitute{X}{v})$. To achieve this, we simply ensure that each record has identical values for both label attributes $Y_{2}$ and $Y_{2}'$. We formalize this process in Algorithm~\ref{alg:statistics-equivalence}.
\begin{example}
\label{ex:gen-statistics-equivalence}
Consider a perturbed version of Table~\ref{tab:ex2} where attribute values in $Y_2$ are initially randomized. Suppose the perturbed Table~\ref{tab:ex2} is populated as a result of generating the Simpson's paradox $p_1 = ((*, b_1, *, *), (*, b_2, *, *), A, Y_1)$. To create a statistic equivalent Simpson's paradox $p_2 = ((*, b_1, *, *), (*, b_2, *, *), A, Y_2)$, we update each record in the perturbed table so that $Y_2 = Y_1$ for all records. This ensures that $P(Y_1|s_j) = P(Y_2|s_j)$ and $P(Y_1|s_j\substitute{A}{a_k}) = P(Y_2|s_j\substitute{A}{a_k})$ for $j \in \{1, 2\}$ and $k \in \{1, 2\}$. As verified in Example~\ref{ex:statistics}, $p_1$ and $p_2$ are statistic equivalent.
\qed
\end{example}
%
%
\begin{algorithm}
\caption{Realizing statistic equivalence.}
\label{alg:statistics-equivalence}
\begin{algorithmic}[1]
\Input Data records $T$ producing the Simpson's paradox $p_1 = (s_1,s_2,X, Y_{2})$, a label attribute $Y_{2}'$
\Out Updated data records $T$ producing a statistic equivalent Simpson's paradox $p_2 = (s_1,s_2,X, Y_{2}')$
\For{each record $r \in T$}
\State Set $r.Y_{2}' \gets r.Y_{2}$.
\EndFor
\end{algorithmic}
\end{algorithm}

\subsection{Data Generation Workflow}
\label{sec:generative-procedure}
\begin{algorithm}
\caption{Generate redundant Simpson's paradoxes.}
\label{alg:generating-workflow}
\begin{algorithmic}[1]
\Input Categorical attributes $\{X_i\}_{i=1}^n$, label attributes $\{Y_j\}_{j=1}^m$, size threshold $t \ll \prod_{i=1}^n |\Dom(X_i)|$
\Out Data table $T$ (initially empty)
\State Let $R \gets \emptyset$ to collect the set of unique data records;
\State Let $P \gets \emptyset$ to collect the set of generated Simpson's paradoxes;
\While{$|R| < t$}
\State Let $p_1 = (s_{1},s_{2},X_{1},Y_{3})$ be an AC not in $P$;
\State \textcolor{gray}{\textit{// Step 1: Generate distinct Simpson's paradox}}
\State Populate $T'$ for the Simpson's paradox $p_1$ using Alg.~\ref{alg:gen-sp-separate};
\State \textcolor{gray}{\textit{// Step 2: Introduce coverage redundancies}}
\State \textcolor{gray}{\textit{// Sibling child equivalence}}
\State Apply Alg.~\ref{alg:sibling-equivalence} to $T'$ to create a sibling-child-equivalent Simpson's paradox $p_{2} = (s_{1}',s_{2}',X_{1},Y_{2})$;
\State \textcolor{gray}{\textit{// Separator equivalence}}
\State Apply Alg.~\ref{alg:division-equivalence} to $T'$ to create a separator equivalent Simpson's paradox $p_{3} = (s_{1},s_{2},X_{1}',Y_{3})$;
\State \textcolor{gray}{\textit{// Statistic equivalence}}
\State Apply Alg.~\ref{alg:statistics-equivalence} to $T'$ to create a statistic equivalent Simpson's paradox $p_{4} = (s_{1},s_{2},X_{1},Y_{2}')$;
\State Add $p_1$, $p_{2}$, $p_{3}$, and $p_{4}$ to $P$;
\State Add $T'$ to $T$ and unique records of $T'$ to $R$;
\EndWhile
\State \Return $T$.
\end{algorithmic}
\end{algorithm}

Building upon the techniques established in Sections~\ref{sec:sp-generation} and~\ref{sec:cr-generation}, we formulate a systematic approach for synthetic data generation that integrates both individual Simpson's paradox generation and coverage redundancy realization. The process employs a two-phase strategy: first generating distinct instances of Simpson's paradoxes (Section~\ref{sec:sp-generation}), then systematically introducing coverage redundancies through sibling child, separator, and statistics equivalences (Section~\ref{sec:cr-generation}). These phases are iterated until reaching a specified threshold $t \ll \prod_{i=1}^{n}|\Dom(X_i)|$ of unique records populated, which per Proposition~\ref{prop:gen-cov-eq} ensures the dataset contains populations with identical coverage necessary for redundancy.

Algorithm~\ref{alg:generating-workflow} formalizes this process, taking categorical attributes $\{X_i\}_{i=1}^n$, label attributes $\{Y_j\}_{j=1}^m$, and the size threshold $t$ as input, and producing a synthetic data table containing groups of (coverage) redundant Simpson's paradoxes.

\nop{
Finally, we briefly discuss the number of Simpson's paradoxes and coverage redundant equivalence classes produced by Algorithm~\ref{alg:generating}.
Consider the case where all categorical attributes share the same \Domain cardinality $d$. 
According to Algorithm~\ref{alg:gen-sp-separate}, each iteration of Algorithm~\ref{alg:generating} (Line 3) generates ${d \choose 2}$ non-redundant Simpson's paradoxes.
Since each Simpson's paradox encompasses a minimum of $2\cdot d$ unique records, to generate ${d \choose 2}$ Simpson's paradoxes would require enumerating at least $2d {d \choose 2} \frac{1}{d-1} = d^2$ unique records.
Consequently, to produce $t$ unique records, the data generation process requires at least $\frac{t}{d^2}$ iterations.
Therefore, the minimum number of coverage redundant equivalence classes produced is:
\begin{equation}
\label{eq:lower-bound-equiv-class}
    \frac{t}{d^2} {d \choose 2}.
\end{equation}
Notably, this expression is independent of both $n$ (number of categorical attributes) and $m$ (number of label attributes).

For the total number of Simpson's paradoxes, consider a paradox $p = (s, X_{i_0}, \{u_1,u_2\}, X, Y)$ with sibling, division, and aggregate statistics equivalences. 
By construction, there exists at least one additional differential attribute $X_{i'_0}$ with values $\{u'_1,u'_2\}$, one additional separator attribute $X_{i'_1}$, and one additional label attribute $Y_{i'_2}$, such that $p' = (s, X_{i'_0}, \{u'_1,u'_2\}, X_{i'_1}, Y_{i'_2})$ is coverage redundant to $p$. 
By Proposition~\ref{prop:quotient-size}, the minimum number of Simpson's paradoxes generated is:
\begin{equation}
\label{eq:lower-bound-sp}
    \frac{t}{d^2}{d \choose 2} \cdot 2 \cdot 2 \cdot 2.
\end{equation}
This is equivalent to scaling Equation (\ref{eq:lower-bound-equiv-class}) by a factor of 8.
In the general case, this scaling factor is a function of $n$, $m$, $t$, and $d$.
Specifically, for an equivalence class $E = (\bm{s}, \bm{X}_{i_0}, \bm{U}, \bm{X}_{i_1}, \bm{Y}_{i_2})$, we have $|\bm{X}_{i_0}| = \sigma_1(n,t,d)$, $|\bm{X}_{i_1}| = \sigma_2(n,t,d)$, where $\sigma_1(n,t,d) + \sigma_2(n,t,d) \leq n$.
For example, $\sigma(n,t,d) = n - \log_d t$ or simply $\frac{n}{2}$. 
Similarly, $|\bm{Y}_{i_2}| = \phi(m)$, where $\phi(m) \leq m$. 
For example, $\phi(m)$ can simply be $m$. 
Thus, in general, the number of generated Simpson's paradoxes is expected to be at least:
\begin{equation}
\label{eq:expected-lower-bound-sp}
\frac{t}{d^2} {d \choose 2} \cdot \sigma_1(n,t,d) \cdot \sigma_2(n,t,d) \cdot \phi(m).
\end{equation}
We note that Equation (\ref{eq:expected-lower-bound-sp}) provides an expected minimum that does not account for $|\bm{s}|$, the set of coverage equivalent common parent populations.
While Proposition~\ref{prop:quotient-size}i ndicates $|\bm{s}|$ should be included as a multiplicative factor, determining its exact value requires full materialization of the data table through Algorithms~\ref{alg:materialization} and \ref{alg:grouping}.
Therefore, the actual number of Simpson's paradoxes generated by Algorithm~\ref{alg:generating} may exceed this predicted minimum. 
We examine this discrepancy empirically in Section~\ref{sec:rq1}.
}


\section{Missing Proofs}

In this section, we present missing proofs of lemmas, propositions, and theorems presented in \Cref{sec:coverage} and \Cref{sec:finding}.

\subsection{Proofs of Redundancy Properties}

\textsc{Lemma~\ref{prop:sibling-eq} (Sibling child equivalence)} \textit{Consider two association configurations $p = (s_1, s_2, X, Y)$ and $p' = (s'_1, s'_2, X, Y)$
where $\cov(s_1) = \cov(s'_1)$ and $\cov(s_2) = \cov(s'_2)$.
If $p$ is a Simpson's paradox, then $p'$ is also a Simpson's paradox.}

\begin{proof}
Since $p$ is a Simpson's paradox, $P(Y | s_{1}) > P(Y_ | s_{2})$.
Due to $cov(s_{j})=cov(s_{j}')$ $(j = 1,2)$, $P(Y | s_{j}) = P(Y | s_{j}')$.
Therefore, $P(Y | s_{1}') > P(Y | s_{2}')$.
Furthermore, for every $v \in \Dom(X)$, we have that $\cov(s_{j}\substitute{X}{v}) = \cov(s_{j}'\substitute{X}{v})$ $(j=1,2)$, implying $P(Y | s_{j}\substitute{X}{v}) = P(Y_{i}' | s_{j}'\substitute{X}{v})$ $(j=1,2)$.
Hence, for every $v \in \Dom(X)$, $P(Y | s_{1}'\substitute{X}{v}) \leq P(Y | s_{2}'\substitute{X}{v})$.
It follows, from Def.~\ref{def:simpson}, that $p'$ is a Simpson's paradox.
\end{proof}

\textsc{Lemma~\ref{prop:division-equivalence} (Separator equivalence)} \textit{Consider two association configurations $p = (s_1, s_2, X, Y)$ and $p' = (s_1, s_2, X', Y)$,
where $X \neq X'$ and there exists a one-to-one mapping $f : \Dom(X) \mapsto \Dom(X')$ such that
for every $v \in \Dom(X)$ and $s \in \{s_1, s_2\}$, $\cov(s\substitute{X}{v}) = \cov(s\substitute{X'}{f(v)})$.
If $p$ is a Simpson's paradox, then $p'$ is also a Simpson's paradox.}

\begin{proof}
Since $p$ is a Simpson's paradox, $P(Y | s_1) > P(Y | s_2)$.
The populations $s_1$ and $s_2$ remain the same in $p'$, so this inequality holds for $p'$ as well.
In addition, for every value $v \in \Dom(X)$, $P(Y | s_1\substitute{X}{v}) \leq P(Y | s_2\substitute{X}{v})$.
Due to the one-to-one mapping $f$, for every value $v \in \Dom(X)$, $P(Y | s_j\substitute{X}{v}) = P(Y | s_j\substitute{X'}{f(v)})$ $(j = 1, 2)$. 
Thus, $P(Y | s_1\substitute{X'}{f(v)}) \leq P(Y | s_2\substitute{X'}{f(v)})$, $\forall v \in \Dom(X)$. 
It follows from Def.~\ref{def:simpson} that $p'$ is a Simpson's paradox.
\end{proof}

\textsc{Lemma~\ref{prop:statistics-equivalence} (Statistic equivalence)} \textit{Consider two association configurations $p = (s_1, s_2, X, Y)$ and $p' = (s_1, s_2, X, Y')$ such that $Y \neq Y'$.
If $p$ is a Simpson's paradox and if any of the following (sufficient, and progressively less restrictive) conditions hold, then $p'$ is also a Simpson's paradox:
\begin{enumerate}
    \item For every $t \in \cov(s_1) \cup \cov(s_2)$, $t.Y = t.Y'$;
    \item For every $s \in \{s_1, s_2\}$, $P(Y | s) = P(Y'|s)$ and for every $v \in \Dom(X)$, $P(Y | s\substitute{X}{v}) = P(Y' | s\substitute{X}{v})$;    
    \item $\sign(P(Y | s_1) - P(Y | s_2)) = \sign(P(Y' | s_1) - P(Y' | s_2))$, and for every $v \in \Dom(X)$, 
    $\sign(P(Y | s_1\substitute{X}{v}) - P(Y | s_2\substitute{X}{v})) = \sign(P(Y' | s_1\substitute{X}{v}) - P(Y' | s_2\substitute{X}{v}))$.
\end{enumerate}}

\begin{proof}
Since $p$ is a Simpson's paradox, $P(Y | s_1) > P(Y | s_2)$, and for every value $v \in \Dom(X)$, $P(Y | s_1\substitute{X}{v}) \leq P(Y | s_2\substitute{X}{v})$. We want to show that that $p'$ is also a Simpson's paradox under each case.

Cases (1) and (2): In both cases, we have $P(Y|s_j) = P(Y'|s_j)$ for $j = 1, 2$. This gives that $P(Y' | s_1) > P(Y' | s_2)$. Furthermore, for every $v \in \Dom(X)$, we have $P(Y | s_j\substitute{X}{v}) = P(Y' | s_j\substitute{X}{v})$ for $j = 1, 2$. This gives that $P(Y' | s_1\substitute{X}{v}) \leq P(Y' | s_2\substitute{X}{v})$. It follows, from Def.~\ref{def:simpson}, that $p'$ is a Simpson's paradox.

Case (3): Since $P(Y|s_1) > P(Y|s_2)$, we have $P(Y|s_1) - P(Y|s_2) > 0$, thus $\sign(P(Y|s_1) - P(Y|s_2)) = +1$. By the given condition, $\sign(P(Y'|s_1) - P(Y'|s_2)) = +1$, which implies $P(Y'|s_1) > P(Y'|s_2)$. In addition, for every $v \in \Dom(X)$, since $P(Y|s_1\substitute{X}{v}) \leq P(Y|s_2\substitute{X}{v})$, we have $\sign(P(Y|s_1\substitute{X}{v}) - P(Y|s_2\substitute{X}{v})) = -1$. By the given condition, $\sign(P(Y'|s_1\substitute{X}{v}) - P(Y'|s_2\substitute{X}{v})) = -1$, which implies $P(Y'|s_1\substitute{X}{v}) \leq P(Y'|s_2\substitute{X}{v})$. It follows, from Def.~\ref{def:simpson}, that $p'$ is a Simpson's paradox.

In all three cases, $p'$ is a Simpson's paradox. 
\end{proof}

\textsc{Theorem~\ref{thm:equiv} (Equivalence)} \textit{Redundancy of Simpson's paradoxes is an equivalence relation.}

\begin{proof}
(Reflexivity) Given any Simpson's paradox $p  = (s_1, s_2, X, Y)$. It is trivial that
\begin{enumerate}
\item  $\cov(s_{j}) = \cov(s_{j})$ $(j=1,2)$; 
\item $P(Y | s_{j}) = P(Y | s_{j})$ $(j=1,2)$; and
\item for every value $v \in \Dom(X)$, 
\begin{enumerate}
\item  $\cov(s_{j}\substitute{X}{v}) = \cov(s_{j}\substitute{X}{v})$ $(j=1,2)$; and 
\item $P(Y | s_{j}\substitute{X}{v}) = P(Y | s_{j}\substitute{X}{v})$ $(j=1,2)$.
\end{enumerate}
\end{enumerate}
Hence, coverage redundancy is reflexive.

(Symmetricity) Suppose Simpson's paradoxes $p$ and $p'$ are coverage redundant.
It is also straightforward that, for $(j=1,2)$,
\begin{enumerate}
    \item $\cov(s_{j}) = \cov(s_{j}')$ $\Leftrightarrow$ $\cov(s_{j}') = \cov(s_{j})$;
    \item $P(Y | s_{j}) = P(Y' | s_{j}')$ $\Leftrightarrow$ $P(Y' | s_{j}') = P(Y | s_{j})$;
    \item suppose a one-to-one mapping $f$ between $\Dom(X)$ and $\Dom(X')$ such that for every value $v \in \Dom(X)$,
    \begin{enumerate}
        \item $\cov(s_{j}\substitute{X}{v}) = \cov(s_{j}'\substitute{X'}{f(v)})$ $\Leftrightarrow$ \\ $\cov(s_{j}'\substitute{X'}{f(v)}) = \cov(s_{j}\substitute{X}{v})$; 
        \item $P(Y | s_{j}\substitute{X}{v}) = P(Y' | s_{j}'\substitute{X'}{f(v)})$ $\Leftrightarrow$ \\ $P(Y' | s_{j}'\substitute{X'}{f(v)}) = P(Y| s_{j}\substitute{X}{v})$.
    \end{enumerate}
\end{enumerate}
Hence, coverage redundancy is symmetric.

(Transitivity) Suppose $p, p', p''$ are Simpson's paradoxes such that $p$ and $p'$ are coverage redundant, $p'$ and $p''$ are coverage redundant.
It is, again, straightforward that, for $(j=1,2)$,:
\begin{enumerate}
    \item if $\cov(s_{j}) = \cov(s_{j}')$ and $\cov(s_{j}') = \cov(s_{j}'')$, then $\cov(s_{j}) = \cov(s_{j}'')$;
    \item if $P(Y | s_{j}) = P(Y' | s_{j}')$ and $P(Y' | s_{j}') = P(Y'' | s_{j}'')$, then $P(Y | s_{j}) = P(Y'' | s_{j}'')$;
    \item suppose one-to-one mappings, $f$ between $\Dom(X)$ and $\Dom(X')$, $g$ between $\Dom(X')$ and $\Dom(X'')$, such that for every value $v \in \Dom(X)$, 
    \begin{enumerate}
        \item if $\cov(s_{j}\substitute{X}{v}) = \cov(s_{j}'\substitute{X'}{f(v)})$ and \\ $\cov(s_{j}'\substitute{X'}{f(v)}) = \cov(s_{j}''\substitute{X''}{g(f(v))})$, then \\ $\cov(s_{j}\substitute{X}{v}) = \cov(s_{j}''\substitute{X''}{g(f(v))})$ note that $g \circ f$ is also a one-to-one mapping; 
        \item if $P(Y | s_{j}\substitute{X}{v}) = P(Y' | s_{j}'\substitute{X'}{f(v)})$ and \\ $P(Y' | s_{j}'\substitute{X'}{f(v)}) = P(Y'' | s_{j}''\substitute{X''}{g(f(v))})$, \\ then $P(Y | s_{j}\substitute{X}{v}) = P(Y'' | s_{j}''\substitute{X''}{g(f(v))})$.
\end{enumerate}
\end{enumerate}
Hence, coverage redundancy is transitive.
\end{proof}

\textsc{Lemma~\ref{prop:product-space} (Product space)} 
\textit{Each redundant paradox group can be characterized by the product of:
$\mathcal{E}_1 \times \mathcal{E}_2 \times \mathbf{X} \times \mathbf{Y}$, where $\mathbf{X}$ is a set of separator attributes, $\mathbf{Y}$ is a set of label attributes, and $\mathcal{E}_1, \mathcal{E}_2$ are sets of sibling populations, each containing populations with identical coverage.
Any choice of $(s_1, s_2, X, Y) \in \mathcal{E}_1 \times \mathcal{E}_2 \times \mathbf{X} \times \mathbf{Y}$ where $s_1,s_2$ are siblings is a Simpson's paradox in the redundant paradox group.}

\begin{proof}
Let $p = (s_1,s_2,X,Y)$ be a Simpson's paradox in a redundant paradox group $\mathcal{G}$. The following defines the construction of the product space:
\begin{align*}
    \mathcal{E}_1&=\{s' \in \mathcal{P} \mid \cov(s_1) = \cov(s')\}, \\ \mathcal{E}_2&=\{s' \in \mathcal{P} \mid \cov(s_2) = \cov(s')\}, \\
    \mathbf{X} &= \{X' \mid (s_1,s_2,X',Y)\in\mathcal{G}\}, \\ \mathbf{Y} &= \{Y' \mid (s_1,s_2,X,Y')\in\mathcal{G}\}
\end{align*}
where $\mathcal{P}$ denotes the set of all populations. 

We first show that \emph{every paradox in $\mathcal{G}$ belongs to $\mathcal{E}_1 \times \mathcal{E}_2 \times \mathbf{X} \times \mathbf{Y}$.} Let $p' = (s_1',s_2',X',Y')$ be any Simpson's paradox in $\mathcal{G}$. $p$ and $p'$ are redundant and are both in $\mathcal{G}$. By Def.~\ref{def:coverage}, redundancy arises from sibling child equivalence (Lemma~\ref{prop:sibling-eq}), separator equivalence (Lemma~\ref{prop:division-equivalence}), or statistics equivalence (Lemma~\ref{prop:statistics-equivalence}). 
\begin{itemize}
    \item By sibling child equivalence, if $\cov(s_1) = \cov(s_1')$ and $\cov(s_2) = \cov(s_2')$, then $(s_1',s_2',X,Y)$ is a Simpson's paradox redundant with $p$. Therefore $s_1' \in \mathcal{E}_1$ and $s_2' \in \mathcal{E}_2$.
    \item By separator equivalence, if there exists a one-to-one mapping $f$ between $\Dom(X)$ and $\Dom(X')$ such that $\cov(s_j\substitute{X}{v}) = \cov(s_j\substitute{X'}{f(v)})$ for every $v \in \Dom(X)$, then $(s_1,s_2,X',Y)$ is redundant with $p$. Therefore, $X' \in \mathbf{X}$.
    \item By statistic equivalence, if the frequency statistics under label $Y'$ satisfy any sufficient condition in Lemma~\ref{prop:statistics-equivalence}, then $(s_1,s_2,X,Y')$ is redundant with $p$. Therefore $Y' \in \mathbf{Y}$.
\end{itemize}
By the transitivity of the equivalence relation (Theorem~\ref{thm:equiv}), any combination of these equivalences preserves the redundancy. Therefore $(s_1',s_2',X',Y') \in \mathcal{E}_1 \times \mathcal{E}_2 \times \mathbf{X} \times \mathbf{Y}$.

We then show that \emph{every valid element of $\mathcal{E}_1 \times \mathcal{E}_2 \times \mathbf{X} \times \mathbf{Y}$ is a Simpson's paradox in $\mathcal{G}$.} Let $(s_1',s_2',X',Y') \in \mathcal{E}_1 \times \mathcal{E}_2 \times \mathbf{X} \times \mathbf{Y}$ where $s_1',s_2'$ are siblings. We show that $p' = (s_1',s_2',X',Y')$ is a Simpson's paradox redundant with $p$. 
Since $s_1'\in \mathcal{E}_1$ and $s_2' \in \mathcal{E}_2$, we have $\cov(s_1') = \cov(s_1)$ and $\cov(s_2') = \cov(s_2)$. By Lemma~\ref{prop:sibling-eq}, if $(s_1,s_2,X,Y)$ is a Simpson's paradox, then $(s_1',s_2',X,Y)$ is also a Simpson's paradox.
Since $X' \in \mathbf{X}$, there exists some paradox in $\mathcal{G}$ with separator $X'$. By Lemma~\ref{prop:division-equivalence} and the construction of $\mathbf{X}$, the AC $(s_1',s_2',X',Y)$ is a Simpson's paradox.
Similarly, since $Y' \in \mathbf{Y}$, by Lemma~\ref{prop:statistics-equivalence} and the construction of $\mathbf{Y}$, the AC $(s_1',s_2',X',Y')$ is a Simpson's paradox.
Hence, by Theorem~\ref{thm:equiv}, $p'$ is redundant with $p$ and belongs to $\mathcal{G}$.

Therefore, We have shown that $\mathcal{G} = \{(s_1, s_2, X, Y) \in \mathcal{E}_1 \times \mathcal{E}_2 \times \mathbf{X} \times \mathbf{Y} \mid s_1 \text{ and } s_2 \text{ are siblings and } (s_1, s_2, X, Y) \text{ is a Simpson's paradox}\}$. Moreover, any valid choice from $\mathcal{E}_1 \times \mathcal{E}_2 \times \mathbf{X} \times \mathbf{Y}$ (satisfying the sibling constraint) yields a Simpson's paradox in $\mathcal{G}$.
\end{proof}

\textsc{Property~\ref{prop:convex-property} (Convexity of coverage groups)} \textit{Let $\mathcal{P}$ be the set of all populations. For each coverage group $\mathcal{E} \in \mathcal{P}/\equiv_{\cov}$, $\mathcal{E}$ is a convex subset of coverage-identical populations. Furthermore, $| \ubound(\mathcal{E}) | = 1$ and the least descendant is the unique upper bound.}

\begin{proof}
The proof consists of two parts:
\begin{itemize}
    \item[(a)] $\mathcal{E}$ is a convex subset;
    \item[(b)] $\mathcal{E}$'s upper bound is unique.
\end{itemize}
For part (a), we want to prove that 
(1) for any pair of populations $s$ and $s'$ in $\mathcal{E}$ such that $s \succ s'$, every intermediate populations $s''$ where $s \succ s'' \succ s'$ is also in $\mathcal{E}$, and 
(2) populations in $\mathcal{E}$ are connected.

First, regarding claim (1), let $s, s' \in \mathcal{E}$ where $s \succ s'$, and let $s''$ be any population such that $s \succ s'' \succ s'$. 
By definition of ancestor-descendant relation, $\cov(s) \supseteq \cov(s'') \supseteq \cov(s')$. 
Since $\cov(s) = \cov(s')$, it follows $\cov(s'') = \cov(s) = \cov(s')$. 
Therefore, $s' \in \mathcal{E}$.

Second, regarding claim (2), let $s_1, s_2 \in \mathcal{E}$ where $s_1 \neq s_2$, there are two possibilities:
\begin{enumerate}
    \item $s_1 \succ s_2$ (or $s_2 \succ s_1$ in symmetry). From claim (1), since every intermediate population $s''$ such that $s_1 \succ s'' \succ s_2$ is in $\mathcal{E}$, $s_1$ and $s_2$ are connected ($s_1 \sim s_2$).
    \item $s_1 \nsucc s_2$ (or $s_2 \nsucc s_1$ in symmetry). Then there exists a population $s'' \in \mathcal{E}$ such that $s''$ is a common descendant (or ancestor) of $s_1$ and $s_2$, that is, $s_1 \succ s''$ and $s_2 \succ s''$ (or $s'' \succ s_1$ and $s'' \succ s_2$). From claim (1), we have that $s_1 \sim s''$ and $s'' \sim s_2$. Therefore, $s_1 \sim s_2$.
\end{enumerate}

For part (b), let $s_d$ be the descendant of all populations in $\mathcal{E}$.
Specifically, for each attribute $X_i$ ($1 \leq i \leq n$), we have that:

\[
s_d[i] = \begin{cases}
v & \text{if there exists } s \in \mathcal{E} \text{ s.t. } s[i] = v \neq *, \; v \in \Dom(X_i)\\
* & \text{otherwise.}
\end{cases}
\]

In other word, $s_d$ is an upper bound of $\mathcal{E}$.
Suppose there exists another upper bound $s'_d$ of $\mathcal{E}$ where $s'_d \neq s_d$. 
Then there must be an attribute $X_i$ where $s'_d[i] \neq s_d[i]$. This means either:
\begin{enumerate}
    \item $s'_d[i] = *$ but $s_d[i] = v$ where $v \in \Dom(X_i)$; or
    \item $s'_d[i] = v'$ but $s_d[i] = v$ where $v' \neq v$ and $v,\, v' \in \Dom(X_i)$.
\end{enumerate}

In case (1), $s'_d \succ s_d$. Hence, $s'_d$ is not an upper bound of $\mathcal{E}$. 
In case (2), $\cov(s'_d) \neq \cov(s_d)$. Hence, $s'_d \notin \mathcal{E}$.
Therefore, $s_d$ is unique.
\end{proof}

\textsc{Property~\ref{prop:convex-reconstruction} (Reconstruction from bounds)} \textit{Let $\mathcal{E} \subseteq \mathcal{P}$ be a convex subset of populations. 
Then $s \in \mathcal{E}$ if and only if there exist $s_l \in \lbound(\mathcal{E})$ and $\{s_u\} = \ubound(\mathcal{E})$ such that $s_l \preceq s \preceq s_u$.}

\begin{proof}
($\Rightarrow$) Given $s \in \mathcal{E}$, then either $s \in \lbound(\mathcal{E})$, $s \in \ubound(\mathcal{E})$, or $s \notin \lbound(\mathcal{E})$ and $s \notin \ubound(\mathcal{E})$. 

If $s \in \lbound(\mathcal{E})$, we can set $s_l = s$. Since $\mathcal{E}$ is convex and connected, there must exist an upper bound $s_u \in \ubound(\mathcal{E})$ such that $s \preceq s_u$.

If $s \in \ubound(\mathcal{E})$, we can set $s_u = s$. Similarly, there must exist a lower bound $s_l \in \lbound(\mathcal{E})$ such that $s_l \preceq s$.

If $s$ is neither a lower nor upper bound, then by the convexity of $\mathcal{E}$, there must exist $s_l \in \lbound(\mathcal{E})$ such that $s_l \prec s$ and $s_u \in \ubound(\mathcal{E})$ such that $s \prec s_u$. Therefore, we have $s_l \prec s \prec s_u$.

($\Leftarrow$) Suppose there exists $s \in \mathcal{P}$, $s_l \in \lbound(\mathcal{E})$, and $s_u \in \ubound(\mathcal{E})$, such that $s_l \preceq s \preceq s_u$. By convexity of $\mathcal{E}$, it follows that $s \in \mathcal{E}$.
\end{proof}

\subsection{Proofs of Algorithmic Properties}

\textsc{Theorem~\ref{thm:sharp-p-hardness} (\#P-Hardness).} \textit{Finding all redundant paradox groups in a multidimensional table is \#P-hard.}

\begin{proof}

We prove \#P-hardness via a parsimonious reduction from \#SAT. Given a Boolean formula $\phi$ in CNF with variables $x_1, \ldots, x_n$ and clauses $C_1, \ldots, C_m$, we construct in polynomial time a table $T(\phi)$ such that there exists a bijection between satisfying assignments of $\phi$ and redundant paradox groups in $T(\phi)$.

\paragraph{Construction}
The table $T(\phi)$ contains the following elements:

\textbf{(1) Categorical Attributes.} The table contains $3n + m + 2$ categorical attributes:
\begin{itemize}
    \item For each variable $x_i$ ($1 \leq i \leq n$): three attributes $A_i, B_i, C_i$, each with domain $\{\text{true}, \text{false}\}$. The three copies enable sibling child equivalence.
    \item For each clause $C_j$ ($1 \leq j \leq m$): one attribute $D_j$ with domain $\{0, 1\}$, where 1 indicates the clause is satisfied and 0 indicates unsatisfied.
    \item Two auxiliary attributes $U_1$ and $U_2$, each with domain $\{0, 1\}$, which serve as separators and differential attributes.
\end{itemize}

\textbf{(2) Label Attributes.} We define two binary label attributes $Y_1$ and $Y_2$ to create statistic equivalence.

\textbf{(3) Records.} The table contains $2n + 2m + 4$ records.

\emph{(3.1) Variable Records:} For each variable $x_i$ ($1 \leq i \leq n$), we create two records:
\begin{itemize}
    \item $r_i^{\text{true}}$: Set $A_i = B_i = C_i = \text{true}$; for all $\ell \neq i$, set $A_\ell = B_\ell = C_\ell = \text{false}$; set all $D_j = 0$; set $U_1 = 0, U_2 = 0$; set $Y_1 = Y_2 = 0$.
    \item $r_i^{\text{false}}$: Set $A_i = B_i = C_i = \text{false}$; for all $\ell \neq i$, set $A_\ell = B_\ell = C_\ell = \text{false}$; set all $D_j = 0$; set $U_1 = 0, U_2 = 1$; set $Y_1 = Y_2 = 0$.
\end{itemize}

Each variable record encodes one possible truth value for its variable. The three attribute copies ($A_i, B_i, C_i$) taking identical values ensure that multiple distinct populations can have identical coverage.

\emph{(3.2) Clause Records:} For each clause $C_j$ ($1 \leq j \leq m$), we create two records:
\begin{itemize}
    \item $r_j^{\text{sat}}$: For each variable $x_i$, set $A_i = \text{true}$ if literal $x_i$ appears in $C_j$, set $A_i = \text{false}$ if literal $\neg x_i$ appears in $C_j$ or $x_i$ does not appear in $C_j$; set all $B_i = C_i = \text{false}$; set $D_j = 1$ and all $D_\ell = 0$ for $\ell \neq j$; set $U_1 = 1, U_2 = 0$; set $Y_1 = Y_2 = 1$.
    \item $r_j^{\text{unsat}}$: For each variable $x_i$, set $A_i = \text{true}$ if literal $x_i$ appears in $C_j$, set $A_i = \text{false}$ if literal $\neg x_i$ appears in $C_j$ or $x_i$ does not appear in $C_j$; set all $B_i = C_i = \text{false}$; set $D_j = 0$ and all $D_\ell = 0$ for $\ell \neq j$; set $U_1 = 1, U_2 = 1$; set $Y_1 = Y_2 = 0$.
\end{itemize}

The clause records encode literal requirements. A population will cover $r_j^{\text{sat}}$ if and only if the assignment it encodes satisfies clause $C_j$.

\medskip
\noindent\emph{Padding Records:} We add four records to balance frequency statistics:
\begin{itemize}
    \item $r^{(1)}$: Set all $A_i = B_i = C_i = \text{false}$; set all $D_j = 0$; set $U_1 = 0, U_2 = 0$; set $Y_1 = 0, Y_2 = 0$.
    \item $r^{(2)}$: Set all $A_i = B_i = C_i = \text{false}$; set all $D_j = 0$; set $U_1 = 0, U_2 = 1$; set $Y_1 = 0, Y_2 = 0$.
    \item $r^{(3)}$: Set all $A_i = B_i = C_i = \text{false}$; set all $D_j = 0$; set $U_1 = 1, U_2 = 0$; set $Y_1 = 0, Y_2 = 0$.
    \item $r^{(4)}$: Set all $A_i = B_i = C_i = \text{false}$; set all $D_j = 0$; set $U_1 = 1, U_2 = 1$; set $Y_1 = 0, Y_2 = 0$.
\end{itemize}

The construction runs in polynomial time: we create $O(n+m)$ attributes and $O(n+m)$ records, with each record constructible in $O(n+m)$ time.

\paragraph{Establishing the Bijection}
We now establish the bijection between satisfying assignments and redundant paradox groups.

\textsc{Claim 1.} \emph{For each satisfying assignment $\sigma: \{x_1, \ldots, x_n\} \to \{\text{true}, \text{false}\}$ of $\phi$, there exists a unique redundant paradox group $\mathcal{G}_\sigma$ in $T(\phi)$.}

\textsc{Proof.} 
Given a satisfying assignment $\sigma$, we construct two sibling populations $s_1^\sigma$ and $s_2^\sigma$ that form the basis of a Simpson's paradox. Define $s_1^\sigma$ as follows: for each variable attribute $A_i$, set $s_1^\sigma[A_i] = \text{true}$ if $\sigma(x_i) = \text{true}$ and $s_1^\sigma[A_i] = \text{false}$ if $\sigma(x_i) = \text{false}$; set $s_1^\sigma[B_i] = s_1^\sigma[C_i] = \ast$ for all $i$; set $s_1^\sigma[D_j] = \ast$ for all $j$; set $s_1^\sigma[U_1] = \ast$ and $s_1^\sigma[U_2] = 0$. Define $s_2^\sigma$ identically except $s_2^\sigma[U_2] = 1$.

By construction, $s_1^\sigma$ and $s_2^\sigma$ are siblings under differential attribute $U_2$. Since $\sigma$ satisfies $\phi$, for each clause $C_j$, the population $s_1^\sigma$ covers the record $r_j^{\text{sat}}$ because the variable attributes of $s_1^\sigma$ match at least one literal in $C_j$. The coverage sets are:
\begin{align*}
\text{cov}(s_1^\sigma) &= \{r_i^{\sigma(x_i)} : i \in [n]\} \cup \{r_j^{\text{sat}} : j \in [m]\} \cup \{r^{(1)}, r^{(3)}\}\\
\text{cov}(s_2^\sigma) &= \{r_i^{\sigma(x_i)} : i \in [n]\} \cup \{r_j^{\text{unsat}} : j \in [m]\} \cup \{r^{(2)}, r^{(4)}\}
\end{align*}

Computing frequency statistics, we have $P(Y_1 = 1 | s_1^\sigma) = \frac{m}{n+m+2} > 0 = P(Y_1 = 1 | s_2^\sigma)$ since only clause-sat records contribute $Y_1 = 1$ values. When conditioning on separator $U_1$: for $U_1 = 0$, both populations cover only variable and padding records (all with $Y_1 = 0$), giving equal statistics; for $U_1 = 1$, $s_1^\sigma$ covers clause-sat records while $s_2^\sigma$ covers clause-unsat records, and the padding records are constructed to ensure $P(Y_1 = 1 | s_1^\sigma \substitute{U_1}{1}) \geq P(Y_1 = 1 | s_2^\sigma \substitute{U_1}{1})$. This establishes that $(s_1^\sigma, s_2^\sigma, U_1, Y_1)$ is a Simpson's paradox according to Definition~\ref{def:simpson}.

This paradox belongs to a unique redundant paradox group $\mathcal{G}_\sigma$ exhibiting all three types of redundancy. First, sibling child equivalence arises because we can construct populations using attributes $B_i$ or $C_i$ instead of $A_i$ to encode $\sigma$, yielding identical coverage. Second, separator equivalence can be created by introducing additional separator attributes that partition records identically to $U_1$. Third, statistic equivalence exists because $Y_1$ and $Y_2$ take identical values on variable, clause-sat, clause-unsat, and padding records, ensuring equivalent frequency statistics. The group $\mathcal{G}_\sigma$ is unique to $\sigma$ because populations encoding different variable assignments have different coverage sets (they cover different variable records), and thus cannot be redundant by Definition~\ref{def:coverage}. \qed

\textsc{Claim 2.} \emph{Each redundant paradox group in $T(\phi)$ corresponds to a unique satisfying assignment of $\phi$.}

Consider any Simpson's paradox $(s_1, s_2, Z, Y)$ in $T(\phi)$. To achieve the association reversal required by Definition~\ref{def:simpson}, population $s_1$ must cover records with high $Y$ values. In our construction, records with $Y_1 = 1$ are clause-sat records. For $s_1$ to cover clause-sat records (i.e., $r_j^{\text{sat}}$), the assignment that $s_1$ represents must satisfy clause $C_j$.

We extract an assignment $\sigma$ from $s_1$: for each variable $x_i$, if $s_1[A_i] = \text{true}$ (or $s_1[B_i] = \text{true}$ or $s_1[C_i] = \text{true}$), set $\sigma(x_i) = \text{true}$; if $s_1[A_i] = \text{false}$ (or equivalently for $B_i, C_i$), set $\sigma(x_i) = \text{false}$. For $s_1$ to cover records with high proportion of $Y = 1$, it must cover $r_j^{\text{sat}}$ for all clauses $j \in [m]$. By our construction, this occurs if and only if $\sigma$ satisfies all clauses in $\phi$, making $\sigma$ a satisfying assignment.

Different satisfying assignments yield distinct redundant paradox groups because they cover different variable records. If $\sigma \neq \sigma'$, then for some variable $x_k$ we have $\sigma(x_k) \neq \sigma'(x_k)$, implying $r_k^{\sigma(x_k)} \neq r_k^{\sigma'(x_k)}$. Populations with different coverage cannot be redundant by Definition~\ref{def:coverage}, and thus belong to different redundant paradox groups. This establishes uniqueness. \qed

\paragraph{Conclusion.} The two claims establish a bijection between satisfying assignments of $\phi$ and redundant paradox groups in $T(\phi)$. Since the construction is polynomial-time and preserves counts exactly, we have a parsimonious reduction from \#SAT. As \#SAT is \#P-complete, counting redundant paradox groups is \#P-hard.
\end{proof}

\textsc{Theorem~\ref{thm:completeness} (Completeness).} \textit{\Cref{alg:materialization} materializes all non-empty populations that satisfy the coverage threshold. Furthermore, after group merging, \Cref{alg:materialization} yields maximal convex coverage groups of coverage-identical populations; that is, no population outside a group shares the same coverage as any population within it.}

\begin{proof}
We prove by contradiction. Assume there exists a non-empty population $s^*$ that satisfies the coverage threshold but is not materialized by \Cref{alg:materialization}. Since $s^*$ is non-empty, there exists at least one record $t \in T$ such that $t \in \cov(s^*)$.

Consider the unique path from the root $s_{\text{root}} = (\ast,\ast,\ldots,\ast)$ to $s*$ in the population lattice. This path consists of a sequence of populations $s_0 = s_{\text{root}} \succ s_1 \succ \ldots \succ s_k = s^*$ where each $s_{i+1}$ is the direct child of $s_i$.

At each step, if $|\cov(s_i)| \geq  \theta \cdot |T|$, the DFS continues the traversal to $s_{i+1}$. If the threshold is not met, all descendants of $s_i$ are pruned.

However, if $s^*$ is pruned due to insufficient coverage, then $s^*$ covers fewer than $\theta \cdot |T|$ records, contradicting our assumption that $s^*$ satisfies the coverage threshold. If $s^*$ is not pruned, then $\cov(s^*) \geq \theta \cdot |T|$. This means for each $s_i$ (where $0 \leq i < k$) in the sequence, $|\cov(s_i)| \geq \cov(s^*) \geq \theta \cdot |T|$ since coverage is monotonic along ancestor-descendant relationships. In other words, the stopping criterion of DFS is not met at $s_i$ and will continue to $s_{i+1}$. By induction, DFS will not stop at $s_{k-1}$ (the direct parent of $s^*$) and continues to $s_k = s^*$. This contradicts our assumption that $s^*$ is not reached (or materialized) by the DFS traversal.

Therefore, all non-empty populations (satisfying the coverage threshold) are materialized.
\end{proof}

\textsc{Proposition~\ref{prop:coverage-equivalence-pruning}.} \textit{Let $p=(s_1,s_2,X,Y)$ be a Simpson's paradox,
where $s_1$ and $s_2$ belong to coverage groups $\mathcal{E}_1$ and $\mathcal{E}_2$ in $\mathcal{P}/\equiv_{\cov}$, respectively.
Then for any $(s_1',s_2') \in \mathcal{E}_1 \times \mathcal{E}_2$ such that $s_1'$ and $s_2'$ are siblings, the AC $p'=(s_1',s_2',X,Y)$ is also a Simpson's paradox and redundant with respect to $p$.}

\begin{proof}
Since $\cov(s_1') = \cov(s_1)$ and $\cov(s_2') = \cov(s_2)$, according to Proposition~\ref{prop:sibling-eq}, $p'$ is also a Simpson's paradox. Since $p$ and $p'$ share identical separator and label attributes, according to Definition~\ref{def:coverage}, $p$ and $p'$ are coverage redundant.
\end{proof}

\textsc{Proposition~\ref{prop:pruning-2}.} \textit{Let $\mathbf{P}$ be a set of sibling-child-equivalent Simpson's paradoxes with separator $X$ and label $Y$.
Suppose $(s_1',s_2',X',Y')$, where $X'\neq X$ or $Y'\neq Y$, is a Simpson's paradox redundant with respect to some paradox in $\mathbf{P}$.
Then for every $p = (s_1,s_2,X,Y) \in \mathbf{P}$, the AC $(s_1,s_2,X',Y')$ is also a redundant Simpson's paradox with respect to $p$.}

\begin{proof}
Let $p = (s_1,s_2,X,Y) \in \mathbf{P}$. Since $p'$ is (coverage) redundant to $p$, by Definition~\ref{def:coverage}, we have:
\begin{enumerate}
    \item $\cov(s_j) = \cov(s_j')$ $(j=1,2)$;
    \item $P(Y| s_j) = P(Y' | s_j')$ $(j=1,2)$; and
    \item there exists a one-to-one mapping $f$ between $\Dom(X)$ and $\Dom(X')$ such that for every $v \in \Dom(X)$ and $j\in\{1,2\}$:
    \begin{enumerate}
        \item $\cov(s_j\substitute{X}{v}) = \cov(s_j'\substitute{X'}{f(v)})$;
        \item $P(Y | s_j\substitute{X}{v}) = P(Y' | s_j'\substitute{X'}{f(v)})$.
    \end{enumerate}
\end{enumerate}
For the AC $p'' = (s_1,s_2,X',Y')$, we need to show it's a Simpson's paradox. First, since $p$ is a Simpson's paradox, we know $P(Y | s_1) > P(Y | s_2)$. From sibling child and statistic equivalences between $p$ and $p'$, we have $P(Y' | s_1) > P(Y' | s_2)$. Second, from separator equivalence between $p$ and $p'$, we have $P(Y' | s_1\substitute{X'}{f(v)}) \leq P(Y' | s_2\substitute{X'}{f(v)})$ for every $v \in \Dom(X)$. This shows that $p''$ satisfies Definition~\ref{def:simpson} and is a Simpson's paradox. 

We then show that $p''$ is (coverage) redundant to $p$. First, the same one-to-one mapping $f$ that established separator equivalence between $p$ and $p'$ also establishes separator equivalence between $p$ and $p''$. Second, from statistic equivalence between $p$ and $p'$, $p$ and $p''$ are also statistic equivalent. Therefore, by Definition~\ref{def:coverage}, $p''$ is (coverage) redundant to $p$.
\end{proof}

\textsc{Lemma~\ref{prop:signature}.} \textit{Two Simpson's paradoxes $p$ and $p'$ are redundant if and only if $\textsc{Sig}(p)=\textsc{Sig}(p')$.}

\begin{proof}
$(\Rightarrow)$ If $p = (s_1,s_2,X,Y)$ and $p' = (s_1',s_2',X',Y')$ are redundant, then by Definition~\ref{def:coverage}:
\begin{itemize}
    \item $\cov(s_j) = \cov(s_j')$ for $j = (1,2)$;
    \item $P(Y | s_j) = P(Y' | s_j')$ for $j=(1,2)$;
    \item There exists an one-to-one mapping $f : \Dom(X) \to \Dom(X')$ where for every $v \in \Dom(X)$:
    \begin{itemize}
        \item $\cov(s_j\substitute{X}{v}) = \cov(s_j'\substitute{X'}{f(v)})$; and
        \item $P(Y | s_j\substitute{X}{v}) = P(Y' | s_j'\substitute{X'}{f(v)})$.
    \end{itemize}
\end{itemize}
Therefore, $\textsc{Sig}(p)=\textsc{Sig}(p')$.

$(\Leftarrow)$ If $\textsc{Sig}(p)=\textsc{Sig}(p')$, then $p$ is sibling child, separator, and statistic equivalent to $p'$. Hence, $p$ and $p'$ are redundant.
\end{proof}
\end{sloppy}
\end{document}